\newtheorem{theorem}{Theorem}%
\newtheorem{proposition}[theorem]{Proposition}%
\newtheorem{remark}{Remark}%
\newtheorem{assumption}{Assumption}
\newtheorem{definition}{Definition}%
\newtheorem{lemma}{Lemma}
\newcommand{\brac}[1]{\left(#1\right)}
\newcommand{\abs}[1]{\left\vert#1\right\vert}
\newcommand{\norm}[1]{\left\Vert#1\right\Vert}
\newcommand{\vect}[1]{\boldsymbol{#1}}
\newcommand{\Mod}[1]{\ (\mathrm{mod}\ #1)}
\renewenvironment{abstract}
	{\quotation}
	{\endquotation}
\date{}
\renewcommand{\fnum@figure}{\textbf{Figure \thefigure}}
\renewcommand{\fnum@table}{\textbf{Table \thetable}}
\def\scititle{
	Topological Mechanics of Entangled Networks
}
\title{\Large\bfseries \boldmath \scititle}
\author{
	Juntao~Huang$^{1\ast\dagger}$,
	Jiabin~Liu$^{2\dagger}$,
	Shaoting~Lin$^{2\ast}$\and
	\small$^{1}$Department of Mathematical Sciences, University of Delaware, Newark, DE, USA.\and
	\small$^{2}$Department of Mechanical Engineering, Michigan State University, East Lansing, MI, USA.\and
	\small$^\ast$Corresponding authors. Email: huangjt@udel.edu; linshaot@msu.edu.\and
	\small$^\dagger$These authors contributed equally to this work.
}
\begin{document} 

\maketitle

\begin{abstract} \bfseries \boldmath
  Entangled networks are ubiquitous in tissues, polymers, and fabrics. However, their mechanics remain insufficiently understood due to the complexity of the topological constraints at the network level. Here, we develop a mathematical framework that models entangled networks as graphs, capturing topological constraints of entanglements. We prove that entanglements reduce system energy by enabling uniform tension along chains crossing entanglements and by redistributing stress through sliding. Under this framework, we study elasticity and fracture, validated by experiments on entangled fabrics and hydrogels. 
  For elasticity, entanglements increase strength by enabling stress homogeneity in the network.
  For fracture, entanglements enhance toughness by mitigating stress concentration around crack tips. We discover counterintuitive physical laws governing crack-tip stretch during crack opening: stress deconcentration at small deformation, constitutive-law independence at intermediate deformation, and linear scaling at large deformation. This framework establishes fundamental principles of linking topology to mechanics of entangled networks and offers a foundational tool for designing reconfigurable materials.
\end{abstract}

Entangled networks, ubiquitous in biological \cite{schulz2025second} and physical \cite{wang2024trapped} systems, play an important role in governing fundamental processes in life sciences and engineering applications. In contrast to classical spring networks, entangled networks are made of chains interconnected by entanglements as topological constraints that can slide along chains while maintaining intersections between adjacent chains. Recent experimental findings highlight the crucial role of entanglements in enabling extreme mechanical properties of tissues \cite{schulz2025second}, gels \cite{kim2021fracture, fu2023cartilage, zheng2022fracture}, elastomers \cite{wang2025self}, and textiles \cite{zhou20253d, singal2024programming} across multiple length scales. 
At the molecular scale, introducing a high density of molecular entanglements or entanglement-like slide rings into hydrogels signficiantly enhances the toughness from $\sim$ 10 J/m$^2$ to $\sim$ 3000 J/m$^2$ with negligible hysteresis \cite{liu2021tough,kim2021fracture}. The high density of entanglements also resolves the stiffness-toughness conflict \cite{kim2021fracture}, an intrinsic limitation for conventional cross-linked polymer networks \cite{lake1967strength}. 
At the macroscopic scale, incorporating entanglement-like concatenated rings into textile architectures leads to a new class of metamaterials that exhibit non-Newtonian fluid behaviors \cite{zhou20253d}, unattainable in conventional architected materials. Despite the crucial role of entanglements in enabling extreme mechanical properties of polymer networks and textile architectures, the fundamental mechanisms that govern the large deformation of entangled networks remain poorly understood. Notably, there currently exists no comprehensive mathematical model linking the topological structure of entangled networks to their mechanical properties.

Over the past few decades, numerous models have been proposed to capture the mechanics of entangled networks in polymers.
At the microscopic scale, molecular dynamics provide detailed representations of chain motions in the molecular level, but remain computationally intractable for predicting bulk mechanical behavior \cite{ge2013molecular, gula2020computational, gusev2024molecular}.
At the macroscopic scale, constitutive relations have been developed to describe stress-strain behaviors arising from entanglements, yet these models are often phenomenological and  not able to explicitly capture the topological constraints \cite{rubinstein2002elasticity, davidson2013nonaffine, liu2024tuning}. 
Mesoscopic models, lying between microscopic detail and macroscopic phenomenology, aim to balance between computational cost and physical fidelity. 
Among them, the tube model \cite{edwards1967statistical,de1971reptation,doi1978dynamics,doi1988theory,edwards1988tube}, which represents entangled polymers as a single chain confined within a tube formed by neighboring constraints and undergoing reptation along the tube, has been successful in describing rheological behavior. However, this model neglects topological features at the network level. 
Building on similar spirit, the slip-link model \cite{hua1998segment,doi2003molecular} approximates entanglement by discrete sliding constraints along the chain, and has been extended to multiple chains \cite{masubuchi2001brownian,ramirez2013theoretically,schneider2023entanglements,lamont2023micromechanics} and networks \cite{riggleman2009entanglement,chappa2012translationally, assadi2025nonaffine}. However, they still cannot systematically capture how the network topology governs bulk mechanical behavior. 
Moreover, most existing models are limited to predicting small-strain behaviors such as rheology but fail to capture large-deformation behaviors such as fracture.
To address these limitations, we develop a discrete network model, equipped with rigorous mathematical analysis and an efficient and robust optimization algorithm, to study the topological mechanics of entangled networks, particularly focusing on how the fraction of entanglements and their geometric distribution govern elasticity and fracture at the network scale. This framework will serve as a foundational tool for tissue mechanics, polymer physics, to textile manufacturing.

\subsection*{Modeling entangled networks}

Inspired by the pervasive entanglements in biological and physical systems spanning from tissues and gels to cotton and fabrics across multiple length scales (Fig.~1A), we introduce a graph-based representation of entangled networks (Fig.~1B). This formulation captures the topological constraints of the entanglements, allowing entanglements to slide along chains while maintaining the crossings. The central idea is to introduce an additional degree of freedom to represent the entanglement, ensuring that the entangled chains always share a common intersection point under any deformation.
\begin{definition}\label{def:main-entangled-network}
  An entangled network is an ordered pair $G_e = (V, C)$ consisting of:
  \begin{itemize}
    \item A set of nodes $V=\{ v_i, \, i = 1, \dots, N \}$, where $N$ is the total number of nodes.
    
    \item A set of chains $C=\{ c_k = ( v_{i_{k,1}}, v_{i_{k,2}}, \dots, v_{i_{k,n_k}} ), \, k = 1,\dots, M \}$, where each chain $c_k$ is an ordered tuple of nodes, $M$ is the total number of chains, and $n_k\ge 2$ is the number of nodes in the $k$-th chain.
  \end{itemize}
\end{definition}
\noindent Based on the definition of the entangled network, we define the non-slidable nodes as the endpoints (e.g. $v_{i_{k,1}}$ and $v_{i_{k,n_k}}$) and the slidable nodes (or called entanglements) as the internal nodes (e.g. $v_{i_{k,2}}, \dots, v_{i_{k,n_k-1}}$) in the chain $c_k = (v_{i_{k,1}}, v_{i_{k,2}}, \dots, v_{i_{k,n_k}})$. To characterize how the chains interact with each other through entanglements as the slidable nodes, we define the orientation of the slidable node as the directions of the chains that pass through it, see Definition \ref{def-supp:orientation-slidable-node}. 

We highlight the difference between our entangled network and the classical spring network. A spring network can be viewed as a graph, in terms of its topological structure, consisting of nodes and chains (or called edges), where each chain connects exactly two nodes (fig.~\ref{fig-supp:graph-hypergraph-entangled-network}a). In contrast, an entangled network allows each chain to pass through multiple nodes, forming an ordered tuple of nodes (fig.~\ref{fig-supp:graph-hypergraph-entangled-network}c).
From a topological perspective, a spring network is fully characterized by its node set and the edge set connecting the nodes, whereas an entangled network is more complex and depends on by four components: (1) the set of slidable nodes; (2) the set of non-slidable nodes; (3) the adjacency relations among all nodes; (4) the orientation associated with each slidable node; see details in Section \ref{sec:graph-entangle-network}. These components provide a systematic basis to tune the topology of entangled networks and explore its impact on the mechanics.

Since our interest lies in the mechanical behaviors of the entangled network, we extend the definition to include the length and force of each chain. In the undeformed state, each node $v_i\in V$ is assigned an initial position $\vect{X}_{i}\in\mathbb{R}^d$. Then, for each adjacent pair of nodes $v_i$ and $v_j$, we define its initial distance $L_{i,j} = \norm{\vect{X}_{i} - \vect{X}_{j}}$. Moreover, we assume that all chains follow the same constitutive law, described by a force-stretch function $f = f(\Lambda)$, where $\Lambda$ denotes the stretch ratio of a single chain and $f$ represents the stretching force along the chain.

We determine the equilibrium state of the entangled network from the minimization of its elastic energy. To this end, we first define the elastic energy of a single chain in the network. For a chain $c_k = (v_{i_{k,1}}, v_{i_{k,2}}, \cdots, v_{i_{k,n_k}})\in C$, its elastic energy is given by
\begin{equation}
  U_{k} = L_{k} \int_{1}^{\Lambda_{k}} f(\Lambda) \, d\Lambda.
\end{equation}
Here $L_{k}$ is the initial length of chain $c_k$, computed as the sum of the length for each segement $L_{k} = \sum_{j=1}^{n_k-1} L_{i_{k,j},i_{k,j+1}}$. In the deformed state, the position of the node $v_i$ is denoted by $\vect{x}_{i}$. The stretch ratio of chain $c_k$ is defined as $\Lambda_k = {l_k}/{L_{k}}$ where $l_k$ is its deformed length, given by $l_{k} = \sum_{j=1}^{n_k-1} l_{i_{k,j},i_{k,j+1}}$ with $l_{i_{k,j},i_{k,j+1}} = \norm{\vect{x}_{i_{k,j}} - \vect{x}_{i_{k,j+1}}}$. Here we assume that there is no friction at the entanglements, so the tension is uniform across all segments of a chain crossing entanglements. The total elastic energy of the network is obtained as the sum of the elastic energies of all chains:
\begin{equation}\label{eq:main-total-energy-entangled-network}
  U_e = \sum_{k=1}^M U_{k}.
\end{equation}
This total energy $U_e$ is a function of the positions of all nodes $\vect{x} = (\vect{x}_1, \vect{x}_2, \cdots, \vect{x}_N)\in\mathbb{R}^{dN}$. The equilibrium state of the entangled network is the state $\vect{x}_e^*$ that minimizes the total elastic energy $U_e(\vect{x})$, with the minimum elastic energy denoted by $U_e^* = U_e(\vect{x}_e^*)$. 

We analyze the theoretical properties of the energy function $U_e(\vect{x})$ defined in Eq.~\eqref{eq:main-total-energy-entangled-network}. Assuming that the constitutive law $f=f(\Lambda)$ is an increasing and non-negative function, we prove that the energy function $U_e(\vect{x})$ is a convex function of $\vect{x}$ (see Theorems \ref{thm:convexity-energy-entangled-network} and \ref{thm:convexity-energy-entangled-network-displacement}). This convexity implies that the local minimum of the energy function is also the global minimum and provides guidance on the algorithm design.

For a given entangled network $G_e = (V, C)$, we introduce its reference spring network $G_s = (V, C_s)$ by replacing all the slidable nodes with non-slidable nodes (see Definition~\ref{def-supp:reference-non-slidable-network} and fig.~\ref{fig-supp:entangleVSreference}).
The elastic energy of the spring network $U_s = U_s(\vect{x})$ can be defined in the same way as that of the entangled network. The equilibrium state of the spring network is the configuration $\vect{x}_s^*$ that minimizes the total elastic energy $U_s(\vect{x})$, with the minimum energy $U_s^* = U_s(\vect{x}_s^*)$.

We prove that, the elastic energy of the entangled network is always no larger than that of the reference spring network with the same node positions (Fig.~1C), see the theorem below. Mathematically, it means that the energy surface of the entangled network always lies below that of its reference spring network. Moreover, this result holds for entangled networks of arbitrary size and topology, making it a fundamental principle governing their mechanics.
\begin{theorem}\label{thm:main-theorem-energy-order}
  Let $G_e = (V, C)$ be an entangled network  and $G_s = (V, C_s)$ its reference spring network. Assume that the nodes in both networks have the same initial position in the undeformed state and the constitutive law $f=f(\Lambda)$ is an increasing function. Then, the total energy of the entangled network is no larger than that of the spring network with the same node positions in the deformed state:
  \begin{equation}\label{eq:main-theorem-energy-order}
    U_e(\vect{x}) \le U_{s}(\vect{x}), \quad \forall \, \vect{x}\in \mathbb{R}^{dN}.
  \end{equation}
\end{theorem}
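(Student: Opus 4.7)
The plan is to reduce the network-wide inequality \eqref{eq:main-theorem-energy-order} to a chain-by-chain inequality and then to recognize that chain-by-chain inequality as a case of Jensen's inequality applied to the primitive $\Phi(\lambda) := \int_1^{\lambda} f(s)\, ds$. The reference spring network $G_s$ is obtained from $G_e$ by declaring every slidable node non-slidable, so each segment $(v_{i_{k,j}}, v_{i_{k,j+1}})$ of an entangled chain $c_k \in C$ becomes its own two-node spring chain in $C_s$. Consequently the spring energy decomposes exactly as
\begin{equation*}
  U_s(\vect{x}) = \sum_{k=1}^M \sum_{j=1}^{n_k-1} L_{i_{k,j},i_{k,j+1}}\, \Phi(\lambda_{k,j}), \qquad \lambda_{k,j} := \frac{l_{i_{k,j},i_{k,j+1}}}{L_{i_{k,j},i_{k,j+1}}},
\end{equation*}
while the entangled chain energy is $U_k = L_k\, \Phi(\Lambda_k)$. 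Summing over $k$, it therefore suffices to prove, for every chain $c_k$, the per-chain bound
\begin{equation*}
  L_k\, \Phi(\Lambda_k) \;\le\; \sum_{j=1}^{n_k-1} L_{i_{k,j},i_{k,j+1}}\, \Phi(\lambda_{k,j}).
\end{equation*}

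\textbf{Key step.} Next I would make the crucial algebraic observation that the chain stretch $\Lambda_k$ is automatically a convex combination of the segment stretches $\lambda_{k,j}$. Setting $w_{k,j} := L_{i_{k,j},i_{k,j+1}} / L_k$, the weights satisfy $w_{k,j} \ge 0$ and $\sum_j w_{k,j} = 1$ by the definition $L_k = \sum_j L_{i_{k,j},i_{k,j+1}}$, and a direct computation gives
\begin{equation*}
  \Lambda_k = \frac{l_k}{L_k} = \frac{1}{L_k}\sum_{j=1}^{n_k-1} l_{i_{k,j},i_{k,j+1}} = \sum_{j=1}^{n_k-1} w_{k,j}\, \lambda_{k,j}.
\end{equation*}
Dividing the target per-chain inequality by $L_k$ turns it into the Jensen inequality $\Phi\bigl(\sum_{j} w_{k,j}\, \lambda_{k,j}\bigr) \le \sum_{j} w_{k,j}\, \Phi(\lambda_{k,j})$. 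Since $f$ is increasing by hypothesis, $\Phi'(\lambda) = f(\lambda)$ is non-decreasing, so $\Phi$ is convex and Jensen applies. Summing the per-chain inequality over $k$ then yields $U_e(\vect{x}) \le U_s(\vect{x})$ for all $\vect{x} \in \mathbb{R}^{dN}$.

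\textbf{Where the difficulty lies.} I do not expect a serious analytic obstacle; the proof is essentially bookkeeping plus a one-line convexity argument. The most delicate point is the identification of $U_s$ with the segment-wise sum indexed by the original chains of $G_e$, which hinges on how $G_s$ is constructed from $G_e$ and must be read off cleanly from Definition~\ref{def-supp:reference-non-slidable-network}. A minor but worth-flagging point is that no non-negativity assumption on $f$ (and hence no sign constraint on $\Phi$) is required: the argument uses only convexity of $\Phi$, so the same inequality covers compressed segments with $\lambda_{k,j} < 1$, which is important since local compression can arise at intermediate nodes even when the overall chain is in tension.
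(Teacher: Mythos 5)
Your proposal is correct and follows essentially the same route as the paper's own proof: both reduce the inequality to a per-chain statement, observe that the chain stretch is the length-weighted convex combination of the segment stretches, and apply Jensen's inequality to the convex antiderivative $\Phi(\lambda)=\int_1^\lambda f(s)\,ds$ (called $E$ in the paper) before summing over chains. Your remark that non-negativity of $f$ is not needed is consistent with the paper, which likewise assumes only that $f$ is increasing for this theorem.
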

\noindent We further investigate the physical mechanisms underlying this energy reduction. By taking $\vect{x}$ in Eq.~\eqref{eq:main-theorem-energy-order} at the equilibrium state $\vect{x}_s^*$ of the reference spring network, we obtain 
\begin{equation}
  U_e(\vect{x}_s^*) \le U_s^*.
\end{equation}
Physically, this implies that the system energy can be reduced simply by replacing some non-slidable nodes with slidable ones, even without allowing the slidable nodes to slide (Fig.~1D). This reduction is due to uniform tension across all segments of a chain crossing entanglements in the entangled network.
Next, we consider the relaxation of the entangled network from $\vect{x}_s^*$ to its own equilibrium state $\vect{x}_e^*$ and obtain 
\begin{equation}
  U_e^* \le U_e(\vect{x}_s^*).
\end{equation}
This reduction is due to stress redistribution enabled by the sliding of entanglements (Fig.~1E). The above conclusions are visualized in Fig.~1C.

\section*{Two-chain model}

To gain further insight into the underlying physics, we study a toy model consisting of two entangled chains. In the setup, two chains intersect in a two-dimensional space with the intersection being either a non-slidable node, representing a spring network (Fig.~2A), or a slidable node, representing an entangled network (Fig.~2C). Our in-situ photoelastic experiments \cite{liu2024fatigue} (fig.~\ref{fig-supp:SI_Fig1_SingleFiber}) enable direct visualization of stress distribution within individual chains, as the two-chain models are subjected to a directional displacement load and reach their respective equilibrium states (Fig.~2, B and D, and figs.~\ref{fig-supp:SI_Fig2_Nonslidable_TwoFiber_Exp_Sim} and \ref{fig-supp:SI_Fig3_SlidableTwoFiber_Exp_Sim}). 
As shown in Fig.~2E and fig.~\ref{fig-supp:SI_Fig4_SlidableTwoFiber_Exp_Sim}, the entangled network stores less elastic energy than the spring network, quantitatively validating the theoretical analysis in  Fig.~1C.
The entangled network also displays distinct behaviors in the entanglement motion and energy distribution. The entanglement initially slides along the chains and eventually becomes locked when it reaches the terminal end of one chain (Fig.~2, C and D). The energy remains relatively balanced between two chains during sliding, but significantly diverges once the entanglement locks (Fig.~2F).
From a topological perspective, locking transforms the system from entangled network with two chains to spring network with three chains, thereby altering the network topology (fig.~\ref{fig-supp:topologychange}). This further suggests that, under large deformations, entangled networks may be analyzed through topologically equivalent networks with fewer entanglements.

We plot the energy of both systems as a function of the intersection node position under the same displacement. Both energy surfaces are convex with a unique minimum, and the energy surface of the entangled network lies below that of the spring network (Fig.~2G and fig.~\ref{fig-supp:SI_Fig7_EnergylandscapeOfExperiment}), quantitatively validating our theoretical analysis in Theorem~\ref{thm:main-theorem-energy-order} and Fig.~1C. Interestingly, we observe that the energy surface of the entangled network has a singularity at the minimum, which is continuous but not differentiable, highlighted more clearly in the 1D slices (Fig.~2H and fig.~\ref{fig-supp:SI_Fig7_EnergylandscapeOfExperiment}). We further analyze this singular behavior and prove the existence of a transition point where the energy function switches from smooth to non-smooth as the displacement increases, see Theorem \ref{thm-supp:non-smoothness-energy-two-chain}.

The non-smoothness of the energy function in the entangled network presents a significant challenge for optimization: Newton's method fails to converge for the slidable system (Fig.~2I). To address this, we adopt a gradient descent method which robustly converges to the equilibrium state (Fig.~2I and fig.~\ref{fig-supp:SI_Fig6_Slidable_linear_simulation}). 
Furthermore, to extend the algorithm from the two-chain toy model to entangled networks of arbitrary topology and arbitrary number of nodes, we further resolve several key issues to balance accuracy and computational cost. Specifically, we (i) develop a new data structure to efficiently represent the entangled network; (ii) incorporate gradient descent with momentum, which leverages accumulated gradient information to accelerate convergence and suppress oscillations; (iii) apply a step-size decay strategy, which allows rapid progress in the early stage and stable refinement near equilibrium; (iv) select the initial guess based on the underlying mechanical problem to improve robustness; (v) introduce a new stopping criterion that achieves a practical balance between accuracy and computational cost; and (vi) design an efficient method for computing the gradient of the network energy. Further details are provided in Section~\ref{sec:numerical-method}. 

For the network topology, we focus on two-dimensional entangled networks arranged in a square lattice. The square lattice is selected over other periodic geometries (e.g., triangular or honeycomb lattices) because each node connects to exactly four neighbors. This connectivity allows each node to potentially act as a slidable node shared by two chains (Proposition \ref{prop:slidable-node-four-adjacent}). Our framework can also be extended to other lattice geometries, random networks, and three dimensions. Within the square lattice, the topology of an entangled network is fully characterized by the sets of slidable and non-slidable nodes, together with the orientation associated with each slidable node. In this study, for simplicity, we assume a uniform orientation for all nodes (fig.~\ref{fig-supp:orientation}) and vary only the sets of slidable and non-slidable nodes. Specifically, we consider two types of networks: periodic lattice networks generated from a unit cell, and random lattice networks where each node is randomly designated as slidable or non-slidable according to a prescribed probability.

\section*{Elasticity of entangled networks}

We apply the entangled network model to investigate the elasticity behavior. We begin with the periodic lattice network generated from a unit cell of size $2\times2$ and consider four periodic configurations with slidable node fractions $\varphi_s = 0\%$ (spring network), 25\%, 50\%, and $75\%$ (Fig.~3A). 
To characterize the strength of the network at failure, we introduce a failure criterion for each single chain: each chain breaks at a force $f_f$ corresponding to a stretch ratio $\Lambda_f$, with the associated energy required to rupture a chain in unit length denoted by $U_{\textrm{chain}}$.
The displacement boundary conditions are applied to the top and bottom, which quasi-statically stretch the network from the undeformed state until chain failure occurs, see the implementation details in Section \ref{sec:setup-entangle-network-elasticity}.

Our simulations show that, under a network stretch ratio $\lambda$, vertically oriented chains in the spring network ($\varphi_s=0\%$) stretch by $\lambda$, while horizontally oriented chains remain unstretched (Fig.~3B and fig.~\ref{fig-supp:SI_Fig8_0_Period_Network}). In contrast, the entangled network with $\varphi_s = 50\%$ shows homogenized deformation, with all chains stretched by $\Lambda = (\lambda + 1)/2$ due to stress redistribution enabled by slidable nodes (Fig.~3B and fig.~\ref{fig-supp:SI_Fig10_50_Period_Network}).
The case of $\varphi_s = 75\%$ behaves similarly (fig.~\ref{fig-supp:SI_Fig11_75_Period_Network}), while $\varphi_s = 25\%$ shows a non-uniform deformation pattern combining features of the spring network and entangled network with $\varphi_s = 50\%$ (fig.~\ref{fig-supp:SI_Fig9_25_Period_Network}). 
We compute the total energy of the network $U$, normalized by the total energy of the spring network at failure, as a function of the network stretch ratio $\lambda$.
As shown in Fig.~3C, the normalized total network energy at a fixed stretch ratio decreases with increasing slidable node fraction up to 50\% and then saturates. The maximum energy at failure varies non-monotonically: at low densities ($\varphi_s=25\%$), non-uniform energy distribution reduces failure energy, whereas at $\varphi_s=50\%$, homogenized chain deformations lead to a two-fold increase compared to the spring network.
By differentiating the network total energy $U$ with respect to the network stretch ratio $\lambda$, we calculate the nominal stress applied on the network, normalized by the nominal strength of the spring network (Fig.~3D). The modulus decreases with increasing $\varphi_s$, while the nominal strength shows a similar non-monotonic dependence on $\varphi_s$ (fig.~\ref{fig-supp:SI_Fig12_SimulationOfPeriodNetwork}). 
Finally, we derive analytical expressions for the total energy, nominal force, and modulus in Section \ref{sec:analytical-solution-elasticity}, and validate the simulation results.

We further perform in-situ photoelastic characterizations on hydrogel fabrics to experimentally study the role of entanglements as slidable nodes. The distribution of chain stretch ratios in fabrics with and without entanglements are visualized in Fig. 3E, figs.~\ref{fig-supp:SI_Fig13_Elasticity_Experiment_Nonslidable} and \ref{fig-supp:SI_Fig15_Elasticity_Experiment_50_slidable}. In both experiment and simulation, fabrics without entanglements exhibit a bimodal chain stretch ratio distribution, with one peak corresponding to load-bearing chains aligned with the stretch direction and the other peak corresponding to load-free chains perpendicular to the stretch direction (Fig.~3F, fig.~\ref{fig-supp:SI_Fig14_DistributionOfNonslidableNetwork}). In contrast, fabrics with 50\% entanglements exhibit a single peak, resulting from entanglement sliding that promotes uniform distribution of chain stretch ratios across the network (Fig.~3F, fig.~\ref{fig-supp:SI_Fig16_DistributionOfSlidableNetwork}). 

We proceed to simulate the random entangled network, where the slidable and non-slidable nodes are randomly distributed in the square lattice according to a given probability $0<\varphi_s<1$ (Fig.~3G and Section \ref{sec:data-structure-entangle-network}). 
In contrast to periodic networks, random networks exhibit a more scattered distribution of chain stretch ratios for $\varphi_s=50\%$ (Fig.~3H and fig.~\ref{fig-supp:SI_Fig18_50_Random_Network}). 
This behavior arises because the disorder in the node arrangement disrupts uniform stress redistribution between vertical and horizontal chains and leads to heterogeneous stretching across the network. Moreover, under large deformation, some slidable nodes become locked, preventing further sliding and amplifying the heterogeneity (Fig.~3H and fig.~\ref{fig-supp:SI_Fig20_10_50_99_networkstretching}).
As the slidable node fraction increases to $\varphi_s = 90\%$ (highly entangled regime), the likelihood of locking is greatly reduced, leading to a uniform stress distribution and enhanced strength (Fig. 3H and fig.~\ref{fig-supp:SI_Fig19_99_Random_Network}). To quantify these observations, we plot the fraction of locked slidable nodes among all slidable nodes (Fig.~3I). The fraction of locked nodes decreases with increasing $\varphi_s$, and no locking occurs in the highly entangled regime ($\varphi_s=90\%$).
Finally, we examine the chain stretch ratios during the stretching process (Fig.~3J). In the periodic network with $\varphi_s = 50\%$, all chains show a uniform linear stretch, following $\Lambda = (\lambda + 1)/2$. In contrast, the random network with $\varphi_s = 50\%$ shows a scattered distribution of chain stretch ratios due to heterogeneity.

We also plot the nominal stress during stretching (Fig.~3K) and the corresponding strength (Fig.~3L). Despite the randomness, the stress-stretch curves of random networks remain remarkably consistent across different samples for each fixed slidable node fraction (fig.~\ref{fig-supp:SI_Fig24_Simulation_ForceEnergyCurve}). Similar to periodic networks, the modulus of random networks decreases with increasing $\varphi_s$ and the strength shows a non-monotonic trend: it first decreases with increasing $\varphi_s$ due to heterogeneity, and then increases as the system approaches the highly entangled regime (fig.~\ref{fig-supp:SI_Fig25_Energy_ModulusOfRandomNetwork}).
Experiments are also conducted to validate the simulation results (Fig.~3, M and N). 
We synthesize a series of hydrogels with systematically tuned entanglements to cross-links ratios to investigate the role of entanglements in real material systems. The ratio is adjusted by maintaining a constant monomer concentration while varying the cross-link concentration. As the monomer to cross-linker ratio $M$ increases from 500 to 50,000, the entangled hydrogels exhibit a pronounced reduction in modulus (Fig.~3M and fig.~\ref{fig-supp:SI_Fig26_HydrogelExperimentData}) and a significant enhancement in strength (Fig. 3N  and fig.~\ref{fig-supp:SI_Fig26_HydrogelExperimentData}), which qualitatively align with our simulation results.

\section*{Fracture of entangled networks}

One key interest is to apply our model to quantify the role of entanglements on fracture of entangled networks, an emerging topic that remains poorly understood. We first consider the periodic network as in the elasticity simulation, but with a crack in the middle of the network, called the notched sample (Fig.~4A and fig.~\ref{fig-supp:SI_Fig27_CrackNetworkConfiguration}). We consider both linear and nonlinear constitutive laws of a single chain, with the breaking force and stretch denoted by $f_f$ and $\Lambda_f$  (Fig.~4B).

To compute the intrinsic fracture energy $\Gamma_0$, we conduct the standard pure shear test \cite{long2016fracture, deng2023nonlocal} in two steps: (i) measure the nominal stress versus stretch curve $S(\lambda)$ of an unnotched sample; (ii) load a notched sample until the chain around crack tip breaks (called crack initiation) to determine the critical stretch of the sample $\lambda_c$. Then $\Gamma_0$ is calculated as $\Gamma_0 = h_0 \int_{1}^{\lambda_c} S \, d\lambda$ where $h_0$ is the initial height of the sample. 
Directly simulating $\lambda_c$ by incrementally loading an entangled network until crack initiation is computationally expensive due to the stress concentration in fracture. To improve efficiency, we instead compute $\lambda_c$ using a bisection method (see Section \ref{sec:bisection-method-critical-stretch} for details).

We plot the distributions of chain stretch ratios at the crack initiation, for both linear and nonlinear constitutive laws and for different slidable node fractions $\varphi_s = 0\%$, $25\%$, $50\%$, and $75\%$ (Fig.~4C, figs.~\ref{fig-supp:SI_Fig28_Undeformed_20_80} to \ref{fig-supp:SI_Fig32_Deformed_100_400}).
As $\varphi_s$ increases, more chains near the crack tip experience large stretch ratios, and the far-field stretch ratio also increases; both effects are further amplified in nonlinear networks compared to linear ones.
To quantify the energy distribution near the crack tip, we define $U_n$ as the average energy per unit chain length in the $n$-th layer away from the crack tip, and $U_\infty$ as the corresponding far-field value. The ratio $U_n / U_\infty$ shows that both nonlinearity and entanglements reduce energy concentration around crack tip (Fig.~4D).
We further compare $\lambda_c$ (Fig.~4E) and $\Gamma_0$ (Fig.~4F) for both linear and nonlinear periodic networks. Both quantities increase with $\varphi_s$, indicating that entanglement can enhance toughness.
Together, these simulations provide the first quantitative evidence that entanglements can mitigate stress concentration and enable extreme toughening, an effect observed experimentally in soft materials but lacking quantitative foundations \cite{kim2021fracture, steck2023multiscale}.

Since the crack initiation is governed by when the crack-tip chain reaches its breaking threshold, it is essential to study the crack-tip chain stretch ratio, denoted by $\Lambda_{\textrm{tip}}$, as the crack opens. Our simulations reveal three distinct regimes of counterintuitive physical laws governing the crack-tip stretch.
(1) Small deformation regime (Fig.~4G, figs.~\ref{fig-supp:SI_Fig35_Nonslidable_linearlaw} and~\ref{fig-supp:SI_Fig36_Slidable_linearlaw}). In spring networks, $\Lambda_{\textrm{tip}}$ is nearly identical to the far-field stretch $\Lambda_{\infty}$, showing no stress concentration.
In entangled networks, $\Lambda_{\textrm{tip}}$ is even lower than $\Lambda_{\infty}$, revealing a stress deconcentration around the crack tip due to the entanglements, a counterintuitive phenomenon not previously reported or explained. See also the energy distribution near the crack tip in fig.~\ref{fig-supp:SI_Fig37_EnergySmallDeform}.
(2) Intermediate deformation regime (Fig.~4H, figs.~\ref{fig-supp:SI_Fig35_Nonslidable_linearlaw} and~\ref{fig-supp:SI_Fig36_Slidable_linearlaw}). The values of $\Lambda_{\textrm{tip}}$ for both linear and nonlinear constitutive laws coincide, indicating that $\Lambda_{\textrm{tip}}$ is independent of the specific form of the force-stretch relationship of chains and determined solely by the topology of the network. This behavior arises from the fact that any smooth nonlinear constitutive law can be locally approximated by a linear function through Taylor expansion (see Section \ref{sec:three-regime-crack-opening}).
(3) Large deformation regime (Fig.~4I, figs.~\ref{fig-supp:SI_Fig35_Nonslidable_linearlaw} and~\ref{fig-supp:SI_Fig36_Slidable_linearlaw}). Surprisingly, we observe that $\Lambda_{\textrm{tip}}$ increases linearly with the network stretch ratio $\lambda$, regardless of the constitutive law (linear or nonlinear) or network type (spring network or entangled network). Even more surprisingly, this linear scaling turns out to be closely related to the scaling law for the spring network reported by \cite{hartquist2025scaling}, which states that at large deformation the intrinsic fracture energy satisfies $\Gamma_0 = \alpha l_0 f_f (\Lambda_f - 1)$, where $\alpha$ is a constant determined by the network topology, $l_0$ is the unit length of the network.
Assuming a monomial form of the constitutive law $f(\Lambda) = (\Lambda - 1)^p$ with $p>0$, we prove that $\Lambda_{\textrm{tip}}$ linearly increases with $\lambda$ with the slope
\begin{equation}\label{eq:main-slope-large-deformation}
  k_s = \brac{\frac{h_0}{\alpha l_0 (p+1)}}^{\frac{1}{p+1}}.
\end{equation}
This implies that stronger nonlinearity (i.e., larger $p$) leads to a smaller increasing slope, providing a quantitative description of how nonlinearity mitigates stress concentration at the crack tip. A similar derivation for periodic entangled networks with $\varphi_s = 50\%$ yields the slope:
\begin{equation}\label{eq:main-slope-large-deformation-entangle}
  k_e = \brac{\frac{h_0}{\alpha l_0 2^p(p+1)}}^{\frac{1}{p+1}}.
\end{equation}
The additional factor of $2^p$ in Eq.~\eqref{eq:main-slope-large-deformation-entangle} further reduces the slope, demonstrating that entangled networks are even more effective in mitigating stress concentration compared to spring networks. The detailed derivations of Eqs.~\eqref{eq:main-slope-large-deformation} and \eqref{eq:main-slope-large-deformation-entangle} are provided in Section \ref{sec:three-regime-crack-opening}.

To experimentally examine the stress distribution near the crack tip, we conduct in-situ photoelastic characterizations on notched hyrogel fabrics with and without entanglements. As shown in Fig.~4J, figs.~\ref{fig-supp:SI_Fig38_Nonslidable_linear_Fiber}-\ref{fig-supp:SI_Fig41_Slidable_nonlinear_Fiber}, the notched hydrogel fabric containing entanglements with $\varphi_s = 50\%$ exhibits a more uniform stress distribution. We also plot the crack-tip stretch as a function of the network stretch ratio (Fig.~4K and fig.~\ref{fig-supp:SI_Fig42_SmallDeform_Experiment}), which confirms stress deconcentration at small deformation and linear scaling at large deformation, consistent with our simulation findings.

To demonstrate the capability of our model in simulating more realistic systems, we investigate the fracture behavior of the random entangled networks, which are generated by randomly assigning slidable and non-slidable nodes in the square lattice with a given slidable node fraction $0 < \varphi_s < 1$ (Fig.~5A, see details in Section \ref{sec:data-structure-entangle-network}).
For each $\varphi_s$, we generate 100 independent random samples of networks. Fig.~5B shows the distributions of chain stretch ratios for the sample corresponding to the median value of $\Gamma_0$ for $\varphi_s = 10\%$, $50\%$, $90\%$, $99\%$, under both linear and nonlinear constitutive laws, at the crack initiation. We observe that the nonlinearity and entanglements mitigate stress concentration, consistent with our observations in periodic networks. The simulation results with other slidable node fraction ($30\%$ and $70\%$) and other lattice sizes are included in figs.~\ref{fig-supp:SI_Fig43_Random_10_20_undeformed} to~\ref{fig-supp:SI_Fig58_Random_99_100_deformed}.
Similar to our analysis on periodic networks, we compute the average energy per unit chain length in the $n$-th layer away from the crack tip, normalized by that in the far field (Fig.~5C). The plots confirm that both nonlinearity and entanglements help redistribute energy more uniformly around the crack tip and thus reduce stress concentration.

We further present the distributions of $\Gamma_0$ for all 100 samples under both linear and nonlinear constitutive laws in Fig.~5D and E, figs.~\ref{fig-supp:SI_Fig59_Critical_stretch_random_box_plot} and~\ref{fig-supp:SI_Fig60_Gamma0_random_box_plot}. We have several key observations: (1) in average, $\Gamma_0$ increases with the slidable node fraction $\varphi_s$; (2) random networks may result in $\Gamma_0$ values significantly higher than those of periodic networks with the same $\varphi_s$; (3) the distribution of $\Gamma_0$ becomes increasingly dispersed as $\varphi_s$ increases.
To investigate the origin of the dispersed behavior, we examine the dependence of $\Gamma_0$ on the crack-tip chain length, denoted by $l_{\textrm{tip}}$ (Fig.~5F, figs.~\ref{fig-supp:SI_Fig61_ChainlengthVSGamma0_60_linear} and \ref{fig-supp:SI_Fig62_ChainlengthVSGamma0_100}). We find that  $\Gamma_0$ generally increases with $l_{\textrm{tip}}$, but the growth slows down and eventually saturates for longer chains. Physically, this reflects two effects: (1) longer chains require more energy to break, resulting in larger $\Gamma_0$; (2) beyond a certain length, the crack tip chain alone is no longer dominant, and surrounding chains play an increasingly important role in stress redistribution and toughness.
We further analyze the probablity distribution of chain lengths at the crack tip. Assuming a sufficiently large network, we analytically derive the probability of the chain at the crack tip:
\begin{equation}
  P(l_{\textrm{tip}} = k) = 
  \begin{cases}
    (1-\varphi_s)^2, & k = 1, \\
    \varphi_s^{k-1} (1 - \varphi_s) (2 - \varphi_s), & k = 2, 3, \dots
  \end{cases}
\end{equation}
See the details in Section \ref{sec:prob-crack-tip-length-rand-network}. As shown in Fig.~5G, this theoretical prediction agrees closely with numerical results. As $\varphi_s$ increases, the distribution of $l_{\textrm{tip}}$ becomes more dispersed, which in turn explains the broader distribution of $\Gamma_0$ observed for larger $\varphi_s$ in Fig.~5E. We further perform fatigue characterizations to measure the intrinsic fracture energy $\Gamma_0$ of hydrogels with systematically tuned entanglement to cross-link ratios. As the monomer to cross-linker ratio increases from 500 to 50,000, the entangled hydrogels have a significant enhancement in $\Gamma_0$ (Fig.~5H, figs.~\ref{fig-supp:SI_Fig63_fatigue_20K} and \ref{fig-supp:SI_Fig64_fatigue_all}), qualitatively validating our simulation results (Figs. 5D and 5E).

\section*{Discussion}

In this work, we developed a mathematical framework for the mechanics of entangled networks, consisting of three key components: (1) a graph-based model that represents the topological constraints of entanglements; (2) a theoretical analysis that establishes the fundamental mathematical and physical properties of the model; (3) an optimization algorithm that enable efficient and robust solutions. Under this framework, we discovered new physical principles that govern the elasticity and fracture of entangled networks, validated by experiments on entangled fabrics and hydrogels.

For physics and mechanics, this framework goes beyond a new methodology. More importantly, it redefines how entangled networks can be systematically studied, by establishing topology as a fundamental determinant of their mechanics. This provides a unified basis for exploring the physical principles that govern diverse entangled systems across multiple scales.
For applied math, this work introduces a new class of graphs inspired by the topology of entangled networks, which could open avenues for exploring its fundamental properties within graph theory and related mathematical disciplines.
For material science, these physical insights lay the foundation for designing and optimizing reconfigurable materials, including biological tissues, active gels, and intelligent metamaterials.

While the present work focuses on equilibrium states and chains idealized as stretch-only strands, future studies will extend the framework to capture the dynamic behavior of entangled networks and to include bending effects for semi-flexible networks. Together, these efforts will broaden the applicability of the framework and deepen our understanding of the mechanics of entangled networks.

\newpage

\vspace*{0pt}
\begin{figure}[H]
  \centering
  \includegraphics[trim={0cm 0cm 0cm 0cm},clip, width=1.0\textwidth]{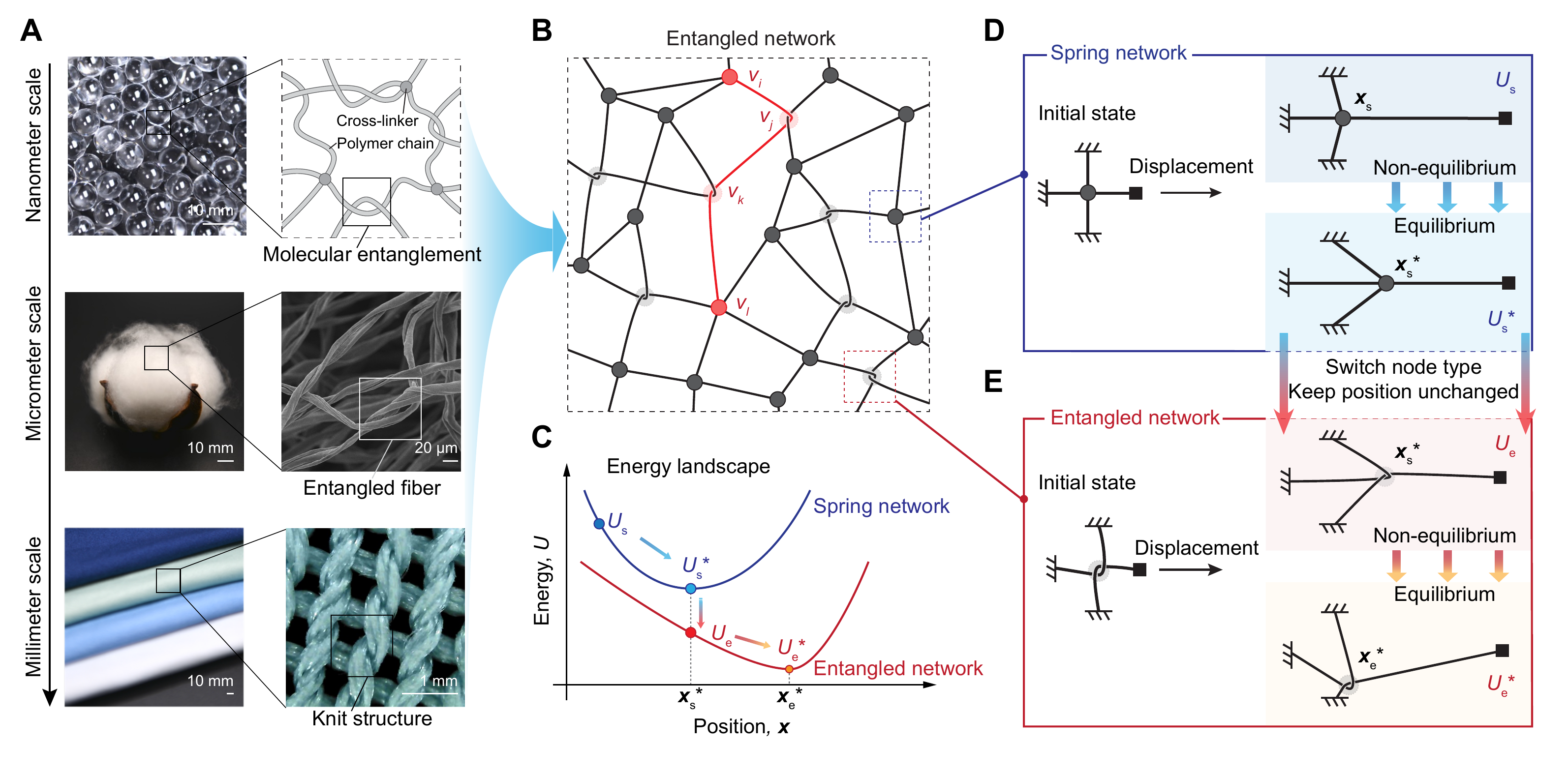}
\caption{
\textbf{Graph-based formulation of entangled networks.}
\textbf{(A)} Examples of entangled network across multiple length scales: molecular entanglements in hydrogels (nanometer scale), entangled fibers in cotton (micrometer scale), and knitted structures in fabrics (millimeter scale).
\textbf{(B)} Graph-based representation of an entangled network. Chain endpoints are non-slidable nodes in black color, while internal nodes are slidable nodes (entanglements) in grey color. 
\textbf{(C)} Comparison of the total elastic energy surfaces of an entangled network and its reference spring network, showing that the entangled network always has lower energy.
\textbf{(D-E)} Illustration of energy reduction mechanisms of entangled networks. In the spring network, the system reduces energy by relaxing from a non-equilibrium to an equilibrium state (shown in (C) from $U_s$ to $U_s^*$). For the equilibrium state of the spring network, replacing non-slidable nodes with slidable nodes further decreases the energy (shown in (C) from $U_s^*$ to $U_e$), due to uniform tension along chains crossing entanglements. Allowing slidable nodes to slide enables the relaxation, leading to lower equilibrium energy in the entangled network (shown in (C) from $U_e$ to $U_e^*$).
}
  \label{fig-main-1}
\end{figure}

\vspace*{0pt}
\begin{figure}[H]
  \centering
  \includegraphics[trim={0cm 0cm 0cm 0cm},clip, width=1.0\textwidth]{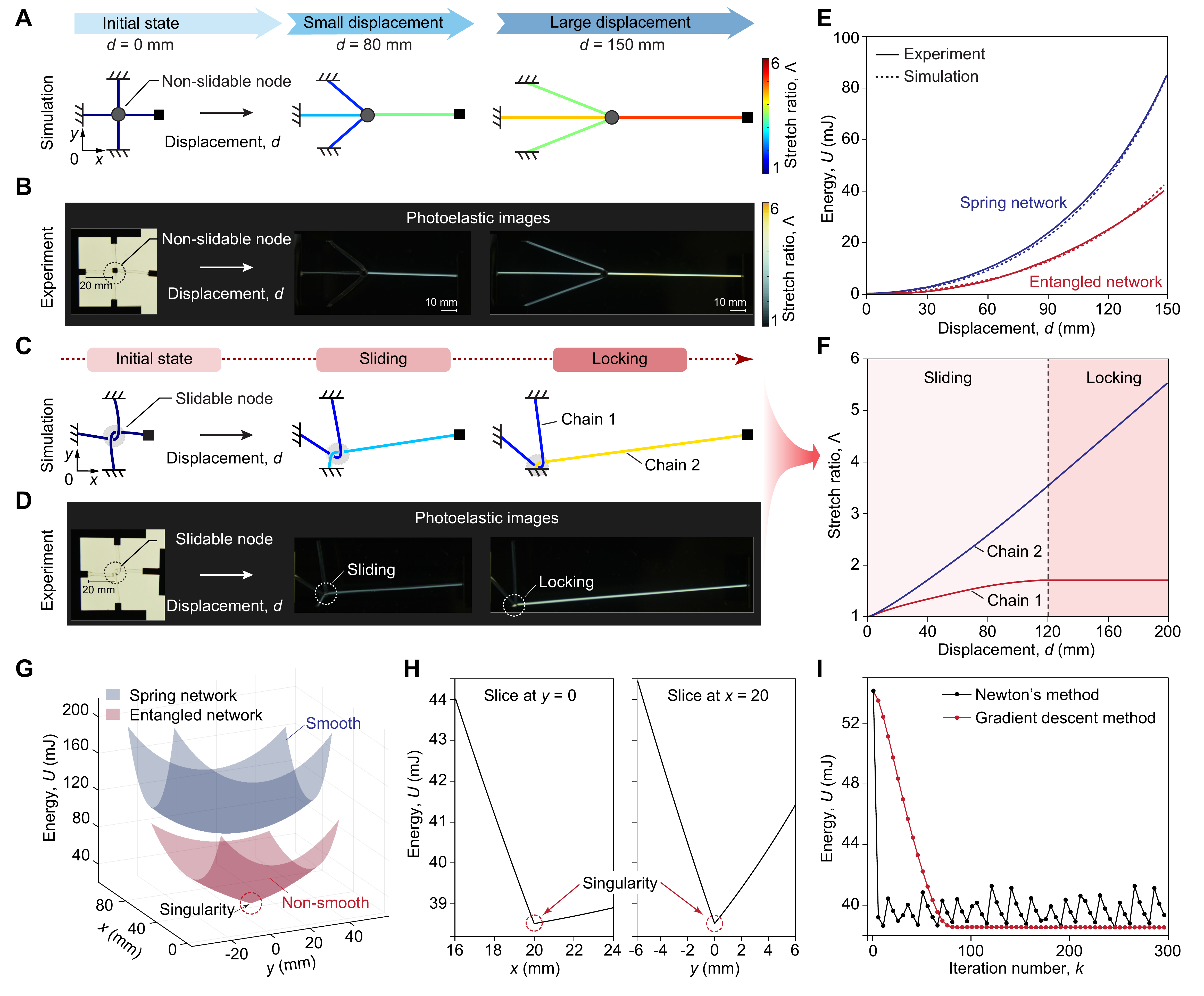}
\caption{See caption on the next page.}
  \label{fig-main-2}
\end{figure}

\newpage

\noindent\textbf{Figure 2: Two-chain model.}
\textbf{(A-D)} Simulation and photoelastic experiment visualization of two-chain model with either a non-slidable node, representing a spring network {(A, B)} or a slidable node, representing an entangled network {(C, D)}, subjected to directional displacement. In the entangled network, the entanglement slides along the chains and eventually locks once it reaches the terminal end of a chain.
\textbf{(E)} Comparison of elastic energy between simulation and experiment for entangled network and spring network, showing quantitative agreement. The entangled network stores less elastic energy than the spring network, validating the theoretical prediction.
\textbf{(F)} Energy distribution between the two chains in the entangled network. The energy remains relatively balanced between chains during sliding, but diverges once the entanglement locks.
\textbf{(G)} Energy surfaces of the two systems as functions of intersection node position, demonstrating convex surfaces with a unique minimum. The surface of the entangled network lies below that of the spring network and exhibits a singularity at the minimum.
\textbf{(H)} One-dimensional slices of the energy surface of entangled network at $y=0$ and $x=20$, highlighting the singularity at the minimum, where the energy function is continuous but non-differentiable.
\textbf{(I)} Optimization algorithms for solving the energy minimization problem of the entangled network. Newton's method exhibits oscillations and fails to converge due to non-smoothness of the energy function, while gradient descent converges to the equilibrium state.

\newpage
\vspace*{0pt}
\begin{figure}[H]
  \centering
  \includegraphics[trim={0cm 0cm 0cm 0cm},clip, width=1.0\textwidth]{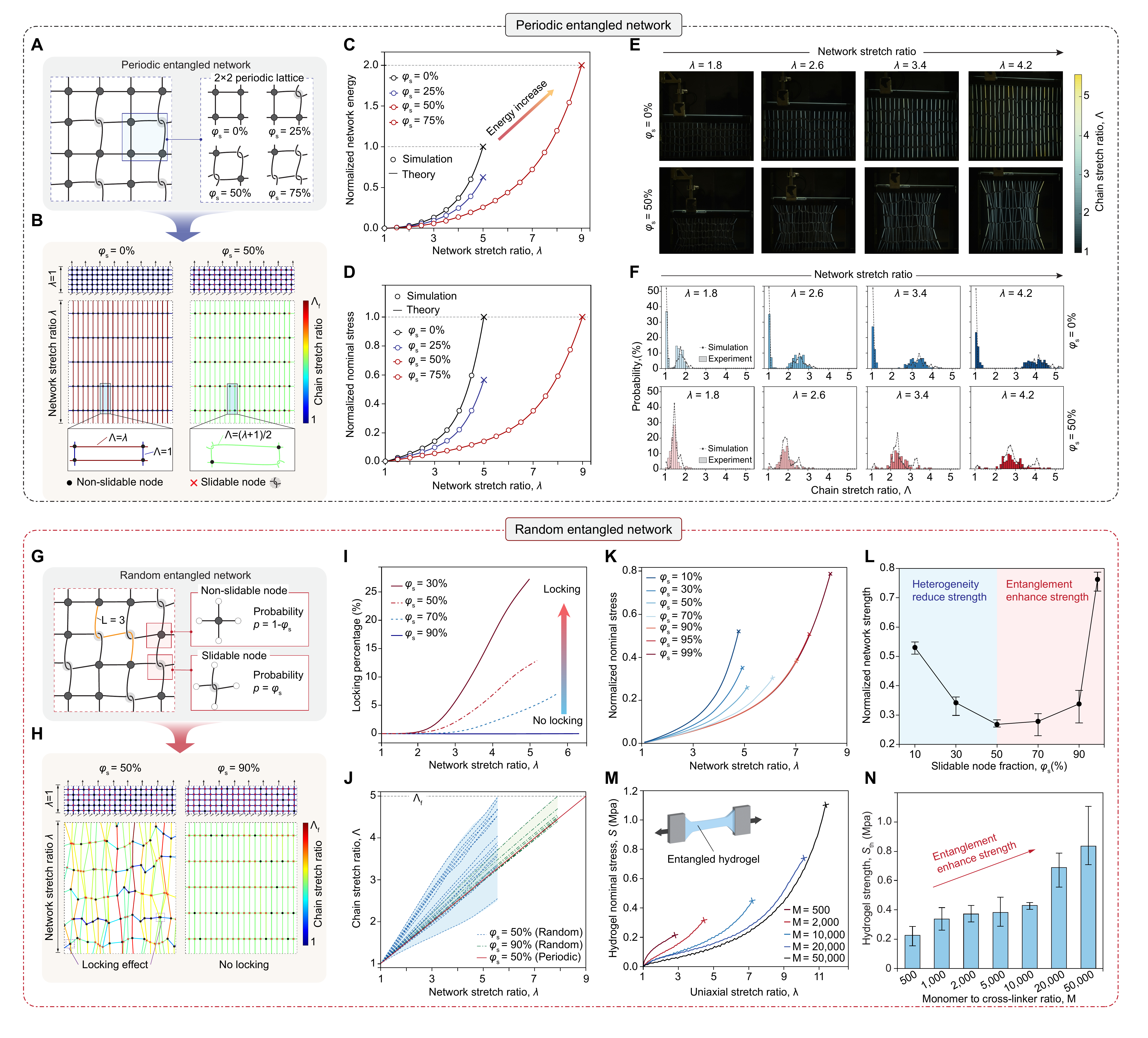}
\caption{See caption on the next page.}
  \label{fig-main-3}
\end{figure}

\newpage

\noindent\textbf{Figure 3. Elasticity of entangled networks.}
\textbf{(A)} Periodic entangle network configurations with slidable node fractions $\varphi_s=0\%$ (spring network), $25\%$, $50\%$, and $75\%$.
\textbf{(B)} Simulations of periodic entangled networks under uniaxial tension with network stretch ratio $\lambda$ in the vertical direction. For the spring network ($\varphi_s=0\%$), vertical chains stretched by $\lambda$ while horizontal chains remain unstretched. For the entangled network ($\varphi_s=50\%$), all chains stretch by $(\lambda+1)/2$ due to stress redistribution enabled by slidable nodes.
\textbf{(C)} Total network energy, normalized by the total energy of the spring network at failure, as a function of network stretch ratio $\lambda$, showing decreasing energy at a fix stretch ratio with increasing $\varphi_s$ and a non-monotonic energy at failure. 
\textbf{(D)} Nominal stress-stretch curves normalized by the strength of the spring network. The modulus decreases with $\varphi_s$, while strength shows non-monotonic dependence.
\textbf{(E-F)} In-situ photoelastic characterization of hydrogel fabrics. Fabrics without entanglements exhibit a bimodal chain stretch ratio distribution, whereas fabrics with $\varphi_s=50\%$ show a single peak due to homogenized stress redistribution. Simulations  confirm this transition.
\textbf{(G)} Random entangle network configurations with a given fraction of slidable node $0< \varphi_s < 1$.
\textbf{(H)} Simulations of random networks under uniaxial tension with $\varphi_s=50\%$ and $90\%$. Scattered stretch ratio distributions at $\varphi_s=50\%$, with some slidable nodes locked. At $\varphi_s=90\%$, locking is suppressed and stress distribution becomes more uniform.
\textbf{(I)} Fraction of locked slidable nodes  decreases with increasing entanglement and vanishes in the highly entangled regime ($\varphi_s=90\%$).
\textbf{(J)} Chain stretch ratios during stretching: periodic network ($\varphi_s=50\%$) shows a uniform linear stretch for all chains, following $\Lambda = (\lambda + 1)/2$, while random network ($\varphi_s=50\%$) exhibits scattered distribution of chain stretch ratios due to heterogeneity.
\textbf{(K)} Nominal force-stretch curves of random networks with different slidable node fractions.
\textbf{(L)} Nominal strength of random networks with different slidable node fractions, shows a non-monotonic trend, the strength first decrease with $\varphi_s$ due to heterogeneity and increase again in the highly entangled regime.
\textbf{(M)} Experimental modulus of hydrogels, showing modulus reduction with increasing entanglement density, consistent with simulations.
\textbf{(N)} Experimental strength of hydrogels, showing significant strengthening with increasing entanglement density, consistent with simulations.

\newpage
\vspace*{0pt}
\begin{figure}[H]
  \centering
  \includegraphics[trim={0cm 1.5cm 0cm 1cm},clip, width=1.0\textwidth]{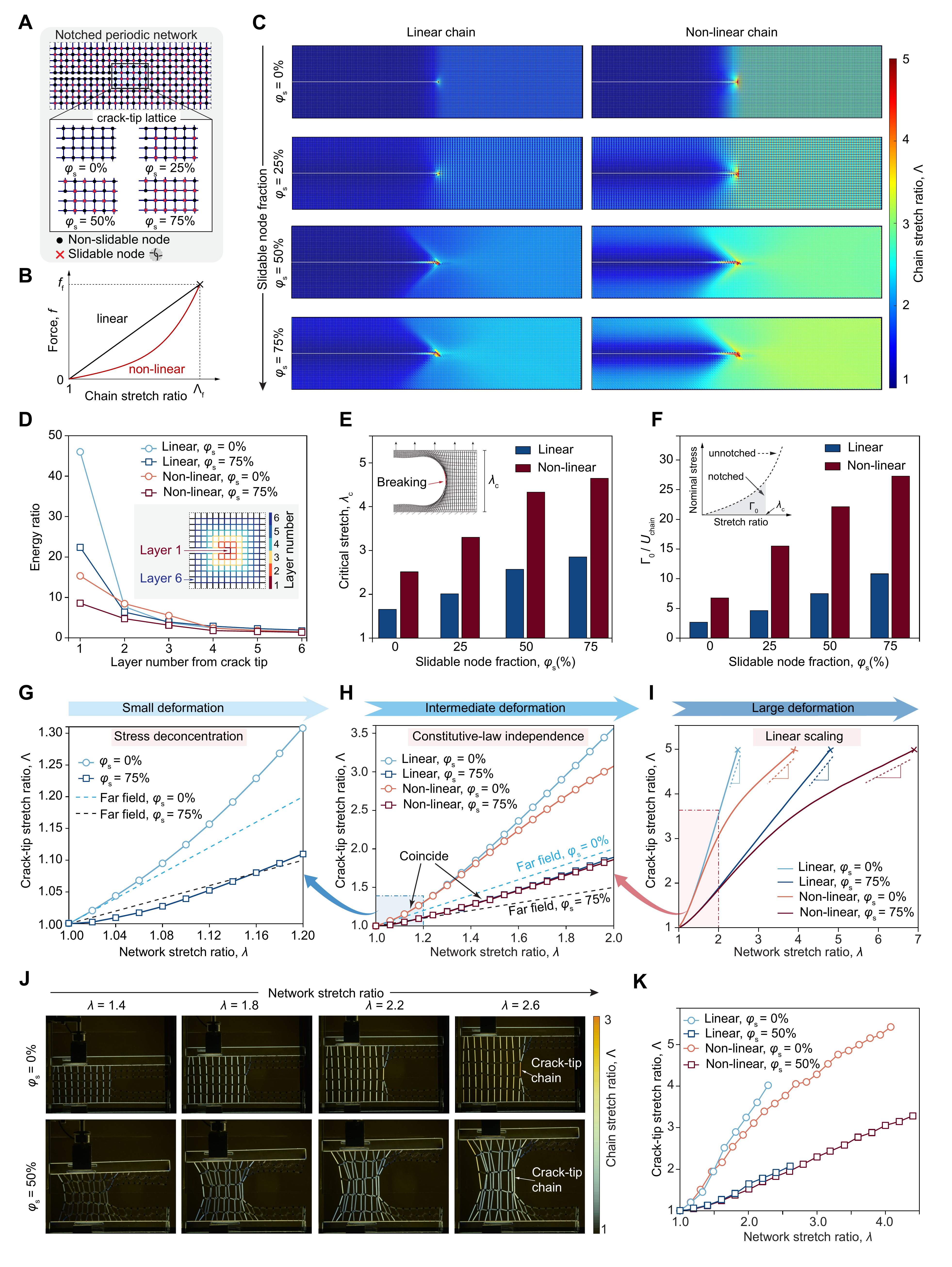}
\caption{See caption on the next page.}
  \label{fig-main-4}
\end{figure}

\newpage

\noindent\textbf{Figure 4. Fracture of periodic entangled networks.}
\textbf{(A)} Periodic entangle network configurations with crack with slidable node fractions $\varphi_s = 0\%$ (spring network), $25\%$, $50\%$, and $75\%$.
\textbf{(B)} Constitutive laws of a single chain, including both linear and nonlinear cases, with breaking force $f_f$ and breaking stretch $\Lambda_f$.
\textbf{(C)} Distributions of chain stretch ratios at crack initiation for entangled networks with different slidable node fractions, under both linear and nonlinear constitutive laws. 
\textbf{(D)} Energy distribution near the crack tip quantified by the ratio $U_n/U_\infty$, where $U_n$ is the average energy per unit chain length in the $n$-th layer away from the crack tip, and $U_{\infty}$ is the corresponding far-field value, showing that both nonlinearity and entanglements reduce stress concentration.
\textbf{(E)} Critical stretch $\lambda_c$ at crack initiation, obtained by loading a notched sample until the chain at the crack tip breaks.
\textbf{(F)} Intrinsic fracture energy $\Gamma_0$, computed by loading an unnotched sample to obtain the nominal stress-stretch curve $S(\lambda)$ and evaluating $\Gamma_0 = h_0 \int_{1}^{\lambda_c} S \, d\lambda$, where $h_0$ is the initial height of the sample.
\textbf{(G-I)} Three regimes of crack-tip stretch $\Lambda_{\textrm{tip}}$ as crack opens: (G) small deformation, where spring networks show no stress concentration ($\Lambda_{\textrm{tip}} \approx \Lambda_\infty$) and entangled networks exhibit stress deconcentration ($\Lambda_{\textrm{tip}} < \Lambda_\infty$); (H) intermediate deformation, where $\Lambda_{\textrm{tip}}$ is independent of constitutive law and determined by network topology; (I) large deformation, where $\Lambda_{\textrm{tip}}$ increases linearly with $\Lambda_{\infty}$.
\textbf{(J)} In-situ photoelastic characterizations measuring stress distribution of notched hydrogel fabrics with and without entanglements.
\textbf{(K)} Crack-tip stretch $\Lambda_{\textrm{tip}}$ versus network stretch ratio $\lambda$ from experiments, confirming stress-deconcentration at small deformation and linear scaling at large deformation.

\newpage
\vspace*{0pt}
\begin{figure}[H]
  \centering
  \includegraphics[trim={0cm 0cm 0cm 0cm},clip, width=1.0\textwidth]{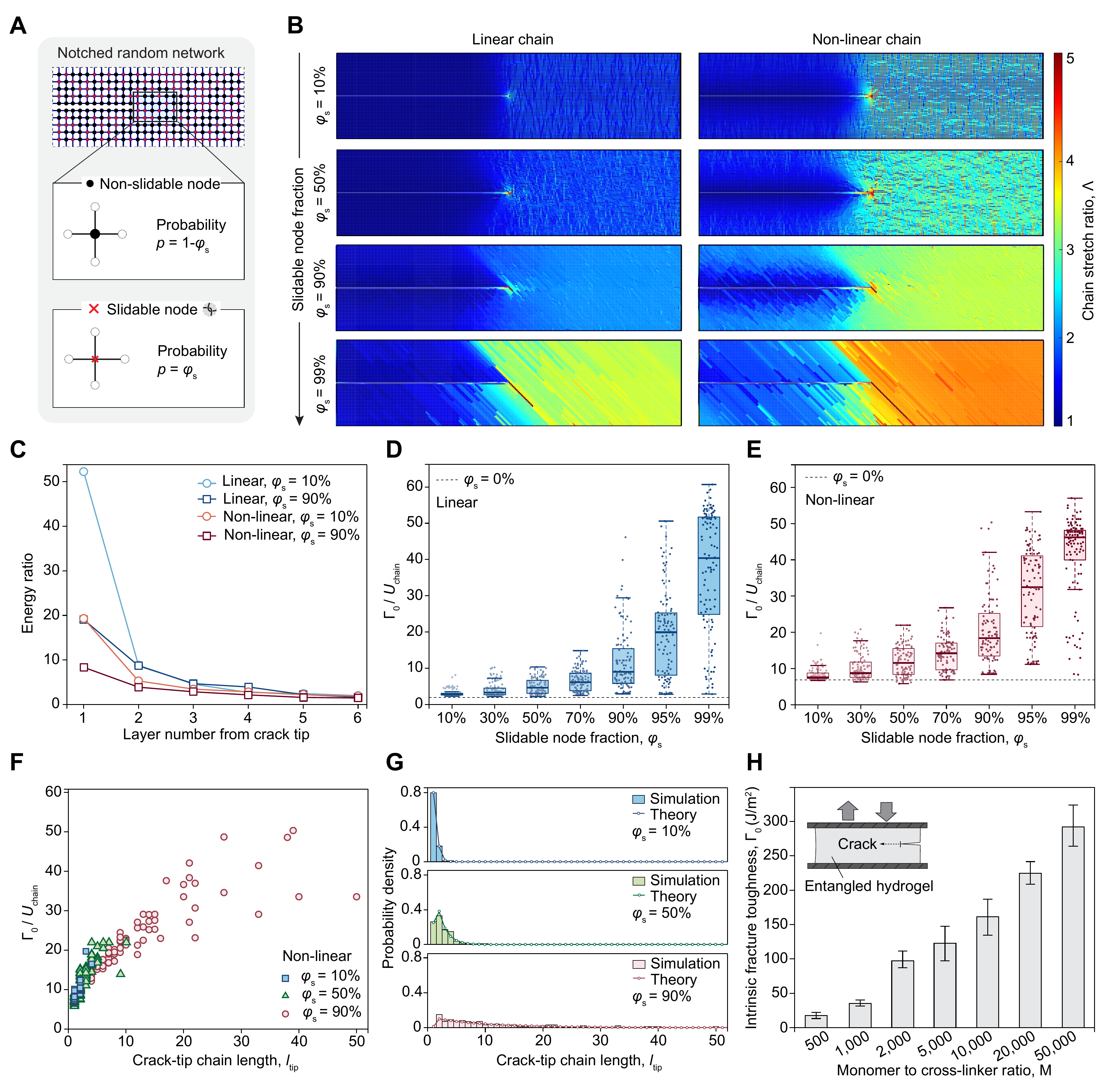}
\caption{See caption on the next page.}
  \label{fig-main-5}
\end{figure}

\newpage

\noindent\textbf{Figure 5. Fracture of random entangled networks.}
\textbf{(A)} Random entangled networks configurations with crack, generated by randomly assigning slidable and non-slidable nodes with a given slidable node fraction $\varphi_s$.
\textbf{(B)} Distributions of chain stretch ratios at crack initiation for representative random networks (median $\Gamma_0$) with $\varphi_s = 10\%$, $50\%$, $90\%$, $99\%$, under both linear and nonlinear constitutive laws.
\textbf{(C)} Energy distribution near the crack tip quantified by the ratio $U_n/U_\infty$, where $U_n$ as the average energy per unit chain length in the $n$-th layer away from the crack tip, and $U_{\infty}$ as the corresponding far-field value, showing that both nonlinearity and entanglements reduce stress concentration.
\textbf{(D-E)} Distributions of intrinsic fracture energy $\Gamma_0$ from 100 random realizations for each $\varphi_s$, under linear (D) and nonlinear (E) constitutive laws. On average, $\Gamma_0$ increases with $\varphi_s$; random networks may yield significantly higher $\Gamma_0$ than periodic ones; and the spread of $\Gamma_0$ widens as $\varphi_s$ increases.
\textbf{(F)} Dependence of $\Gamma_0$ on the crack-tip chain length $l_{\textrm{tip}}$, showing increase followed by saturation.
\textbf{(G)} Probability distribution of $l_{\textrm{tip}}$, comparing theoretical analysis with numerical simulations. The distribution broadens as $\varphi_s$ increases, explaining the wider spread of $\Gamma_0$ as $\varphi_s$ increases.
\textbf{(H)} Measured intrinsic fracture energy of hydrogels with tuned monomer to cross-linker ratio $M$.

\clearpage %

\bibliography{ref} %

\begin{thebibliography}{10}
\providecommand{\url}[1]{\texttt{#1}}
\expandafter\ifx\csname urlstyle\endcsname\relax
  \providecommand{\doi}[1]{doi:\discretionary{}{}{}#1}\else
  \providecommand{\doi}{doi:\discretionary{}{}{}\begingroup \urlstyle{rm}\Url}\fi

\bibitem{schulz2025second}
A.~K. Schulz, \emph{et~al.}, Second harmonic generation imaging reveals entanglement of collagen fibers in the elephant trunk skin dermis. \emph{Communications Biology} \textbf{8}~(1), 17 (2025).

\bibitem{wang2024trapped}
F.-S. Wang, L.~M. Kosovsky, E.~C. Krist, B.~J. Kruse, A.~V. Zhukhovitskiy, Trapped entanglements in polymer networks: formation and characterization. \emph{Trends in Chemistry} \textbf{6}~(8), 447--458 (2024).

\bibitem{kim2021fracture}
J.~Kim, G.~Zhang, M.~Shi, Z.~Suo, Fracture, fatigue, and friction of polymers in which entanglements greatly outnumber cross-links. \emph{Science} \textbf{374}~(6564), 212--216 (2021).

\bibitem{fu2023cartilage}
L.~Fu, \emph{et~al.}, Cartilage-like protein hydrogels engineered via entanglement. \emph{Nature} \textbf{618}~(7966), 740--747 (2023).

\bibitem{zheng2022fracture}
D.~Zheng, S.~Lin, J.~Ni, X.~Zhao, Fracture and fatigue of entangled and unentangled polymer networks. \emph{Extreme Mechanics Letters} \textbf{51}, 101608 (2022).

\bibitem{wang2025self}
J.~Wang, \emph{et~al.}, Self-Assembled entanglements in elastomers for superior toughness and fatigue resistance. \emph{Macromolecules} \textbf{58}~(9), 4438--4446 (2025).

\bibitem{zhou20253d}
W.~Zhou, \emph{et~al.}, 3D polycatenated architected materials. \emph{Science} \textbf{387}~(6731), 269--277 (2025).

\bibitem{singal2024programming}
K.~Singal, \emph{et~al.}, Programming mechanics in knitted materials, stitch by stitch. \emph{Nature Communications} \textbf{15}~(1), 2622 (2024).

\bibitem{liu2021tough}
C.~Liu, \emph{et~al.}, Tough hydrogels with rapid self-reinforcement. \emph{Science} \textbf{372}~(6546), 1078--1081 (2021).

\bibitem{lake1967strength}
G.~J. Lake, A.~G. Thomas, The strength of highly elastic materials. \emph{Proceedings of the Royal Society of London. Series A. Mathematical and Physical Sciences} \textbf{300}~(1460), 108--119 (1967).

\bibitem{ge2013molecular}
T.~Ge, F.~Pierce, D.~Perahia, G.~S. Grest, M.~O. Robbins, Molecular dynamics simulations of polymer welding: Strength from interfacial entanglements. \emph{Physical Review Letters} \textbf{110}~(9), 098301 (2013).

\bibitem{gula2020computational}
I.~A. Gula, H.~A. Karimi-Varzaneh, C.~Svaneborg, Computational study of cross-link and entanglement contributions to the elastic properties of model PDMS networks. \emph{Macromolecules} \textbf{53}~(16), 6907--6927 (2020).

\bibitem{gusev2024molecular}
A.~A. Gusev, T.~Bernhard, Molecular model for linear viscoelastic properties of entangled polymer networks. \emph{Macromolecules} \textbf{57}~(21), 10152--10163 (2024).

\bibitem{rubinstein2002elasticity}
M.~Rubinstein, S.~Panyukov, Elasticity of polymer networks. \emph{Macromolecules} \textbf{35}~(17), 6670--6686 (2002).

\bibitem{davidson2013nonaffine}
J.~D. Davidson, N.~C. Goulbourne, A nonaffine network model for elastomers undergoing finite deformations. \emph{Journal of the Mechanics and Physics of Solids} \textbf{61}~(8), 1784--1797 (2013).

\bibitem{liu2024tuning}
J.~Liu, D.~Lu, B.~Chen, Tuning mechanical behaviors of highly entangled hydrogels with the random distribution of mobile entanglements. \emph{Applied Mathematics and Mechanics} \textbf{45}~(2), 277--294 (2024).

\bibitem{edwards1967statistical}
S.~Edwards, The statistical mechanics of polymerized material. \emph{Proceedings of the Physical Society} \textbf{92}~(1), 9 (1967).

\bibitem{de1971reptation}
P.-G. De~Gennes, Reptation of a polymer chain in the presence of fixed obstacles. \emph{The journal of chemical physics} \textbf{55}~(2), 572--579 (1971).

\bibitem{doi1978dynamics}
M.~Doi, S.~Edwards, Dynamics of concentrated polymer systems. Part 1.—{B}rownian motion in the equilibrium state. \emph{Journal of the Chemical Society, Faraday Transactions 2: Molecular and Chemical Physics} \textbf{74}, 1789--1801 (1978).

\bibitem{doi1988theory}
M.~Doi, S.~F. Edwards, \emph{The {T}heory of {P}olymer {D}ynamics}, vol.~73 (Oxford University Press) (1988).

\bibitem{edwards1988tube}
S.~Edwards, T.~A. Vilgis, The tube model theory of rubber elasticity. \emph{Reports on Progress in Physics} \textbf{51}~(2), 243 (1988).

\bibitem{hua1998segment}
C.~C. Hua, J.~D. Schieber, Segment connectivity, chain-length breathing, segmental stretch, and constraint release in reptation models. I. Theory and single-step strain predictions. \emph{The Journal of chemical physics} \textbf{109}~(22), 10018--10027 (1998).

\bibitem{doi2003molecular}
M.~Doi, J.-i. Takimoto, Molecular modelling of entanglement. \emph{Philosophical Transactions of the Royal Society of London. Series A: Mathematical, Physical and Engineering Sciences} \textbf{361}~(1805), 641--652 (2003).

\bibitem{masubuchi2001brownian}
Y.~Masubuchi, \emph{et~al.}, Brownian simulations of a network of reptating primitive chains. \emph{The Journal of Chemical Physics} \textbf{115}~(9), 4387--4394 (2001).

\bibitem{ramirez2013theoretically}
A.~Ram{\'\i}rez-Hern{\'a}ndez, M.~M{\"u}ller, J.~J. De~Pablo, Theoretically informed entangled polymer simulations: linear and non-linear rheology of melts. \emph{Soft Matter} \textbf{9}~(6), 2030--2036 (2013).

\bibitem{schneider2023entanglements}
L.~Schneider, J.~J. de~Pablo, Entanglements via slip springs with soft, coarse-grained models for systems having explicit liquid--vapor interfaces. \emph{Macromolecules} \textbf{56}~(18), 7445--7453 (2023).

\bibitem{lamont2023micromechanics}
S.~C. Lamont, K.~Weishaar, C.~J. Bruns, F.~J. Vernerey, Micromechanics and damage in slide-ring networks. \emph{Physical Review E} \textbf{107}~(4), 044501 (2023).

\bibitem{riggleman2009entanglement}
R.~A. Riggleman, G.~Toepperwein, G.~J. Papakonstantopoulos, J.-L. Barrat, J.~J. de~Pablo, Entanglement network in nanoparticle reinforced polymers. \emph{The Journal of chemical physics} \textbf{130}~(24) (2009).

\bibitem{chappa2012translationally}
V.~C. Chappa, D.~C. Morse, A.~Zippelius, M.~M{\"u}ller, Translationally invariant slip-spring model for entangled polymer dynamics. \emph{Physical review letters} \textbf{109}~(14), 148302 (2012).

\bibitem{assadi2025nonaffine}
S.~Assadi, \emph{et~al.}, Nonaffine motion and network reorganization in entangled polymer networks. \emph{Soft Matter} \textbf{21}~(11), 2096--2113 (2025).

\bibitem{liu2024fatigue}
J.~Liu, W.~Li, Y.~She, S.~Blanchard, S.~Lin, Fatigue-Resistant Mechanoresponsive Color-Changing Hydrogels for Vision-Based Tactile Robots. \emph{Advanced Materials} p. 2407925 (2024).

\bibitem{long2016fracture}
R.~Long, C.-Y. Hui, Fracture toughness of hydrogels: measurement and interpretation. \emph{Soft Matter} \textbf{12}~(39), 8069--8086 (2016).

\bibitem{deng2023nonlocal}
B.~Deng, S.~Wang, C.~Hartquist, X.~Zhao, Nonlocal intrinsic fracture energy of polymerlike networks. \emph{Physical Review Letters} \textbf{131}~(22), 228102 (2023).

\bibitem{steck2023multiscale}
J.~Steck, J.~Kim, Y.~Kutsovsky, Z.~Suo, Multiscale stress deconcentration amplifies fatigue resistance of rubber. \emph{Nature} \textbf{624}~(7991), 303--308 (2023).

\bibitem{hartquist2025scaling}
C.~Hartquist, \emph{et~al.}, Scaling Law for Intrinsic Fracture Energy of Diverse Stretchable Networks. \emph{Physical Review X} \textbf{15}~(1), 011002 (2025).

\bibitem{liu2023fatigue}
X.~Liu, \emph{et~al.}, Fatigue-resistant hydrogel optical fibers enable peripheral nerve optogenetics during locomotion. \emph{Nature methods} \textbf{20}~(11), 1802--1809 (2023).

\bibitem{patil2020topological}
V.~P. Patil, J.~D. Sandt, M.~Kolle, J.~Dunkel, Topological mechanics of knots and tangles. \emph{Science} \textbf{367}~(6473), 71--75 (2020).

\bibitem{boyd2004convex}
S.~P. Boyd, L.~Vandenberghe, \emph{Convex {O}ptimization} (Cambridge university press) (2004).

\bibitem{rumelhart1986learning}
D.~E. Rumelhart, G.~E. Hinton, R.~J. Williams, Learning representations by back-propagating errors. \emph{Nature} \textbf{323}~(6088), 533--536 (1986).

\bibitem{nesterov1983method}
Y.~Nesterov, A method for solving the convex programming problem with convergence rate O (1/k2), in \emph{Dokl akad nauk Sssr}, vol. 269 (1983), p. 543.

\bibitem{kingma2014adam}
D.~P. Kingma, J.~Ba, Adam: A method for stochastic optimization. \emph{arXiv preprint arXiv:1412.6980}  (2014).

\bibitem{rubinstein2003polymer}
M.~Rubinstein, R.~H. Colby, \emph{Polymer {P}hysics} (Oxford University Press) (2003).

\bibitem{cohen1991pade}
A.~Cohen, A {P}ad{\'e} approximant to the inverse Langevin function. \emph{Rheologica acta} \textbf{30}~(3), 270--273 (1991).

\end{thebibliography}
\bibliographystyle{sciencemag}

\section*{Acknowledgments}
J.H. would like to thank Andrew J. Christlieb at Michigan State University for providing HPC resources that supported the research results reported in this paper. We also thank John C. Crocker at University of Pennsylvania for insightful discussions on network topology changes after locking during the Gordon Research Conference (GRC) on Soft Condensed Matter Physics.
\paragraph*{Funding:}
J.H. was partially funded by the National Science Foundation (DMS-2309655) and the startup fund of the College of Arts and Sciences at University of Delaware. J.L. and S.L. was partially funded by the National Science Foundation (CMMI-2338747), the National Science Foundation (CMMI-2423067), and the startup fund of the College of Engineering at  Michigan State University.
\paragraph*{Author contributions:}
J.L. identified the problem. J.H. and J.L. conceived the idea. J.H. developed the model, performed the theoretical analysis, designed the algorithm, and implemented the code. J.L. conducted the experiments and prepared the figures. J.H., J.L., and S.L. interpreted the results and wrote the manuscript. J.H. and S.L. acquired the funding and supervised the project.
\paragraph*{Competing interests:}
There are no competing interests to declare.
\paragraph*{Data and materials availability:}
The code used for numerical simulations will be made publicly accessible upon publication of the paper.

\subsection*{Supplementary materials}
Materials and Methods\\
Supplementary Text\\
Figures S1 to S78\\
Movies S1 to S4

\newpage

\renewcommand{\thefigure}{S\arabic{figure}}
\renewcommand{\thetable}{S\arabic{table}}
\renewcommand{\theequation}{S\arabic{equation}}
\renewcommand{\thepage}{S\arabic{page}}
\renewcommand{\thetheorem}{S\arabic{theorem}}
\renewcommand{\thedefinition}{S\arabic{definition}}
\renewcommand{\thealgorithm}{S\arabic{algorithm}}
\renewcommand{\theremark}{S\arabic{remark}}
\renewcommand{\theassumption}{S\arabic{assumption}}
\renewcommand{\thelemma}{S\arabic{lemma}}
\renewcommand{\thecorollary}{S\arabic{corollary}}
\setcounter{figure}{0}
\setcounter{table}{0}
\setcounter{equation}{0}
\setcounter{theorem}{0}
\setcounter{definition}{0}
\setcounter{page}{1} %
\renewcommand{\thesection}{S\arabic{section}}
\renewcommand{\thesubsection}{S\arabic{section}.\arabic{subsection}}

\begin{center}
\section*{Supplementary Materials for\\ \scititle}

Juntao~Huang$^{\ast\dagger}$,
Jiabin~Liu$^{\dagger}$,
Shaoting~Lin$^{\ast}$
\\ %
\small$^\ast$Corresponding authors. Email: huangjt@udel.edu, linst@msu.edu\\
\small$^\dagger$These authors contributed equally to this work.
\end{center}

\subsubsection*{This PDF file includes:} 
Materials and Methods\\
Supplementary Text\\
Figures S1 to S78\\
References

\subsubsection*{Other Supplementary Material for this manuscript includes the following:} 
Movies S1 to S4

\newpage

\section*{Materials and Methods}

\textbf{Materials}

\indent Acrylamide (AAm, Sigma-Aldrich A8887), N,N$'$-Methylenebisacrylamide (MBAA, Sigma-Aldrich 146072), ammonium persulfate (APS, Sigma-Aldrich A3678), and N,N,N$'$,N$'$-tetra\-methyl\-ethylene\-diamine(TEMED, Sigma-Aldrich T9281) used in this work were purchased from Sigma-Aldrich and used without modification. Dichloromethane (UN1593) was purchased from Aqua Solution and used without modification. Transparent acrylic sheets (8560K\-191, 8560K171) and black acrylic sheets (8505K113) for hydrogel molds, as well as PVC tubes (5231K124) for synthesizing hydrogel fibers, were purchased from McMaster-Carr. The linear polarizing film (XP44-40) and $\lambda/4$ retarder film (WP140HE) used in the homemade photoelastic setup were purchased from Edmund Optics. For the homemade photoelasticity test setup, a white LED panel light (4000\,K color temperature) was purchased from eBay, and a Nikon-D800 camera was used to capture images. The silicone oil used in the fatigue test to prevent dehydration of the hydrogel was purchased from Amazon.

\vspace{1em} 
\noindent\textbf{Synthesis of Hydrogel Fiber}

\indent The synthesis of hydrogel fibers follows the reported protocol \cite{liu2023fatigue}. Specifically, the pre-gel solution of PAAm hydrogel was prepared by dissolving 5\,g of acrylamide (AAm) monomer in 5\,g of deionized water. A 0.1\,M ammonium persulfate (APS) solution was used as the thermal initiator, 0.23\,wt\% N,N$'$-methylenebisacrylamide (MBAA) as the crosslinker, and N,N,N$'$,N$'$-tetra\-methyl\-ethylene\-diamine (TEMED) as the crosslinking accelerator. To synthesize the linear fiber, which exhibits a linear relationship between stretch ratio and force before fracture, 200\,$\mu$L of MBAA, 150\,$\mu$L of APS, and 10\,$\mu$L of TEMED were added to 10\,g of the pre-gel PAAm solution. To synthesize the nonlinear fiber, which exhibits a nonlinear relationship between stretch ratio and force before fracture, 50\,$\mu$L of MBAA, 150\,$\mu$L of APS, and 10\,$\mu$L of TEMED were added to 10\,g of the pre-gel PAAm solution. Each mixture was vortexed for 2 minutes and then degassed in a vacuum chamber for another 2 minutes. The precursor solution was carefully injected into a PVC tube using a syringe to avoid air bubbles. The filled tubes were placed in a chamber at 50\,°C for 2 hours for thermal curing. After curing, the tube was soaked in a dichloromethane solution for approximately 10 seconds to facilitate detachment of the hydrogel from the PVC tube. The hydrogel fibers were then rinsed about 10 times with deionized water to remove residual dichloromethane from the surface. Finally, the fibers were immersed in deionized water overnight to reach swelling equilibrium. 

\vspace{1em} 
\noindent\textbf{Fabrication of Hydrogel Fabric Network}

\indent To prepare the hydrogel fabric networks, fully swollen hydrogel fibers (linear and non-linear) were cut into lengths of 10\,mm and 20\,mm. Black acrylic sheets were cut into small cubic pieces (2\,mm~$\times$~2\,mm~$\times$~3\,mm) using a laser cutter (Epilog Fusion Laser Cutter). For hydrogel fabric networks without slidable nodes, 10\,mm-long nonlinear hydrogel fibers were glued to black acrylic cubes using super glue (Krazy Glue), following the topology shown in Fig.~S25. For networks with slidable node fraction of $\varphi_s = 50\%$, both 10\,mm and 20\,mm-long nonlinear fibers were assembled using the same method and topology shown in Fig.~S27. For notched hydrogel fabric networks without slidable nodes, 10\,mm-long fibers were glued to black acrylic cubes using super glue, following the topology shown in Fig.~S52. The same topology was used for networks composed of both linear and nonlinear hydrogel fibers. For notched networks with slidable node fraction of $\varphi_s = 50\%$, 10\,mm and 20\,mm-long fibers were used, again following the same topology shown in Fig.~S53, for both linear and nonlinear fiber networks. After fabrication, the hydrogel fiber networks were stored in sealed bags to prevent dehydration.

\vspace{1em} 
\noindent\textbf{The Circular Polariscope}

\indent The circular polariscope is a photoelastic setup used to visualize stress distribution in birefringent materials \cite{liu2024fatigue}.  The setup, shown in Fig. S13, consists of an LED light source (4000\,K), two linear polarizers, two quarter-wave plates, a universal testing machine with hydrogel fibers or hydrogel fiber networks, and a camera. The first polarizer and quarter-wave plate (collectively referred to as the polarizer) are placed before the sample, while the second set (the analyzer) is placed after. The two linear polarizers are installed orthogonally to each other to create the darkest possible observed light field. The first quarter-wave plate is installed at a 45-degree angle, and the second quarter-wave plate is orthogonal to the first to darken the observed light field again. The natural white light from the LED panel is considered light with arbitrary transverse vibrations. The linear polarizer allows light with a specific direction to pass through, and the quarter-wave plate introduces a phase difference of 90 degrees between the orthogonal components of light.

\vspace{1em} 
\noindent\textbf{Mechanical Characterizations of Hydrogel Fabrics}

\indent We first tested the force–stretch relationship of individual hydrogel fibers. Both linear and nonlinear fibers were cut to a length of 20\,mm and subjected to uniaxial tensile loading until fracture. The tests were conducted under a custom-built circular polariscope, and the photoelastic images were captured by the camera at one-second intervals. The resulting force–stretch curves and corresponding images are presented in Fig.~S13.

\indent For the two-chain model with a non-slidable node, a piece of black acrylic cube was used to glue together four 10\,mm-long fibers. Three ends of the fibers were fixed using an acrylic board, and a tensile test was applied to the fourth point until hydrogel fiber breaks. Before mechanical testing, the network sample was briefly dipped
in water to minimize friction between the fibers. The force was recorded using the tensile testing machine, and photoelastic images were taken every second. For the two-chain model with a slidable node (entangled node), two 20\,mm-long hydrogel fibers were entangled at their midpoints. Again, three ends were fixed, and uniaxial tension was applied to the fourth point until the fiber broke. The tensile force and photoelastic images were recorded every second. 

\indent For hydrogel fabric network, before mechanical testing, the network sample was briefly dipped in water to minimize friction between the fibers. For tensile testing, each sample (both notched and unnotched) was stretched at a constant rate of 0.5\,mm/s until fracture of one hydrogel fiber occurred. Photoelastic images were recorded by the camera under the same circular polariscope at one-second intervals, with the force recorded by the mechanical testing machine.

\vspace{1em} 
\noindent\textbf{Synthesis of Entangled Hydrogels}

\indent To prepare the entangled hydrogel, a pre-gel solution of PAAm hydrogel was first prepared by dissolving 7.1\,g of acrylamide (AAm) monomer in 5.4\,mL of deionized water. APS was used as the thermal initiator, MBAA as the crosslinker, and TEMED as the crosslinking accelerator. To synthesize hydrogels with different crosslinking densities, MBAA was added to the pre-gel solution at monomer-to-crosslinker molar ratios ranging from 500 to 50{,}000. For each 10\,g of pre-gel solution, 40\,$\mu$L of APS and 5\,$\mu$L of TEMED were added. The mixture was vortexed for 4 minutes, then poured into an acrylic mold sealed with an acrylic cover. Both the mold and the cover were fabricated using a laser cutter. The filled molds were placed in an oven at 50\,°C for 3 hours for thermal curing.

\indent After synthesis, the hydrogels were immersed in deionized water until they reached swelling equilibrium. Due to differences in crosslinker density, the polymer content of fully swollen hydrogels varied, with lower crosslinker ratios leading to higher water uptake. To ensure a consistent polymer content across samples, the swollen hydrogels were partially dried under ambient humidity. During drying, each sample was flipped every half hour to ensure uniform dehydration, and drying continued until the polymer content reached 15\%. The hydrogels were then sealed in containers to prevent further water loss prior to testing.

\vspace{1em} 
\noindent\textbf{Mechanical Characterizations of Entangled Hydrogels}

\indent For mechanical testing, the hydrogels were fabricated in a dog-bone shape. To obtain the stress–stretch curve, the samples were subjected to monotonic uniaxial loading until fracture. For fatigue testing, dog-bone-shaped hydrogel samples (with an length to width ratio of 1:4) were used. First, an unnotched sample was stretched monotonically until failure, with force and displacement recorded throughout the test. The energy release rate was calculated using the equation \( G = H \int_1^{\lambda} S \, d\lambda \), where \( H \) is the initial height of the sample, \( S \) is the nominal stress, and \( \lambda \) is the stretch ratio during the pure shear test of the unnotched sample. Next, a rectangular hydrogel sample of the same dimensions was prepared, with an initial crack (approximately one-fifth of the sample length) introduced along the horizontal centerline using a razor blade. The cracked sample was subjected to cyclic loading under different stretch ratios. To prevent dehydration during testing, the sample was immersed in a silicone oil bath. The crack length was measured every 100 cycles. To determine the crack extension rate per cycle (\( dc/dN_c \)), the total crack growth was divided by the number of cycles, assuming a linear relationship between crack extension and cycle number. By linearly extrapolating the crack extension rate per cycle ($dc/dN_c$) versus the energy release rate ($G$) to the intercept with the abscissa, we can experimentally measure the intrinsic fracture energy $\Gamma_0$.

\newpage
\section*{Supplementary Text}

\tableofcontents

\section{Modeling entangled networks}\label{sec:entangled-network}

In this part, we introduce our entangled network model. In Section \ref{sec:graph-entangle-network}, we first present the graph-based representation of entangled networks, which captures the topological structure of the network and the interactions between the chains through the entanglement. Then, in Section \ref{sec:energy-entangle-network}, we introduce the definition of the total elastic energy for the entangled network and define the equilibrium state as the solution to the corresponding energy minimization problem.

\subsection{Graph-based representation of entangled networks}\label{sec:graph-entangle-network}

In this section, we begin with the definition of the entangled network, which is motivated by the molecular entanglement of polymer chains. This definition captures the topological structure of the network and the interactions between the chains through the entanglements. To develop a general framework, we draw upon concepts from graph theory. For the reader's convenience, we restate below the same definition as in Definition~\ref{def:main-entangled-network} from the main text. We also note that, throughout this work, a graph refers to an undirected, unweighted graph.
\begin{definition}[Entangled network]\label{def:supp-entangled-network}
  An entangled network is an ordered pair $G_e = (V, C)$ consisting of:
  \begin{itemize}
    \item A set of nodes (also called vertices) $V=\{ v_i, \, i = 1, \dots, N \}$, where $N$ is the total number of nodes.
    
    \item A set of chains (also called edges) $C=\{ c_k = ( v_{i_{k,1}}, v_{i_{k,2}}, \dots, v_{i_{k,n_k}} ), \, k = 1,\dots, M \}$, where each chain $c_k$ is an ordered tuple of nodes, $M$ is the total number of chains, and $n_k\ge 2$ is the number of nodes in the $k$-th chain.
  \end{itemize}
\end{definition}

\begin{figure}
  \centering
  \includegraphics[trim={0cm 0cm 0cm 0},clip, width=1.0\textwidth]{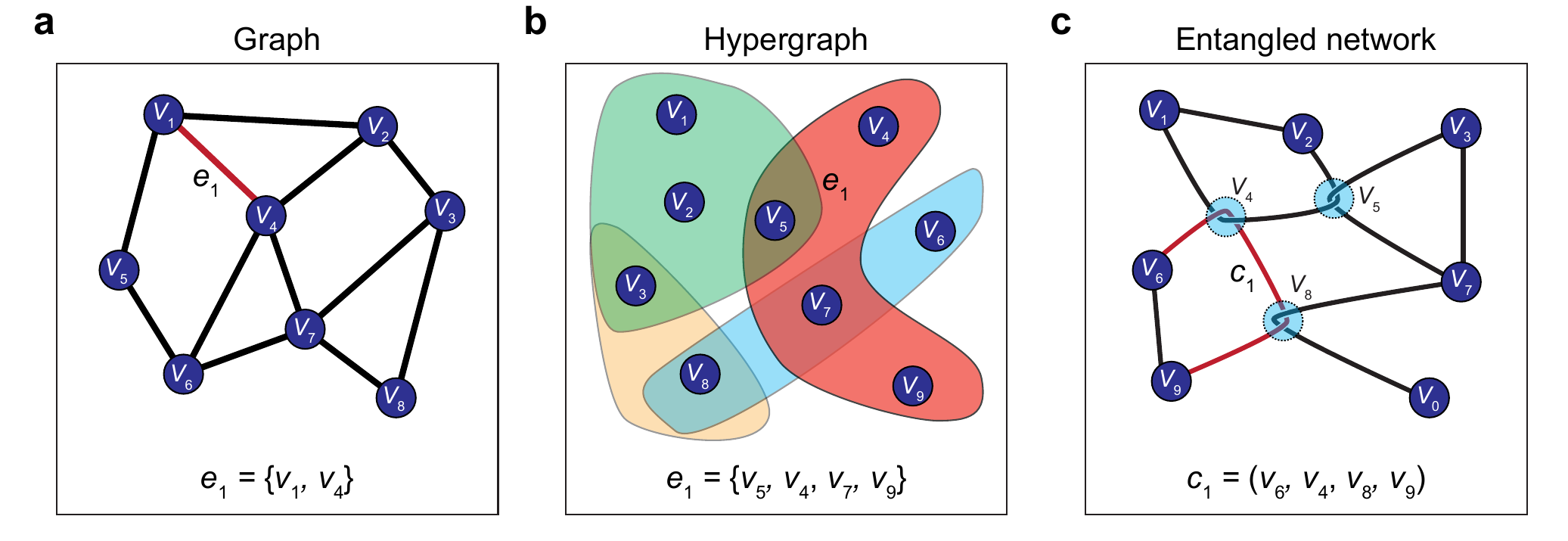}
  \caption{\textbf{Examples of graph, hypergraph, and entangled network.}
  \textbf{(a)} A classical graph consists of vertices and edges, where each edge connects only two vertices (e.g., $e_1=\{v_1, v_4\}$).
  \textbf{(b)} A hypergraph is a generalization of the graph, where each edge is an unordered set that can include arbitrary number of vertices (e.g., $e_1=\{v_5, v_4, v_7, v_9\}$).
  \textbf{(c)} An entangled network consists of nodes and chains, where each chain is an ordered tuple can include multiple nodes (e.g., $c_1=(v_6, v_4, v_8, v_9)$).
  }
  \label{fig-supp:graph-hypergraph-entangled-network}
\end{figure}

\begin{definition}[Slidable node and non-slidable node]
  In an entangled network $G_e = (V, C)$, for a chain $c_k = ( v_{i_{k,1}}, v_{i_{k,2}}, \dots, v_{i_{k,n_k}} )\in C$, we call the endpoints $v_{i_{k,1}}$ and $v_{i_{k,n_k}}$ the non-slidable nodes of the chain $c_k$, and call the remaining internal nodes $v_{i_{k,2}}, \dots, v_{i_{k,n_k-1}}$ the slidable nodes of the chain $c_k$.
\end{definition}
In Figure \ref{fig-supp:graph-hypergraph-entangled-network}(c), non-slidable nodes are marked by the dark blue color and the slidable nodes are marked by the light blue color. 
In this work, we also refer to a slidable node as an entangled node or simply an entanglement.

\begin{definition}[Adjacent node]
  In an entangled network $G_e = (V, C)$, two nodes $v_i, v_j\in V$ are called adjacent nodes if there exists a chain $c_k\in C$ such that $v_i$ and $v_j$ are consecutive along the chain $c_k$.
\end{definition} 

To construct a more realistic model of the entangled network and simplify subsequent analysis, we make the following assumption. This assumption reflects typical features observed in physical networks, where non-slidable nodes typically correspond to permanent network junctions as in classical spring networks, while slidable nodes represent entanglements.
\begin{assumption}\label{assumption-supp:internal-node-non-slidable}
  In an entangled network $G_e = (V, C)$, any node $v_i\in V$ must be consistently assigned as either slidable or non-slidable across all chains it belongs to. Specifically, if $v_i$ is a non-slidable node of one chain, it cannot serve as a slidable node in any other chains, and vice versa.
\end{assumption}

In the most general case, an entangled node may be formed at the intersection of more than two chains. However, interactions involving three or more chains are topologically complex \cite{patil2020topological} and difficult to analyze. To simplify the discussion, we restrict ourselves to entangled nodes that are connected by exactly two chains. 
Moreover, while \cite{assadi2025nonaffine} considers the entanglement creation and disentanglement, we restrict our analysis to a fixed network topology without such processes.
\begin{assumption}\label{assumption-supp:internal-node-two-chains}
  In an entangled network $G_e = (V, C)$, any slidable node belongs to exactly two chains.
\end{assumption}

To prevent overlap between any two chains in the network, we make the following assumption:
\begin{assumption}\label{assumption-supp:non-overlapping-chains}
  In an entangled network $G_e = (V, C)$, each pair of adjacent nodes is connected by only one chain.
\end{assumption}

\begin{proposition}\label{prop-supp:internal-node-four-adjacent-vertices}\label{prop:slidable-node-four-adjacent}
  In an entangled network $G_e = (V, C)$, any slidable node has four adjacent nodes: two from one chain and another two from another chain.
\end{proposition}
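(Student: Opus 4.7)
The plan is to unpack the definition of a slidable node together with the three standing assumptions and then show, step by step, that the set of nodes adjacent to such a node has exactly four elements, split two-and-two between the two chains that pass through it.

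First I would record the chains through the slidable node. Fix a slidable node $v \in V$. By Assumption~\ref{assumption-supp:internal-node-two-chains}, $v$ belongs to exactly two chains, say $c_k$ and $c_{k'}$. By Assumption~\ref{assumption-supp:internal-node-non-slidable}, since $v$ is slidable in the network, it is slidable in every chain it belongs to, i.e.\ it is an internal node of both $c_k$ and $c_{k'}$. Hence in each of these two chains $v$ has exactly two consecutive neighbors along that chain: call them $a_1, a_2$ in $c_k$ (the nodes immediately preceding and following $v$) and $b_1, b_2$ in $c_{k'}$. Since the definition of adjacency in an entangled network is exactly ``consecutive along some chain,'' the set of neighbors of $v$ is contained in $\{a_1, a_2, b_1, b_2\}$, and every element of this set is indeed adjacent to $v$.

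Next I would show that these four nodes are distinct. The key tool is Assumption~\ref{assumption-supp:non-overlapping-chains}, which forces each adjacent pair to belong to a unique chain. Within $c_k$, the neighbors $a_1$ and $a_2$ must differ: otherwise the pair $\{v, a_1\}$ would appear as a consecutive pair twice inside the same chain, contradicting the uniqueness in Assumption~\ref{assumption-supp:non-overlapping-chains} (interpreted so that an adjacent pair arises from a single segment of a single chain). The analogous argument rules out $b_1 = b_2$. For cross-chain coincidences, suppose $a_i = b_j$ for some $i,j \in \{1,2\}$. Then $\{v, a_i\}$ would be an adjacent pair realized both in $c_k$ and in $c_{k'}$, directly contradicting Assumption~\ref{assumption-supp:non-overlapping-chains}. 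Combining these exclusions gives $|\{a_1,a_2,b_1,b_2\}| = 4$, with the stated two-from-each-chain structure.

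The main obstacle I anticipate is the pedantic step of ruling out $a_1 = a_2$ (and $b_1 = b_2$), since a literal reading of Definition~\ref{def:supp-entangled-network} allows a chain to be an arbitrary ordered tuple and does not explicitly forbid a node from appearing twice. The clean way to handle this is to invoke Assumption~\ref{assumption-supp:non-overlapping-chains} at the level of consecutive-pair multisets: the assumption implicitly says every adjacent pair $\{u, w\}$ comes from exactly one consecutive occurrence in exactly one chain. Once this reading is fixed, all four distinctness claims reduce to the same uniqueness principle, and the proposition follows immediately.
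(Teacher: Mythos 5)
Your proof is correct and follows essentially the same route as the paper's: Assumption~\ref{assumption-supp:internal-node-two-chains} gives exactly two chains through the slidable node, each contributing two consecutive neighbors, and Assumption~\ref{assumption-supp:non-overlapping-chains} forces the four neighbors to be distinct. You are in fact slightly more careful than the paper in separating the within-chain coincidence $a_1=a_2$ from the cross-chain coincidence $a_i=b_j$ and in noting the reading of Assumption~\ref{assumption-supp:non-overlapping-chains} needed for the former, but this is a refinement of the same argument rather than a different one.
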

\begin{proof}
  By Assumption \ref{assumption-supp:internal-node-two-chains}, any slidable node $v_i\in V$ belongs to exactly two chains. Since $v_i$ is a slidable node, each of these two chains contributes two vertices adjacent to $v_i$; otherwise, $v_i$ is a non-slidable node. Thus, $v_i$ has four adjacent vertices in total. These four vertices must be distinct; otherwise, it will contradict Assumption \ref{assumption-supp:non-overlapping-chains}. 
\end{proof}

Based on the above proposition, we introduce the orientation of a slidable node to characterize how the chains intersect at that node. Specifically, the orientation of a slidable node describes the pairing of its adjacent vertices according to the two chains it belongs to.
\begin{definition}[Orientation of the slidable node]\label{def-supp:orientation-slidable-node}
  For an entangled network $G_e = (V, C)$, let the adjacent vertices of a slidable node $v_i\in V$ be denoted by $v_{i_1}, v_{i_2}, v_{i_3}, v_{i_4}$. If $v_i$, $v_{i_1}$, $v_{i_2}$ lie on one chain and $v_i$, $v_{i_3}$, $v_{i_4}$ belong to another, then the orientation of $v_i$ is defined as $\{\{v_{i_1}, v_{i_2}\}, \{v_{i_3}, v_{i_4}\}\}$ or equivalently $\{\{v_{i_3}, v_{i_4}\}, \{v_{i_1}, v_{i_2}\}\}$.
\end{definition}
For example, in Figure \ref{fig-supp:graph-hypergraph-entangled-network}c, the orientation of the slidable node $v_4$ is given by $\{\{v_1, v_5\}, \{v_6, v_8\}\}$. 

It is well-known that an undirected graph is uniquely determined by its set of nodes and the neighbors of each node. Our entangled network has a similar property but requires additional information beyond just the neighborhood structure. In particular, the entangled network is uniquely specified by the following components: (1) the set of non-slidable nodes; (2) the set of slidable nodes; (3) the set of adjacent nodes for each node; (4) the orientation of each slidable nodes.
This characterization is particularly useful for implementing algorithms to generate entangled networks, as we will describe in more details in Section~\ref{sec:numerical-method}.

To study the effects of the entanglements, we introduce the reference spring network, also known as classical spring network. This reference spring network is generated by replacing all the slidable nodes by non-slidable nodes. As a result, the reference spring network is a graph where each chain in the network consists of exactly two nodes. See, for example, Figure~\ref{fig-supp:entangleVSreference} for a visual comparison between an entangled network and its corresponding reference spring network. We give the formal mathematical definition as follows:
\begin{definition}[Reference spring network]\label{def-supp:reference-non-slidable-network}
  Given entangled network $G_e=(V, C)$, its reference spring network $G_s=(V, C_s)$ is defined as follows:
  \begin{itemize}
    \item The node set $V$ is the same as that of the entangled network $G_e$.
    
    \item The chain set $C_s$ consists of all unordered pairs of adjacent nodes in $G_e$. Each chain in $C_s$ thus consists of two nodes.
  \end{itemize}
\end{definition}
\begin{figure}
  \centering
  \includegraphics[width=0.7\textwidth]{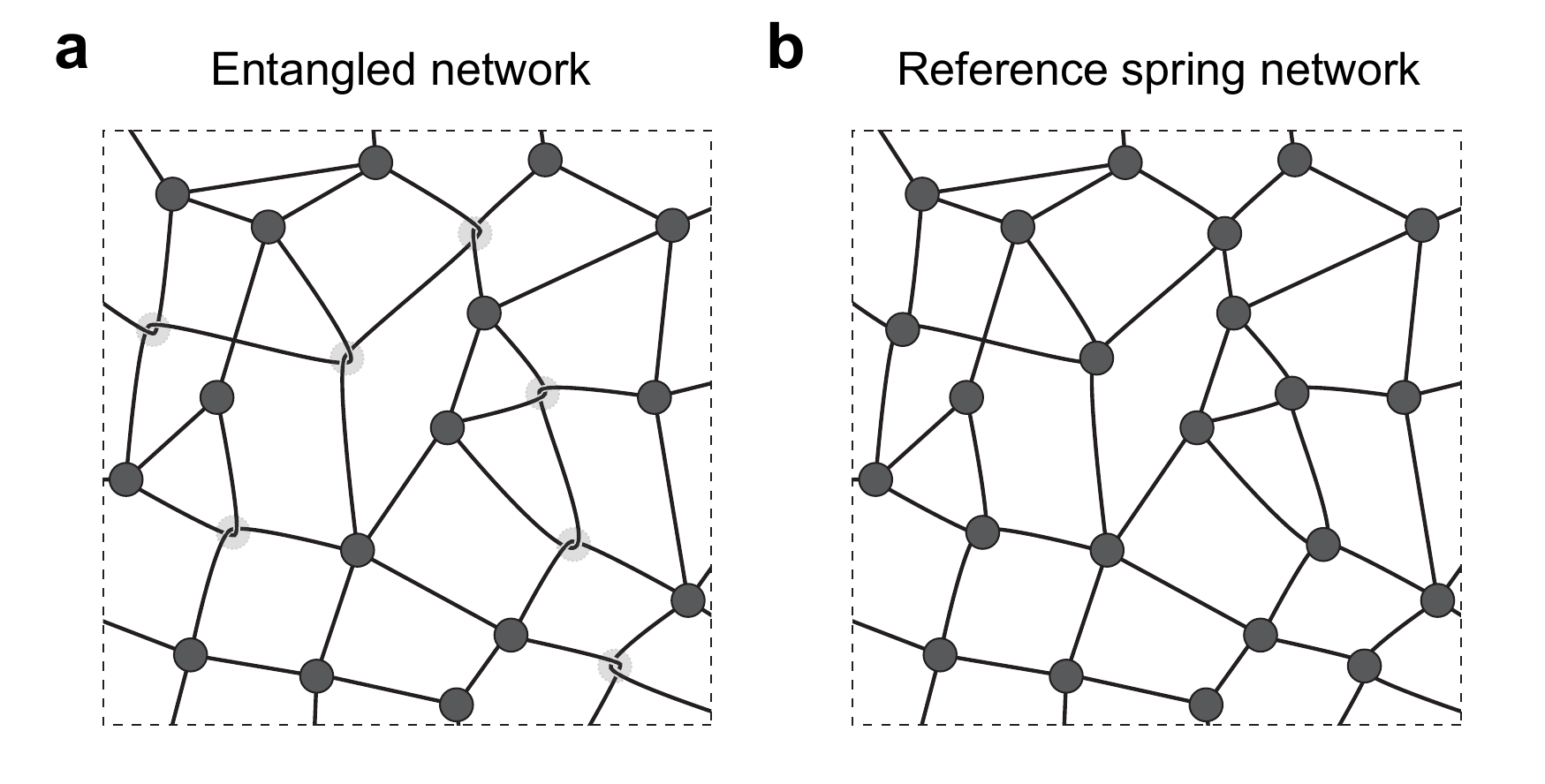}
  \caption{\textbf{An entangled network and its reference spring network.} \textbf{(a)} An entangled network. The non-slidable nodes are marked by the black color and the slidable nodes are marked by the gray color. \textbf{(b)} Its reference spring network is generated by replacing all the slidable nodes by non-slidable nodes.}
  \label{fig-supp:entangleVSreference}
\end{figure}

\subsection{Energy of entangled networks and the optimization problem }\label{sec:energy-entangle-network}

Since our primary interest lies in the mechanical properties of the entangled network, we extend the definition to include the length and force associated with each chain, and accordingly introduce the definition of the total energy. The equilibrium state is then defined as the solution to the  energy minimization problem.

For an entangled network $G_e = (V, C)$, for each node $v_i\in V$, we assign its initial position $\vect{X}_{i}\in\mathbb{R}^d$ in the undeformed state. Then, for each adjacent pair of nodes $v_i$ and $v_j$, we assign the initial length of the segment $L_{i,j} = \norm{\vect{X}_{i} - \vect{X}_{j}}$ which is the Euclidean distance between the two nodes.
We assume that all chains follow the same constitutive law given by the force-stretch function $f = f(\Lambda)$ where $\Lambda$ is the stretch ratio of the chain and $f$ is the force.

To find out the equilibrium state of the entangled network under some prescribed boundary conditions, we first define the energy of a single chain in the entangled network. For a chain $c_k = (v_{i_{k,1}}, v_{i_{k,2}}, \cdots, v_{i_{k,n_k}})\in C$, its energy is given by 
\begin{equation}
  U_{k} = L_{k} \int_{1}^{\Lambda_{k}} f(\Lambda) \, d\Lambda.
\end{equation}
Here $L_{k}$ is the initial length of the chain $c_k$, computed as the sum of the length for each segement: 
\begin{equation}
  L_{k} = \sum_{j=1}^{n_k-1} L_{i_{k,j},i_{k,j+1}}
\end{equation}
We denote the position of the node $v_i$ in the deformed state as $\vect{x}_{i}$. The stretch ratio of chain $c_k$ is defined as $\Lambda_k = {l_k}/{L_{k}}$ where $l_k$ is the length of the chain in the deformed state, given by
\begin{equation}
  l_{k} = \sum_{j=1}^{n_k-1} l_{i_{k,j},i_{k,j+1}}  
\end{equation}
with 
\begin{equation}
  l_{i_{k,j},i_{k,j+1}} = \norm{\vect{x}_{i_{k,j}} - \vect{x}_{i_{k,j+1}}}.
\end{equation}

The total energy of the entangled network is the sum of the energies of all the chains:
\begin{equation}\label{eq-supp:total-energy}
  U_e = \sum_{k=1}^M U_{k}.
\end{equation}
We note that the total energy of the entangled network is a function of the positions of all the vertices $\vect{x} = (\vect{x}_1, \vect{x}_2, \cdots, \vect{x}_N)\in\mathbb{R}^{dN}$.

To guarantee that the total energy function $U_e = U_e(\vect{x})$ in \eqref{eq-supp:total-energy} is mathematically well-defined for all $\vect{x}\in\mathbb{R}^{dN}$, the constitutive law $f = f(\Lambda)$ must be specified for all $\Lambda \ge 0$, which physically corresponds to covering both the stretching regime ($\Lambda > 1$) and the compressive regime ($0 \le \Lambda < 1$). 
For instance, if we set $f(\Lambda) = 0$ for $0 \le \Lambda < 1$, this corresponds to a rope that becomes slack under compression and carries no force. Alternatively, if we set $f(\Lambda) = \Lambda - 1$ for both $0 \le \Lambda < 1$ and $\Lambda \ge 1$, the chain in the model behaves as a linear spring that generates tensile force under stretching and compressive force under compression. We will later introduce additional reasonable assumptions on $f$ in Section~\ref{sec:analysis-energy-entangled-network} to help the theoretical analysis of the model.

In the simulation of mechanical behaviors of networks, it is often necessary to account for boundary conditions. In our entangled network model, we assume that a subset of the nodes is subject to prescribed displacement boundary conditions. Let $\mathcal{I}_{{b}}\subseteq\{1,2,\dots,N\}$ denote the index set of these boundary nodes. The equilibrium configuration $\vect{x}_e^*$ of the entangled network is then defined as the solution to the following constrained minimization problem:
\begin{equation}\label{eq-supp:min-constrain-entangled-network}
\begin{aligned}  
  {}& \min_{\vect{x}\in\mathbb{R}^{dN}} U_e(\vect{x}), \\
  {}& \textrm{subject to} \quad \vect{x}_i = \vect{x}_{i,b}, \quad i \in \mathcal{I}_{{b}}.
\end{aligned}
\end{equation}
Here $\vect{x}_{i,b}\in\mathbb{R}^d$ is the prescribed displacement of the boundary nodes $v_i$ with $i\in\mathcal{I}_{{b}}$. The minimum energy to the minimization problem \eqref{eq-supp:min-constrain-entangled-network} is denoted by $U_e^*$. We note that the above constrained optimization problem \eqref{eq-supp:min-constrain-entangled-network} can equivalently be reformulated as an unconstrained problem by substituting the displacement boundary conditions into the energy function.

Since the reference spring network $G_s=(V, C_s)$ is a special case of the entangled network where no slidable nodes are present, its energy function $U_s = U_s(\vect{x})$ can be defined in the same way. The equilibrium configuration $\vect{x}_s^*$ of the spring network minimizes the total energy subject to the same boundary conditions, and the minimum energy is denoted by $U_s^*$:
\begin{equation}
\begin{aligned}  
  {}& \min_{\vect{x}\in\mathbb{R}^{dN}} U_s(\vect{x}), \\
  {}& \textrm{subject to} \quad \vect{x}_i = \vect{x}_{i,b}, \quad i \in \mathcal{I}_{{b}}.
\end{aligned}
\end{equation}

We now conclude the introduction of our entangled network model, which adopts a graph-based representation. Its equilibrium state is determined by solving an optimization problem that minimizes the energy function.

\section{Theoretical analysis of entangled networks}\label{sec:analysis-energy-entangled-network}

In this part, we conduct mathematical analysis on the total energy of the entangled network. In Section \ref{sec:convexity-preliminaries}, we introduce the definition of the convexity and several related lemmas that will aid in analyzing the convexity of the energy function. In Section \ref{sec:energy-relation}, we show that the energy of the entangled network is no larger than the energy of its reference spring network with the same configuration. Lastly, we prove the convexity of the energy function of the entangled network in Section \ref{sec:convex-energy}.

\subsection{Preliminaries on convexity}\label{sec:convexity-preliminaries}

In this part, we introduce the definition of the convexity and several related lemmas that will aid in analyzing the convexity of a given function. The definitions and results presented here are standard in convex analysis, see, e.g., \cite{boyd2004convex}. For the reader's convenience, we provide proofs for selected lemmas.

We first introduce the definition of the convex function. A function is called convex if the line segment connecting any two points on the curve lies above the curve itself, see the following definition and Figure \ref{fig-supp:convex-function} for the graphical illustration of the convex function.
\begin{definition}[Convexity]\label{def:convex-function}
  A function $f:\mathbb{R}^n\rightarrow \mathbb{R}$ is convex if for any $\vect{x}_1, \vect{x}_2\in\mathbb{R}^n$ and $0\le w \le 1$,
  \begin{equation}
    f(w\vect{x}_1 + (1-w)\vect{x}_2) \le wf(\vect{x}_1) + (1-w)f(\vect{x}_2).
  \end{equation}
\end{definition}
\begin{figure}
  \centering
  \includegraphics[width=0.6\textwidth]{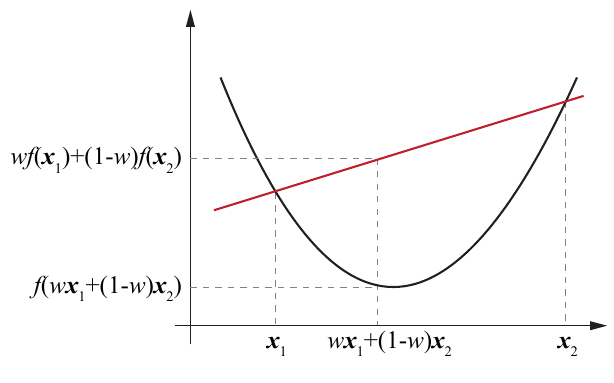}
  \caption{\textbf{A graphical illustration of a convex function.} The function $f$ is convex if the line segment (in red color) connecting any two points on the curve lies above the curve (in black color).}
  \label{fig-supp:convex-function}
\end{figure}

\begin{lemma}\label{lemma:convex-function-second-derivative}
  Assume $f:\mathbb{R}^n\rightarrow\mathbb{R}$ is twice continuously differentiable, i.e., second order partial derivatives exist and are continuous. Then $f$ is convex if and only if $\nabla^2 f(\vect{x})\ge0$ (i.e., the Hessian matrix is positive semidefinite) for all $\vect{x}\in\mathbb{R}^n$.
\end{lemma}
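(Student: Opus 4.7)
The plan is to reduce the multivariate statement to a one-dimensional characterization of convexity via second derivatives, which is the classical approach. The key observation is that convexity of $f$ on $\mathbb{R}^n$ is equivalent to convexity along every line, and the positive semidefiniteness of $\nabla^2 f$ is equivalent to nonnegativity of the second derivative along every line.

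First, I would introduce, for fixed $\vect{x}\in\mathbb{R}^n$ and $\vect{v}\in\mathbb{R}^n$, the restriction $g(t) := f(\vect{x}+t\vect{v})$, defined for $t\in\mathbb{R}$. Using Definition~\ref{def:convex-function} directly, one checks that $f$ is convex on $\mathbb{R}^n$ if and only if $g$ is convex on $\mathbb{R}$ for every such choice of $\vect{x}$ and $\vect{v}$. By the chain rule, $g'(t) = \nabla f(\vect{x}+t\vect{v})\cdot\vect{v}$ and $g''(t) = \vect{v}^{T}\nabla^{2} f(\vect{x}+t\vect{v})\,\vect{v}$, so $\nabla^{2} f(\vect{y})\succeq 0$ for all $\vect{y}\in\mathbb{R}^{n}$ is equivalent to $g''(t)\ge 0$ for all $\vect{x},\vect{v},t$. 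Hence it suffices to prove the scalar statement: a twice continuously differentiable $g:\mathbb{R}\to\mathbb{R}$ is convex if and only if $g''(t)\ge 0$ for all $t\in\mathbb{R}$.

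For the scalar ``if'' direction, I would fix $t_1<t_2$ and $0\le w\le 1$, set $t_w = w t_1 + (1-w) t_2$, and apply the second-order Taylor expansion with integral remainder to $g$ around $t_w$ at the points $t_1$ and $t_2$. Since $g''\ge 0$, the remainder terms are nonnegative, and forming the convex combination $w g(t_1)+(1-w)g(t_2)$ cancels the first-order terms and leaves $g(t_w)$ plus a nonnegative quantity, yielding the convexity inequality. (Equivalently, one shows that $g''\ge 0$ implies $g'$ is nondecreasing, and then deduces convexity via the mean value theorem.)

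For the scalar ``only if'' direction, I would argue by contradiction: suppose $g''(t_0)<0$ for some $t_0$. By continuity of $g''$, there is an open interval $I$ around $t_0$ on which $g''<0$. Picking $t_1<t_0<t_2$ in $I$ with $t_0 = \tfrac12(t_1+t_2)$ and applying Taylor's theorem on $I$, one obtains $g(t_0) > \tfrac12 g(t_1) + \tfrac12 g(t_2)$, violating convexity. This contradiction forces $g''(t)\ge 0$ for every $t\in\mathbb{R}$, completing the scalar characterization and hence, via the line-restriction argument, the full lemma.

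There is no real obstacle here since this is a standard convex-analysis result; the only point requiring a little care is the bookkeeping in the Taylor-expansion step and making sure the equivalence ``convexity along every line'' is stated precisely before invoking the scalar case.
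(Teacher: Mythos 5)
Your proof is correct: the reduction to one-dimensional restrictions $g(t)=f(\vect{x}+t\vect{v})$, the Taylor-expansion argument for sufficiency, and the continuity-based contradiction for necessity are all sound, and this is the standard textbook argument. The paper itself states this lemma without proof, treating it as a known result from convex analysis (citing Boyd and Vandenberghe), so your write-up supplies exactly the standard proof the authors are implicitly invoking.
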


\begin{lemma}\label{lemma:convex-function-sum}
  If $f:\mathbb{R}^n\rightarrow \mathbb{R}$ and $g:\mathbb{R}^n\rightarrow \mathbb{R}$ are both convex, $w_1\ge0$, and $w_2\ge0$, then $h = w_1 f + w_2 g$ is also convex.
\end{lemma}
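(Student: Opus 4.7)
The plan is to prove the lemma by directly invoking Definition~\ref{def:convex-function}. I do not need Lemma~\ref{lemma:convex-function-second-derivative}, since it would add an unnecessary smoothness assumption; the definitional approach applies to all convex functions.

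First, I would fix arbitrary $\vect{x}_1, \vect{x}_2 \in \mathbb{R}^n$ and $0 \le w \le 1$. Since $f$ and $g$ are each convex, applying Definition~\ref{def:convex-function} separately yields
\begin{equation*}
  f\bigl(w\vect{x}_1 + (1-w)\vect{x}_2\bigr) \le w f(\vect{x}_1) + (1-w) f(\vect{x}_2),
\end{equation*}
\begin{equation*}
  g\bigl(w\vect{x}_1 + (1-w)\vect{x}_2\bigr) \le w g(\vect{x}_1) + (1-w) g(\vect{x}_2).
\end{equation*}

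Next, because $w_1 \ge 0$ and $w_2 \ge 0$, multiplying the first inequality by $w_1$ and the second by $w_2$ preserves the direction of each inequality. This nonnegativity assumption is essential: a negative multiplier would flip the inequality and the conclusion would fail in general. Adding the two resulting inequalities and using the definition $h = w_1 f + w_2 g$, the left-hand side collapses to $h\bigl(w\vect{x}_1 + (1-w)\vect{x}_2\bigr)$, while the right-hand side regroups as $w \bigl(w_1 f(\vect{x}_1) + w_2 g(\vect{x}_1)\bigr) + (1-w)\bigl(w_1 f(\vect{x}_2) + w_2 g(\vect{x}_2)\bigr) = w h(\vect{x}_1) + (1-w) h(\vect{x}_2)$. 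This establishes the convexity inequality for $h$, completing the proof.

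There is no serious obstacle here; the result is a direct consequence of linearity of the right-hand side in the convexity inequality together with the fact that nonnegative scalar multiplication preserves inequalities. The only point worth flagging for the reader is why the hypothesis $w_1, w_2 \ge 0$ cannot be relaxed, which the argument makes transparent.
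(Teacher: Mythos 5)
Your proof is correct and is exactly the standard definitional argument; the paper states this lemma without proof (citing it as standard convex analysis), and your direct application of Definition~\ref{def:convex-function} together with the observation that nonnegative multipliers preserve the inequalities is precisely the intended justification. Your remark on why $w_1, w_2 \ge 0$ cannot be dropped is a sensible addition.
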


\begin{lemma}\label{lemma:convex-function-composition}
  If $f:\mathbb{R}^n\rightarrow \mathbb{R}$ and $g:\mathbb{R}\rightarrow \mathbb{R}$  are both convex functions and $g$ is increasing, then $h:\mathbb{R}^n\rightarrow \mathbb{R}$ given by $h(\vect{x}) = g(f(\vect{x}))$ is convex. 
\end{lemma}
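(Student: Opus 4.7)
The plan is to verify the convexity of $h$ directly from Definition~\ref{def:convex-function} by chaining two inequalities: one coming from the convexity of the inner function $f$, and one from the convexity of the outer function $g$, linked together via the monotonicity assumption on $g$.

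Concretely, I would fix arbitrary points $\vect{x}_1, \vect{x}_2 \in \mathbb{R}^n$ and a weight $0\le w \le 1$. First, I would apply the convexity of $f$ to get
\begin{equation*}
  f(w\vect{x}_1 + (1-w)\vect{x}_2) \;\le\; w f(\vect{x}_1) + (1-w) f(\vect{x}_2).
\end{equation*}
Next, I would apply $g$ to both sides. This is precisely where the hypothesis that $g$ is increasing enters: monotonicity preserves the inequality, giving
\begin{equation*}
  g\bigl(f(w\vect{x}_1 + (1-w)\vect{x}_2)\bigr) \;\le\; g\bigl(w f(\vect{x}_1) + (1-w) f(\vect{x}_2)\bigr).
\end{equation*}
Finally, I would invoke the convexity of $g$ applied to the two scalar arguments $f(\vect{x}_1)$ and $f(\vect{x}_2)$ to bound the right-hand side above by $w g(f(\vect{x}_1)) + (1-w) g(f(\vect{x}_2))$. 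Chaining these three inequalities and recognizing $h(\vect{x}) = g(f(\vect{x}))$ yields the convexity inequality for $h$.

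The proof is short and essentially routine; the only delicate point, and the one I would emphasize, is the necessity of the monotonicity of $g$. Without it, the first inequality (in the range of $f$) cannot be transported through $g$ while preserving the direction, so the intermediate step would fail. No smoothness hypothesis is needed, so I would avoid using Lemma~\ref{lemma:convex-function-second-derivative} and work purely from the definition, which also makes the argument applicable later to the non-smooth energy functions that arise for entangled networks.
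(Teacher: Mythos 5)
Your proof is correct: the chain of inequalities (convexity of $f$, then monotonicity of $g$ to transport the inequality, then convexity of $g$ on the scalar arguments) is exactly the standard argument, and you correctly identify monotonicity of $g$ as the step that cannot be dispensed with. The paper states this lemma without proof, citing it as a standard fact from convex analysis, so your definition-based argument is precisely the one the authors implicitly rely on; nothing is missing.
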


\begin{lemma}\label{lemma:convex-2d-extension}
  If a function $f:\mathbb{R}^m\rightarrow \mathbb{R}$ is convex, then the function $g:\mathbb{R}^m\times \mathbb{R}^n$ with $g=g(\vect{x}, \vect{y}) := f(\vect{x})$ is also convex.
\end{lemma}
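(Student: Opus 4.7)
The plan is to prove this directly from the definition of convexity (Definition \ref{def:convex-function}) applied to the function $g$ on $\mathbb{R}^m \times \mathbb{R}^n$. There are no derivatives or Hessians to worry about since the proof goes through for any convex $f$, differentiable or not, so invoking Lemma \ref{lemma:convex-function-second-derivative} is unnecessary and would require extra regularity assumptions.

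First, I would fix two arbitrary points $(\vect{x}_1, \vect{y}_1), (\vect{x}_2, \vect{y}_2) \in \mathbb{R}^m \times \mathbb{R}^n$ and a scalar $w \in [0,1]$. Using the fact that convex combination acts componentwise on the product space, I would write
\begin{equation*}
g\bigl(w(\vect{x}_1, \vect{y}_1) + (1-w)(\vect{x}_2, \vect{y}_2)\bigr) = g\bigl(w\vect{x}_1 + (1-w)\vect{x}_2,\; w\vect{y}_1 + (1-w)\vect{y}_2\bigr).
\end{equation*}
Then, by the definition $g(\vect{x},\vect{y}) := f(\vect{x})$, the right-hand side equals $f(w\vect{x}_1 + (1-w)\vect{x}_2)$, which depends only on the first argument. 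This is the key observation: the auxiliary variable $\vect{y}$ plays no role in the evaluation of $g$, so convexity in $(\vect{x}, \vect{y})$ reduces to convexity in $\vect{x}$ alone.

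Next, I would apply the assumed convexity of $f$ to bound $f(w\vect{x}_1 + (1-w)\vect{x}_2) \le w f(\vect{x}_1) + (1-w) f(\vect{x}_2)$, and then rewrite $f(\vect{x}_i) = g(\vect{x}_i, \vect{y}_i)$ for $i=1,2$ to obtain the convexity inequality for $g$. Since $(\vect{x}_1, \vect{y}_1), (\vect{x}_2, \vect{y}_2)$, and $w$ were arbitrary, this establishes convexity of $g$ on $\mathbb{R}^m \times \mathbb{R}^n$.

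There is no real obstacle in this proof — the statement is essentially a tautology once one notices that $g$ is constant in $\vect{y}$. The only point worth being careful about is making the componentwise decomposition of the convex combination in the product space explicit, so that the application of the convexity of $f$ is unambiguous.
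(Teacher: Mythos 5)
Your proposal is correct and follows essentially the same argument as the paper: fix two points and a weight, note that the convex combination acts componentwise so $g$ evaluated there equals $f$ of the convex combination of the $\vect{x}$-components, apply convexity of $f$, and rewrite the result in terms of $g$. No differences worth noting.
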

\begin{proof}
  For any $\vect{x}_1$, $\vect{x}_2\in\mathbb{R}^m$, $\vect{y}_1$, $\vect{y}_2\in\mathbb{R}^n$, and $0\le w \le 1$,
  \begin{equation}
  \begin{aligned}
    g(w\vect{x}_1 + (1-w)\vect{x}_2, w\vect{y}_1 + (1-w)\vect{y}_2) ={}& f(w\vect{x}_1 + (1-w)\vect{x}_2) \\
    \le{}& wf(\vect{x}_1) + (1-w)f(\vect{x}_2) \\
    ={}& wg(\vect{x}_1, \vect{y}_1) + (1-w)g(\vect{x}_2, \vect{y}_2).
  \end{aligned}
  \end{equation}
  Therefore, the function $g(\vect{x}, \vect{y})$ is convex.
\end{proof}

\begin{lemma}\label{lemma:convex-1d-restriction}
  If a function $f:\mathbb{R}^m\times \mathbb{R}^n \rightarrow \mathbb{R}$ is convex, then the function $g:\mathbb{R}^m\rightarrow \mathbb{R}$ with $g=g(\vect{x}) := f(\vect{x}, \vect{y}_0)$ with fixed $\vect{y_0}\in \mathbb{R}^n$ is also convex.
\end{lemma}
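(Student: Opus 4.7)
The plan is to verify the convexity inequality for $g$ directly from the definition, using the convexity of $f$ on $\mathbb{R}^m\times\mathbb{R}^n$. Pick arbitrary $\vect{x}_1, \vect{x}_2\in\mathbb{R}^m$ and $0\le w\le 1$, and write out what needs to be shown:
\begin{equation}
  g(w\vect{x}_1 + (1-w)\vect{x}_2) \le w\,g(\vect{x}_1) + (1-w)\,g(\vect{x}_2).
\end{equation}

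The key observation is the trivial identity $\vect{y}_0 = w\vect{y}_0 + (1-w)\vect{y}_0$, which lets us rewrite the left-hand side as $f\!\bigl(w\vect{x}_1+(1-w)\vect{x}_2,\; w\vect{y}_0+(1-w)\vect{y}_0\bigr)$. At this point the convexity of $f$ at the two points $(\vect{x}_1,\vect{y}_0)$ and $(\vect{x}_2,\vect{y}_0)$ in $\mathbb{R}^m\times\mathbb{R}^n$ (applied with the same weight $w$) immediately bounds this by $w\,f(\vect{x}_1,\vect{y}_0) + (1-w)\,f(\vect{x}_2,\vect{y}_0) = w\,g(\vect{x}_1) + (1-w)\,g(\vect{x}_2)$, which is the desired inequality.

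There is no substantive obstacle here; the lemma is essentially the observation that any affine restriction of a convex function remains convex, and the map $\vect{x}\mapsto(\vect{x},\vect{y}_0)$ is affine. One could alternatively phrase this as a special case of the composition rule for convex functions with affine maps, but unfolding the definition directly, as above, is the cleanest path and mirrors the short argument given for Lemma~\ref{lemma:convex-2d-extension} just above in the text.
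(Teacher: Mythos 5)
Your argument is exactly the paper's proof: both write $\vect{y}_0 = w\vect{y}_0 + (1-w)\vect{y}_0$ and apply the convexity of $f$ at the points $(\vect{x}_1,\vect{y}_0)$ and $(\vect{x}_2,\vect{y}_0)$ to obtain the defining inequality for $g$. The proposal is correct and takes essentially the same approach.
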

\begin{proof}
    For any $\vect{x}_1, \vect{x}_2 \in\mathbb{R}^n$ and $0\le w \le 1$, we have
\begin{equation}
\begin{aligned}
    g(w\vect{x}_1 + (1-w)\vect{x}_2) = {}& f(w\vect{x}_1 + (1-w)\vect{x}_2, \vect{y}_0)\\
    = {}& f(w\vect{x}_1 + (1-w)\vect{x}_2, w\vect{y}_0 + (1-w)\vect{y}_0)\\
    \le{}& w f(\vect{x}_1, \vect{y}_0) + (1-w) f(\vect{x}_2, \vect{y}_0) \\
    ={}& w g(\vect{x}_1) + (1-w) g(\vect{x}_2).
\end{aligned}
\end{equation}
Therefore, $g(\vect{x})$ is convex.
\end{proof}

\begin{lemma}\label{lemma:distance-function-two-points-convex}
  The distance function $d:\mathbb{R}^n\times\mathbb{R}^n\rightarrow\mathbb{R}$ with $d(\vect{x}, \vect{y}) = \norm{\vect{x} - \vect{y}}$ is convex where $\norm{\cdot}$ is the Euclidean norm.
\end{lemma}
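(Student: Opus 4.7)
The plan is to reduce the claim to the convexity of the Euclidean norm itself, then exploit the fact that the map $(\vect{x}, \vect{y}) \mapsto \vect{x} - \vect{y}$ is linear, so it commutes with convex combinations. The norm $\norm{\cdot}$ on $\mathbb{R}^n$ is convex as a direct consequence of the triangle inequality together with positive homogeneity, and $d$ is simply the composition of this convex norm with a linear map, so it inherits convexity.

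For the execution, I would take arbitrary points $(\vect{x}_1, \vect{y}_1), (\vect{x}_2, \vect{y}_2) \in \mathbb{R}^n \times \mathbb{R}^n$ and $w \in [0,1]$, and begin by regrouping the interior of the norm via
\begin{equation*}
\bigl(w\vect{x}_1 + (1-w)\vect{x}_2\bigr) - \bigl(w\vect{y}_1 + (1-w)\vect{y}_2\bigr) = w(\vect{x}_1 - \vect{y}_1) + (1-w)(\vect{x}_2 - \vect{y}_2).
\end{equation*}
Then I would apply the triangle inequality to the right-hand side and use the non-negativity of $w$ and $1-w$ to factor the scalars out of the norm via positive homogeneity. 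The resulting bound reads $w\, d(\vect{x}_1, \vect{y}_1) + (1-w)\, d(\vect{x}_2, \vect{y}_2)$, which matches Definition~\ref{def:convex-function} applied to $d$ and finishes the argument.

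No real obstacle stands in the way here: the entire proof is a short invocation of the triangle inequality, and the only minor bookkeeping is the algebraic regrouping that pushes the convex combination through the difference operation. An alternative route would be to first establish convexity of $\vect{z} \mapsto \norm{\vect{z}}$ on $\mathbb{R}^n$ as a separate step, lift it to a function on $\mathbb{R}^n \times \mathbb{R}^n$ using Lemma~\ref{lemma:convex-2d-extension}, and then perform a linear change of variables; however, the direct one-step computation above is cleaner and is the route I would choose to present.
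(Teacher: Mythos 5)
Your proof is correct and follows essentially the same route as the paper: regroup the convex combination through the linear difference map, then apply the triangle inequality and positive homogeneity of the Euclidean norm. The only difference is a relabeling of the four points, which is immaterial.
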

\begin{proof}
  For any $\vect{x_1}, \vect{x_2}, \vect{y_1}, \vect{y_2}\in\mathbb{R}^n$ and $0\le w \le 1$, we have
  \begin{equation}
  \begin{aligned}
    d(w\vect{x_1} + (1-w)\vect{y_1}, w\vect{x_2} + (1-w)\vect{y_2}) ={}& \norm{w\vect{x_1} + (1-w)\vect{y_1} - w\vect{x_2} - (1-w)\vect{y_2}} \\
    ={}& \norm{w(\vect{x_1} - \vect{x_2}) + (1-w)(\vect{y_1} - \vect{y_2})} \\
    \le{}& w\norm{\vect{x_1} - \vect{x_2}} + (1-w)\norm{\vect{y_1} - \vect{y_2}} \\
    ={}& wd(\vect{x_1}, \vect{x_2}) + (1-w)d(\vect{y_1}, \vect{y_2}).
  \end{aligned}
  \end{equation}
  Therefore, the distance function $d(\vect{x}, \vect{y})$ is convex.
\end{proof}

\subsection{Energy relation between entangled and reference spring networks}\label{sec:energy-relation}

We start by considering the energy relationship of chains in one dimension. Consider two systems in the undeformed states: System I, consisting of two chains connected in series with the initial lengths $L_{1}$ and $L_{2}$; and System II, consisting of a single chain with initial length $L = L_{1} + L_{2}$. See Figure \ref{fig-supp:model-chain-one-dimension}a for an illustration of the undeformed configurations.

Suppose some displacement is applied such that System I is stretched to a new configuration, where two chains have the lengths $l_1$ and $l_2$, respectively, and System II is stretched to the length $l = l_1 + l_2$. The deformed state is shown in Figure \ref{fig-supp:model-chain-one-dimension}b. Note that we do not assume System I is in equilibrium in the deformed state in Figure \ref{fig-supp:model-chain-one-dimension}b; that is, the forces in the two chains are not necessarily equal. We assume that all the chains are governed by the same constitutive law, $f=f(\Lambda)$, where $\Lambda$ is the stretch ratio of the chain and $f$ is the force.
\begin{figure}[h]
  \centering
  \includegraphics[width=0.9\textwidth]{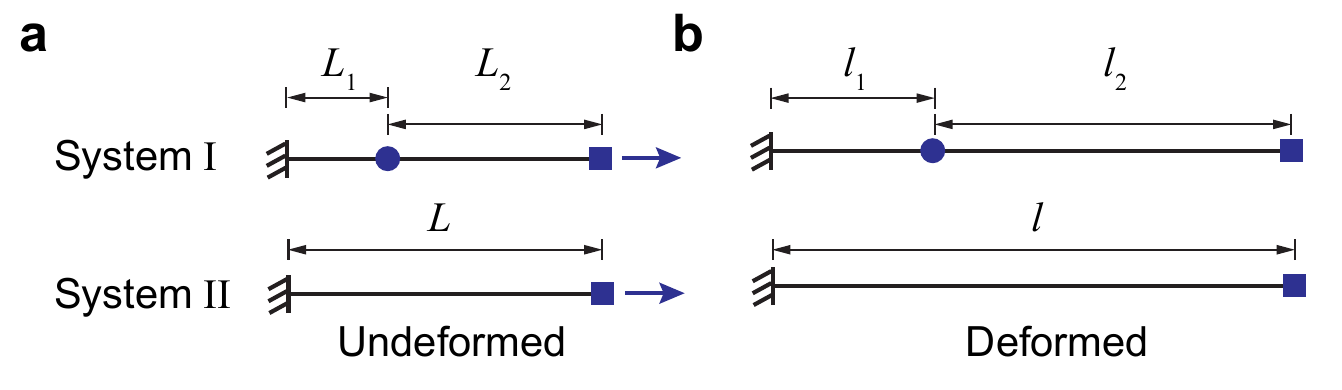}
  \caption{\textbf{One-dimensional chain models.} System I consists of two chains connected in series, and System II consists of a single chain. \textbf{(a)} The undeformed states of Systems I and II. \textbf{(b)} The deformed states of Systems I and II. We do not assume System I is in equilibrium in the deformed state; that is, the forces in the two chains are not necessarily equal.}
  \label{fig-supp:model-chain-one-dimension}
\end{figure}

Now we investigate the energy of two systems. The energy of System I is given by
\begin{equation}\label{eq-supp:energy-system-I-two-chain}
  U_{{1}} = L_{1} \int_{1}^{\Lambda_1} f(\Lambda) d\Lambda + L_{2} \int_{1}^{\Lambda_2} f(\Lambda) d\Lambda,
\end{equation}
where the stretch ratios $\Lambda_1$ and $\Lambda_2$ are defined as
\begin{equation}
  \Lambda_1 = \frac{l_1}{L_{1}}, \quad \Lambda_2 = \frac{l_2}{L_{2}}.
\end{equation}
The energy of System II is given by
\begin{equation}\label{eq-supp:energy-system-II-single-chain}
  U_2 = L \int_{1}^{\Lambda} f(\Lambda') d\Lambda',
\end{equation}
where the stretch ratio $\Lambda$ is
\begin{equation}
  \Lambda = \frac{l}{L}.
\end{equation}
We now establish the relationship between the energies of the two systems in the following theorem.
\begin{theorem}[Energy relation between one dimensional chains]\label{thm-supp:energy-relation-one-dimension-chains}
  Assume that the constitutive law $f=f(\Lambda)$ is an increasing function. Then the energy of System I (with two chains) is greater than or equal to the energy of System II (with a single chain); that is,
  \begin{equation}
    U_1 \ge U_2.
  \end{equation}
\end{theorem}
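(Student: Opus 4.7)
The plan is to recognize this as a one-line Jensen/convexity argument once the right auxiliary function is introduced. Define the single-chain energy-per-unit-length function
\begin{equation}
g(\Lambda) \;=\; \int_{1}^{\Lambda} f(\Lambda')\,d\Lambda'.
\end{equation}
Since $f$ is increasing by assumption, $g'(\Lambda)=f(\Lambda)$ is increasing, so $g$ is convex on $\mathbb{R}$ (by Lemma \ref{lemma:convex-function-second-derivative} if $f$ is differentiable, or by the standard monotone-derivative characterization of convexity otherwise; alternatively this follows directly from the integral definition).

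Next I would rewrite both energies in terms of $g$ and expose $\Lambda$ as a convex combination of $\Lambda_1$ and $\Lambda_2$. Setting $L=L_1+L_2$, $w_1 = L_1/L$, $w_2 = L_2/L$, we have $w_1,w_2\ge 0$ and $w_1+w_2=1$, and since $l=l_1+l_2$,
\begin{equation}
\Lambda \;=\; \frac{l_1+l_2}{L_1+L_2} \;=\; \frac{L_1\Lambda_1 + L_2\Lambda_2}{L_1+L_2} \;=\; w_1\Lambda_1 + w_2\Lambda_2.
\end{equation}
The two energies then become $U_1 = L_1 g(\Lambda_1)+L_2 g(\Lambda_2) = L\bigl[w_1 g(\Lambda_1)+w_2 g(\Lambda_2)\bigr]$ and $U_2 = L\, g(w_1\Lambda_1+w_2\Lambda_2)$.

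Applying the definition of convexity (Definition \ref{def:convex-function}) to $g$ with the weights $w_1,w_2$ yields
\begin{equation}
g(w_1\Lambda_1+w_2\Lambda_2) \;\le\; w_1 g(\Lambda_1) + w_2 g(\Lambda_2),
\end{equation}
and multiplying by $L>0$ gives $U_2 \le U_1$, which is the desired inequality. There is no real obstacle here: the only subtlety is phrasing the convexity of $g$ cleanly without assuming $f$ is differentiable (the integral representation handles this), and observing the weighted-average identity for $\Lambda$, which is what encodes physically the fact that the chain in System II shares its total elongation across its entire length, whereas System I locks each elongation $l_j$ into its own segment of length $L_j$. The extension to a chain passing through multiple entanglements (arbitrary $n_k$) is then immediate by the same Jensen inequality applied to the weights $L_{i_{k,j},i_{k,j+1}}/L_k$.
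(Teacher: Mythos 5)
Your proof is correct and follows essentially the same route as the paper: the paper likewise introduces the antiderivative $E(\Lambda)=\int_1^\Lambda f(\Lambda')\,d\Lambda'$, deduces its convexity from the monotonicity of $f$, writes $\Lambda$ as the convex combination $\frac{L_1}{L}\Lambda_1+\frac{L_2}{L}\Lambda_2$, and applies the definition of convexity. Your remark about avoiding the differentiability assumption on $f$ is a minor refinement over the paper's use of $E''=f'\ge 0$, but the argument is otherwise identical.
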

\begin{proof}
  We introduce the antiderivative of the function $f=f(\Lambda)$ denoted by $E=E(\Lambda)$:
  \begin{equation}
    E(\Lambda) := \int_1^{\Lambda} f(\Lambda') d\Lambda'
  \end{equation}
  Physically, $E(\Lambda)$ means the energy of the chain per unit length.
  Then the energy of System I in \eqref{eq-supp:energy-system-I-two-chain} can be written as
  \begin{equation}
    U_1 = L_{1} E(\Lambda_1) + L_{2} E(\Lambda_2)
  \end{equation}
  and the energy of System II in \eqref{eq-supp:energy-system-II-single-chain} as
  \begin{equation}
    U_2 = L E(\Lambda).
  \end{equation}

  Since $f=f(\Lambda)$ is an increasing function, we have
  \begin{equation}
    E''(\Lambda) = f'(\Lambda) \ge 0,
  \end{equation}
  which implies that $E(\Lambda)$ is a convex function by Lemma \ref{lemma:convex-function-second-derivative}. Then by Definition \ref{def:convex-function} and noticing that 
  \begin{equation}
    \frac{L_{1}}{L} + \frac{L_{2}}{L} = 1,
  \end{equation}
  we have
  \begin{equation}
    \frac{L_{1}}{L} E(\Lambda_1) + \frac{L_{2}}{L} E(\Lambda_2) \ge E\brac{\frac{L_{1}}{L} \Lambda_1 + \frac{L_{2}}{L} \Lambda_2},
  \end{equation}
  which is equivalently to
  \begin{equation}
    U_1 \ge U_2.
  \end{equation}
\end{proof}

\begin{remark}
  The condition in the above theorem only requires the constitutive law $f=f(\Lambda)$ to be an increasing function. 
  Physically, this means that the force increases as the stretch ratio grows. Consequently, the theorem applies to a broad class of constitutive laws, including the linear Hookean law as well as many nonlinear constitutive laws.
\end{remark}

We now generalize Theorem \ref{thm-supp:energy-relation-one-dimension-chains} on one dimensional chains to networks with arbitrary size and topology. Specifically, we show that, under the same configuration, the energy of the entangled network is always no larger than that of the reference spring network.
\begin{theorem}[Energy relation between entangled network and its reference spring network]\label{thm-supp:energy-relation-network}
  Let $G_e = (V, C)$ be an entangled network  and $G_s=(V, C_s)$ its reference spring network. Assume that the nodes in both networks have the same initial position in the undeformed state and the constitutive law $f=f(\Lambda)$ is an increasing function. Then, the total energy of the entangled network is less than or equal to that of the spring network with the same configuration in the deformed state:
  \begin{equation}\label{eq-supp:energy-order-network}
    U_e(\vect{x}) \le U_s(\vect{x}), \quad \forall \, \vect{x}\in \mathbb{R}^{dN}.
  \end{equation}
\end{theorem}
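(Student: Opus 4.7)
The plan is to reduce the network statement to a chain-by-chain application of the one-dimensional energy inequality already established in Theorem \ref{thm-supp:energy-relation-one-dimension-chains}. The underlying observation is that, for any chain $c_k \in C$ in the entangled network, the energy $U_k$ is computed from a single stretch ratio $\Lambda_k = l_k/L_k$ built from the total deformed and undeformed chain lengths, whereas the corresponding segments $(v_{i_{k,j}}, v_{i_{k,j+1}})$ contribute to the spring energy $U_s$ through their individual stretch ratios $\Lambda_{i_{k,j},i_{k,j+1}} = l_{i_{k,j},i_{k,j+1}}/L_{i_{k,j},i_{k,j+1}}$. Because stress is uniform along an entangled chain, the entangled energy of a chain should, by convexity of the energy density, lie below the pooled spring energies of its segments.

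First I would extend Theorem \ref{thm-supp:energy-relation-one-dimension-chains} from two segments to the $n_k - 1$ segments of a general chain. This is an immediate application of Jensen's inequality to the energy density $E(\Lambda) = \int_1^{\Lambda} f(\Lambda')\, d\Lambda'$, which is convex since $f$ is increasing (Lemma \ref{lemma:convex-function-second-derivative}). Using the weights $w_j = L_{i_{k,j},i_{k,j+1}}/L_k$, which sum to one, together with the identity $\Lambda_k = \sum_j w_j \Lambda_{i_{k,j},i_{k,j+1}}$, I obtain, for every chain $c_k \in C$,
\begin{equation}
  U_k \;=\; L_k E(\Lambda_k) \;\le\; \sum_{j=1}^{n_k-1} L_{i_{k,j},i_{k,j+1}}\, E\bigl(\Lambda_{i_{k,j},i_{k,j+1}}\bigr).
\end{equation}

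Next I would sum this inequality over all chains $c_k \in C$ and identify the right-hand side with $U_s(\vect{x})$. The key bookkeeping step is that the collection of all consecutive segments, taken across every chain in $G_e$, is exactly the edge set $C_s$ of the reference spring network, with each edge appearing precisely once. This follows from Definition \ref{def-supp:reference-non-slidable-network} together with Assumption \ref{assumption-supp:non-overlapping-chains}, which gives a one-to-one correspondence between adjacent pairs in $G_e$ and edges of $G_s$, and which ensures that each such pair arises as a segment of exactly one chain.

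The main obstacle is precisely this accounting: one must verify that no adjacent pair is double-counted or missed when the segments of different chains meet at slidable nodes. By Proposition \ref{prop-supp:internal-node-four-adjacent-vertices}, a slidable node has four neighbors partitioned two-and-two across its two chains, so the four incident edges split cleanly between the chains without overlap; Assumption \ref{assumption-supp:internal-node-non-slidable} guarantees that a node's slidable status is globally well-defined. Once this bijection is in hand, the desired bound $U_e(\vect{x}) \le U_s(\vect{x})$ follows for every $\vect{x} \in \mathbb{R}^{dN}$, since the argument never invokes equilibrium or the boundary data.
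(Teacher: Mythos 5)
Your proposal is correct and follows essentially the same route as the paper's proof: both reduce the claim to a per-chain Jensen inequality for the convex energy density $E(\Lambda)=\int_1^\Lambda f$, with weights $L_{i,i+1}/L$, and then sum over chains. The only difference is that you make explicit the bookkeeping that the segments of all chains exhaust $C_s$ exactly once, a step the paper leaves implicit in ``summing over all chains completes the proof.''
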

\begin{proof}
  We begin by considering the energy of a single chain in the entangled network, denoted by $c = (v_1, v_2, \cdots, v_n)\in C$. Its energy is given by
  \begin{equation}\label{eq-supp:energy-single-chain-entangled-network-theorem}
    U_{\textrm{chain}, e} = L \int_{1}^{\Lambda} f(\Lambda') d\Lambda',
  \end{equation}
  where $L$ is the initial length given by
  \begin{equation}
    L = L_{12} + L_{23} + \cdots + L_{n-1,n},
  \end{equation}
  with
  \begin{equation}
    L_{i,i+1} = \norm{\vect{X}_i - \vect{X}_{i+1}}, \quad 1\le i\le n-1.
  \end{equation}
  Here $\vect{X}_i$ denotes the initial position of the node $v_i$ in the undeformed state.
  The stretch ratio $\Lambda$ in \eqref{eq-supp:energy-single-chain-entangled-network-theorem} is given by
  \begin{equation}
    \Lambda = \frac{l}{L} = \frac{1}{L} \brac{\norm{\vect{x}_1 - \vect{x}_2} + \norm{\vect{x}_2 - \vect{x}_3} + \cdots + \norm{\vect{x}_{M-1} - \vect{x}_M}}.
  \end{equation}
  Here $\vect{x}_i$ denotes the position of the node $v_i$ in the deformed state.

  Next, we consider the energy of the same chain when modeled as a sequence of independent segments in the spring network:
  \begin{equation}
    U_{\textrm{chain},s} = \sum_{i=1}^{n-1} L_{i,i+1} \int_{1}^{\Lambda_{i,i+1}} f(\Lambda') d\Lambda'
  \end{equation}
  where $\Lambda_{i,i+1}$ the stretch ratio of the segement connecting $v_i$ and $v_{i+1}$:
  \begin{equation}
    \Lambda_{i,i+1} = \frac{1}{L_{i,i+1}}\norm{\vect{x}_i - \vect{x}_{i+1}},
  \end{equation}

  Following the same argument as in the proof of Theorem \ref{thm-supp:energy-relation-one-dimension-chains}, we introduce the antiderivative function $E = E(\Lambda)$:
  \begin{equation}
    E(\Lambda) := \int_{1}^{\Lambda} f(\Lambda') d\Lambda',
  \end{equation}
  which is convex due to the assumption that $f$ is increasing.
  
  Then the two energy expressions can be rewritten as
  \begin{equation}
    U_{\textrm{chain},e} = L E(\Lambda), \quad U_{\textrm{chain},s} = \sum_{i=1}^{n-1} L_{i,i+1} E(\Lambda_{i,i+1}).
  \end{equation}
  By Definition~\ref{def:convex-function}, we have
  \begin{equation}
    \sum_{i=1}^{n-1} \frac{L_{i,i+1}}{L} E(\Lambda_{i,i+1}) \ge E\brac{\sum_{i=1}^{n-1} \frac{L_{i,i+1}}{L} \Lambda_{i,i+1}} = E(\Lambda),
  \end{equation}
  which implies
  \begin{equation}
    U_{\textrm{chain},s} \ge U_{\textrm{chain},e}.
  \end{equation}
  Summing over all chains in the network completes the proof.
\end{proof}

From Theorem \ref{thm-supp:energy-relation-network}, we can derive a corollary that offers further physical insights. First, by evaluating $\vect{x}$ in \eqref{eq-supp:energy-order-network} at the equilibrium configuration $\vect{x}_s^*$ of the reference spring network, we obtain
\begin{equation}\label{eq-supp:energy-order-network-equilibrium}
  U_e(\vect{x}_s^*) \le U_s^*.
\end{equation}
Physically, this inequality implies that, for a spring network in the equilibrium state, the energy of the system can be reduced simply by replacing some non-slidable nodes with slidable ones, even if the slidable nodes are not allowed to slide (i.e., the entangled network need not be in equilibrium). This reduction is due to uniform tension along segments of each single chain
in the entangled network.

Next, we consider the relaxation of the entangled network from $\vect{x}_s^*$ (i.e., the equilibrium state of the spring network) to its own equilibrium state $\vect{x}_e^*$ and obtain
\begin{equation}\label{eq-supp:energy-order-network-relaxation}
  U_e^* \le U_e(\vect{x}_s^*).
\end{equation}
This reduction is due to stress redistribution enabled by the sliding of slidable nodes in the entangled network.

Combining \eqref{eq-supp:energy-order-network-equilibrium} and \eqref{eq-supp:energy-order-network-relaxation}, we obtain
\begin{equation}\label{eq-supp:energy-order-network-final}
  U_e^* \le U_s^*.
\end{equation}
This means that the energy of the entangled network in its equilibrium state is no larger than that of the reference spring network. We summarize these results in the following theorem, which is the main result of this section.
\begin{theorem}[Energy relation between entangled network and its reference spring network]
  Let $G_e = (V, C)$ be an entangled network  and $G_s=(V, C_s)$ its reference spring network. Assume that the nodes in both networks have the same initial position in the undeformed state and the constitutive law $f=f(\Lambda)$ is an increasing function. Then the following inequalities hold:
  \begin{equation}
    U_e(\vect{x}_s^*) \le U_s^*, \quad U_e^* \le U_e(\vect{x}_s^*),
  \end{equation}
  and consequently,
  \begin{equation}
    U_e^* \le U_s^*.
  \end{equation}
  Here $\vect{x}_e^*$ and $\vect{x}_s^*$ denote the equilibrium states of the entangled and spring networks, respectively, while $U_e^*$ and $U_s^*$ denote their corresponding total energies in the equilibrium state.
\end{theorem}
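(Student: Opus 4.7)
The plan is to observe that the theorem is essentially a corollary of Theorem~\ref{thm-supp:energy-relation-network}, combined with the variational definition of the equilibrium state, so the proof should be short and rely on two clean evaluations of the pointwise inequality $U_e(\vect{x}) \le U_s(\vect{x})$.

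First I would prove the inequality $U_e(\vect{x}_s^*) \le U_s^*$. Since Theorem~\ref{thm-supp:energy-relation-network} gives $U_e(\vect{x}) \le U_s(\vect{x})$ for every $\vect{x} \in \mathbb{R}^{dN}$, I simply specialize to $\vect{x} = \vect{x}_s^*$ and use the identity $U_s^* = U_s(\vect{x}_s^*)$. This step is essentially a substitution, with the only thing to check being that the hypotheses of Theorem~\ref{thm-supp:energy-relation-network} (common undeformed positions and $f$ increasing) transfer verbatim.

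Next I would establish $U_e^* \le U_e(\vect{x}_s^*)$ by a feasibility argument. Since $G_e$ and $G_s$ share the same node set $V$ and the same prescribed boundary indices $\mathcal{I}_b$ with the same boundary values $\vect{x}_{i,b}$, the minimizer $\vect{x}_s^*$ of the spring-network problem satisfies $\vect{x}_{s,i}^* = \vect{x}_{i,b}$ for all $i \in \mathcal{I}_b$, hence $\vect{x}_s^*$ is admissible for the constrained minimization problem \eqref{eq-supp:min-constrain-entangled-network}. The definition $U_e^* = \min U_e(\vect{x})$ over admissible $\vect{x}$ then gives $U_e^* \le U_e(\vect{x}_s^*)$ immediately.

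Finally, chaining the two inequalities yields $U_e^* \le U_e(\vect{x}_s^*) \le U_s^*$, which is the third claim. The whole argument is essentially a two-line deduction, so I do not anticipate any real obstacle; the one small subtlety worth flagging is making explicit that the boundary conditions for the two minimization problems are identical, since otherwise the feasibility step for the second inequality would fail. If one wanted to be extra careful, one could emphasize that nothing in this argument requires $\vect{x}_s^*$ or $\vect{x}_e^*$ to be unique; convexity of $U_e$ (to be proved later in Section~\ref{sec:convex-energy}) is not needed here, since the result only compares minimal energy values.
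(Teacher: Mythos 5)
Your proposal matches the paper's own derivation exactly: both first specialize the pointwise inequality of Theorem~\ref{thm-supp:energy-relation-network} at $\vect{x} = \vect{x}_s^*$ to get $U_e(\vect{x}_s^*) \le U_s^*$, then use the minimality of $U_e^*$ over the (shared) admissible set to get $U_e^* \le U_e(\vect{x}_s^*)$, and chain the two. Your remark about checking that the boundary conditions coincide is a sensible explicit flag of a point the paper only mentions in passing, but it does not change the argument.
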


We note that the above theorem can be easily extended to the case where the same displacement boundary conditions are prescribed for both networks. The details are omitted here for saving space.

\subsection{Convexity of the energy function}\label{sec:convex-energy}

In the section, we will discuss the convexity of the energy function of the entangled network. We will show that the energy function is convex with respect to the position of the vertices. This property is crucial for the optimization problem of the entangled network, as it ensures that the minimization problem has a unique global minimum.

\begin{theorem}[Convexity of the energy function of entangled network]\label{thm:convexity-energy-entangled-network}
  Let $G_e = (V, C)$ be an entangled network. Assume that the constitutive law $f=f(\Lambda)$ is an increasing and non-negative function. Then the total energy of entangled network $U_e(\vect{x})$ is convex with respect to the positions of all nodes in the deformed state.
\end{theorem}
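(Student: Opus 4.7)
The plan is to reduce the convexity of $U_e$ to a composition of two simpler convex structures: the deformed chain length as a convex function of node positions, and the per-length elastic energy as a convex, increasing function of chain length. The two assumptions on $f$ play complementary roles: monotonicity gives convexity of the energy integral, while non-negativity gives monotonicity of that integral. Once these are in place, the lemmas from Section~\ref{sec:convexity-preliminaries} combine to give the result.

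First I would fix a chain $c_k = (v_{i_{k,1}}, \dots, v_{i_{k,n_k}})$ and show that its deformed length
\begin{equation}
  l_k(\vect{x}) = \sum_{j=1}^{n_k-1} \norm{\vect{x}_{i_{k,j}} - \vect{x}_{i_{k,j+1}}}
\end{equation}
is a convex function of $\vect{x}\in \mathbb{R}^{dN}$. Each pairwise term is convex in $(\vect{x}_{i_{k,j}}, \vect{x}_{i_{k,j+1}})$ by Lemma~\ref{lemma:distance-function-two-points-convex}; extending it to a function on all of $\mathbb{R}^{dN}$ (trivially independent of the remaining coordinates) keeps it convex by Lemma~\ref{lemma:convex-2d-extension}; and the sum of these terms is convex by Lemma~\ref{lemma:convex-function-sum}.

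Next I would rewrite the chain energy as $U_k(\vect{x}) = g_k(l_k(\vect{x}))$, where
\begin{equation}
  g_k(t) := L_k \int_{1}^{t/L_k} f(\Lambda)\, d\Lambda,
\end{equation}
and verify that $g_k$ is both increasing and convex on $[0,\infty)$. Since $f\ge 0$ we have $g_k'(t) = f(t/L_k) \ge 0$, and since $f$ is increasing we have $g_k''(t) = f'(t/L_k)/L_k \ge 0$. Applying Lemma~\ref{lemma:convex-function-composition} to the composition of the convex function $l_k(\vect{x})$ with the convex, increasing function $g_k$ yields convexity of $U_k$ in $\vect{x}$. Summing over $k$ and invoking Lemma~\ref{lemma:convex-function-sum} gives convexity of $U_e = \sum_{k=1}^M U_k$.

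There is no serious obstacle, but the one subtle point worth flagging is that Lemma~\ref{lemma:convex-function-composition} as stated requires the outer function $g_k$ to be increasing on its entire domain, not just convex; this is precisely why the hypothesis $f\ge 0$ (rather than just monotonicity of $f$) is invoked. If preferred, the composition step can also be carried out directly from Definition~\ref{def:convex-function}: for $0\le w \le 1$, convexity of $l_k$ and monotonicity of $g_k$ give $g_k(l_k(w\vect{x}_1 + (1-w)\vect{x}_2)) \le g_k(w l_k(\vect{x}_1) + (1-w) l_k(\vect{x}_2))$, and then convexity of $g_k$ yields the desired inequality. Either route closes the argument without any additional technical machinery.
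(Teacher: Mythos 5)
Your proposal is correct and follows essentially the same route as the paper's proof: express each chain energy as a composition of a convex length/stretch function (built from Lemmas~\ref{lemma:distance-function-two-points-convex}, \ref{lemma:convex-2d-extension}, and \ref{lemma:convex-function-sum}) with an increasing convex antiderivative of $f$, apply Lemma~\ref{lemma:convex-function-composition}, and sum over chains. The only cosmetic difference is that you use the deformed length $l_k$ as the inner function while the paper uses the stretch ratio $\Lambda_k = l_k/L_k$, which changes nothing; your remark on why $f\ge 0$ is needed matches the paper's use of the hypothesis exactly.
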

\begin{proof}
  We begin by considering the energy of a single chain $c = (v_1, v_2, \cdots, v_n)\in C$, which is given by
  \begin{equation}\label{eq-supp:energy-single-chain-entangled-network}
    U_{\textrm{chain}} = L \int_{1}^{\Lambda} f(\Lambda') d\Lambda',
  \end{equation}
  where $L$ is the initial length of the chain and the stretch ratio $\Lambda$ is given by
  \begin{equation}
    \Lambda = \frac{l}{L} = \frac{1}{L} \brac{\norm{\vect{x}_1 - \vect{x}_2} + \norm{\vect{x}_2 - \vect{x}_3} + \cdots + \norm{\vect{x}_{n-1} - \vect{x}_n}}.
  \end{equation}
  Here $\vect{x}_i$ denotes the position of the node $v_i$ in the deformed state.

  Let $\vect{x} = (\vect{x}_1, \vect{x}_2, \cdots, \vect{x}_n)$ represent the collection of positions for the nodes in chain $c$. Then the energy function in \eqref{eq-supp:energy-single-chain-entangled-network} can be expressed as a composition of two functions:
  \begin{equation}
    U_{\textrm{chain}}(\vect{x}) = G(F(\vect{x})),
  \end{equation}
  where 
  \begin{equation}
    F(\vect{x}) := \frac{1}{L}\brac{\norm{\vect{x}_1 - \vect{x}_2} + \norm{\vect{x}_2 - \vect{x}_3} + \cdots + \norm{\vect{x}_{n-1} - \vect{x}_n}},
  \end{equation}
  and
  \begin{equation}
    G(\Lambda) := L \int_{1}^{\Lambda} f(\Lambda') d\Lambda'.
  \end{equation}
By Lemma \ref{lemma:distance-function-two-points-convex}, the distance function $\norm{\vect{x}_i - \vect{x}_j}$ is convex with respect to $(\vect{x}_i, \vect{x}_j)$. Therefore, by Lemma \ref{lemma:convex-2d-extension} and Lemma \ref{lemma:convex-function-sum}, the function $F(\vect{x})$ is convex with respect to $\vect{x}$. Then we consider the function $G(\Lambda)$. Taking derivatives of $G(\Lambda)$ we obtain
  \begin{equation}
    G'(\Lambda) = L f(\Lambda), \quad G''(\Lambda) = L f'(\Lambda).
  \end{equation}
  Since $f$ is increasing and non-negative, we have that $G$ is convex and increasing. Therefore, the energy function $U_{\textrm{chain}}(\vect{x}) = G(F(\vect{x}))$ is convex with respect to $\vect{x}$ by Lemma \ref{lemma:convex-function-composition}.

  Since the energy function of the entangled network is the sum of the energy function of all the chains, we complete the proof by Lemma \ref{lemma:convex-2d-extension} and Lemma \ref{lemma:convex-function-sum}.
\end{proof}

By Lemma \ref{lemma:convex-1d-restriction}, it follows that the energy function remains convex with respect to the positions of the nodes when certain displacement boundary conditions are prescribed. For completeness, we state the following theorem without proof.
\begin{theorem}[Convexity of the energy function of entangled network with displacement boundary conditions]\label{thm:convexity-energy-entangled-network-displacement}
  Let $G_e = (V, C)$ be an entangled network. Assume that the constitutive law $f=f(\Lambda)$ is an increasing and non-negative function. Then under the displacement boundary condition $\vect{x}_i = \vect{x}_{i,b}$ with $i\in \mathcal{I}_b\subseteq \{1,2,\dots,N\}$, the total energy of entangle network is convex with respect to the positions of the free (non-boundary) nodes, i.e., $\vect{x}_i$ for all $i \in \{1, 2, \dots, N\} \setminus \mathcal{I}_{{b}}$.
\end{theorem}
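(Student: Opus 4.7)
The plan is to reduce the claim directly to the unconstrained convexity result in Theorem \ref{thm:convexity-energy-entangled-network} by treating the boundary constraints as a partial-evaluation of the energy function. The ingredients needed are already in place: the unconstrained energy $U_e$ is convex on $\mathbb{R}^{dN}$, and Lemma \ref{lemma:convex-1d-restriction} tells us that convexity is preserved under fixing a subset of coordinates to constants.

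Concretely, I would first relabel the node indices so that the free nodes come first and the boundary nodes last, writing $\vect{x} = (\vect{x}_{\textrm{free}}, \vect{x}_{\textrm{bdry}})$, where $\vect{x}_{\textrm{free}} \in \mathbb{R}^{d(N-|\mathcal{I}_b|)}$ collects the positions indexed by $\{1,\dots,N\}\setminus \mathcal{I}_b$ and $\vect{x}_{\textrm{bdry}} \in \mathbb{R}^{d|\mathcal{I}_b|}$ collects the positions indexed by $\mathcal{I}_b$. The total energy $U_e$ is then viewed as a function
\begin{equation}
  U_e : \mathbb{R}^{d(N-|\mathcal{I}_b|)} \times \mathbb{R}^{d|\mathcal{I}_b|} \to \mathbb{R}, \quad (\vect{x}_{\textrm{free}}, \vect{x}_{\textrm{bdry}}) \mapsto U_e(\vect{x}_{\textrm{free}}, \vect{x}_{\textrm{bdry}}),
\end{equation}
which is convex on the full product space by Theorem \ref{thm:convexity-energy-entangled-network}.

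Next, I would introduce the constrained energy function
\begin{equation}
  \widetilde{U}_e(\vect{x}_{\textrm{free}}) := U_e(\vect{x}_{\textrm{free}}, \vect{x}_{\textrm{bdry}}^{b}),
\end{equation}
where $\vect{x}_{\textrm{bdry}}^{b}$ is the concatenation of the prescribed boundary values $\{\vect{x}_{i,b}\}_{i\in\mathcal{I}_b}$. Since $\vect{x}_{\textrm{bdry}}^{b}$ is a fixed point in $\mathbb{R}^{d|\mathcal{I}_b|}$, Lemma \ref{lemma:convex-1d-restriction} applies directly (with the roles of the two factors matching $\mathbb{R}^m = \mathbb{R}^{d(N-|\mathcal{I}_b|)}$ and $\mathbb{R}^n = \mathbb{R}^{d|\mathcal{I}_b|}$), yielding convexity of $\widetilde{U}_e$ in $\vect{x}_{\textrm{free}}$, which is exactly the claim.

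I do not anticipate a genuine obstacle here: the whole argument is a one-line corollary once the indexing is set up. The only minor subtlety worth making explicit is that the statement of Lemma \ref{lemma:convex-1d-restriction} is for a function of two vector variables, whereas $U_e$ is a priori a function of $N$ node-positions; so in writing the proof I would be careful to note that grouping the coordinates into the two blocks $(\vect{x}_{\textrm{free}}, \vect{x}_{\textrm{bdry}})$ is just a permutation of coordinates, and convexity is invariant under such relabeling.
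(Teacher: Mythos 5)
Your proposal is correct and is exactly the argument the paper intends: the paper states this theorem without proof, noting only that it follows from Theorem \ref{thm:convexity-energy-entangled-network} together with Lemma \ref{lemma:convex-1d-restriction}, which is precisely your reduction. Your added remark about permuting coordinates into free and boundary blocks is a harmless and sensible way to make the application of the lemma rigorous.
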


\section{Two-chain model}\label{sec-supp:two-chain-model}

In this part, we study a simplified entangled network model consisting of two entangled chains, called two-chain model. In the setup, two chains intersect in a two-dimensional space with the intersection being either a slidable node, representing an entangled network (Figure \ref{fig-supp:entangled-two-chains}) or a non-slidable node, representing a spring network (Figure \ref{fig-supp:crosslink-two-chains}). 
This toy model serves as a concrete example to illustrate our key idea in constructing an entangled network. Furthermore, we analyze the regularity of the energy function near the equilibrium state and show that it admits a local minimum: it is differentiable under small displacements but becomes non-differentiable under large displacements. Physically, this transition corresponds to the locking phenomenon.

\subsection{Model}

We start by introducing the two-chain model with a slidable node. As shown in Figure~\ref{fig-supp:entangled-two-chains}a, in the undeformed state, the first chain has endpoints $P_1$ and $P_2$, located at $\vect{X}_1 = (0, 1)$ and $\vect{X}_2 = (1, 2)$. The second chain has endpoints $P_3$ and $P_4$, located at $\vect{X}_3 = (1, 0)$ and $\vect{X}_4 = (2, 1)$, respectively. The two chains intersect at a node $P$, whose position is given by $\vect{X} = (1, 1)$.
\begin{figure}
  \centering
  \includegraphics[width=0.8\textwidth]{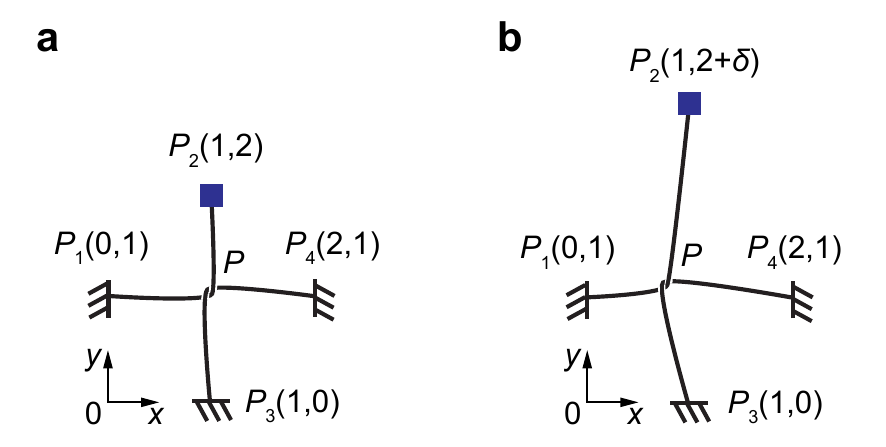}
  \caption{\textbf{The two-chain model with slidable node, representing an entangled network.} \textbf{(a)} The undeformed state: The two chains intersect at node $P(1,1)$, with the endpoints $P_1(0,1)$ and $P_2(1,2)$ for the first chain, and $P_3(1,0)$ and $P_4(2,1)$ for the second chain. \textbf{(b)} The deformed state: A displacement $\delta > 0$ in the vertical direction is applied to $P_2$, while the other endpoints ($P_1$, $P_3$, and $P_4$) remain fixed.}
  \label{fig-supp:entangled-two-chains}
\end{figure}

To introduce deformation, we impose a displacement boundary condition on $P_2$, while keeping $P_1$, $P_3$, and $P_4$ fixed ($\vect{x}_i=\vect{X}_i$ for $i=1,3,4$).  Specifically, $P_2$ is displaced in the $y$-direction, setting its position to $\vect{x}_2 = (1, 2 + \delta)$ where $\delta>0$ represents the applied displacement along $y$ direction. See Figure \ref{fig-supp:entangled-two-chains}b for the deformed state of the two chains. This setup is also consistent with our experiment design.

To determine the equilibrium configuration of the network, we first express its total elastic energy $U_e$ as the sum of the energies in the two chains:
\begin{equation}\label{eq:energy-two-fibers-entangle}
    U_e = U_{12} + U_{34},
\end{equation}
where $U_{12}$ is the energy of the chain connecting $\vect{x}_1$, $\vect{x}$, and $\vect{x}_2$, and $U_{34}$ is the energy of the chain connecting $\vect{x}_3$, $\vect{x}$, and $\vect{x}_4$. These are given by
\begin{equation}\label{eq:energy-two-fibers-entangle-each-chain}
    U_{12} = L_{12} \int_{1}^{\Lambda_{12}} f(\Lambda') d\Lambda', \quad U_{34} = L_{34} \int_{1}^{\Lambda_{34}} f(\Lambda') d\Lambda',
\end{equation}
Here, $L_{12}$ and $L_{34}$ are the initial (undeformed) length of two chains. In our case, both are set to $L_{12} = L_{34} = 2$. The function $f=f(\Lambda)$ represents the constitutive law of the chains, describing the force-stretch relation.
The stretch ratios of the two chains, $\Lambda_{12}$ and $\Lambda_{34}$, are defined by:
\begin{equation}\label{eq:stretch-ratio-two-fibers-entangle}
    \Lambda_{12} = \frac{d_{1} + d_{2}}{L_{12}}, \quad 
    \Lambda_{34} = \frac{d_{3} + d_{4}}{L_{34}},
\end{equation}
where $d_{i}$ for $j=1,2,3,4$ denotes the Euclidean distance between the points $\vect{x}=(x, y)$ and $\vect{x}_i=(x_{i}, y_{i})$, given by:
\begin{equation}\label{eq:distance-two-fibers-entangle}
    d_{i} = \norm{\vect{x} - \vect{x_i}} = \sqrt{(x - x_{i})^2 + (y - y_{i})^2}.
\end{equation}

The definitions above give the total elastic energy of the entangled two chains as a function of the intersection node $\vect{x}=(x, y)$ and the applied displacement $\delta$. 
We denote the energy as $U_e = U_e(\vect{x})$ to emphasize its dependence on the position of the intersection node $\vect{x}$. Since the boundary displacement $\delta$ also affects the energy, we sometimes write $U_e$ more explicitly as $U_e = U_e(\vect{x}, \delta)$ to indicate this parameter dependence.

To enhance the reader's understanding of the problem, we now provide an explicit expression for the energy function $U_e$ in \eqref{eq:energy-two-fibers-entangle} under a linear constitutive law $f(\Lambda) = \Lambda - 1$, which corresponds to the Hookean (linear elastic) model. In this case, the total energy of the network is given by:
\begin{equation}
\begin{aligned}
    U_e ={}& U_{12} + U_{34} \\
    ={}& (\Lambda_{12} - 1)^2 + (\Lambda_{34} - 1)^2 \\
    ={}& \brac{\frac{1}{2}(d_{1} + d_{2}) - 1}^2 + \brac{\frac{1}{2}(d_{3} + d_{4}) - 1}^2 \\
    ={}& \brac{\frac{1}{2}\brac{\sqrt{(x - x_{1})^2 + (y - y_{1})^2} + \sqrt{(x - x_{2})^2 + (y - y_{2})^2}} - 1}^2 \\
    {}& + \brac{\frac{1}{2}\brac{\sqrt{(x - x_{3})^2 + (y - y_{3})^2} + \sqrt{(x - x_{4})^2 + (y - y_{4})^2}} - 1}^2 \\
    ={}& \brac{\frac{1}{2}\brac{\sqrt{x^2 + (y - 1)^2} + \sqrt{(x - 1)^2 + (y - (2+\delta))^2}} - 1}^2 \\
    {}& + \brac{\frac{1}{2}\brac{\sqrt{(x - 1)^2 + y^2} + \sqrt{(x - 2)^2 + (y - 1)^2}} - 1}^2. \\
\end{aligned}    
\end{equation}
Here we have used the relations \eqref{eq:energy-two-fibers-entangle-each-chain}, \eqref{eq:stretch-ratio-two-fibers-entangle} and \eqref{eq:distance-two-fibers-entangle}.

To investigate the effect of the entanglement, we also introduce the two-chain model with a non-slidable node. In this case, each of the nodes $P_1$, $P_2$, $P_3$, and $P_4$ is directly connected to the intersection node $P$ with a single chain, see Figure \ref{fig-supp:crosslink-two-chains}. 
\begin{figure}
  \centering
  \includegraphics[width=0.8\textwidth]{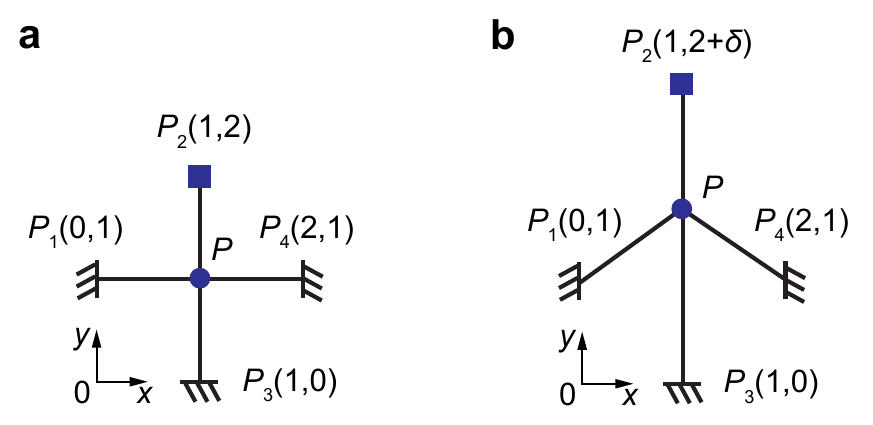}
  \caption{\textbf{The two-chain model with non-slidable node, representing a spring network.} \textbf{(a)} The undeformed state: Four chains are joined at a non-slidable node $P(1,1)$, with the endpoints $P_1(0,1)$, $P_2(1,2)$, $P_3(1,0)$ and $P_4(2,1)$ for four chains. \textbf{(b)} The deformed state: A vertical displacement $\delta > 0$ is applied to $P_2$, while the other endpoints ($P_1$, $P_3$, and $P_4$) remain fixed.}
  \label{fig-supp:crosslink-two-chains}
\end{figure}

The total elastic energy of the two-chain model with a non-slidable node is given by
\begin{equation}
    U_s = \sum_{i=1}^4 U_{i},
\end{equation}
where $U_{i}$ is the energy of the chain connecting $P$ and $P_i$, defined by
\begin{equation}
    U_{i} = L_{i} \int_{1}^{\Lambda_{i}} f(\Lambda') d\Lambda'.
\end{equation}
The stretch ratio $\Lambda_{i}$ is computed as
\begin{equation}
    \Lambda_{i} = \frac{d_{i}}{L_{i}}.
\end{equation}
Here $d_{i}$ denotes the Euclidean distance between the intersection node $\vect{x}$ and the node $\vect{x}_i$,
\begin{equation}
    d_{i} = \norm{\vect{x} - \vect{x}_i} = \sqrt{(x - x_i)^2 + (y - y_i)^2}.
\end{equation}
and $L_{i}$ is the initial (undeformed) length of the chain connecting $\vect{x}$ and $\vect{x}_i$. In this case, $L_{1} = L_{2} = L_{3} = L_{4} = 1$.

As an example, we explicitly write the energy function $U_s$ under the linear constitutive law $f(\Lambda) = \Lambda - 1$:
\begin{equation}
\begin{aligned}
    U_s ={}& \sum_{i=1}^4 \frac{1}{2} (\lambda_{i} - 1)^2 \\
    ={}& \sum_{i=1}^4 \frac{1}{2} \brac{ \sqrt{(x - x_i)^2 + (y - y_i)^2} - 1 }^2 \\
    ={}& \frac{1}{2} \brac{\sqrt{x^2 + (y - 1)^2} - 1}^2 + \frac{1}{2} \brac{\sqrt{(x - 1)^2 + (y - {2+\delta})^2} - 1}^2 \\
    {}& + \frac{1}{2} \brac{\sqrt{(x - 1)^2 + y^2} - 1}^2 + \frac{1}{2} \brac{\sqrt{(x - 2)^2 + (y - 1)^2} - 1}^2    
\end{aligned}    
\end{equation}

\subsection{Non-smoothness of the energy function}\label{sec-supp:non-smoothness-energy-two-chain}

In this part, we will investigate the regularity of the energy function of the slidable two-chain model near its minimum. Specifically, we will show that the energy function admits a local minimum that is differentiable at the local minimum for small applied displacements but becomes non-differentiable when the displacement exceeds a certain threshold. This analysis is crucial for the design of optimization algorithms aimed at minimizing the energy, as non-differentiability can significantly affect the convergence of the algorithm.

The following lemma in computing the derivative of the distance function will be used in the proof.
\begin{lemma}[Derivatives of distance function]\label{lemma:distance-function-derivative}
  Let $d: \mathbb{R}^n \times \mathbb{R}^n \rightarrow \mathbb{R}$ be the Euclidean distance function defined by
  \begin{equation}
    d(\vect{x}, \vect{y}) = \norm{\vect{x} - \vect{y}} = \brac{\sum_{i=1}^{n}(x_i - y_i)^2}^{\frac{1}{2}},
  \end{equation}
  for $\vect{x} = (x_1, x_2, \dots, x_n)\in\mathbb{R}^n$ and $\vect{y} = (y_1, y_2, \dots, y_n)\in\mathbb{R}^n$. Then, the partial derivatives of $d$ are given by
  \begin{equation}
    \frac{\partial d}{\partial \vect{x}} = \frac{\vect{x} - \vect{y}}{d(\vect{x}, \vect{y})}, \qquad \frac{\partial d}{\partial \vect{y}} = \frac{\vect{y} - \vect{x}}{d(\vect{x}, \vect{y})}.
  \end{equation}
\end{lemma}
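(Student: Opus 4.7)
The plan is a direct computation via the chain rule applied to the squared distance, which is easier to differentiate than $d$ itself. First I would write $d(\vect{x},\vect{y})^2 = \sum_{i=1}^n (x_i - y_i)^2$, observe that this squared distance is a smooth (in fact polynomial) function of all coordinates, and then use the fact that $d$ itself is smooth wherever $d > 0$, i.e., wherever $\vect{x} \neq \vect{y}$. This non-vanishing hypothesis should be stated (or tacitly assumed) since the square root is not differentiable at the origin; in the two-chain application, the intersection node is always distinct from its adjacent nodes, so the hypothesis holds in practice.

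Next I would differentiate both sides of $d^2 = \sum_i (x_i - y_i)^2$ with respect to $x_j$. The left side gives $2 d \cdot \partial d / \partial x_j$ by the chain rule, and the right side gives $2(x_j - y_j)$. Solving yields $\partial d / \partial x_j = (x_j - y_j)/d$. Assembling the components into a gradient vector produces the stated formula $\partial d/\partial \vect{x} = (\vect{x} - \vect{y})/d(\vect{x},\vect{y})$. The derivative with respect to $\vect{y}$ follows either by the symmetric calculation or by observing that $d(\vect{x},\vect{y}) = d(\vect{y},\vect{x})$ and swapping the roles, which reverses the sign of each component and gives $\partial d/\partial \vect{y} = (\vect{y} - \vect{x})/d(\vect{x},\vect{y})$.

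Since the identity is classical and the computation is routine, there is no substantive obstacle. The only subtlety worth flagging is the smoothness caveat at $\vect{x} = \vect{y}$: the Euclidean norm fails to be differentiable at zero, so the formula should be understood to hold on the open set $\{(\vect{x},\vect{y}) : \vect{x} \neq \vect{y}\}$. A brief sentence acknowledging this and noting that all subsequent uses of the lemma in the two-chain analysis occur at configurations with strictly positive chain segment lengths will suffice.
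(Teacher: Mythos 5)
Your proposal is correct and amounts to the same routine chain-rule computation as the paper, which differentiates $d = \bigl(\sum_i (x_i-y_i)^2\bigr)^{1/2}$ directly rather than implicitly via $d^2$; the two are trivially equivalent. Your explicit caveat about non-differentiability at $\vect{x}=\vect{y}$ is a sound addition that the paper omits from the lemma itself but acknowledges later when discussing the singularity of the energy gradient.
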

\begin{proof}
  We prove by directly computing its derivatives. For $k=1,2,\cdots,n$,
  \begin{equation}
    \frac{\partial d}{\partial x_k} = 2(x_k - y_k) \frac{1}{2}\brac{\sum_{i=1}^{n}(x_i - y_i)^2}^{-1/2} = \frac{x_k - y_k}{d(\vect{x}, \vect{y})},
  \end{equation}
  and
  \begin{equation}
    \frac{\partial d}{\partial y_k} = 2(y_k - x_k) \frac{1}{2}\brac{\sum_{i=1}^{n}(x_i - y_i)^2}^{-1/2} = \frac{y_k - x_k}{d(\vect{x}, \vect{y})}.
  \end{equation}
  Therefore, in the vector notation, we have
  \begin{equation}
    \frac{\partial d}{\partial \vect{x}} = \frac{\vect{x} - \vect{y}}{d(\vect{x}, \vect{y})}, \quad \frac{\partial d}{\partial \vect{y}} = \frac{\vect{y} - \vect{x}}{d(\vect{x}, \vect{y})}.
  \end{equation}  
\end{proof}

\begin{theorem}[Non-differentiability of the two-chain model with a slidable node]\label{thm-supp:non-smoothness-energy-two-chain}
Assume that the constitutive law $f = f(\lambda)$ satisfies $\lim_{\lambda \to +\infty} f(\lambda) = +\infty$. Then, the energy function $U_e = U_e(\vect{x}, \delta)$, defined in \eqref{eq:energy-two-fibers-entangle} for the two-chain model with a slidable node, admits a local minimum that is non-differentiable and Lipschitz continuous for sufficiently large displacement. More precisely, there exists $\delta_0 > 0$ such that for all $\delta > \delta_0$, the function $U_e(\vect{x}, \delta)$ has a minimum at $\vect{x} = \vect{x}_1$, but is Lipschitz continuous and not differentiable at that point.
\end{theorem}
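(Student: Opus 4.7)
The plan is to compute the one-sided directional derivative of $U_e$ at the candidate point $\vect{x}_1$ and then combine its structure with the convexity of $U_e$ proved in Theorem~\ref{thm:convexity-energy-entangled-network-displacement}. The key observation is that $d_1(\vect{x})=\norm{\vect{x}-\vect{x}_1}$ vanishes and fails to be smooth precisely at $\vect{x}_1$, while $d_2, d_3, d_4$ remain smooth there because the other endpoints stay bounded away from $\vect{x}_1$; this asymmetry drives both the non-differentiability and (for $\delta$ large) the minimizing property.

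First I would compute, for any unit vector $\vect{v}\in\mathbb{R}^2$, the one-sided directional derivative $h(\vect{v}):=\lim_{t\to 0^+}\brac{U_e(\vect{x}_1+t\vect{v},\delta)-U_e(\vect{x}_1,\delta)}/t$. Using $d_1(\vect{x}_1+t\vect{v})=t$ for $t\ge 0$, the smoothness of the remaining three distance terms at $\vect{x}_1$ via Lemma~\ref{lemma:distance-function-derivative}, and the chain rule applied to each chain energy, this yields $h(\vect{v}) = f(\Lambda_{12}^*)(1+\vect{v}\cdot\hat{n}_{12}^*) + f(\Lambda_{34}^*)\vect{v}\cdot(\hat{n}_{13}^*+\hat{n}_{14}^*)$, where $\hat{n}_{1j}^*=(\vect{x}_1-\vect{x}_j)/\norm{\vect{x}_1-\vect{x}_j}$ and $\Lambda_{12}^*,\Lambda_{34}^*$ are the stretches at $\vect{x}_1$. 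Crucially $\Lambda_{12}^*=\sqrt{1+(1+\delta)^2}/2\to\infty$ as $\delta\to\infty$, so $f(\Lambda_{12}^*)\to\infty$, while $\Lambda_{34}^*=(\sqrt{2}+2)/2$ and hence $f(\Lambda_{34}^*)$ are independent of $\delta$.

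Non-differentiability is then immediate: were $U_e$ differentiable at $\vect{x}_1$, linearity would force $h(\vect{v})+h(-\vect{v})=0$, but the formula gives $h(\vect{v})+h(-\vect{v})=2f(\Lambda_{12}^*)$, which is strictly positive once $\delta$ is large enough by the growth hypothesis. For the minimum property I would rewrite $h(\vect{v}) = f(\Lambda_{12}^*)+\vect{v}\cdot\vect{u}(\delta)$ with $\vect{u}(\delta)=f(\Lambda_{12}^*)\hat{n}_{12}^*+f(\Lambda_{34}^*)(\hat{n}_{13}^*+\hat{n}_{14}^*)$, so that $\min_{\norm{\vect{v}}=1}h(\vect{v}) = f(\Lambda_{12}^*)-\norm{\vect{u}(\delta)}$. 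Squaring and simplifying, this nonnegativity reduces to $-2\hat{n}_{12}^*\cdot(\hat{n}_{13}^*+\hat{n}_{14}^*)\ge \brac{f(\Lambda_{34}^*)/f(\Lambda_{12}^*)}\norm{\hat{n}_{13}^*+\hat{n}_{14}^*}^2$. Direct computation gives $\hat{n}_{13}^*+\hat{n}_{14}^*=(-1/\sqrt{2}-1,\,1/\sqrt{2})$ and $\hat{n}_{12}^*\to(0,-1)$ as $\delta\to\infty$, so the left side tends to $\sqrt{2}>0$ while the right side vanishes; hence the inequality holds for all $\delta$ larger than some $\delta_0$. Convexity of $U_e$ (Theorem~\ref{thm:convexity-energy-entangled-network-displacement}) then upgrades non-negativity of every directional derivative at $\vect{x}_1$ to the statement that $\vect{x}_1$ is a (global) minimizer.

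Finally, Lipschitz continuity at $\vect{x}_1$ follows from the general fact that a finite-valued convex function is locally Lipschitz on the interior of its domain; alternatively it can be checked by hand, since each chain energy is a composition of a $C^1$ antiderivative of $f$ with a sum of $1$-Lipschitz distance functions. I expect the main obstacle to be the third step, specifically the asymptotic geometric inequality that delivers $h(\vect{v})\ge 0$ uniformly in $\vect{v}$: the sign of the leading constant $-\hat{n}_{12}^*\cdot(\hat{n}_{13}^*+\hat{n}_{14}^*)$ in the limit depends on the specific placement of $\vect{x}_3,\vect{x}_4$ relative to $\vect{x}_1$, and this geometric term has to dominate the ratio $f(\Lambda_{34}^*)/f(\Lambda_{12}^*)$ whose decay rate is not quantified by the hypothesis. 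This is the only place where the growth assumption $\lim_{\Lambda\to\infty}f(\Lambda)=+\infty$ is genuinely used.
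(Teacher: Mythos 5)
Your proposal follows essentially the same route as the paper's proof: the same one-sided directional derivative $f(\Lambda_{12}^*)+\vect{v}\cdot\vect{u}(\delta)$ at $\vect{x}_1$, the same minimization over unit vectors giving $f(\Lambda_{12}^*)-\norm{\vect{u}(\delta)}$, the same explicit geometry ($\hat{n}_{13}^*+\hat{n}_{14}^*=(-1/\sqrt{2}-1,1/\sqrt{2})$, $\hat{n}_{12}^*\to(0,-1)$), and the same asymptotic argument that $f(\Lambda_{12}^*)\to\infty$ forces positivity for large $\delta$; your $h(\vect{v})+h(-\vect{v})=2f(\Lambda_{12}^*)>0$ justification of non-differentiability is in fact cleaner than the paper's. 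The only caveat is that your final appeal to Theorem~\ref{thm:convexity-energy-entangled-network-displacement} imports the hypotheses that $f$ is increasing and non-negative, which are not assumed in this theorem's statement; the paper instead concludes local minimality directly from positivity of the directional derivative at all points $\vect{x}_1+\varepsilon\vect{n}$ with $0<\varepsilon\ll1$, which avoids convexity entirely.
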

\begin{proof}
  For the energy function $U_e$ of the two-chain model with a slidable node defined in \eqref{eq:energy-two-fibers-entangle}, we have
  \begin{equation}
    U_e(\vect{x}) = L_{12} \int_{1}^{\Lambda_{12}} f(\Lambda') d\Lambda' + L_{34} \int_{1}^{\Lambda_{34}} f(\Lambda') d\Lambda'.
  \end{equation}
  Taking derivative of $U_e$ with respect to $\vect{x}$, we obtain
  \begin{equation}
  \begin{aligned}
    \frac{\partial U_e}{\partial \vect{x}} ={}& L_{12} \frac{\partial \Lambda_{12}}{\partial \vect{x}} f(\Lambda_{12}) + L_{34} \frac{\partial \Lambda_{34}}{\partial \vect{x}} f(\Lambda_{34}) \\
    ={}& L_{12} \frac{\partial}{\partial \vect{x}}\brac{\frac{1}{L_{12}}\brac{d(\vect{x}, \vect{x}_1) + d(\vect{x}, \vect{x}_2)}} f(\Lambda_{12}) \\
    {}& + L_{34} \frac{\partial}{\partial \vect{x}}\brac{\frac{1}{L_{34}}\brac{d(\vect{x}, \vect{x}_3) + d(\vect{x}, \vect{x}_4)}} f(\Lambda_{34}) \\    
    = {}& \brac{\frac{\vect{x} - \vect{x}_1}{d(\vect{x}, \vect{x}_1)} + \frac{\vect{x} - \vect{x}_2}{d(\vect{x}, \vect{x}_2)}} f(\Lambda_{12}) + \brac{\frac{\vect{x} - \vect{x}_3}{d(\vect{x}, \vect{x}_3)} + \frac{\vect{x} - \vect{x}_4}{d(\vect{x}, \vect{x}_4)}} f(\Lambda_{34}).    
  \end{aligned}        
  \end{equation}
  Here we use Lemma \ref{lemma:distance-function-derivative} to compute the derivative of the distance function.

  From the expression above, it is easy to see that $U_e = U_e(\vect{x})$ at has singularity at $\vect{x} = \vect{x}_i$ for $i=1,2,3,4$. While $U_e(\vect{x})$ is continuous at these points, its derivative is not well-defined due to the singularity of the gradient of the distance function at zero.
  To determine whether or not $\vect{x}_1$ is a local minimum, it suffices to verify the following equivalent condition: for any unit vector $\vect{n}\in\mathbb{R}^2$ with $\norm{\vect{n}} = 1$, the directional derivative satisfies
  \begin{equation}
    \brac{\vect{n} \cdot \frac{\partial U_e}{\partial \vect{x}}} \bigg|_{\vect{x} = \vect{x}_{\varepsilon}} \ge 0.
  \end{equation}
  where $\vect{x}_{\varepsilon} := \vect{x}_1 + \varepsilon \vect{n}$ for $0 < \varepsilon \ll 1$. If this condition holds true, then $\vect{x}_1$ is a local minimum of the energy function.

  Next, we compute the directional derivative of $U_e$ at $\vect{x}_1$ in the direction $\vect{n}$:
  \begin{equation}
    \brac{\vect{n} \cdot \frac{\partial U_e}{\partial \vect{x}}} \bigg|_{\vect{x} = \vect{x}_{\varepsilon}} = \vect{n} \cdot \brac{\frac{\vect{x}_{\varepsilon} - \vect{x}_1}{d(\vect{x}_{\varepsilon}, \vect{x}_1)} + \frac{\vect{x}_{\varepsilon} - \vect{x}_2}{d(\vect{x}_{\varepsilon}, \vect{x}_2)}} f(\Lambda_{12}) + \vect{n} \cdot \brac{\frac{\vect{x}_{\varepsilon} - \vect{x}_3}{d(\vect{x}_{\varepsilon}, \vect{x}_3)} + \frac{\vect{x}_{\varepsilon} - \vect{x}_4}{d(\vect{x}_{\varepsilon}, \vect{x}_4)}} f(\Lambda_{34})
  \end{equation}
  For the first term, we have:
  \begin{equation}
    \vect{n} \cdot \frac{\vect{x}_{\varepsilon} - \vect{x}_1}{d(\vect{x}_{\varepsilon}, \vect{x}_1)} = \vect{n} \cdot \frac{\varepsilon \vect{n}}{\norm{\varepsilon \vect{n}}} = 1.
  \end{equation}
  Therefore, for sufficitly small $0<\varepsilon\ll1$, we obtain the approximation:
  \begin{equation}\label{eq:directional-derivative-1}
  \begin{aligned}
    \brac{\vect{n} \cdot \frac{\partial U_e}{\partial \vect{x}}} \bigg|_{\vect{x} = \vect{x}_{\varepsilon}} \approx{}& f(\Lambda_{12}) + \vect{n} \cdot \brac{\frac{\vect{x}_1 - \vect{x}_2}{d(\vect{x}_1, \vect{x}_2)}} f(\Lambda_{12}) + \vect{n} \cdot \brac{\frac{\vect{x}_1 - \vect{x}_3}{d(\vect{x}_1, \vect{x}_3)} + \frac{\vect{x}_1 - \vect{x}_4}{d(\vect{x}_1, \vect{x}_4)}} f(\Lambda_{34}) \\
    ={}& f(\Lambda_{12}) + \vect{n} \cdot \brac{\frac{\vect{x}_1 - \vect{x}_2}{d(\vect{x}_1, \vect{x}_2)} f(\Lambda_{12}) + \frac{\vect{x}_1 - \vect{x}_3}{d(\vect{x}_1, \vect{x}_3)} f(\Lambda_{34}) + \frac{\vect{x}_1 - \vect{x}_4}{d(\vect{x}_1, \vect{x}_4)} f(\Lambda_{34})}.
  \end{aligned}
  \end{equation}
  Here, $\Lambda_{12}$ and $\Lambda_{34}$ denote the stretch ratio of two chains evaluated at $\vect{x} = \vect{x}_1$, given by
  \begin{equation}
    \Lambda_{12} = \frac{\norm{\vect{x}_1 - \vect{x}_2}}{L_{12}}, \quad \Lambda_{34} = \frac{\norm{\vect{x}_1 - \vect{x}_3} + \norm{\vect{x}_1 - \vect{x}_4}}{L_{34}}.
  \end{equation}
  We observe that $\brac{\vect{n} \cdot \frac{\partial U_e}{\partial \vect{x}}} \bigg|_{\vect{x} = \vect{x}_{\varepsilon}}$ is a function of the direction vector $\vect{n}$, and also depends implicitly on the displacement $\delta$ through the position of the node $\vect{x}_2$.
  
  To ensure that $\brac{\vect{n} \cdot \frac{\partial U_e}{\partial \vect{x}}} \bigg|_{\vect{x} = \vect{x}_{\varepsilon}}\ge0$ for all unit vector $\vect{n}\in\mathbb{R}^2$ and $\norm{\vect{n}}=1$, we consider the worst-case scenario by taking the minimum of expression \eqref{eq:directional-derivative-1} with respect to $\vect{n}$. This minimum occurs when $\vect{n}$ is parallel to and points in the inverse direction of the vector $\frac{\vect{x}_1 - \vect{x}_2}{d(\vect{x}_1, \vect{x}_2)} f(\lambda_{12}) + \frac{\vect{x}_1 - \vect{x}_3}{d(\vect{x}_1, \vect{x}_3)} f(\lambda_{34}) + \frac{\vect{x}_1 - \vect{x}_4}{d(\vect{x}_1, \vect{x}_4)} f(\lambda_{34})$. Therefore, the minimum value of $\brac{\vect{n} \cdot \frac{\partial U_e}{\partial \vect{x}}} \bigg|_{\vect{x} = \vect{x}_{\varepsilon}}$ is
  \begin{equation}
    \min_{\substack{\vect{n}\in\mathbb{R}^2 \\ \norm{\vect{n}}=1}} \brac{\vect{n} \cdot \frac{\partial U_e}{\partial \vect{x}}} \bigg|_{\vect{x} = \vect{x}_{\varepsilon}} = f(\lambda_{12}) - \norm{\frac{\vect{x}_1 - \vect{x}_2}{d(\vect{x}_1, \vect{x}_2)} f(\lambda_{12}) + \brac{\frac{\vect{x}_1 - \vect{x}_3}{d(\vect{x}_1, \vect{x}_3)} + \frac{\vect{x}_1 - \vect{x}_4}{d(\vect{x}_1, \vect{x}_4)}} f(\lambda_{34})}.
  \end{equation}
  Therefore, to determine whether $\vect{x}_1$ is a local minimum, we simply check whether this quantity is non-negative.

  We can write the above expression explicitly by substituting the coordinates of the nodes $P_i$ for $i=1,2,3,4$. Given
  \begin{equation}
    \vect{x}_1 = (0, 1), \quad \vect{x}_2 = (1, 2 + \delta), \quad \vect{x}_3 = (1, 0), \quad \vect{x}_4 = (2, 1),
  \end{equation}
  we compute:
  \begin{equation}
    \vect{x}_1 - \vect{x}_2 = (-1, -1-\delta), \quad \vect{x}_1 - \vect{x}_3 = (-1, 1), \quad \vect{x}_1 - \vect{x}_4 = (-2, 0),
  \end{equation}
  and the corresponding distances:
  \begin{equation}
    d(\vect{x}_1, \vect{x}_2) = \sqrt{1 + (1 + \delta)^2}, \quad d(\vect{x}_1, \vect{x}_3) = \sqrt{2}, \quad d(\vect{x}_1, \vect{x}_4) = 2.
  \end{equation}
  The stretch ratios are
  \begin{equation}
    \Lambda_{12} = \frac{\sqrt{1 + (1 + \delta)^2}}{2}, \quad \Lambda_{34} = \frac{\sqrt{2} + 2}{2}.
  \end{equation}
  Substituting into the expression for the minimum of the directional derivative, we have:
  \begin{equation}
  \begin{aligned}
    & \min_{\substack{\vect{n}\in\mathbb{R}^2 \\ \norm{\vect{n}}=1}} \brac{\vect{n} \cdot \frac{\partial U_e}{\partial \vect{x}}} \bigg|_{\vect{x} = \vect{x}_{\varepsilon}} \\
    ={}& f({\frac{\sqrt{1 + (1 + \delta)^2}}{2}}) - \norm{\frac{(-1, -1-\delta)}{\sqrt{1 + (1 + \delta)^2}} f(\frac{\sqrt{1 + (1 + \delta)^2}}{2}) + \brac{\frac{(-1, 1)}{\sqrt{2}} + \frac{(-2, 0)}{2}} f(\frac{\sqrt{2} + 2}{2})} \\
    ={}& f(\frac{\sqrt{1 + (1 + \delta)^2}}{2}) - \norm{\frac{(-1, -1-\delta)}{\sqrt{1 + (1 + \delta)^2}} f(\frac{\sqrt{1 + (1 + \delta)^2}}{2}) + \brac{-\frac{1}{\sqrt{2}}-1, \frac{1}{\sqrt{2}}} f(\frac{\sqrt{2} + 2}{2})} \\
    ={}& f(\frac{d_{\delta}}{2}) - \norm{\frac{(-1, -1-\delta)}{d_{\delta}} f(\frac{d_{\delta}}{2}) + (-\frac{1}{\sqrt{2}}-1, \frac{1}{\sqrt{2}}) f(\frac{\sqrt{2} + 2}{2})} \\
    ={}& f(\frac{d_{\delta}}{2}) - \norm{(-\frac{f(\frac{d_{\delta}}{2})}{d_{\delta}} + (-\frac{1}{\sqrt{2}}-1)f(\frac{\sqrt{2}+2}{2}), -\frac{(1+\delta)f(\frac{d_{\delta}}{2})}{d_{\delta}} + \frac{1}{\sqrt{2}}f(\frac{\sqrt{2}+2}{2}))} \\
    ={}& f(\frac{d_{\delta}}{2}) - \sqrt{\brac{\frac{f(\frac{d_{\delta}}{2})}{d_{\delta}} + (\frac{\sqrt{2}+2}{2})f(\frac{\sqrt{2}+2}{2})}^2 + \brac{\frac{(1+\delta)f(\frac{d_{\delta}}{2})}{d_{\delta}} - \frac{1}{\sqrt{2}}f(\frac{\sqrt{2}+2}{2})}^2}.
  \end{aligned}
  \end{equation}
  where we denote $d_{\delta} := \sqrt{1 + (1 + \delta)^2}$. 
  
  To analyze the sign of the above expression, we define the function $h = h(\delta)$ to be the minimum of the directional derivative and check if it is non-negative.
  \begin{equation}\label{eq:function-h-delta}
  \begin{aligned}      
    h(\delta) :={}& f(\frac{d_{\delta}}{2}) - \sqrt{\brac{\frac{f(\frac{d_{\delta}}{2})}{d_{\delta}} + (\frac{\sqrt{2}+2}{2})f(\frac{\sqrt{2}+2}{2})}^2 + \brac{\frac{(1+\delta)f(\frac{d_{\delta}}{2})}{d_{\delta}} - \frac{1}{\sqrt{2}}f(\frac{\sqrt{2}+2}{2})}^2} \\
    ={}& \frac{\brac{f(\frac{d_{\delta}}{2})}^2 - {\brac{\frac{f(\frac{d_{\delta}}{2})}{d_{\delta}} + (\frac{\sqrt{2}+2}{2})f(\frac{\sqrt{2}+2}{2})}^2 - \brac{\frac{(1+\delta)f(\frac{d_{\delta}}{2})}{d_{\delta}} - \frac{1}{\sqrt{2}}f(\frac{\sqrt{2}+2}{2})}^2}}{f(\frac{d_{\delta}}{2}) + \sqrt{\brac{\frac{f(\frac{d_{\delta}}{2})}{d_{\delta}} + (\frac{\sqrt{2}+2}{2})f(\frac{\sqrt{2}+2}{2})}^2 + \brac{\frac{(1+\delta)f(\frac{d_{\delta}}{2})}{d_{\delta}} - \frac{1}{\sqrt{2}}f(\frac{\sqrt{2}+2}{2})}^2}} \\
    ={}& \frac{- 2 \brac{\frac{f(\frac{d_{\delta}}{2})}{d_{\delta}}} (\frac{\sqrt{2}+2}{2})f(\frac{\sqrt{2}+2}{2}) + 2 \frac{(1+\delta)f(\frac{d_{\delta}}{2})}{d_{\delta}} \frac{1}{\sqrt{2}}f(\frac{\sqrt{2}+2}{2}) - ((\frac{\sqrt{2}+2}{2})^2 + (\frac{1}{\sqrt{2}})^2)(f(\frac{\sqrt{2}+2}{2}))^2 }{f(\frac{d_{\delta}}{2}) + \sqrt{\brac{\frac{f(\frac{d_{\delta}}{2})}{d_{\delta}} + (\frac{\sqrt{2}+2}{2})f(\frac{\sqrt{2}+2}{2})}^2 + \brac{\frac{(1+\delta)f(\frac{d_{\delta}}{2})}{d_{\delta}} - \frac{1}{\sqrt{2}}f(\frac{\sqrt{2}+2}{2})}^2}} \\
    ={}& \frac{(\sqrt{2}\delta - 2) \frac{f(\frac{d_{\delta}}{2})}{d_{\delta}} f(\frac{\sqrt{2}+2}{2}) - (\sqrt{2}+2)(f(\frac{\sqrt{2}+2}{2}))^2 }{f(\frac{d_{\delta}}{2}) + \sqrt{\brac{\frac{f(\frac{d_{\delta}}{2})}{d_{\delta}} + (\frac{\sqrt{2}+2}{2})f(\frac{\sqrt{2}+2}{2})}^2 + \brac{\frac{(1+\delta)f(\frac{d_{\delta}}{2})}{d_{\delta}} - \frac{1}{\sqrt{2}}f(\frac{\sqrt{2}+2}{2})}^2}}    
  \end{aligned}
  \end{equation}
  Now we analyze the sign of the numerator of $h(\delta)$. As $\delta\rightarrow+\infty$, we have 
  \begin{equation}
    \lim_{\delta\rightarrow+\infty}\frac{\sqrt{2}\delta - 2}{d_{\delta}} = \sqrt{2}
  \end{equation}
  and
  \begin{equation}
    \lim_{\delta\rightarrow+\infty}f\brac{\frac{d_{\delta}}{2}} = +\infty
  \end{equation}
  assuming that $f(\Lambda)\rightarrow+\infty$ as $\Lambda\rightarrow+\infty$.
  Therefore, for sufficitly large $\delta$, the numerator is positive. Notice that the denominator is always positive. Therefore, for sufficitly large $\delta$, we have $h(\delta) > 0$. 

  Thus, for sufficiently large displacement $\delta$, the directional derivative at $\vect{x}=\vect{x}_1$ remains positive in all directions, and hence $\vect{x}_1$ is a local minimum of the energy function. However, the energy function is non-differentiable at this point due to the singularity in the gradient. This demonstrates that $U_e$ has a non-differentiable local minimum for large displacement.

  We further prove that the energy function is Lipschitz continuous at this local minimum $\vect{x}_1$. It suffices to check if the directional derivative $\brac{\vect{n} \cdot \frac{\partial U_e}{\partial \vect{x}}} \bigg|_{\vect{x} = \vect{x}_{\varepsilon}}$ is finite for any unit vector $\vect{n}\in\mathbb{R}^2$ with $\norm{\vect{n}} = 1$, 
  where $\vect{x}_{\varepsilon} := \vect{x}_1 + \varepsilon \vect{n}$ for $0 < \varepsilon \ll 1$. This holds true by noticing the approximation \eqref{eq:directional-derivative-1}.
\end{proof}

We also numerically validate Theorem \ref{thm-supp:non-smoothness-energy-two-chain} by evaluating several specific choices of the constitutive laws. In Figure \ref{fig:function-h-delta}, we plot the minimum of the directional derivative, denoted by $h=h(\delta)$,  for the following cases: (1) the linear law: $f(\Lambda) = \Lambda - 1$; (2) the quadratic law: $f(\Lambda) = (\Lambda - 1)^2$; (3) the cubic law: $f(\Lambda) = (\Lambda - 1)^3$. The results show that the minimum of the directional derivative is negative for small $\delta$ and becomes positive as $\delta$ increases. Moreover, as the nonlinearity of the constitutive law increases, the transition point shifts to smaller values of $\delta$.
\begin{figure}[H]
    \centering
    \includegraphics[width=1.0\textwidth]{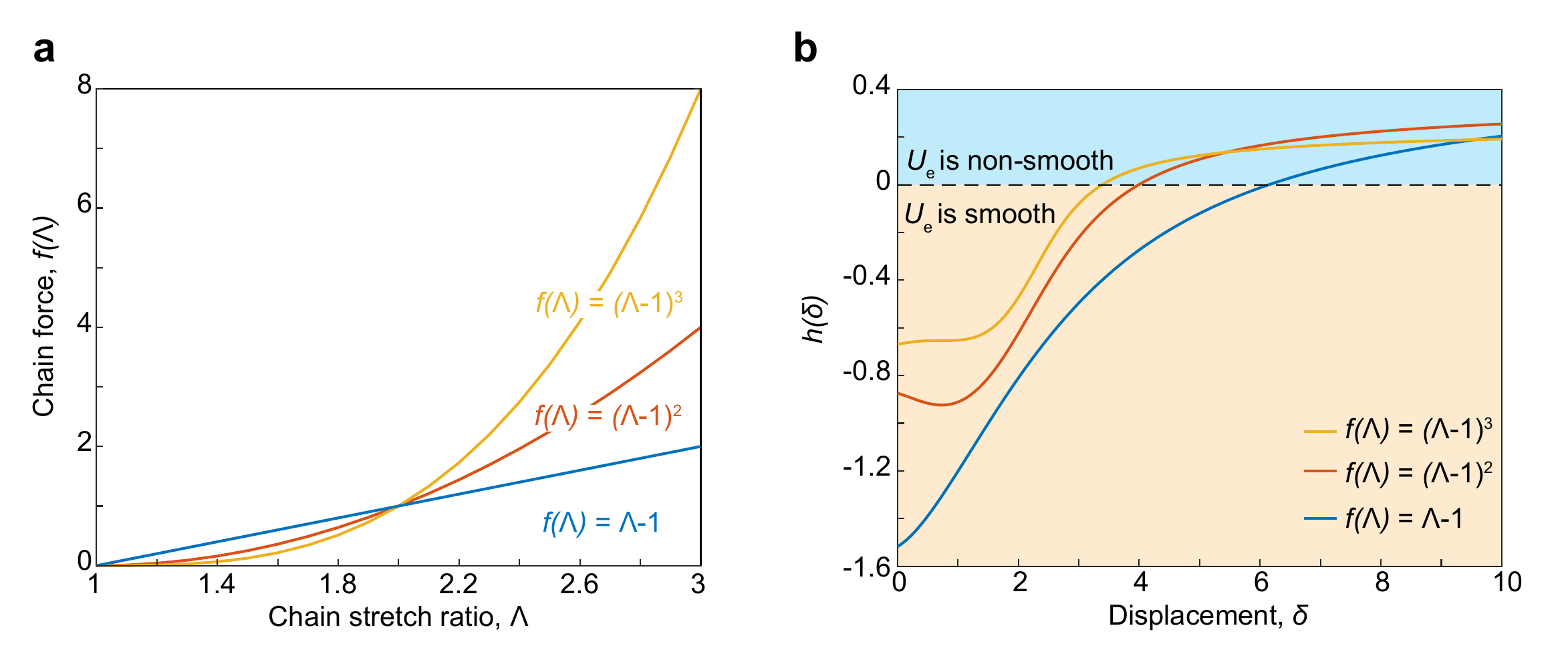}
    \caption{\textbf{The minimum of the directional derivative of the energy function for different constitutive laws.} \textbf{(a)} Different constitutive laws: (1) the linear case $f(\Lambda) = \Lambda - 1$; (2) the quadratic case $f(\Lambda) = (\Lambda - 1)^2$; (3) the cubic case $f(\Lambda) = (\Lambda - 1)^3$. \textbf{(b)} The minimum of the directional derivative, denoted by the function $h(\delta)$ for different constitutive laws. The minimum of the directional derivative is negative for small $\delta$ and becomes positive as $\delta$ increases. Moreover, as the nonlinearity of the constitutive law increases, the transition point shifts to smaller values of $\delta$.}
    \label{fig:function-h-delta}
\end{figure}  

Physically, the transition from smoothness to non-smoothness in the energy function is associated with the sliding-to-locking behavior of the entanglement. When the displacement $\delta$ is small, the entanglement allows for free sliding with a smooth energy landscape. As the displacement $\delta$ increases, the chains become locked, and the entanglement stops sliding, with a non-smooth energy landscape. From a topological perspective, locking transforms the system from slidable with two chains to non-slidable with three chains, thereby altering the network topology (see Figure \ref{fig-supp:topologychange}). This observation further suggests that, under large deformations, entangled networks may be analyzed through topologically equivalent reduced networks with fewer entanglements.

\begin{figure}
  \centering
  \includegraphics[width=0.9\textwidth]{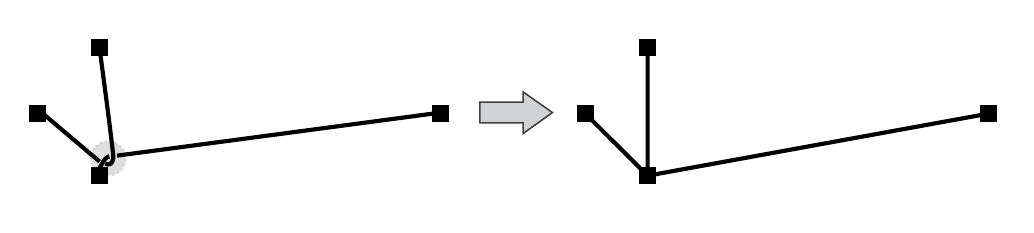}
  \caption{\textbf{Topology change in the two-chain model with slidable node.}
  Left: The two-chain model consists of two chains entangled at a slidable node.
  Right: After locking, the system can be equivalently represented as a spring network with three chains, where the slidable node becomes non-slidable.}
  \label{fig-supp:topologychange}
\end{figure}

We also note that, from a Newtonian mechanics perspective, the non-smoothness of the energy function may cause the force (i.e., the gradient of the energy) to be ill-defined at the slidable node when the system becomes locked. This, however, does not result in any inconsistency for our model, which is constructed from an energy perspective. Physically, at locking, the slidable node experiences a contact force that maintains equilibrium. While our model does not explicitly account for this contact force, it still correctly captures the equilibrium state from the perspective of the energy formulation. This highlights the elegance of our model in capturing the fundamental physics with conceptual simplicity.

Numerically, the non-smoothness of the energy function poses challenges for designing efficient optimization algorithms to compute its minimum. In particular, it prevents the direct application of Newton's method. To address this issue, we employ a specially designed gradient descent method to minimize the energy function. More details of the algorithm will be presented in the next section.

\section{Numerical algorithm}\label{sec:numerical-method}

In this section, we present the numerical algorithm to simulate the entangled network, i.e., numerically solve the optimization problem to find the minimum of the energy function.
In Section~\ref{sec:data-structure-entangle-network}, we introduces the data structure of the entangled network, including the generation of nodes and chains. In Section~\ref{sec:optimization-algorithm-two-chain}, we investigates optimization methods for the two-chain model, showing that the gradient descent method succeeds while Newton’s method fails. In Section~\ref{sec:optimization-algorithm-entangle-network}, we extends the gradient descent method to entangled networks and proposes strategies to accelerate convergence.  Section~\ref{sec:computation-energy-gradient} presents the computation of the gradient of the total energy, which is essential for implementing the optimization algorithm. Finally, we include the construction of the nonlinear constitutive law in Section~\ref{sec:construct-nonlinear-constitutive-law}.

\subsection{Data structure of the entangled network}\label{sec:data-structure-entangle-network}

For a classical graph, the most commonly used data structure is an adjacency list (or dictionary), where each node stores a list of its adjacent nodes, or a 2D adjacency matrix of size $N\times N$, where \texttt{matrix[i][j] = 1} if there is an edge connecting the $i$-th and $j$-th nodes. These representations are sufficient because the set of nodes and their adjacency relations uniquely determine the topology of a classical graph.

In contrast, representing an entangled network is significantly more complex. This complexity arises from the fact that each chain in the entangled network may consist of varying number of multiple nodes, and each node can be either slidable or non-slidable. Consequently, the topology of an entangled network cannot be fully described by simple pairwise adjacency, and additional structural information is essential to capture the configuration of the network.

Our data structure of the entangled network is motivated by two key considerations. First, as discussed in Section~\ref{sec:graph-entangle-network}, the topological structure of an entangled network is uniquely determined by four components: (1) the set of slidable nodes; (2) the set of non-slidable nodes; (3) the set of adjacent nodes for each node; and (4) the orientation of the slidable node. Second, our optimization algorithm requires the gradient information of the energy function, which in turn depends on the local connectivity of nodes, as detailed in Theorem~\ref{thm:gradient-energy-single-chain} in Section~\ref{sec:computation-energy-gradient}.

Guided by these considerations,  we use a three-dimensional tensor \texttt{chains} to efficiently store the chain connectivity. The first index of the tensor corresponds to the node index, the second to the index of the chain associated with that node (since multiple chains may connect to a single node), and \texttt{chains[i][j]} stores an array containing all nodes in the $j$-th chain connected to the $i$-th node.

Then the construction of the entangled network proceeds in two steps: (1) generation of nodes; (2) generation of chains. We also note that, although our current implementation focuses on 2D square lattices, the procedure can be easily adapted to random networks or 3D networks with minimal modification.

\vspace{1em}
\noindent\textbf{Step 1: Generation of nodes}

In this work, we restrict our simulations of entangled networks to a 2D square lattice. We choose the square lattice over other common lattice structures (e.g., the triangular or honeycomb lattices) because it is the simplest configuration in which each node is connected to exactly four adjacent nodes. This connectivity allows the potential for each node to serve as a slidable node shared by two chains, as discussed in Proposition~\ref{prop-supp:internal-node-four-adjacent-vertices} in Section~\ref{sec:entangled-network}. Extensions to random networks and 3D cases are possible and will be explored in future work.
In the 2D square lattice, the nodes are arranged in a uniform grid with $N_x$ nodes in the $x$ direction and $N_y$ nodes in the $y$ direction. Therefore, the total number of nodes is $N = N_x N_y$.

Based on our previous discussion,  the structure of an entangled network is uniquely determined by four components: (1) the set of slidable nodes; (2) the set of non-slidable nodes; (3) the set of adjacent nodes for each node; and (4) the orientation of the slidable node. For the square lattice, the adjacent relations are fixed. In this work, we further simplify the setting by fixing the orientation of slidable nodes: the left and top neighbors belong to one chain, while the right and bottom neighbors belong to another chain (as shown in Figure \ref{fig-supp:orientation}a). The effect of alternative orientations (i.e., using mixed orientations in Figures \ref{fig-supp:orientation}a and \ref{fig-supp:orientation}b) will be explored in future work. Under this simplification, the only degrees of freedom are the choice of slidable nodes and non-slidable nodes in the present setting.

\begin{figure}
  \centering
  \includegraphics[width=0.75\textwidth]{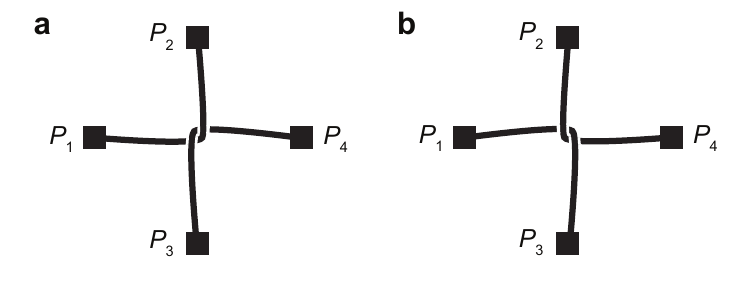}
  \caption{\textbf{Different orientations of slidable node.} 
  \textbf{(a)} The left ($P_1$) and top ($P_2$) neighboring nodes belong to one chain, while the right ($P_4$) and bottom ($P_3$) neighboring nodes belong to another chain.
  \textbf{(b)} The left ($P_1$) and bottom ($P_3$) neighboring nodes belong to one chain, while the right ($P_4$) and top ($P_2$) neighboring nodes belong to another chain.}
  \label{fig-supp:orientation}
\end{figure}

We consider two types of networks: periodic networks and random networks. The periodic network is constructed from the square lattice with a repeating periodic structures. For simplicity, we take a $2\times2$ lattice as a unit cell and replicate it to form the entire network. Other periodic lattice structures can be considered in the future work. For this $2\times2$ unit cell containing four nodes, we consider the following cases:
\begin{itemize}
  \item \textbf{Case I ($\varphi_s = 0\%$):} All four nodes are non-slidable nodes, corresponding to a spring network with $\varphi_s = 0\%$. See Figure \ref{fig-supp:SI_Fig8_0_Period_Network} for the simulation setup of elasticity and Figure \ref{fig-supp:SI_Fig27_CrackNetworkConfiguration}a for the simulation setup of fracture.
  
  \item \textbf{Case II ($\varphi_s = 25\%$):} Three nodes are non-slidable and one node is slidable, corresponding to an entangled network with $\varphi_s = 25\%$. See Figure \ref{fig-supp:SI_Fig9_25_Period_Network} for the simulation setup of elasticity and Figure \ref{fig-supp:SI_Fig27_CrackNetworkConfiguration}b for the simulation setup of fracture.
  
  \item \textbf{Case III ($\varphi_s = 50\%$):} Two nodes are non-slidable and the other two are slidable, corresponding to an entangled network with $\varphi_s = 50\%$. See Figure \ref{fig-supp:SI_Fig10_50_Period_Network} for the simulation setup of elasticity and Figure \ref{fig-supp:SI_Fig27_CrackNetworkConfiguration}c for the simulation setup of fracture.
  
  \item \textbf{Case IV ($\varphi_s = 75\%$):} One node is non-slidable and the other three nodes are slidable, corresponding to an entangled network with $\varphi_s = 75\%$. See Figure \ref{fig-supp:SI_Fig11_75_Period_Network} for the simulation setup of elasticity and Figure \ref{fig-supp:SI_Fig27_CrackNetworkConfiguration}d for the simulationsetup  of fracture.
\end{itemize}

To generate the nodes in a random network, each node is assigned as either slidable or non-slidable according to a prescribed probability $0 \le \varphi_s \le 1$, which represents the fraction of slidable nodes in the network. The procedure is as follows: all boundary nodes are fixed as non-slidable. For each internal nodes (inside the domain), we independently generate a random number $r$ from the uniform distribution $U[0, 1]$. If $r < \varphi_s$,  the node is assigned as slidable; otherwise, it is treated as non-slidable. In the implementation, an array \texttt{isSlidable} of size $N$ stores the node types, with \texttt{1} denoting a slidable node and \texttt{0} denoting a non-slidable node. The algorithm for generating the random network is given in Algorithm \ref{alg:generate-random-network}.
\begin{algorithm}[H]
\caption{Algorithm for generating nodes in a random network}\label{alg:generate-random-network}
\begin{algorithmic}[1]
\For{every node $i$ in the network}
\If{$i$ is a boundary node} \State \texttt{isSlidable[$i$] $\leftarrow$ 0}
\Else
\State \texttt{Generate a random number $r\in U(0, 1)$}
\If{$r < \varphi_s$}
\State \texttt{isSlidable[$i$] $\leftarrow$ 1}
\Else
\State \texttt{isSlidable[$i$] $\leftarrow$ 0}
\EndIf
\EndIf
\EndFor
\end{algorithmic}
\end{algorithm}

\vspace{1em}
\noindent\textbf{Step 2: Generation of chains}

After generating the nodes with their assigned types (non-slidable or slidable) and specifying orientations to the slidable nodes, we proceed to construct the chains that connect them. To efficiently store the chain connectivity, we use a three-dimensional tensor \texttt{chains}. The first index of the tensor corresponds to the node index, the second to the index of the chain associated with that node, and \texttt{chains[i][j]} stores an array containing all nodes in the $j$-th chain connected to the $i$-th node.

The construction of the tensor \texttt{chains} relies on two key subroutines:
\begin{itemize}
\item
\texttt{getNextNodeInChain}: Given a node index $i$ and one of its neighboring slidable nodes $j$ (if such a neighbor exists), it returns the next neighboring node $k$ of $j$ along the same chain that contains both $i$ and $j$. Since all slidable node orientations are specified in advance, this step is always well-defined.

\item
\texttt{getNodeListInChain}: Given a non-slidable node $i$ and one of its neighboring slidable nodes $j$, it returns the complete ordered list of nodes in the chain containing $i$ and $j$ by recursively calling \texttt{getNextNodeInChain} until a terminating non-slidable node is reached.
\end{itemize}

With these two subroutines in place, we are ready to present the algorithm for generating the tensor \texttt{chains}, which constructs the chain connectivity for each node in the network based on its type (slidable or non-slidable). The algorithm iterates over all nodes and applies the following procedure:
\begin{itemize}
    \item For non-slidable node, the algorithm examines each of its neighboring node and constructs the list of nodes along the chain starting from the current non-slidable node and passing through its neighbor. This list is then stored in the corresponding entry of \texttt{chains}.
    
    \item For slidable node, its neighboring nodes are grouped into two parts based on the prescribed orientation. For each group, the algorithm traces the sequence of nodes starting from the slidable node and extending in both directions until it reaches non-slidable nodes at both ends. These lists are merged and stored in \texttt{chains[i][0]} and \texttt{chains[i][1]}, respectively.
\end{itemize}
The details are summarized in Algorithm \ref{alg:generate-network-chain-info}.
\begin{algorithm}[H]
\caption{The algorithm for generating chain information}\label{alg:generate-network-chain-info}
\begin{algorithmic}[1]
\For{each node $i$ in the network}
\If{$i$ is non-slidable node}
  \For{each neighbouring node $j$ of $i$}
  \State Get the node list in the chain starting from $i$ and $j$ using \texttt{getNodeListInChain}
  \State Add this list to \texttt{chains[i][j]}
  \EndFor
\ElsIf{$i$ is slidable node}
\State Get four neighbouring nodes of $i$ and partition them into two chains $\{v_{i_1}, v_{i_2}\}$ and $\{v_{i_3}, v_{i_4}\}$ based on orientation
\State Get the node list $l_1$ in the chain starting from $i$ and $i_1$ using \texttt{getNodeListInChain}
\State Get the node list $l_2$ in the chain starting from $i$ and $i_2$ using \texttt{getNodeListInChain}
\State Merge $l_1$ and $l_2$, and add to \texttt{chains[i][0]}
\State Get the node list $l_3$ in the chain starting from $i$ and $i_3$ using \texttt{getNodeListInChain}
\State Get the node list $l_4$ in the chain starting from $i$ and $i_4$ using \texttt{getNodeListInChain}
\State Merge $l_3$ and $l_4$, and add to \texttt{chains[i][1]}
\EndIf
\EndFor
\end{algorithmic}
\end{algorithm}

\subsection{Optimization algorithm for two-chain model}\label{sec:optimization-algorithm-two-chain}

In this part, we consider the optimization problem for the two-chain model introduced in Section~\ref{sec-supp:two-chain-model}. The goal is to find the equilibrium state of the network by determining the minimum of the energy function. We begin by applying Newton's method (see e.g., \cite{boyd2004convex}) to solve this optimization problem and demonstrate that it fails, along with an explanation of the underlying reasons. We then proceed to solve the problem using the gradient descent method.

To compute the minimum of the energy function $U_e = U_e(\vect{x})$ given in \eqref{eq:energy-two-fibers-entangle}, Netwon's method is given by
\begin{equation}\label{eq-supp:newton-method}
    \vect{x}^{(n+1)} = \vect{x}^{(n)} - \brac{\nabla^2 U_e(\vect{x}^{(n)})}^{-1} \nabla U_e(\vect{x}^{(n)}), \quad n = 0,1,2,\dots,
\end{equation}
equipped with some initial guess $\vect{x}=\vect{x}^{(0)}$. Here $n$ denotes the iteration step, $\nabla U_e(\vect{x}^{(n)})$ and $\nabla^2 U_e(\vect{x}^{(n)})$ denote the gradient and Hessian matrix of the energy $U_e$ at the $n$-th iteration step.

Next, we derive the gradient and Hessian matrix of the energy $U_e$ given in \eqref{eq:energy-two-fibers-entangle} for the two-chain model. The energy $U_e$ is given by
\begin{equation}
  U_e(\vect{x}) = L_{12} \int_{1}^{\Lambda_{12}} f(\Lambda') d\Lambda' + L_{34} \int_{1}^{\Lambda_{34}} f(\Lambda') d\Lambda'.
\end{equation}
Taking derivative of $U_e(\vect{x})$ with respect to $\vect{x}$, we have
\begin{equation}\label{eq:gradient-entangle-two-chain}
\begin{aligned}
  \nabla U_e ={}& L_{12} \frac{\partial \Lambda_{12}}{\partial \vect{x}} f(\Lambda_{12}) + L_{34} \frac{\partial \Lambda_{34}}{\partial \vect{x}} f(\Lambda_{34}) \\
  ={}& L_{12} \frac{\partial}{\partial \vect{x}}\brac{\frac{1}{L_{12}}\brac{d(\vect{x}, \vect{x}_1) + d(\vect{x}, \vect{x}_2)}} f(\Lambda_{12}) \\
  {}& + L_{34} \frac{\partial}{\partial \vect{x}}\brac{\frac{1}{L_{34}}\brac{d(\vect{x}, \vect{x}_3) + d(\vect{x}, \vect{x}_4)}} f(\Lambda_{34}) \\    
  = {}& \brac{\frac{\vect{x} - \vect{x}_1}{d(\vect{x}, \vect{x}_1)} + \frac{\vect{x} - \vect{x}_2}{d(\vect{x}, \vect{x}_2)}} f(\Lambda_{12}) + \brac{\frac{\vect{x} - \vect{x}_3}{d(\vect{x}, \vect{x}_3)} + \frac{\vect{x} - \vect{x}_4}{d(\vect{x}, \vect{x}_4)}} f(\Lambda_{34}).    
\end{aligned}        
\end{equation}
Then we compute the Hessian matrix of the energy $U_e$. For the first term $\brac{\frac{\vect{x} - \vect{x}_1}{d(\vect{x}, \vect{x}_1)} + \frac{\vect{x} - \vect{x}_2}{d(\vect{x}, \vect{x}_2)}} f(\Lambda_{12})$ in the gradient \eqref{eq:gradient-entangle-two-chain}, its gradient is given by
\begin{equation}
\begin{aligned}
    {}& \frac{\partial}{\partial \vect{x}} \brac{\brac{\frac{\vect{x} - \vect{x}_1}{d(\vect{x}, \vect{x}_1)} + \frac{\vect{x} - \vect{x}_2}{d(\vect{x}, \vect{x}_2)}} f(\Lambda_{12})} \\
    ={}& \brac{\frac{\vect{I} d(\vect{x}, \vect{x}_1) - (\vect{x} - \vect{x}_1) \otimes\frac{\vect{x} - \vect{x}_1}{d(\vect{x}, \vect{x}_1)}}{(d(\vect{x}, \vect{x}_1))^2} + \frac{\vect{I} d(\vect{x}, \vect{x}_2) - (\vect{x} - \vect{x}_2)\otimes \frac{\vect{x} - \vect{x}_2}{d(\vect{x}, \vect{x}_2)}}{(d(\vect{x}, \vect{x}_2))^2}} f(\Lambda_{12}) \\
    & + \brac{\frac{\vect{x} - \vect{x}_1}{d(\vect{x}, \vect{x}_1)} + \frac{\vect{x} - \vect{x}_2}{d(\vect{x}, \vect{x}_2)}} \frac{\partial \Lambda_{12}}{\partial \vect{x}} f'(\Lambda_{12}) \\
    ={}& \brac{\frac{\vect{I} d(\vect{x}, \vect{x}_1) - (\vect{x} - \vect{x}_1) \otimes \frac{\vect{x} - \vect{x}_1}{d(\vect{x}, \vect{x}_1)}}{(d(\vect{x}, \vect{x}_1))^2} + \frac{\vect{I} d(\vect{x}, \vect{x}_2) - (\vect{x} - \vect{x}_2) \otimes \frac{\vect{x} - \vect{x}_2}{d(\vect{x}, \vect{x}_2)}}{(d(\vect{x}, \vect{x}_2))^2}} f(\Lambda_{12}) \\
    & + \brac{\frac{\vect{x} - \vect{x}_1}{d(\vect{x}, \vect{x}_1)} + \frac{\vect{x} - \vect{x}_2}{d(\vect{x}, \vect{x}_2)}} \otimes \brac{\frac{\vect{x} - \vect{x}_1}{d(\vect{x}, \vect{x}_1)} + \frac{\vect{x} - \vect{x}_2}{d(\vect{x}, \vect{x}_2)}} \frac{1}{L_{12}} f'(\Lambda_{12}) \\    
    ={}& \brac{\frac{\vect{I}(d(\vect{x}, \vect{x}_1))^2 - (\vect{x} - \vect{x}_1)\otimes(\vect{x} - \vect{x}_1)}{(d(\vect{x}, \vect{x}_1))^3} + \frac{\vect{I}(d(\vect{x}, \vect{x}_2))^2 - (\vect{x} - \vect{x}_2)\otimes(\vect{x} - \vect{x}_2)}{(d(\vect{x}, \vect{x}_2))^3}} f(\Lambda_{12}) \\
    & + \brac{\frac{\vect{x} - \vect{x}_1}{d(\vect{x}, \vect{x}_1)} + \frac{\vect{x} - \vect{x}_2}{d(\vect{x}, \vect{x}_2)}} \otimes \brac{\frac{\vect{x} - \vect{x}_1}{d(\vect{x}, \vect{x}_1)} + \frac{\vect{x} - \vect{x}_2}{d(\vect{x}, \vect{x}_2)}} \frac{1}{L_{12}} f'(\Lambda_{12}).
\end{aligned}
\end{equation}
Here $\vect{I}$ denotes the identity tensor and $\otimes$ denotes the tensor product. Similarly, we can derive the gradient of the second term $\brac{\frac{\vect{x} - \vect{x}_3}{d(\vect{x}, \vect{x}_3)} + \frac{\vect{x} - \vect{x}_4}{d(\vect{x}, \vect{x}_4)}} f(\Lambda_{34})$ in \eqref{eq:gradient-entangle-two-chain}. Consequently, the Hessian matrix of the energy $U_e$ is given by
\begin{equation}\label{eq:hessian-matrix-entangle-two-chain}
\begin{aligned}
    & \nabla^2 U_e \\
    ={}& \brac{\frac{\vect{I}(d(\vect{x}, \vect{x}_1))^2 - (\vect{x} - \vect{x}_1) \otimes (\vect{x} - \vect{x}_1)}{(d(\vect{x}, \vect{x}_1))^3} + \frac{\vect{I}(d(\vect{x}, \vect{x}_2))^2 - (\vect{x} - \vect{x}_2)\otimes(\vect{x} - \vect{x}_2)}{(d(\vect{x}, \vect{x}_2))^3}} f(\Lambda_{12}) \\
    & + \brac{\frac{\vect{x} - \vect{x}_1}{d(\vect{x}, \vect{x}_1)} + \frac{\vect{x} - \vect{x}_2}{d(\vect{x}, \vect{x}_2)}}  \otimes \brac{\frac{\vect{x} - \vect{x}_1}{d(\vect{x}, \vect{x}_1)} + \frac{\vect{x} - \vect{x}_2}{d(\vect{x}, \vect{x}_2)}} \frac{1}{L_{12}} f'(\Lambda_{12}) \\
    & + \brac{\frac{\vect{I}(d(\vect{x}, \vect{x}_3))^2 - (\vect{x} - \vect{x}_3)\otimes(\vect{x} - \vect{x}_3)}{(d(\vect{x}, \vect{x}_3))^3} + \frac{\vect{I}(d(\vect{x}, \vect{x}_4))^2 - (\vect{x} - \vect{x}_4)\otimes(\vect{x} - \vect{x}_4)}{(d(\vect{x}, \vect{x}_4))^3}} f(\Lambda_{34}) \\
    & + \brac{\frac{\vect{x} - \vect{x}_3}{d(\vect{x}, \vect{x}_3)} + \frac{\vect{x} - \vect{x}_4}{d(\vect{x}, \vect{x}_4)}}\otimes \brac{\frac{\vect{x} - \vect{x}_3}{d(\vect{x}, \vect{x}_3)} + \frac{\vect{x} - \vect{x}_4}{d(\vect{x}, \vect{x}_4)}} \frac{1}{L_{34}} f'(\Lambda_{34}).
\end{aligned}
\end{equation}

With the gradient in \eqref{eq:gradient-entangle-two-chain} and the Hessian matrix in \eqref{eq:hessian-matrix-entangle-two-chain}, we use Newton's method \eqref{eq-supp:newton-method} to solve the energy minimization problem. Numerically, we observe that it converges to the equilibrium state when the displacement $\delta$ is small (see Figure  \ref{fig-supp:SI_Fig6_Slidable_linear_simulation}f). However, for large values of $\delta$, the method fails to converge (see Figure \ref{fig-supp:SI_Fig6_Slidable_linear_simulation}i). By examining the energy landscape (Figures \ref{fig-supp:SI_Fig6_Slidable_linear_simulation}g and \ref{fig-supp:SI_Fig6_Slidable_linear_simulation}h), we find that the failure is due to the non-smoothness of the energy surface near the minimum. 

In general, a necessary condition for the convergence of Newton's method when solving $g(x) = 0$ for a nonlinear function $g$ is that the second-order derivative $g''(x)$ should be continuous in a neighborhood of the root. Therefore, it is not surprising that Newton's method fails, as the energy function is not even differentiable at the minimum point. More precisely, the energy function becomes only Lipschitz continuous (see e.g., \cite{boyd2004convex}) and non-differentiable near the equilibrium state when the displacement $\delta$ is large.

To address the non-smoothness of the energy landscape, we use the gradient descent method (see e.g., \cite{boyd2004convex}) to solve the optimization problem:
\begin{equation}
    \vect{x}^{(n+1)} = \vect{x}^{(n)} - \alpha \nabla U_e(\vect{x}^{(n)}),
\end{equation}
where $\alpha>0$ is the step size. Our numerical experiments show that the gradient descent method successfully converges to the equilibrium state for both small and large displacements $\delta$ (see Figures \ref{fig-supp:SI_Fig6_Slidable_linear_simulation}c, \ref{fig-supp:SI_Fig6_Slidable_linear_simulation}f, and \ref{fig-supp:SI_Fig6_Slidable_linear_simulation}i). From an optimization perspective, the gradient descent method as a first-order method is more robust than Newton’s method as a second-order method, when applied to non-smooth objective functions.

In addition, for the energy minimization problem in entangled networks, the gradient descent method is favored over Newton's method due to the difficulty of computing the Hessian matrix of the energy function, particularly for entangled networks with long chains and multiple entangled nodes. As demonstrated in \eqref{eq:hessian-matrix-entangle-two-chain}, the derivation of the Hessian matrix for the simplified two-chain model is already highly complicated.

For completeness, we also present numerical results for the two-chain model with non-slidable node. To save space, we omit the derivation of the gradient and Hessian matrix of the energy function for the non-slidable model, as the procedure is similar to that of the slidable case. Numerical experiments show that both Newton's method and the gradient descent method converge to the equilibrium state (see Figure \ref{fig-supp:SI_Fig5_Nonslidable_linear_simulation}c). Furthermore, the energy landscape is smooth in this case (see Figures \ref{fig-supp:SI_Fig5_Nonslidable_linear_simulation}e and \ref{fig-supp:SI_Fig5_Nonslidable_linear_simulation}f), indicating that the corresponding optimization problem is significantly easier to solve than that of the entangled network.

We also note that, for the two-chain model with a slidable node under small displacement, where both the gradient descent method and Newton’s method are applicable, gradient descent converges significantly more slowly, requiring many more iterations to reach equilibrium (see Figure \ref{fig-supp:SI_Fig6_Slidable_linear_simulation}f). This slower convergence is an intrinsic nature to first-order optimization methods. While this is not a serious issue for the two-chain toy model, it becomes a major challenge for entangled networks, where the degrees of freedom grow rapidly with increasing network size. In the next section, when extending to network systems, we will introduce several strategies to accelerate convergence.

\subsection{Optimization algorithm for entangled network}\label{sec:optimization-algorithm-entangle-network}

In the previous section, we have demonstrated that the gradient descent method can successfully capture the equilibrium state of the two-chain model while Newton's method fails. In this section, we extend the gradient descent method to general entangled networks and introduce several strategies to accelerate its convergence.

We first test the gradient descent method on entangled networks and observe numerically that it converges to the equilibrium state. However, it requires a large number of iterations, particularly for networks of large size or highly nonlinear large deformations such as fracture. Therefore, improving the efficiency of the algorithm is crucial for its practical application in mechanics.

To improve the efficiency, we first try different variants of the gradient descent method including the gradient descent with momentum \cite{rumelhart1986learning}, the Nesterov accelerated gradient \cite{nesterov1983method}, the Adam method \cite{kingma2014adam}, etc. Numerically, we find that the gradient descent with momentum is the most efficient method for our problem. The gradient descent with momentum is designed to improve the convergence speed and stability of standard gradient descent. It achieves this by incorporating a momentum term that helps the optimization process move more smoothly toward the minimum. Instead of updating parameters using only the current gradient, momentum accumulates past gradients in an exponentially weighted moving average, which helps accelerating convergence and 
smoothing updates. The gradient descent method with momentum is given by:
\begin{equation}\label{eq:gradient-descent-momentum}
\begin{aligned}
  \vect{v}^{(n+1)} ={}& \beta \vect{v}^{(n)} + \nabla U_e(\vect{x}^{(n)}), \\
  \vect{x}^{(n+1)} ={}& \vect{x}^{(n)} - \alpha \vect{v}^{(n+1)}.
\end{aligned}
\end{equation}
Here $\beta>0$ is the momentum parameter and we typically take $\beta=0.9$. $\vect{v}^{(n)}$ is the momentum and initialized to be zero vector.

There are several important considerations in implementing the gradient descent method with momentum in \eqref{eq:gradient-descent-momentum}. The first key factor is the choice of the step size $\alpha$, which controls the magnitude of the update in the direction of the gradient. If $\alpha$ is too small, the convergence is slow; if too large, the algorithm may oscillate around the minimum or even diverge. To address this, we adopt a step size decay strategy, commonly known as learning rate decay in machine learning, where the step size is gradually reduced during the optimization process. The idea is to use a relatively large step size initially for rapid exploration of the energy landscape, followed by smaller steps to refine the solution and ensure convergence. In our implementation, we set the initial step size to $\alpha_0 = 0.1$ and reduce it by a factor of 10 whenever the objective function (i.e., the total energy) fails to decrease. Additionally, to prevent the numerical instability caused by overly large step sizes, we monitor the solution for invalid values (e.g., NaN or Inf). If such values are detected, the step size is reduced by a factor of 10 and the optimization is restarted.

The second important issue is the choice of the initial guess. In our implementation, we initialize the network using the uniformly stretched configuration for both elasticity and fracture simulations (see Section~\ref{sec:application-elasticity} and Section~\ref{sec:application-fracture} for details). We find this initialization approach to be a robust and efficient choice for both elasticity and fracture.
We also test an alternative initialization strategy inspired by~\cite{deng2023nonlocal}, where the loading process is divided into incremental steps, and the solution from the previous step is used as the initial guess for the current step to accelerate convergence. While this method improves efficiency for elasticity problems, it performs poorly in fracture simulations of entangled networks. We suspect this is because the entangled nodes may become locked in previous configurations, preventing them from sliding freely in subsequent steps.

The third issue concerns the stopping criterion of the optimization algorithm. As demonstrated in Theorem~\ref{thm-supp:non-smoothness-energy-two-chain}, the energy function of the entangled network may be non-differentiable at the minimum, especially under large deformations. Consequently, using the norm of the gradient as a stopping condition is inappropriate, as the gradient may not vanish near the minimum. 
Instead, we monitor the total energy during the iterations: if the energy increases, the step size is too large and is reduced accordingly. The iteration is terminated once the time step becomes sufficiently small ($10^{-5}$ in our numerical experiments). This turns out to be a more robust criterion in the numerical experiments.

We summarize our optimization algorithm in Algorithm \ref{alg:gradient-descent-entangle-network}.
\begin{algorithm}
\caption{The algorithm for gradient descent with momentum}\label{alg:gradient-descent-entangle-network}
\begin{algorithmic}[1]
\State \texttt{Initialize $\alpha_0 = 0.1$, $\beta=0.9$, $N_t=10000$, $\varepsilon=10^{-3}$}
\State \texttt{Initialize $\alpha \leftarrow \alpha_0$}
\For{$i=0$ to 5}
  \For{$n=0$ to $N_t$ }
  \State \texttt{Initialize $\vect{v}^{(0)} \leftarrow \vect{0}$ and $\vect{x}^{(0)}$ as the configuration in uniform stretch}
  \State \texttt{Compute the gradient $\nabla U_e(\vect{x}^{(n)})$}
  \State \texttt{$\vect{v}^{(n+1)} \leftarrow \beta \vect{v}^{(n)} + \nabla U_e(\vect{x}^{(n)})$}
  \State \texttt{$\vect{x}^{(n+1)} \leftarrow \vect{x}^{(n)} - \alpha \vect{v}^{(n+1)}$}
    \If{$n\Mod{100}=0$}
    \State \texttt{Compute the energy $U_e(\vect{x}^{(n)})$}
      \If{$U_e(\vect{x}^{(n)}) > U_e(\vect{x}^{(n-100)})$}
        \State use smaller time step \texttt{$\alpha \leftarrow 0.1 \alpha$}
        \State \texttt{Break}
      \EndIf
    \EndIf
  \EndFor
  \State \texttt{$\alpha \leftarrow 0.1 \alpha$}
\EndFor
\end{algorithmic}
\end{algorithm}

\subsection{Computation of the energy gradient}\label{sec:computation-energy-gradient}

In this part, we present the computation of the gradient of the total energy of the entangled network, $\nabla U_e$, which is essential for implementing the optimization algorithm.

We start by considering a single chain in the network. We assume that the chain consists of $n$ vertices $(v_1, v_2, \cdots, v_n)$ with the coordinates in the deformed state denoted by $(\vect{x}_1, \vect{x}_2, \cdots, \vect{x}_n)$. The energy of the single chain is given by
\begin{equation}\label{eq:energy-single-chain}
  U_{\textrm{chain}} = L \int_{1}^{\Lambda} f(\Lambda') d\Lambda',
\end{equation}
where $L$ is the initial (undeformed) length of the chain and $\Lambda$ is the stretch ratio defined as
\begin{equation}
  \Lambda = \frac{1}{L} \brac{\sum_{i=1}^{n-1} d_{i,i+1}}
\end{equation}
with $d_{i,i+1}$ denotes the distance between the $i$-th and $(i+1)$-th node
\begin{equation}
  d_{i,i+1} = \norm{\vect{x}_i - \vect{x}_{i+1}} = \sqrt{(x_i - x_{i+1})^2 + (y_i - y_{i+1})^2}.
\end{equation}

We now compute the gradient of the energy $U_{\textrm{chain}}$ in \eqref{eq:energy-single-chain} with respect to the $i$-th node $\vect{x}_i$:
\begin{equation}\label{eq-supp:compute-gradient-1}
  \frac{\partial U_\textrm{chain}}{\partial \vect{x}_i} = L \frac{\partial \Lambda}{\partial \vect{x}_i} f(\Lambda) = L \frac{1}{L} \frac{\partial}{\partial \vect{x}_i} \brac{\sum_{j=1}^{n-1} d_{j,j+1}} f(\Lambda) = \frac{\partial}{\partial \vect{x}_i} \brac{\sum_{j=1}^{n-1} d_{j,j+1}} f(\Lambda).
\end{equation}
To compute $\frac{\partial}{\partial \vect{x}_i} \brac{\sum_{j=1}^{n-1} d_{j,j+1}}$ in \eqref{eq-supp:compute-gradient-1}, we consider two cases:
\begin{itemize}
  \item Case 1: If $v_i$ is a non-slidable node, i.e. $i=1$ or $i=n$, we have
  \begin{equation}
    \frac{\partial}{\partial \vect{x}_1} \brac{\sum_{j=1}^{n-1} d_{j,j+1}} = \frac{\partial d_{12}}{\partial \vect{x}_1} = \frac{\vect{x}_1 - \vect{x}_2}{d_{12}},
  \end{equation}
  or
  \begin{equation}
    \frac{\partial}{\partial \vect{x}_n} \brac{\sum_{j=1}^{n-1} d_{j,j+1}} = \frac{\partial d_{n-1,n}}{\partial \vect{x}_n} = \frac{\vect{x}_{n} - \vect{x}_{n-1}}{d_{n-1,n}}.
  \end{equation}

  \item Case 2: If $v_i$ is a slidable node, i.e. $1 < i < n$, we have
  \begin{equation}
    \frac{\partial}{\partial \vect{x}_i} \brac{\sum_{j=1}^{n-1} d_{j,j+1}} = \frac{\partial d_{i-1,i}}{\partial \vect{x}_i} + \frac{\partial d_{i,i+1}}{\partial \vect{x}_i} = \frac{\vect{x}_{i} - \vect{x}_{i-1}}{d_{i-1,i}} + \frac{\vect{x}_i - \vect{x}_{i+1}}{d_{i,i+1}}.
  \end{equation}
\end{itemize}
Here, we use Lemma \ref{lemma:distance-function-derivative} to compute the derivative of the distance function.
We summarize the above results in the following theorem.
\begin{theorem}[Gradient of the energy function for a single chain]\label{thm:gradient-energy-single-chain}
  The gradient of the energy function $U_{\mathrm{chain}}$ in \eqref{eq:energy-single-chain} with respect to the $i$-th node $\vect{x}_i$ is given by  
    \begin{equation}
    \frac{\partial U_{\textrm{chain}}}{\partial \vect{x}_i} = 
    \begin{cases}
      \frac{\vect{x}_1 - \vect{x}_2}{d_{12}} f(\Lambda), & i = 1, \\
      \brac{\frac{\vect{x}_{i} - \vect{x}_{i-1}}{d_{i-1,i}} + \frac{\vect{x}_i - \vect{x}_{i+1}}{d_{i,i+1}}} f(\Lambda), & 1 < i < n, \\
      \frac{\vect{x}_{n} - \vect{x}_{n-1}}{d_{n-1,n}} f(\Lambda), & i = n.
    \end{cases}
    \end{equation}
    Here, $f$ is the constitutive law of the chain, $\Lambda$ is the stretch ratio of the chain, and $d_{ij}$ is the distance between the $i$-th and $j$-th node.
    Alternatively, the above expression can be written compactly as
    \begin{equation}
      \frac{\partial U_{\textrm{chain}}}{\partial \vect{x}_i} = \sum_{j\in \mathcal{N}_i} \frac{\vect{x}_i - \vect{x}_j}{d_{ij}} f(\Lambda),
    \end{equation}
    where $\mathcal{N}_i$ denote the set of nodes adjacent to the $i$-th node in the current chain.
\end{theorem}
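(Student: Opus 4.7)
The plan is to carry out a direct chain rule calculation, since the energy $U_{\textrm{chain}}$ is a composition of the form $U_{\textrm{chain}} = L \cdot E(\Lambda(\vect{x}_1,\dots,\vect{x}_n))$, where $E(\Lambda) = \int_1^\Lambda f(\Lambda')\,d\Lambda'$ and $\Lambda$ depends on the node positions only through the segment distances $d_{j,j+1} = \|\vect{x}_j - \vect{x}_{j+1}\|$.

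First I would apply the fundamental theorem of calculus to get $E'(\Lambda) = f(\Lambda)$, so that the chain rule gives
\begin{equation}
  \frac{\partial U_{\textrm{chain}}}{\partial \vect{x}_i} = L \, f(\Lambda) \, \frac{\partial \Lambda}{\partial \vect{x}_i} = f(\Lambda) \sum_{j=1}^{n-1} \frac{\partial d_{j,j+1}}{\partial \vect{x}_i},
\end{equation}
where the factor of $L$ cancels against the $1/L$ appearing in the definition of $\Lambda$. Next I would invoke Lemma \ref{lemma:distance-function-derivative} on the Euclidean distance, which gives $\partial d_{j,j+1}/\partial \vect{x}_i$ equal to $(\vect{x}_i - \vect{x}_{i\pm 1})/d_{i,i\pm 1}$ when $i \in \{j,j+1\}$, and zero otherwise.

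The only remaining step is case analysis on $i$ to identify which terms in the sum over $j$ are nonzero. If $i = 1$, only the $j=1$ term contributes, yielding $(\vect{x}_1 - \vect{x}_2)/d_{12}$. If $i = n$, only the $j = n-1$ term contributes, yielding $(\vect{x}_n - \vect{x}_{n-1})/d_{n-1,n}$. If $1 < i < n$, then $i$ appears both as the right endpoint of segment $(i-1,i)$ and as the left endpoint of segment $(i,i+1)$, so exactly two terms survive, giving the sum $(\vect{x}_i - \vect{x}_{i-1})/d_{i-1,i} + (\vect{x}_i - \vect{x}_{i+1})/d_{i,i+1}$. Multiplying by $f(\Lambda)$ in each case recovers the stated formula, and the compact form $\sum_{j \in \mathcal{N}_i} (\vect{x}_i - \vect{x}_j)/d_{ij} \cdot f(\Lambda)$ follows immediately by defining $\mathcal{N}_i$ as the set of neighbors of $i$ along the chain.

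There is no substantive obstacle here: the result is a routine application of the chain rule and Lemma \ref{lemma:distance-function-derivative}, and the only subtle point is bookkeeping which segment distances depend on $\vect{x}_i$. The computation is valid wherever $d_{i-1,i}$ and $d_{i,i+1}$ are nonzero, i.e.\ whenever consecutive nodes along the chain do not coincide, which is the physically relevant regime.
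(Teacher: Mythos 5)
Your proposal is correct and follows essentially the same route as the paper's own derivation: chain rule via the antiderivative $E(\Lambda)$, cancellation of the factor $L$, Lemma \ref{lemma:distance-function-derivative} for the distance gradients, and the same case analysis on whether $i$ is an endpoint or an interior node. No substantive differences to report.
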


Once the gradient of the energy function  is computed for each individual chain, the gradient of the total energy function $U_e$ can be computed by summing up the contributions from all the chains. The pesudocode for the computation of the gradient is presented in Algorithm \ref{alg:compute-gradient-energy}. We use the two-dimensional array \texttt{grad} of size  $N\times 2$ to store the gradient of the energy function at each node, where $N$ is the total number of nodes in the network and $2$ is the spatial dimension. Note that this algorithm relies on the array \texttt{chains}, which contains the chain information generated by Algorithm \ref{alg:generate-network-chain-info} in Section \ref{sec:data-structure-entangle-network}.
\begin{algorithm}
\caption{Computation of the gradient of the total energy function}\label{alg:compute-gradient-energy}
\begin{algorithmic}[1]
\State \textbf{Input:} Node coordinates $\{\vect{x}_i\}_{i=1}^N$, chain information \texttt{chains}
\State \textbf{Output:} Gradient array \texttt{grad} of size $N \times 2$
\State Initialize \texttt{grad} to be a zero array of size $N \times 2$
\For{each node $i$ in the entangled network}
  \For{each chain $c$ containing the node $i$}
    \State Compute the length $L$ of chain $c$ in the undeformed state
    \State {Compute the length $l$ of the chain $c$ in the deformed state}
    \State {Compute the stretch ratio $\Lambda = l/L$}
    \For{each node $j$ in the chain $c$}
      \If{node $j$ is adjacent to node $i$}
        \State {Compute the distance $d_{ij} = \norm{\vect{x}_i - \vect{x}_j}$}        
        \State Update \texttt{grad[i]} by adding $\frac{\vect{x}_i - \vect{x}_j}{d_{ij}} f(\Lambda)$        
      \EndIf
    \EndFor
  \EndFor
\EndFor
\end{algorithmic}
\end{algorithm}

\subsection{Construction of nonlinear constitutive laws}\label{sec:construct-nonlinear-constitutive-law}

For the nonlinear constitutive law, it is natural to consider the inverse Langevin function, widely used in polymer physics \cite{rubinstein2003polymer}:
\begin{equation}
    f = L^{-1}\brac{\frac{\Lambda}{\Lambda_{\lim}}},
\end{equation}
where $\Lambda$ denotes the stretch ratio of a single chain, $\Lambda_{\lim}$ represents the entropic stretch limit, $f$ represents the stretching force along the chain. The Langevin function $L$ is given by
\begin{equation}
    L(x) = \coth(x) - \frac{1}{x}.
\end{equation}
However, the inverse function $L^{-1}(x)$ lacks a closed-form expression, which introduces challenges in numerical simulations. To address this issue, we adopt the Cohen approximation \cite{cohen1991pade}, which provides an accurate closed-form approximation with a maximum relative error less than 5\%:
\begin{equation}
    L^{-1}(x) \approx C(x) = x\frac{3-x^2}{1-x^2}.
\end{equation}
See the comparison of the inverse Langevin function $L^{-1}(x)$ with the Cohen approximation $C(x)$ in Figure \ref{fig-supp:langevin_cohen}. This gives us the approximated constitutive law:
\begin{equation}
    f = C\brac{\frac{\Lambda}{\Lambda_{\lim}}}.
\end{equation}
\begin{figure}
  \centering
  \includegraphics[width=0.9\textwidth]{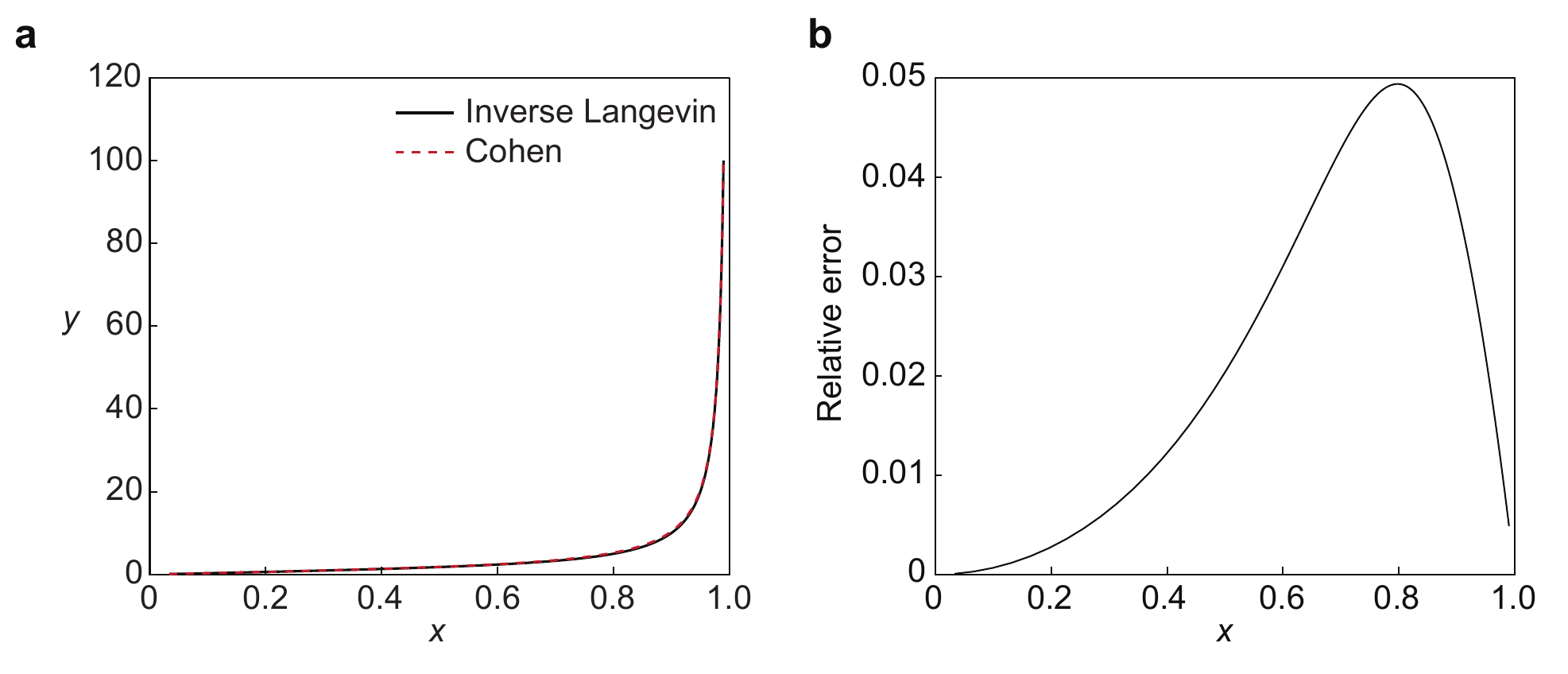}
  \caption{\textbf{The inverse Langevin function and the Cohen approximation.} 
  \textbf{(a)} Comparison of the inverse Langevin function $L^{-1}(x)$ with the Cohen approximation $C(x)$.
  \textbf{(b)} The relative error of the Cohen approximation $C(x)$ in approximating the inverse Langevin function $L^{-1}(x)$.}
  \label{fig-supp:langevin_cohen}
\end{figure}

The second issue is that both the the inverse Langevin function $L^{-1}(x)$ and the Cohen approximation $C(x)$ exhibits a singularity at $x=1$ and thus result in singularity for the constitutive law at $\Lambda = \Lambda_{\lim}$ and is not defined for $\Lambda > \Lambda_{\lim}$. In our simulations, due to the imposed boundary conditions, the chain stretch near the boundary may exceed this critical value. To fix this issue, we take the Cohen approximation for median stretch ratio and construct a smooth extension for large stretch ratio.
To be more specific, we first introduce an extension ratio $0 < \beta < 1$ and define the nonlinear constitutive law for $1\le \Lambda\le \beta\Lambda_{f}$ using the Cohen approximation:
\begin{equation}
  f_{\textrm{C}}(\Lambda) = C\brac{\frac{\Lambda}{\Lambda_{f}}} = \frac{\Lambda}{\Lambda_{f}} \frac{3 - (\frac{\Lambda}{\Lambda_{f}})^2}{1 - (\frac{\Lambda}{\Lambda_{f}})^2} = \frac{\Lambda(3\Lambda_{f}^2 - \Lambda^2)}{\Lambda_{f}(\Lambda_{f}^2 - \Lambda^2)}.
\end{equation}
Since this approximation does not satisfy the condition $f(1) = 0$ (i.e., zero tension in the undeformed state), we shift it to ensure zero force at the undeformed state:
\begin{equation}
  f_{\textrm{median}}(\Lambda) = f_{\textrm{C}}(\Lambda) - f_{\textrm{C}}(1), \quad 1 \le \Lambda \le \beta\Lambda_{f}.
\end{equation}

To smoothly extend the function $f_{\textrm{median}}$ beyond $\Lambda = \beta \Lambda_{f}$, we evaluate the function and its derivatives up to third order at that point:
\begin{equation}\label{eq:const-cohen-approximation-derivative}
  a_0 := f_{\textrm{median}}(\beta\Lambda_{f}), \quad a_1 := f_{\textrm{median}}'(\beta\Lambda_{f}), \quad a_2 := f_{\textrm{median}}''(\beta\Lambda_{f}), \quad a_3 := f_{\textrm{median}}^{(3)}(\beta\Lambda_{f}),
\end{equation}
and then construct a cubic extrapolation using a Taylor expansion for $\Lambda > \beta \Lambda_{f}$:
\begin{equation}
  f(\Lambda) = a_0 + a_1(\Lambda - \beta\Lambda_{f}) + \frac{a_2}{2}(\Lambda - \beta\Lambda_{f})^2 + \frac{a_3}{6}(\Lambda - \beta\Lambda_{f})^3, \quad \Lambda > \beta\Lambda_{f}.
\end{equation} 
In summary, the constitutive law used in our simulation is given by
\begin{equation}\label{eq-supp:nonlinear-constitutive-law}
  f(\Lambda) = 
  \begin{cases}    
    f_{\textrm{C}}(\Lambda) - f_{\textrm{C}}(1), & 1 \le \Lambda \le \beta\Lambda_{f}, \\
    a_0 + a_1(\Lambda - \beta\Lambda_{f}) + \frac{a_2}{2}(\Lambda - \beta\Lambda_{f})^2 + \frac{a_3}{6}(\Lambda - \beta\Lambda_{f})^3, & \Lambda > \beta\Lambda_{f}.
  \end{cases}
\end{equation}
where the constants $a_0, a_1, a_2, a_3$ are given in \eqref{eq:const-cohen-approximation-derivative}.

Throughout this work, we set $\Lambda_f=5$ and $\beta = 0.7$ for nonlinear constitutive law (see Figure \ref{fig-supp:extrapolation}c), unless otherwise specified. We also present the cases with $\beta = 0.5$, $0.6$, and $0.8$ in Figure \ref{fig-supp:extrapolation}. As expected, larger values of $\beta$ lead to stronger nonlinear behavior.
In addition, we sometimes compare the nonlinear law with the linear one $f = \Lambda - 1$. To ensure that both constitutive laws exhibit failure at the same stretch ratio and force, the nonlinear law is rescaled in the $y$-direction.
\begin{figure}[H]
  \centering
  \includegraphics[width=0.9\textwidth]{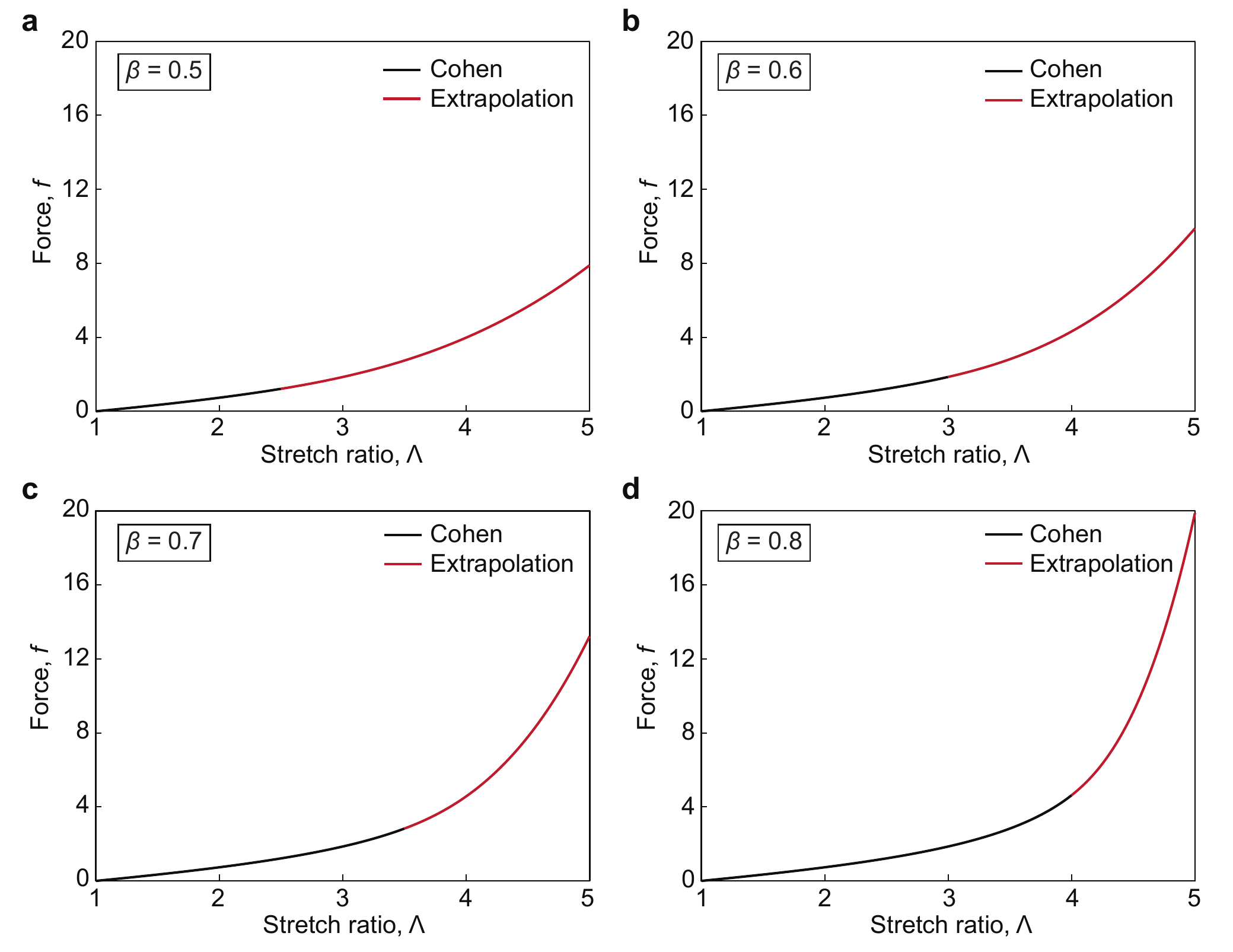}
  \caption{\textbf{Nonlinear constitutive law for a single chain.} 
  \textbf{(a-d)} The profiles of constitutive law for a single chain with $\beta = 0.5$, $0.6$, $0.7$, and $0.8$. The Cohen approximation is applied for moderate stretch ratios (in black color), while extrapolation is used for large stretch ratios (in red color).}
  \label{fig-supp:extrapolation}
\end{figure}

\section{Application to elasticity}\label{sec:application-elasticity}

In this part, we present the detailed setup of the simulation of the elasticity of the entangled networks in Section \ref{sec:setup-entangle-network-elasticity} and the analytical solution for periodic networks in Section \ref{sec:analytical-solution-elasticity}.

\subsection{Simulation setup of the entangled network}\label{sec:setup-entangle-network-elasticity}

We construct the network as a uniform square lattice of size $N_x \times N_y$, where $N_x$ and $N_y$ denote the number of nodes in the $x$-direction and $y$-direction, respectively. The total number of nodes is therefore $N = N_x N_y$, and the initial length for each segment is $l_0$. In our simulation, we take $l_0=1$ without loss of generality. The index of the nodes is denoted by $(i,j)$ for $i=1,2,\dots,N_x$ and $j=1,2,\dots,N_y$. 

The network is subjected to uniaxial tension in the $y$-direction. Boundary conditions are applied to the top and bottom boundaries to quasi-statically stretch the sample, while the displacements of nodes on the left and right boundaries are fixed in the $x$-direction. To be more specifically, we set the following boundary nodes as the non-slidable nodes: 
\begin{itemize}
    \item
    \textbf{Upper boundary:} The nodes $(i,j)$ with $i=1,2,\cdots,N_x$ and $j=N_y$ are fixed with a prescribed displacement boundary condition.
    
    \item 
    \textbf{Lower boundary:} The nodes $(i,j)$ with $i=1,2,\cdots,N_x$ and $j=1$ are fixed as in the undeformed state.

    \item
    \textbf{Left boundary:} The nodes $(i,j)$ with $i=1$ and $j=1,2,\cdots,N_y$ are fixed for the displacement in the $x$ direction.

    \item
    \textbf{Right boundary:} The nodes $(i,j)$ with $i=N_x$ and $j=1,2,\cdots,N_y$ are fixed for the displacement in the $x$ direction. 
\end{itemize}
All other nodes follow either a periodic lattice arrangement or a random distribution of slidable and non-slidable nodes with a given probability.

\subsection{Analytical solutions for periodic entangled networks}\label{sec:analytical-solution-elasticity}

In this part, we derive the analytical solution in the strength at failure, the modulus (i.e. the initial slope of stress-stretch curve), and the stretch at failure for the periodic entangled networks. We assume that the network is subjected to uniaxial tension in the $y$-direction with the stretch ratio $\lambda$. 
\begin{itemize}
    \item \textbf{Case I ($\varphi_s = 0\%$):}
    
    For the spring network with $\varphi_s = 0\%$, it is clear from Figure~\ref{fig-supp:SI_Fig8_0_Period_Network} that only the vertical chains are stretched with stretch ratio $\Lambda = \lambda$, while the horizontal chains remain unstretched ($\Lambda = 1$). The energy in a unit cell is
    \begin{equation}
        U_{\textrm{unit}} = 2 l_0 \int_{1}^{\lambda} f(\Lambda) d\Lambda.
    \end{equation}
    Therefore, the total energy of the network is
    \begin{equation}
        U_{\textrm{total}} = \frac{2 N_x N_y}{4} U_{\textrm{unit}} = N_x N_y l_0 \int_{1}^{\lambda} f(\Lambda) d\Lambda,
    \end{equation}
    where $2 N_x N_y$ is the total number of chains in the network, and each unit cell contains 4 chains.
    
    The total network force is
    \begin{equation}\label{eq-supp:nominal-stress-phis-0}
        F_{\textrm{total}} = N_x f(\lambda),
    \end{equation}
    where each vertical chain contributes a force of $f(\lambda)$ and $N_x$ denotes the number of vertical chains along the top boundary of the network. The nominal stress $S$, defined as the total force normalized by the number of vertical layers, is computed as
    \begin{equation}
        S = \frac{F_{\textrm{total}}}{N_x} = f(\lambda).
    \end{equation}
    The stretch of the network at failure coincides with the stretch of a single chain at break:
    \begin{equation}
        \lambda_f = \Lambda_f.
    \end{equation}
    Differentiating the nominal stress-stretch relation in \eqref{eq-supp:nominal-stress-phis-0} at $\lambda=1$ gives the modulus of the network:
    \begin{equation}
        E = F'(1) = f'(1),
    \end{equation}
    and the strength of the network:
    \begin{equation}
        S_{th} = F(\lambda_f) = f(\Lambda_f).
    \end{equation}    
    
    \item \textbf{Case II ($\varphi_s = 25\%$):}
    
    In a unit cell of the entangled network with $\varphi_s = 25\%$ (Figure~\ref{fig-supp:SI_Fig9_25_Period_Network}), the left vertical chain is stretched with $\Lambda = \lambda$, while the bottom horizontal chain remains unstretched. The right and top chains merge into a single chain with stretch ratio $\Lambda = (\lambda + 1)/2$. The energy of a unit cell is
    \begin{equation}
        U_{\textrm{unit}} = l_0 \int_{1}^{\lambda} f(\Lambda) d\Lambda + 2 l_0 \int_{1}^{(\lambda+1)/2} f(\Lambda) d\Lambda.
    \end{equation}
    Therefore, the total energy of the network is
    \begin{equation}
        U_{\textrm{total}} = \frac{2 N_x N_y}{4} U_{\textrm{unit}} = N_x N_y l_0 \brac{\frac{1}{2}\int_{1}^{\lambda} f(\Lambda) d\Lambda + \int_{1}^{(\lambda+1)/2} f(\Lambda) d\Lambda}.
    \end{equation}
    The total force carried by the network is
    \begin{equation}
        F_{\textrm{total}} = \frac{N_x}{2} \brac{f(\lambda) + f\brac{\frac{\lambda+1}{2}}},
    \end{equation}
    since half of the boundary chains carry force $f(\lambda)$ and the other half $f((\lambda+1)/2)$. Then the nominal stress is
    \begin{equation}
        S = \frac{F_{\textrm{total}}}{N_x} = \frac{1}{2} \brac{f(\lambda) + f\brac{\frac{\lambda+1}{2}}}.
    \end{equation}
    The stretch of the network at failure is 
    \begin{equation}
        \lambda_f = \Lambda_f.
    \end{equation}
    The modulus of the network is
    \begin{equation}
        E = F'(1) = \frac{3}{4} f'(1),
    \end{equation}
    and the strength of the network is
    \begin{equation}
        S_{th} = F(\lambda_f) = \frac{1}{2} \brac{f(\Lambda_f) + f\brac{\frac{\Lambda_f+1}{2}}}.
    \end{equation}

    \item \textbf{Case III ($\varphi_s = 50\%$) and Case IV ($\varphi_s = 75\%$):}  
    
    For the entangled network with $\varphi_s = 50\%$ and $75\%$, the corresponding unit cells are shown in Figure \ref{fig-supp:SI_Fig10_50_Period_Network} and Figure \ref{fig-supp:SI_Fig11_75_Period_Network}. Since these two cases are identical in the deformed state, we combine the discussion here. In this configuration, all chain segments have the same stretch ratio $\Lambda = (\lambda + 1)/2$. Then the energy in a unit cell is 
    \begin{equation}
        U_{\textrm{unit}} = 4 l_0 \int_{1}^{(\lambda+1)/2} f(\Lambda) d\Lambda.
    \end{equation}
    Therefore, the total energy of the network is
    \begin{equation}
        U_{\textrm{total}} = \frac{2 N_x N_y}{4} U_{\textrm{unit}} = 2 N_x N_y l_0 \int_{1}^{(\lambda+1)/2} f(\Lambda) d\Lambda.
    \end{equation}
    The total force carried by the network is
    \begin{equation}
        F_{\textrm{total}} = N_x f\brac{\frac{\lambda+1}{2}},
    \end{equation}
    Then the nominal stress is
    \begin{equation}
        S = \frac{F_{\textrm{total}}}{N_x} = f\brac{\frac{\lambda+1}{2}}.
    \end{equation}
    Moreover, the stretch for the network at failure is 
    \begin{equation}
        \lambda_f = 2\Lambda_f - 1.
    \end{equation}
    The modulus of the network is
    \begin{equation}
        E = F'(1) = \frac{1}{2} f'(1),
    \end{equation}
    and the strength of the network is
    \begin{equation}
        S_{th} = F(\lambda_f) = f(\Lambda_f).
    \end{equation}
\end{itemize}

We note that the analytical solution is derived solely from the unit-cell solution and is therefore expected to exhibit slight deviations from the numerical simulation. In fact, in the simulations, all boundaries are  fixed, which is not taken into account for the unit-cell solution. To enable a fair comparison between the analytical solution and the numerical simulation, we evaluate the total energy using only the inner portion of the computational domain. Specifically, for the whole computational domain with the node indices $(i, j)$ with $i=1,\dots,N_x$ and $j=1,\dots,N_y$, we only calculate the energy of the chains connected to the nodes in the interior domain, defined by $(i, j)$ with $i=N_x/2,\dots,3N_x/4$ and $j=N_y/2,\dots,3N_y/4$.

\section{Application to fracture}\label{sec:application-fracture}

In this part, we apply the entangled network model to study the fracture of the network. In Section \ref{sec:setup-entangle-network-fracture}, we illustrate the setup of the entangle network. In Section \ref{sec:bisection-method-critical-stretch}, we present a bisection method for efficient computation of the critical stretch of the network in fracture. In Section \ref{sec:three-regime-crack-opening}, we give the theoretical explanation for the nontrivial relationship between the crack-tip stretch and far-field stretch in three deformation regimes. Finally, we derive the probability distribution of the crack tip chain length in random entangled networks in Section \ref{sec:prob-crack-tip-length-rand-network}.

\subsection{Simulation setup of the entangle network}\label{sec:setup-entangle-network-fracture}

When investigating the fracture of the network, we set up the network as a uniform square lattice of size $N_x \times N_y$ where $N_x$ and $N_y$ are the number of nodes in the $x$ and $y$ directions. The total number of nodes is $N = N_x N_y$. The index of the nodes is denoted by $(i,j)$ for $i=1,2,\dots,N_x$ and $j=1,2,\dots,N_y$.

To set up a notch along with some boundary conditions in the sample, we set the following nodes as the non-slidable nodes: 
\begin{itemize}
    \item
    \textbf{Upper boundary:} The nodes $(i,j)$ with $i=1,2,\cdots,N_x$ and $j=N_y$ are fixed with a prescribed displacement boundary condition.
    
    \item 
    \textbf{Lower boundary:} The nodes $(i,j)$ with $i=1,2,\cdots,N_x$ and $j=1$ are fixed as in the undeformed state.

    \item
    \textbf{Left boundary:} The nodes $(i,j)$ with $i=1$ and $j=1,2,\cdots,N_y$ are fixed for the displacement in the $x$ direction.

    \item
    \textbf{Right boundary:} The nodes $(i,j)$ with $i=N_x$ and $j=1,2,\cdots,N_y$ are fixed for the displacement in the $x$ direction. 
    
    \item
    \textbf{Notch:} The crack is introduced at the left middle of the sample by setting nodes  $(i,j)$ with $i=1,2,\cdots,N_x/2$, $j=N_y/2-1$ and $j=N_y/2$ as non-slidable.
\end{itemize}
All other nodes follow either a periodic lattice arrangement or a random distribution of slidable and non-slidable nodes with a given probability, as in the elasticity problem in Section \ref{sec:application-elasticity}.

To compute the intrinsic fracture energy $\Gamma_0$ of a given network, we conduct the pure shear test consisting of two steps: (i) load an unnotched sample and record the nominal stress $S$ as a function of sample stretch $\lambda$; (ii) load a notched sample to the point where the chain around crack tip breaks and record the critical stretch of the sample $\lambda_c$, as illustrated in Figure~4E and Figure~4F. Then $\Gamma_0$ is calculated as 
\begin{equation}
    \Gamma_0 = h_0 \int_{1}^{\lambda_c} S \, d\lambda,    
\end{equation}
where $h_0$ is the initial height of the sample. The value of $\Gamma_0$ is an intrinsic property of the network and is size independent, provided that the network is sufficiently large. 
Following \cite{deng2023nonlocal}, we increase the number of vertical layers and compute $\Gamma_0$ until it converges. Figure~\ref{fig-supp:SI_Fig33_Converge} shows the convergence behavior of $\Gamma_0$ for periodic entangled networks with both linear and nonlinear constitutive relations, evaluated at different entangled node fractions $\varphi_s$ and various numbers of vertical layers $N_y$.
For networks composed of linear chains, $\Gamma_0$ converges within approximately 10 layers in the non-slidable case ($\varphi_s = 0\%$), while around 100 layers are required for convergence in slidable cases ($\varphi_s = 25\%$, $50\%$, $75\%$). For networks composed of nonlinear chains, $\Gamma_0$ converges within approximately 50 layers for spring networks and around 240 layers for slidable networks. These numerical observations are also consistent with the numerical results reported in \cite{deng2023nonlocal} for spring networks, which show that as the nonlinearity of the chain constitutive behavior increases, a larger number of vertical layers is required to achieve convergence of the intrinsic fracture energy.

We further perform numerical simulations for networks with different aspect ratios. As shown in Figure~\ref{fig-supp:SI_Fig65_AspectRatio}, for spring networks ($\varphi_s = 0\%$) with either linear or nonlinear chains, the critical stretch $\lambda_c$ converges at an aspect ratio of 1:1. In contrast, for entangled network ($\varphi_s = 50\%$), $\lambda_c$ converges with aspect ratio of 4:1. Based on this observation, we fix the network aspect ratio at 4:1 in the numerical simulations of fracture.

\subsection{Bisection method for efficient computation of critical stretch}\label{sec:bisection-method-critical-stretch}

Fracture simulations exhibit significantly stronger stress localization than the elasticity simulations, leading to much slower convergence of the gradient descent method when solving the optimization problem. Moreover, our primary interest lies in identifying the onset of fracture, which occurs when a chain around the crack tip reaches the failure stretch $\Lambda_{f}$. To reduce the computational cost, instead of simulating the full loading process that incrementally stretches the network until the fracture occurs, we adopt a more efficient bisection method (see e.g., \cite{boyd2004convex}) to directly compute the critical stretch $\lambda_c$ of the network.

Let $\lambda$ denote the overall applied stretch of the network, and let $\Lambda_{\max}$ be the maximum stretch around the crack tip in the deformed configuration. Then $\Lambda_{\max}$ can be taken as an increasing function of $\lambda$, denoted by $g$:
\begin{equation}
  \Lambda_{\max} = g(\lambda).
\end{equation}
Note that the function $g$ does not have an explicit expression but can be evaluated by solving the optimization problem: for a given applied network stretch $\lambda$, one solves the energy optimization problem to obtain the equilibrium state of the network and then compute $\Lambda_{\max}$ from this state. This evaluation is computationally expensive, since the optimization problem is challenging to solve, as demonstrated in Section~\ref{sec:numerical-method}.

Our objective is to find the critical stretch $\lambda_c$, such that 
\begin{equation}\label{eq-supp:root-bisection}
  g(\lambda_c) = \Lambda_{f}. 
\end{equation}
In the undeformed state, $\lambda = 1$ and $\Lambda_{\max} = 1$ and therefore $g(1) = 1$. For sufficiently large applied stretch (e.g., $\lambda = 10$), the maximum chain stretch $\Lambda_{\max}$ exceeds $\Lambda_{f}$, i.e. $g(10) > \Lambda_{f}$. Since $g(\lambda)$ is a continuous and increasing function, the critical stretch $\lambda_c$ can be found as the root of \eqref{eq-supp:root-bisection} in the interval $[1, 10]$ by the intermediate value theorem.

We apply the bisection method to solve \eqref{eq-supp:root-bisection} as follows. Starting with the interval $[a, b] = [1, 10]$, we evaluate the midpoint $c = (a + b)/2$ and compute $\Lambda_{\max}$ by solving the optimization problem for equilibrium state under the overall applied stretch $\lambda = c$. If $g(c) > \Lambda_{f}$, then the critical stretch lies in the left half of the interval, i.e., $[a, c]$; otherwise, it lies in the right half $[c, b]$. This process is repeated until the interval length falls below a prescribed tolerance $\varepsilon$ or $\Lambda_{\max}$ is close enough to $\Lambda_{f}$. The interval is halved in each iteration, so the number of iterations required is at most $\log_2((b-a)/\varepsilon)$, where $\varepsilon$ is the desired error tolerance for the critical stretch. For example, with $\varepsilon = 10^{-2}$ and $b-a = 9$, the maximum number of iterations is approximately 10. This means that the optimization problem needs to be solved at most 10 times to compute $\lambda_c$ with an accuracy of $10^{-2}$. This makes our approach significantly more efficient than simulating the full incremental loading process in \cite{deng2023nonlocal}.

To further accelerate the computation, we numerically observe that, during the gradient descent iterations with a given overall stretch ratio $\lambda$, the maximum chain stretch $\Lambda_{\max}$ typically increases monotonically as the system relaxes toward equilibrium. This behavior is consistent with the interpretation of gradient descent as a dynamic relaxation procedure. Consequently, if $\Lambda_{\max}$ exceeds the critical value $\Lambda_f$ at any point during the gradient descent iteration, it is guaranteed that the final equilibrium value will also exceed $\Lambda_f$. Based on this observation, we introduce an early stopping criterion: once $\Lambda_{\max} > \Lambda_f$ during the gradient descent iteration, we terminate the optimization and jump to the bisection interval accordingly. This strategy avoids unnecessary computation and significantly reduces the overall cost.
We summarize the above bisection method in Algorithm~\ref{alg:bisection-method}.
\begin{algorithm}
\caption{Bisection method for computing the critical stretch $\lambda_c$}\label{alg:bisection-method}
\begin{algorithmic}
\State \textbf{Input:} \texttt{Initial interval $[a, b] = [1, 10]$, error tolerance $\varepsilon$}
\State \textbf{Output:} \texttt{Estimated critical stretch $\lambda_c$}
\While{\texttt{True}}
\State \texttt{$c \leftarrow (a+b)/2$}
\State \texttt{Compute $\Lambda_{\max}(c)$ by solving the optimization problem:}
\For{$i$ = 0 to $N_t$}
\State \texttt{Compute the energy gradient}
\State \texttt{Update the node positions}
\State \texttt{Compute the maximum chain stretch ratio $\Lambda_{\max}$}
\If{\texttt{$\Lambda_{\max} > \Lambda_{f}$}}
\State \texttt{$b \leftarrow c$}
\State \texttt{Break}
\EndIf
\EndFor
\If{$\abs{\Lambda_{\max} - \Lambda_{f}}\le \varepsilon$ \texttt{or} $\abs{b-a} \le \varepsilon$}
\State \texttt{$\lambda_c \leftarrow c$}
\State \texttt{Break}
\EndIf
\If{$\Lambda_{\max} > \Lambda_{f}$}
\State \texttt{$b \leftarrow c$}
\Else
\State \texttt{$a \leftarrow c$}
\EndIf
\EndWhile
\end{algorithmic}
\end{algorithm}

\subsection{Three regimes of crack opening process}\label{sec:three-regime-crack-opening}

In this part, we present the theoretical explanations for the nontrivial relationship between the crack-tip stretch and the far-field stretch. The numerical results are presented for small deformation in Figure~4G, intermediate deformation in Figure~4H, and large deformation in Figure~4I.

\vspace{1em}
\noindent\textbf{Small deformation regime}

\noindent Here we give the explanation of the small deformation regime for the spring network. We first consider the unnotched sample. Under uniaxial tension with small stretch ratio $\lambda = 1 + \varepsilon$ with $0 < \varepsilon \ll 1$. In this case, each vertical chain is stretched by a factor of $1+\varepsilon$, while each horizontal chain remains unstretched.
Next, we introduce a crack by artificially removing the chains in the left-middle portion of the sample. At the instant of the removal, the nodes at the crack tip remain in their original equilibrium state. However, the nodes in front of the crack tip are in the non-equilibrium state due to the missing chains along the crack. Assume a linear constitutive law $f = k(\Lambda-1)$, and without loss of generality, take $k=1$. In this case, the non-equilibrium force on these nodes is $f=\varepsilon$. If we now artificially add a vertical displacement on these nodes by $\varepsilon$, they are almost in equilibrium, because the chain adjacent to the crack tip will then have length $(1+\varepsilon^2)^{1/2}$. Using the Taylor expansion, $(1+\varepsilon^2)^{1/2} \approx 1 + \frac{1}{2}\varepsilon^2$. This higher order term $\frac{1}{2}\varepsilon^2$ has no leading-order effect on the crack-tip nodes. Therefore, there is no stress concentration in the small-deformation regime.

\vspace{1em}
\noindent\textbf{Intermediate deformation regime}

\noindent In the intermediate deformation regime, we observe from Figure~4H that the values of $\Lambda_{\textrm{tip}}$ for both linear and nonlinear constitutive laws coincide. This indicates that $\Lambda_{\textrm{tip}}$ is independent of the specific form of the force-stretch relationship of chains and is instead determined solely by the lattice topology of the network. This behavior can be explained by the fact that any smooth nonlinear constitutive law can be locally approximated by a linear function via Taylor expansion around the unstretched state:
\begin{equation}
\begin{aligned}
    f(\Lambda) ={}& f(1) + f'(1)(\Lambda - 1) + \mathcal{O}((\Lambda - 1)^2) \\
    \approx{}& f(1) + f'(1)(\Lambda - 1) \\
    ={}& f'(1)(\Lambda - 1).
\end{aligned}  
\end{equation}
where we have used the assumption that the chain force vanishes at $\Lambda = 1$, i.e., $f(1) = 0$. Furthermore, for any linear constitutive law, the crack-tip stretch $\Lambda_{\textrm{tip}}$ depends solely on the overall stretch $\lambda$, and is independent of the slope $f'(1)$. While the linear slope $f'(1)$ determines the magnitude of force in the network, it does not affect the distribution of stretch, which is governed by the network's geometry and connectivity.

\vspace{1em}
\noindent\textbf{Large deformation regime}

\noindent We first consider the spring network. The intrinsic fracture energy is given by
\begin{equation}\label{eq-supp:fracture-energy-def}
  \Gamma_0 = h_0 \int_{1}^{\lambda_c} f(\lambda) d\lambda,
\end{equation}
where $h_0$ is the initial height of the network and $\lambda_c$ is the critical stretch. According to the scaling law proposed in \cite{hartquist2025scaling}, the intrinsic fracture energy satisfies
\begin{equation}\label{eq-supp:scaling-law-cross-link}
  \Gamma_0 = \alpha l_0 f_f (\Lambda_f - 1),
\end{equation}
for any constitutive law $f$, and any $f_f$ and $\Lambda_f$. Here $\alpha$ is a constant depending on the network topology, $l_0$ is the unit length in the network, and $f_f$ and $\Lambda_f$ are the breaking force and breaking stretch ratio of a single chain, respectively. We note that the original work \cite{hartquist2025scaling} uses the notation of chain length at failure, $L_f$, in the scaling law. In our formulation, we interpret this more precisely as the stretch ratio at failure, $\Lambda_f$, and correspondingly replace $\Lambda_f$ by $(\Lambda_f - 1)$, to reflect the stretch measured from the rest length.

Let $\Lambda_{\textrm{tip}}$ denote the crack tip stretch ratio. Since $\Lambda_{\textrm{tip}}$ increases monotonically with the overall network stretch ratio $\lambda$, we write
\begin{equation}
  \Lambda_{\textrm{tip}} = \phi(\lambda),
\end{equation}
where $\phi$ is an increasing function.
At the critical stretch $\lambda_c$, the stretch at crack tip reaches the failure threshold $\Lambda_f$, i.e.,
\begin{equation}
  \Lambda_f = \phi(\lambda_c).
\end{equation}
Substituting this into the fracture energy definition \eqref{eq-supp:fracture-energy-def} and the scaling law \eqref{eq-supp:scaling-law-cross-link}, we obtain
\begin{equation}\label{eq-supp:scaling-law-derivation-1}
  h_0 \int_{1}^{\phi^{-1}(\Lambda_f)} f(\lambda) d\lambda = \alpha l_0 f(\Lambda_f) (\Lambda_f - 1).
\end{equation}
The above relation holds for any choice of constitutive law $f$. More importantly, it builds the relationship between the functions $\phi$ and $f$.

While it is difficult to derive an explicit expression for $\phi$ from \eqref{eq-supp:scaling-law-derivation-1} in general, we can solve it for some specific forms of $f$. Consider the monomial constitutive law:
\begin{equation}
  f(\Lambda) = (\Lambda - 1)^p, \quad p>0.
\end{equation}
Then the relation \eqref{eq-supp:scaling-law-derivation-1} becomes
\begin{equation}
  h_0 \int_{1}^{\phi^{-1}(\Lambda_f)} (\lambda - 1)^p d\lambda = \alpha l_0 (\Lambda_f - 1)^p (\Lambda_f - 1),
\end{equation}
which simplifies to
\begin{equation}
  \frac{h_0}{l_0 (p+1)} \brac{\phi^{-1}(\Lambda_f) - 1}^{p+1} = \alpha (\Lambda_f - 1)^{p+1}
\end{equation}
Solving for $\phi^{-1}(\Lambda_f)$, we obtain
\begin{equation}
  \phi^{-1}(\Lambda_f) = \brac{\frac{\alpha l_0(p+1)}{h_0}}^{\frac{1}{p+1}} (\Lambda_f - 1) + 1,
\end{equation}
Taking the inverse yields
\begin{equation}
  \phi(\lambda) = \brac{\frac{h_0}{\alpha l_0(p+1)}}^{\frac{1}{p+1}} (\lambda - 1) + 1.
\end{equation}
This analysis leads to the conclusion that the crack tip stretch $\Lambda_{\textrm{tip}}$ linearly increase with the overall stretch $\lambda$ with the increase rate
\begin{equation}\label{eq-supp:large-deform-slope-cross-link}
    k_s = \brac{\frac{h_0}{\alpha l_0(p+1)}}^{\frac{1}{p+1}}.
\end{equation}

The slope $k_s$ in \eqref{eq-supp:large-deform-slope-cross-link} determines when the fracture initiates. Its dependence on $h_0/l_0$, $\alpha$, and $p$ is analyzed as follows:
\begin{itemize}
    \item \textbf{Dependence on $h_0/l_0$:} \\ $k_s$ increases as $h_0/l_0$ (i.e., number of layer in vertical direction) increases. This implies that  networks with larger size are easier for fracture initiation (see the numerical validations in Figures \ref{fig-supp:SI_Fig35_Nonslidable_linearlaw}b and \ref{fig-supp:SI_Fig35_Nonslidable_linearlaw}d).

    \item \textbf{Dependence on $\alpha$:} \\ $k_s$ decreases as $\alpha$ increases. Since $\alpha$ represents the  factor in the scaling law \eqref{eq-supp:scaling-law-cross-link}, a larger $\alpha$ corresponds to a higher intrinsic fracture energy, delaying the fracture initiation.

    \item \textbf{Dependence on $p$:} \\ We introduce the function $k_s = k_s(p)$ and study its dependence on $p$,
\begin{equation}
  k_s(p) = \brac{\frac{h_0}{\alpha l_0(p+1)}}^{\frac{1}{p+1}}.
\end{equation}
Taking logarithms gives
\begin{equation}
  \log k_s = \frac{1}{p+1} \brac{\log\brac{\frac{h_0}{\alpha l_0}} - \log(p+1)}.
\end{equation}
Taking derivative with respect to $p$ gives
\begin{equation}
\begin{aligned}  
  \frac{1}{k_s} k_s'(p) ={}& -\frac{1}{(p+1)^2} \brac{\log \brac{\frac{h_0}{\alpha l_0}} - \log(p+1)} - \frac{1}{(p+1)^2} \\
  ={}& -\frac{1}{(p+1)^2} \brac{\log \brac{\frac{h_0}{\alpha l_0}} - \log(p+1) + 1} \\
  ={}& \frac{1}{(p+1)^2} \brac{\log(p+1) - \log \brac{\frac{e h_0}{\alpha l_0}}}.
\end{aligned}
\end{equation}
Setting $k_s'(p) = 0$ yields
\begin{equation}
  p^* = \frac{e h_0}{\alpha l_0} - 1.
\end{equation}
Therefore, $k_s$ decreases with $p$ if $p < p^*$ and increases with $p$ if $p > p^*$.
In typical networks, $p^* \gg 1$ (e.g., layer numbers $\sim$10 for linear constitutive laws, and hundreds or thousands for highly nonlinear laws \cite{deng2023nonlocal}). Thus, in the common nonlinear regime, $k_s$ decreases with increasing $p$, implying that nonlinearity tends to increase the intrinsic fracture energy.

\end{itemize}

For the periodic entangled network with $\varphi_s = 50\%$, the derivation proceeds in the same way as for the spring network. In this case, the intrinsic fracture energy is
\begin{equation}
  \Gamma_0 = h_0 \int_{1}^{\lambda_c} f\brac{\frac{1}{2}(\lambda+1)} d\lambda,
\end{equation}
where the integrated function $f(\frac{1}{2}(\lambda + 1))$ arises from the analytical form of the stress–stretch relation in periodic entangled networks, see Section \ref{sec:analytical-solution-elasticity}.
Following the similar derivation, we obtain
\begin{equation}
  h_0 \int_{1}^{\phi^{-1}(\Lambda_f)} f\brac{\frac{1}{2}(\lambda+1)} d\lambda = \alpha l_0 f(\Lambda_f) (\Lambda_f - 1).
\end{equation}
Assume the same monomial constitutive law for a single chain
\begin{equation}
    f(\Lambda) = (\Lambda - 1)^p, \quad p > 0, 
\end{equation}
we have
\begin{equation}
  h_0 \int_{1}^{\phi^{-1}(\lambda_f)} \brac{\frac{1}{2}(\lambda+1) - 1}^p d\lambda = \alpha l_0 (\lambda_f - 1)^{p+1}.
\end{equation}
Evaluating the integral yields
\begin{equation}
  h_0 \frac{1}{2^p(p+1)} \brac{\phi^{-1}(\lambda_f) - 1}^{p+1} = \alpha l_0 (\lambda_f - 1)^{p+1}.
\end{equation}
Solving for $\phi(\lambda)$, we find
\begin{equation}
  \phi(\lambda) = \brac{\frac{h_0}{\alpha l_0 2^p(p+1)}}^{\frac{1}{p+1}} (\lambda - 1) + 1.
\end{equation}
Therefore, the crack-tip stretch increases linearly with overall stretch at the rate
\begin{equation}\label{eq-supp:large-deformation-linear-slope-entangle}
    k_e = \brac{\frac{h_0}{\alpha l_0 2^p(p+1)}}^{\frac{1}{p+1}}.
\end{equation}
The dependence of $k_e$ on the parameters $h_0$, $\alpha$, and $p$ is similar to that for the spring network and is therefore omitted here. Compared with \eqref{eq-supp:large-deform-slope-cross-link}, the additional factor of $2^p$ in \eqref{eq-supp:large-deformation-linear-slope-entangle} further reduces the slope, demonstrating that entangled networks are even more effective in mitigating stress concentration.

\subsection{Probability distribution of crack-tip chain length for random networks}\label{sec:prob-crack-tip-length-rand-network}

In this section, we derive the probability distribution of the crack-tip chain length in random entangled networks. We assume that each node is independently designated as a slidable node with probability $\varphi_s$, or a non-slidable node with probability $(1 - \varphi_s)$. We also assume that the entanglement orientation is fixed in the network: the left and lower neighboring nodes belong to one chain, while the right and upper neighboring nodes belong to another chain (Figure \ref{fig-supp:orientation}a). 

\begin{figure}[H]
  \centering
  \includegraphics[width=0.9\textwidth]{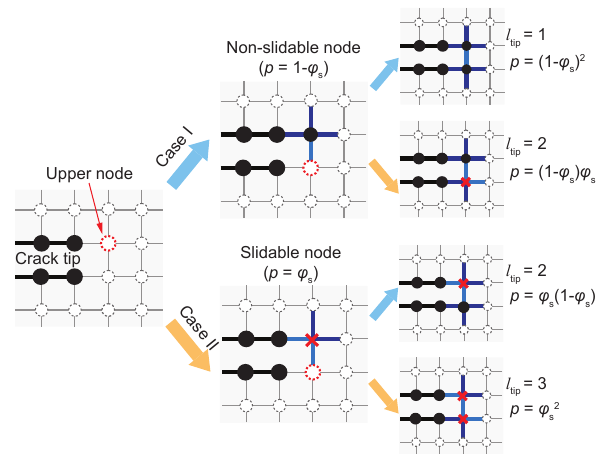}
  \caption{\textbf{Schematic illustration of the derivation of probability distribution of crack-tip chain length.} To compute the probability distribution, first consider two cases: (I) the upper crack-tip node is non-slidable, with probability $(1-\varphi_s)$, and (II) the upper crack-tip node is slidable, with probability $\varphi_s$. Then, for each case, the analysis proceeds by checking the next node along the chain path.}
  \label{fig-supp:randomprobability}
\end{figure}
To analyze the chain length at the crack tip, denoted by $l_{\mathrm{tip}}$, we consider two cases based on the configuration of the node at the upper crack tip (Figure~\ref{fig-supp:randomprobability}):
\begin{itemize}
  \item \textbf{Case I:} The upper crack tip node is a non-slidable node, which occurs with the probability  $(1-\varphi_s)$.
  
  \item \textbf{Case II:} The upper crack tip node is a slidable node, which occurs with the probability $\varphi_s$.
\end{itemize}
In Case I, the chain at the crack tip starts from the non-slidable node and continues through slidable nodes until it terminates at a non-slidable node. Since each node is independently slidable with probability $\varphi_s$, the probability that the chain has length $k$ follows a geometric distribution (Figure~\ref{fig-supp:randomprobability}):
\begin{equation}
  P(l_{\textrm{tip}} = k ~|~ \textrm{Case I}) = \varphi_s^{k-1} (1-\varphi_s), \quad k = 1, 2, \dots
\end{equation}
Here, we assume that the network is sufficiently large so that the crack tip chain can achieve any length $k \ge 1$ without boundary effects. We also use the assumption that the entanglement orientation is fixed, ensuring that as the chain progresses along slidable nodes, it does not intersect itself.

In Case II, the upper crack tip node is slidable, so the chain must include at least one more slidable node before reaching a non-slidable node. Therefore, the minimum chain length is $k = 2$, and the distribution becomes (Figure~\ref{fig-supp:randomprobability}):
\begin{equation}
  P(l_{\textrm{tip}} = k ~|~ \textrm{Case II}) = \varphi_s^{k-2} (1-\varphi_s), \quad k = 2, 3, \dots
\end{equation}
We now compute the probability by combining both cases, weighted by their respective probabilities:
\begin{itemize}
  \item For $k=1$:
  \begin{equation}
    P(l_{\textrm{tip}} = 1) = P(\textrm{Case I}) P(l_{\textrm{tip}} = 1 ~|~ \textrm{Case I})= (1-\varphi_s)^2.
  \end{equation}  
      
  \item For $k\ge2$:
  \begin{equation}
  \begin{aligned}
    P(l_{\textrm{tip}} = k) ={}& P(\textrm{Case I}) P(l_{\textrm{tip}} = k ~|~ \textrm{Case I}) + P(\textrm{Case II}) P(l_{\textrm{tip}} = k ~|~ \textrm{Case II}) \\
    ={}& \varphi_s^{k-1} (1-\varphi_s)(1-\varphi_s) + \varphi_s^{k-2} (1-\varphi_s)\varphi_s \\
    ={}& \varphi_s^{k-1} (1 - \varphi_s) (2 - \varphi_s).
  \end{aligned}
  \end{equation}
\end{itemize}

Therefore, The probability distribution of the crack tip chain length is given by:
\begin{equation}
  P(l_{\textrm{tip}} = k) = 
  \begin{cases}
    (1-\varphi_s)^2, & k = 1, \\
    \varphi_s^{k-1} (1 - \varphi_s) (2 - \varphi_s), & k = 2, 3, \dots
  \end{cases}
\end{equation}

\vspace*{0pt}
\begin{figure}[H]
  \centering
  \includegraphics[trim={0cm 0.5cm 0cm 1cm},clip, width=1.0\textwidth]{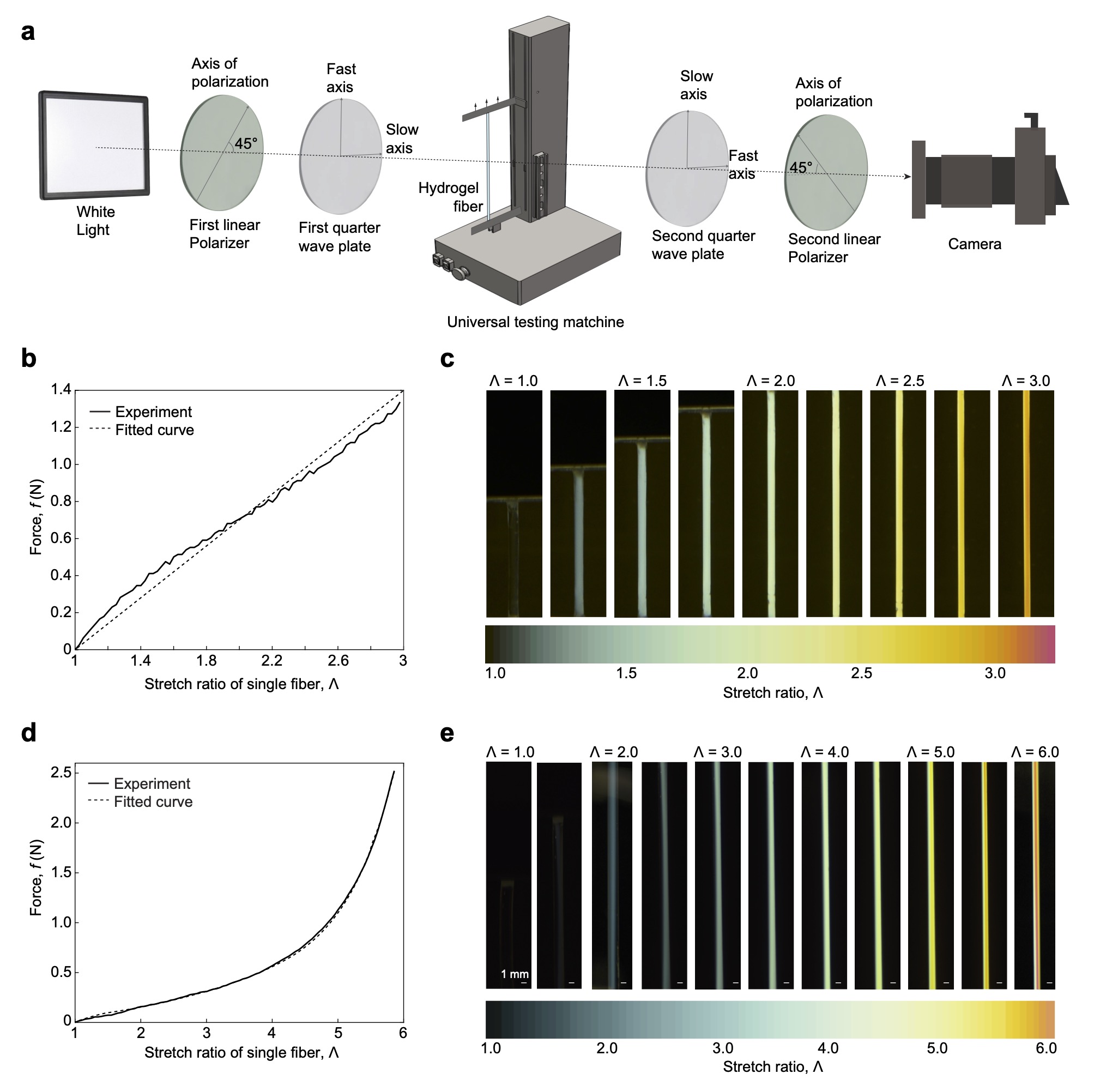}
  \caption{\textbf{Homemade photoelastic setup and mechanical and photoelastic performance of hydrogel fibers.} \textbf{(a)} Schematic of the homemade photoelastic setup, which consists of a white light, a camera, two linear polarizers, two quarter-wave plates, a universal testing machine, and a hydrogel fiber. \textbf{(b)} Force-stretch relationship of an approximately linear hydrogel fiber. \textbf{(c)} Photoelastic images of a linear fiber subjected to various stretch levels under a circular polariscope. \textbf{(d)} Force-stretch relationship of a non-linear hydrogel fiber. \textbf{(e)} Photoelastic images of a non-linear fiber subjected to various stretch levels under a circular polariscope.
  }
  \label{fig-supp:SI_Fig1_SingleFiber}
\end{figure}

\vspace*{0pt}
\begin{figure}[H]
  \centering
  \includegraphics[trim={0cm 0cm 0cm 0cm},clip, width=0.98\textwidth]{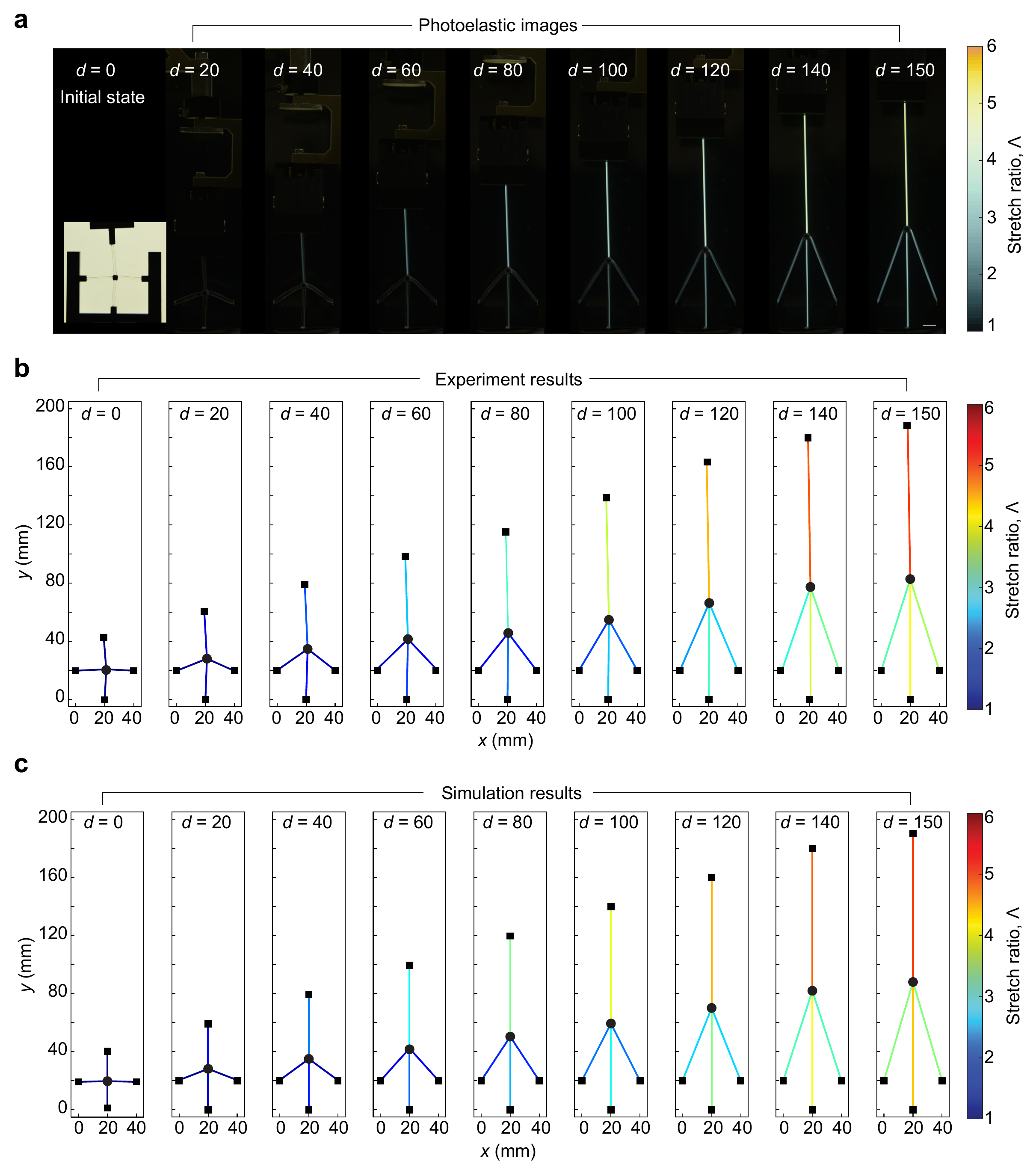}
  \caption{\textbf{Photoelastic images, experimental results, and simulation results of the two-chain model with a non-slidable node (spring network).} 
\textbf{(a)} Photoelastic images. Scale bar: 10 mm. 
\textbf{(b)} Experimental results obtained from the photoelastic images. 
\textbf{(c)} Simulation results.
 The unit of displacement $d$ is mm. Black nodes represent non-slidable nodes.
  }
  \label{fig-supp:SI_Fig2_Nonslidable_TwoFiber_Exp_Sim}
\end{figure}

\vspace*{0pt}
\begin{figure}[H]
  \centering
  \includegraphics[trim={0cm 0.5cm 0cm 0cm},clip, width=0.98\textwidth]{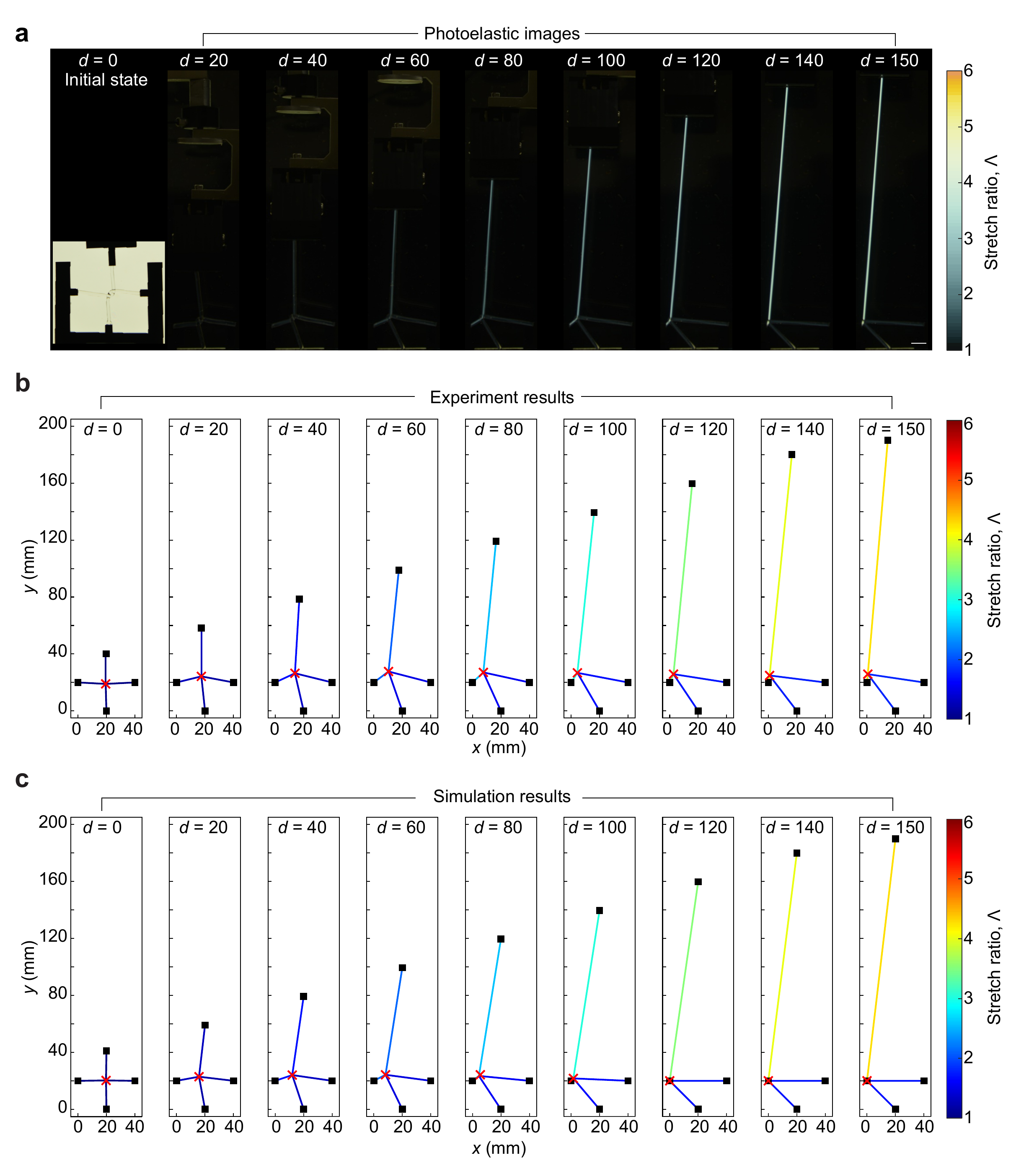}
  \caption{\textbf{Photoelastic images, experimental results, and simulation results of the two-chain model with a slidable node (entangled network).} 
\textbf{(a)} Photoelastic images. Scale bar: 10 mm. 
\textbf{(b)} Experimental results obtained from the photoelastic images. 
\textbf{(c)} Simulation results. The unit of displacement $d$ is mm. Black nodes represent non-slidable nodes and red crosses represent slidable nodes. 
  }
  \label{fig-supp:SI_Fig3_SlidableTwoFiber_Exp_Sim}
\end{figure}

\vspace*{0pt}
\begin{figure}[H]
  \centering
  \includegraphics[trim={0cm 0cm 0cm 0cm},clip, width=1.0\textwidth]{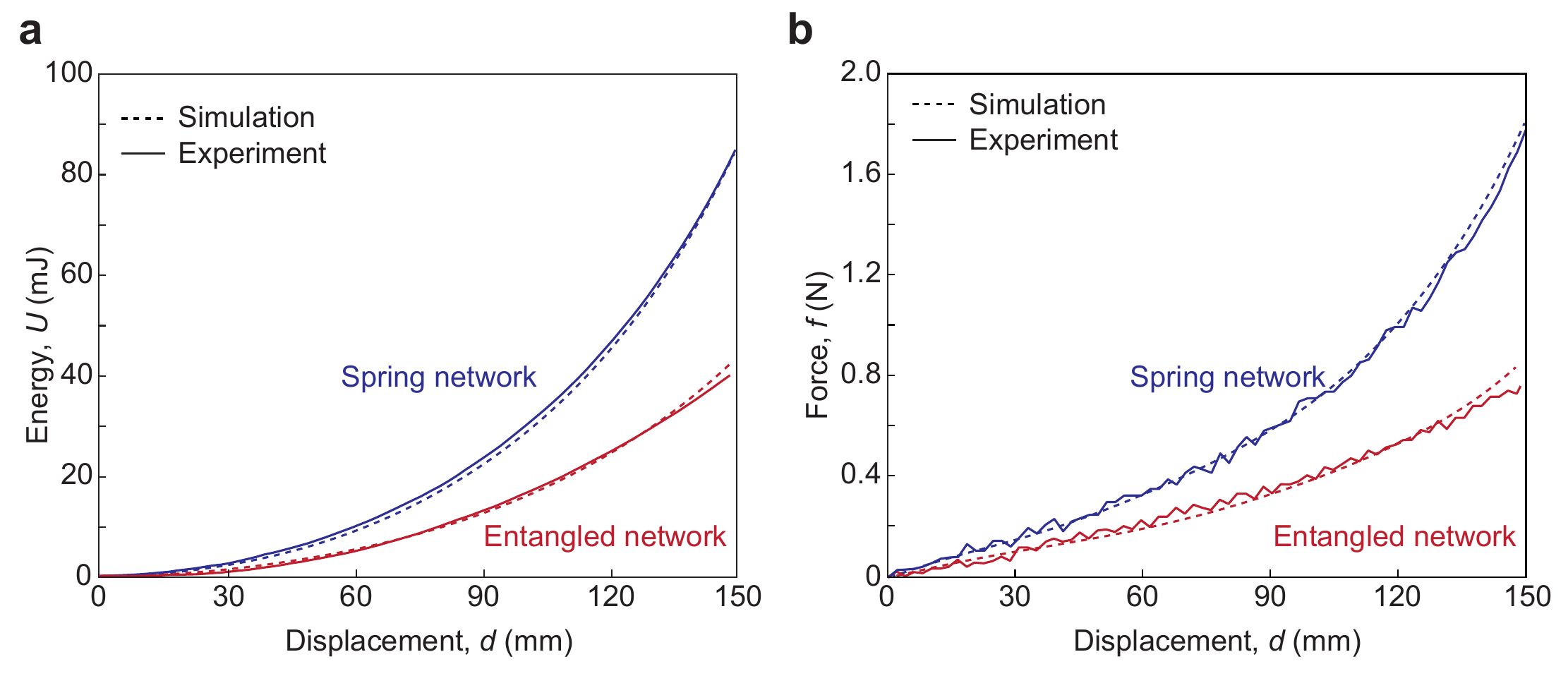}
  \caption{\textbf{Experimental and simulation results of the energy and force curves for the two-chain model with slidable nodes (entangled network) and non-slidable nodes (spring network).} 
\textbf{(a)} Experimental and simulation results of the energy-displacement curves of the two-chain models.
\textbf{(b)} Experimental and simulation results of the force-displacement curves of the two-chain models. In both curves, the experimental and simulation results show good agreement.
  }
  \label{fig-supp:SI_Fig4_SlidableTwoFiber_Exp_Sim}
\end{figure}

\vspace*{0pt}
\begin{figure}[H]
  \centering
  \includegraphics[trim={0cm 0cm 0cm 0cm},clip, width=0.9\textwidth]{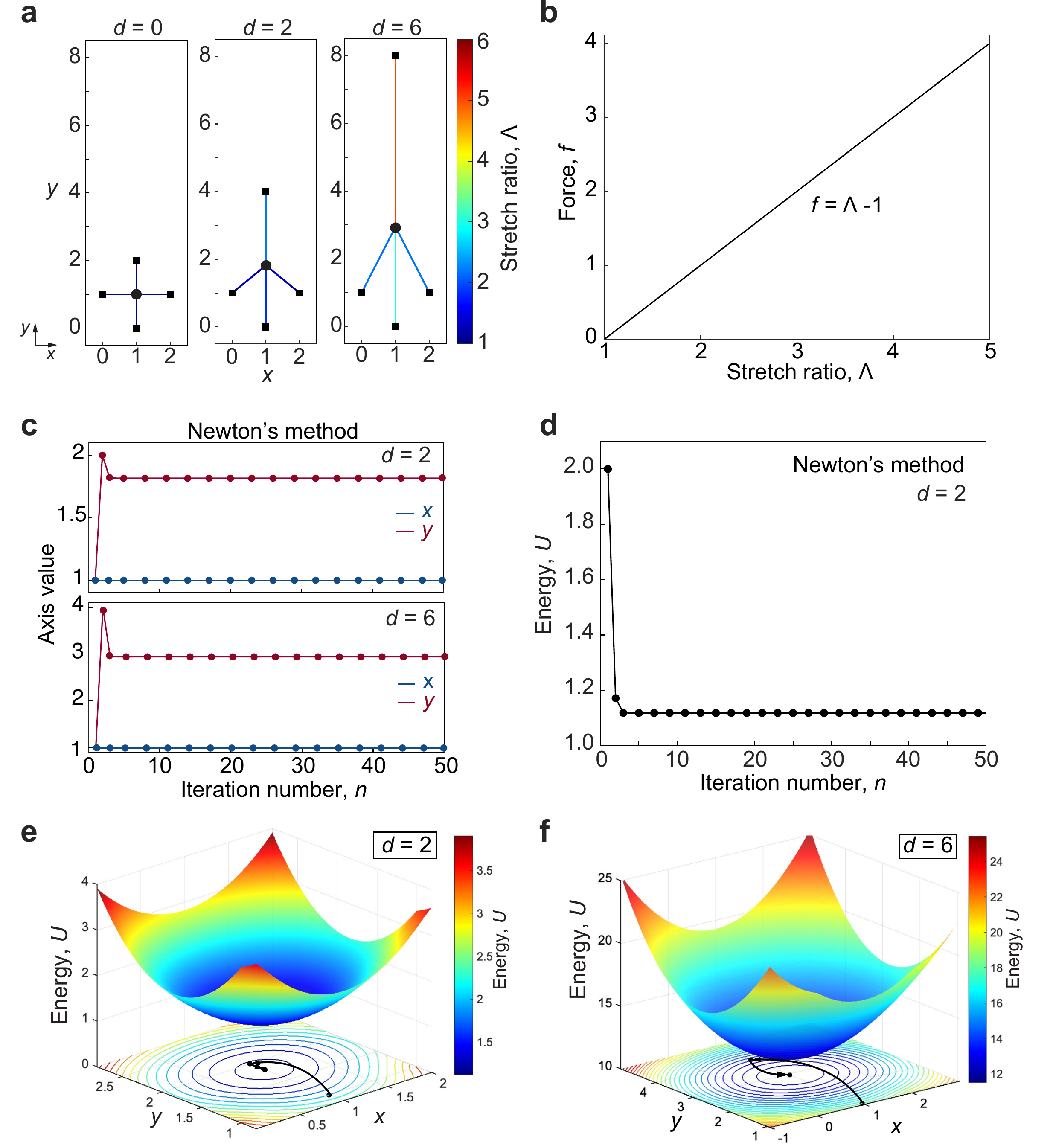}
  \caption{See caption on the next page.}
  \label{fig-supp:SI_Fig5_Nonslidable_linear_simulation}
\end{figure}

\clearpage
\noindent\textbf{Figure S17:} \textbf{Simulation of the two-chain model with a non-slidable node (spring network).}
\textbf{(a)} 
The simulation results of the two-chain model with a non-slidable node subjected to different displacements: $d=0$ (initial undeformed state), $d=2$ (small displacement), and $d=6$ (large displacement).
\textbf{(b)} 
The constitutive law for a single chain, $f = \Lambda - 1$, where $f$ is force and $\Lambda$ is stretch ratio. 
\textbf{(c)} 
The values of $x$ and $y$ (the coordinates of the non-slidable node) versus the iteration number $n$ in Newton’s method. The method converges in both cases.
\textbf{(d)} 
The energy $U$ at $d = 2$ versus the iteration number $n$ in Newton’s method.
\textbf{(e)} 
The energy landscape at $d = 2$, where $x$ and $y$ denote the coordinates of the non-slidable node. The trajectory of $(x, y)$ in Newton’s method is shown as black dots.
\textbf{(f)}
The energy landscape at $d = 6$, where $x$ and $y$ denote the coordinates of the non-slidable node. The iteration trajectory of $(x, y)$ in Newton’s method is shown as black dots.
\clearpage

\vspace*{0pt}
\begin{figure}[H]
  \centering
  \includegraphics[trim={0cm 0cm 0cm 0cm},clip, width=1.0\textwidth]{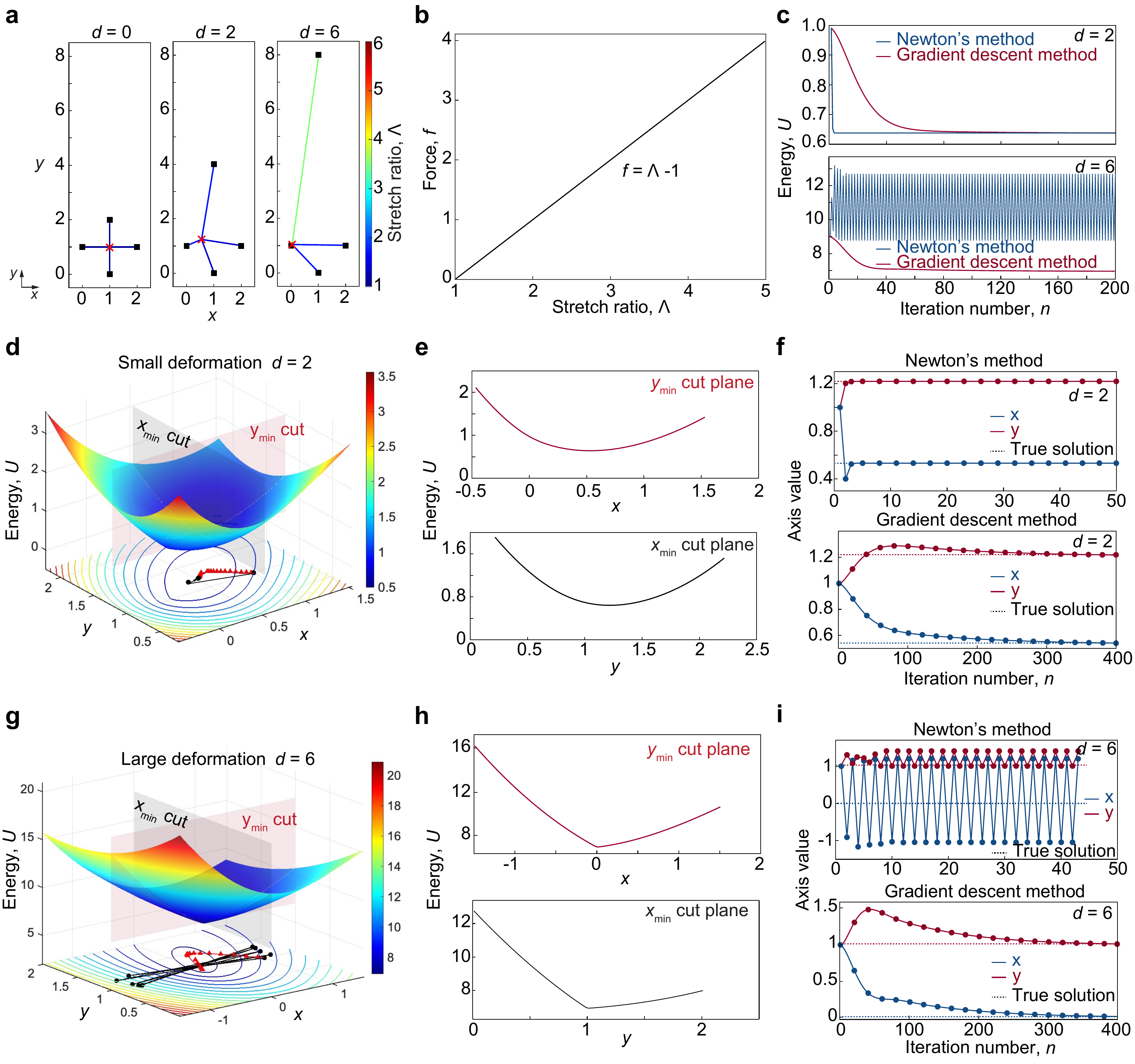}
  \caption{See caption on the next page.}
  \label{fig-supp:SI_Fig6_Slidable_linear_simulation}
\end{figure}

\clearpage
\noindent\textbf{Figure S18:} \textbf{Simulation of the two-chain model with a slidable node (entangled network).}
\textbf{(a)} 
The simulation results of the two-chain model with a slidable node subjected to different displacements: $d=0$ (initial undeformed state), $d=2$ (small displacement), and $d=6$ (large displacement).
\textbf{(b)} 
The constitutive law for a single chain, $f = \Lambda - 1$, where $f$ is force and $\Lambda$ is stretch ratio. 
\textbf{(c)} 
The energy $U$ at $d = 2$ (small) and $d = 6$ (large) versus the iteration number $n$ in Newton’s method and the gradient descent method. Newton’s method converges for $d=2$ but exhibits oscillatory behavior for $d=6$. The gradient descent method converges in both cases.
\textbf{(d)} 
The energy landscape at displacement $d = 2$, where $x$ and $y$ denote the coordinates of the slidable node. The trajectories of $(x, y)$ during the iterations of Newton’s method (black dots) and the gradient descent method (red rectangles) are shown. Both trajectories converge to the minimum of the energy function.
\textbf{(e)} 
The cross-sectional views of the energy function $U$ at displacement $d = 2$, taken at the coordinates corresponding to the energy minimum. The top panel shows a cross-section at $y = y_{\textrm{min}}$ (i.e., the $y$-coordinate of the minimum), while the bottom panel shows a cross-section at $x = x_{\textrm{min}}$ (i.e., the $x$-coordinate of the minimum).
\textbf{(f)} 
The values of $x$ and $y$ (the coordinates of the slidable node) versus the iteration number $n$ during the iterations of Newton’s method and the gradient descent method. Both methods converge to the true solution.
\textbf{(g)}
The energy landscape at displacement $d = 6$, where $x$ and $y$ denote the coordinates of the slidable node. The trajectories of $(x, y)$ during the iterations of Newton’s method (black dots) and the gradient descent method (red rectangles) are shown. The gradient descent method converges, while Newton’s method exhibits oscillatory behavior.
\textbf{(h)} 
The cross-sectional views of the energy function $U$ at displacement $d = 6$, taken at the coordinates corresponding to the energy minimum. The top panel shows a cross-section at $y = y_{\textrm{min}}$ (i.e., the $y$-coordinate of the minimum), while the bottom panel shows a cross-section at $x = x_{\textrm{min}}$ (i.e., the $x$-coordinate of the minimum).
\textbf{(i)} 
The values of $x$ and $y$ (the coordinates of the slidable node) versus the iteration number $n$ during the iterations of Newton’s method and the gradient descent method. The gradient descent method converges, while Newton’s method exhibits oscillatory behavior.
\clearpage

\vspace*{0pt}
\begin{figure}[H]
  \centering
  \includegraphics[trim={0cm 0cm 0cm 0cm},clip, width=1.0\textwidth]{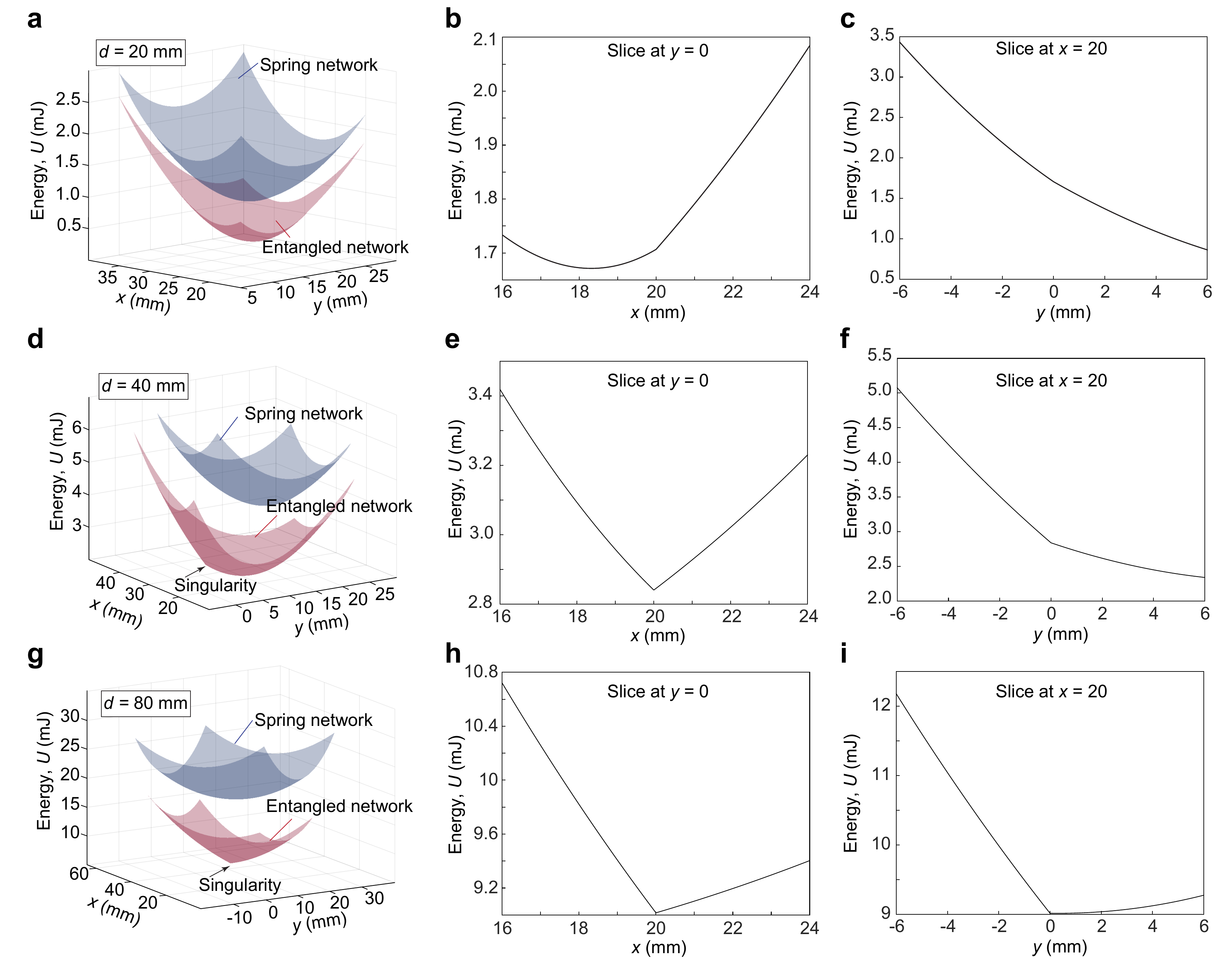}
  \caption{\textbf{Energy landscapes and their cross-sectional view for the two-chain model with slidable and non-slidable nodes.} Here we use the setup and the constitutive law from the two-hydrogel-fiber experiment. 
  \textbf{(a, d, g)} Energy landscapes at displacements $d = 20$, $40$, and $80$ mm, respectively. Here $x$ and $y$ denote the coordinates of either a slidable or non-slidable node. Systems with non-slidable nodes exhibit higher energy than those with slidable nodes. There is a singularity point in the energy landscape of the system with slidable node. The singular point shifts toward the minimum as the displacement increases.
  \textbf{(b, e, h)} Cross-sectional views of the energy landscape of the slidable-node system in (a, d, g) at $y = 0$ mm. The singular point shifts toward the minimum as the displacement increases.
  \textbf{(c, f, i)} Cross-sectional views of the energy landscape of the slidable-node system in (a, d, g) at $x = 20$ mm. The singular point shifts toward the minimum as the displacement increases.
}
  \label{fig-supp:SI_Fig7_EnergylandscapeOfExperiment}
\end{figure}

\vspace*{0pt}
\begin{figure}[H]
  \centering
  \includegraphics[trim={0cm 0cm 0cm 0cm},clip, width=1.0\textwidth]{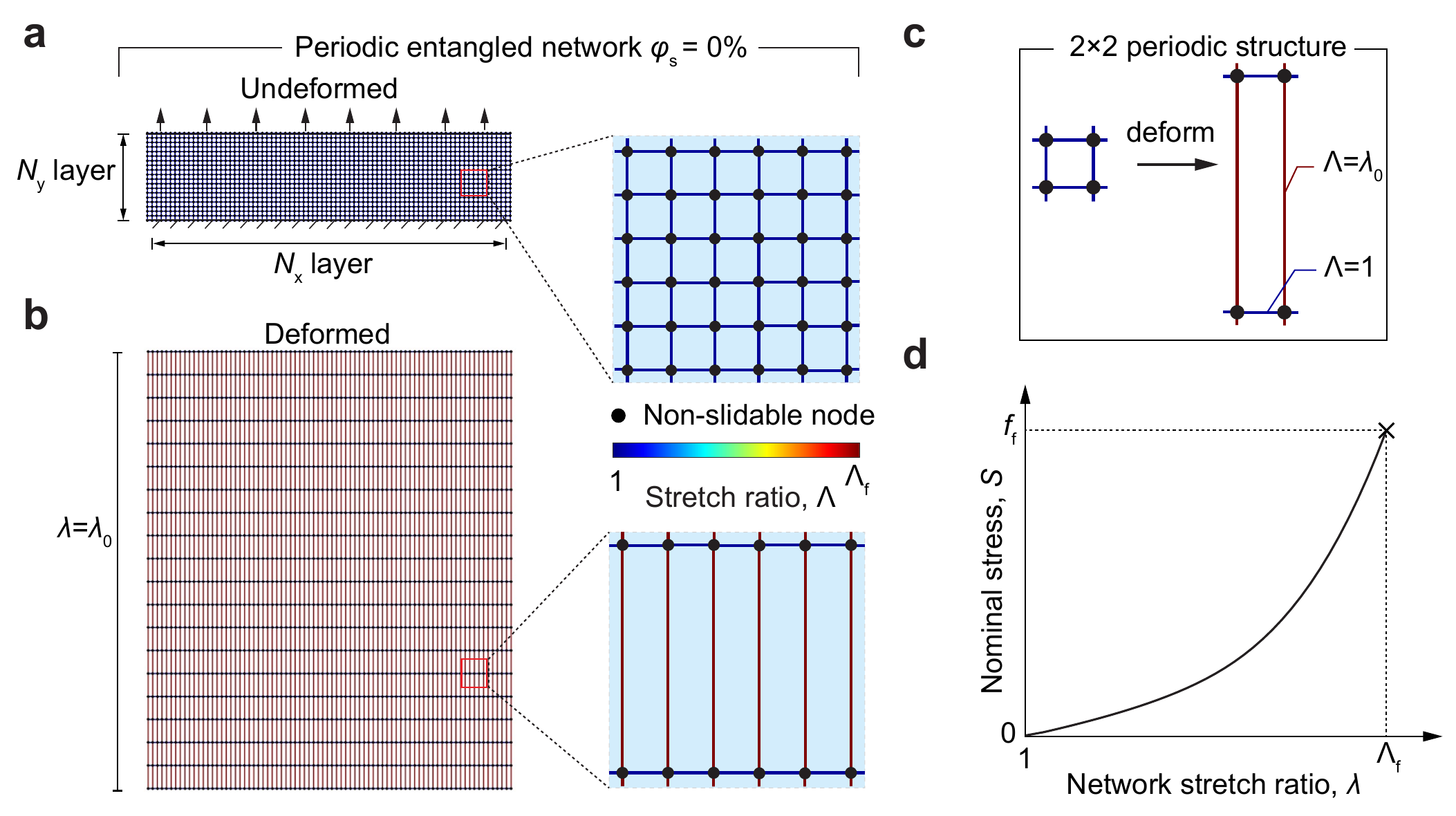}
  \caption{\textbf{Simulation of spring network with $\varphi_s = 0\%$.}
\textbf{(a)} Network configuration in the undeformed state. All nodes in the network are non-slidable nodes. 
The network has $N_x$ horizontal layers and $N_y$ vertical layers.
\textbf{(b)} Network subjected to uniaxial tension with stretch ratio $\lambda_0$. Chains aligned with the stretching direction are stretched by $\lambda_0$, while chains perpendicular to the stretching direction remain unstretched. 
\textbf{(c)} Undeformed and deformed states of the $2 \times 2$ periodic structure. 
\textbf{(d)} Nominal stress of network $S$ versus network stretch ratio $\lambda$. The nominal stress $S$ is defined as the total network force divided by the layer number $N_x$. $\Lambda_f$ is the rupture stretch ratio of an individual chain, and $f_f$ is the corresponding breakage force.
}
  \label{fig-supp:SI_Fig8_0_Period_Network}
\end{figure}

\vspace*{0pt}
\begin{figure}[H]
  \centering
  \includegraphics[trim={0cm 0cm 0cm 0cm},clip, width=1.0\textwidth]{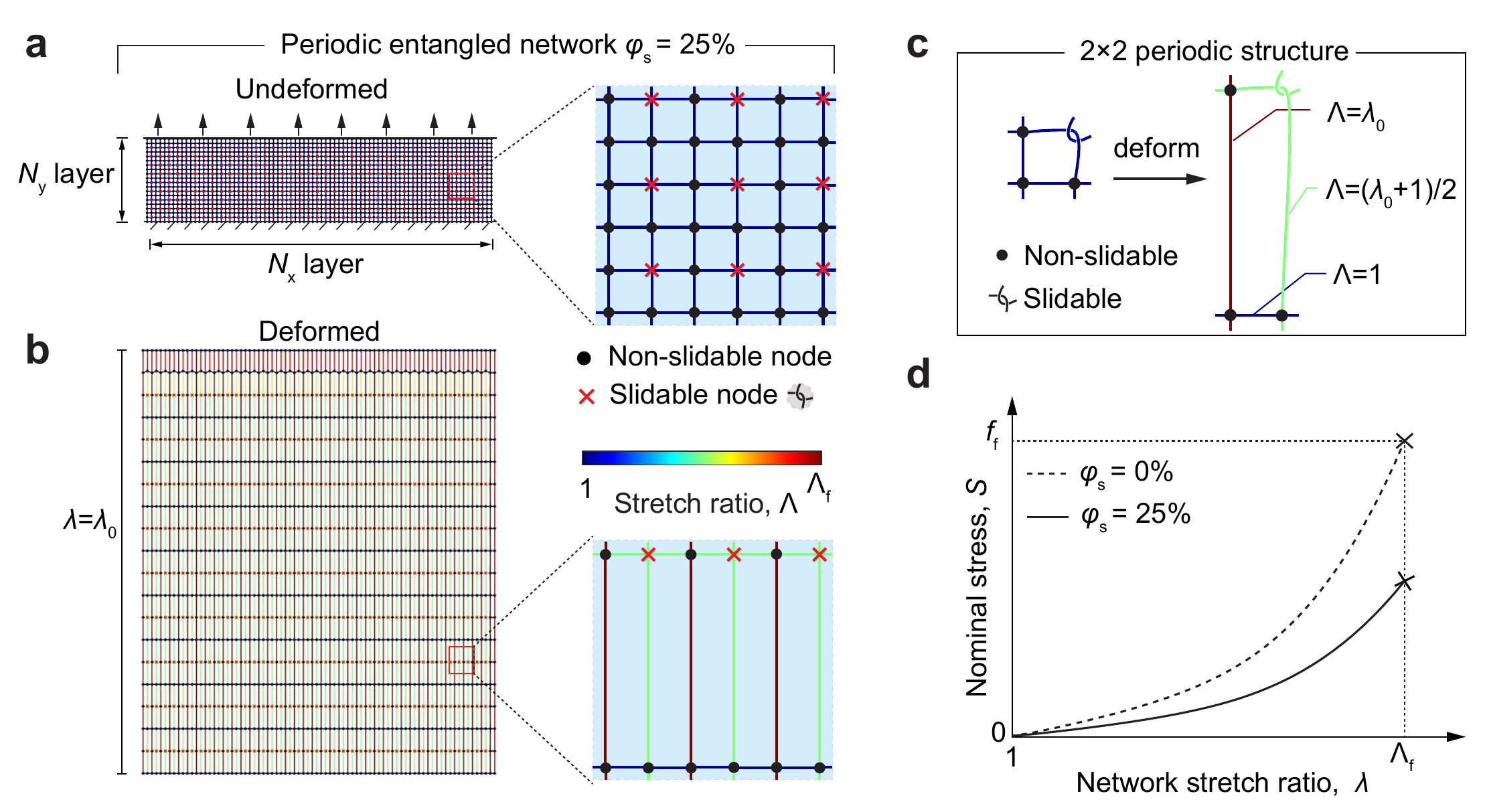}
  \caption{\textbf{Simulation of periodic entangled network with $\varphi_s = 25\%$.} 
\textbf{(a)} Network configuration in the undeformed state. Black nodes represent non-slidable nodes, and red crosses indicate slidable nodes with fixed orientation. 
\textbf{(b)} Network subjected to uniaxial tension with stretch ratio $\lambda_0$. 
\textbf{(c)} Undeformed and deformed states of the $2 \times 2$ periodic structure. Due to the heterogeneity of the network, chains exhibit three distinct stretch ratios under tension: $\Lambda = \lambda_0$, $\Lambda = 1$, and $\Lambda = (\lambda_0 + 1)/2$. $\lambda_0$ denotes the network stretch ratio.
\textbf{(d)} Nominal stress $S$ versus network stretch ratio $\lambda$. The nominal stress $S$ is defined as the total network force divided by the layer number $N_x$. $\Lambda_f$ is the rupture stretch ratio of an individual chain, and $f_f$ is the corresponding breakage force.
}
  \label{fig-supp:SI_Fig9_25_Period_Network}
\end{figure}

\vspace*{0pt}
\begin{figure}[H]
  \centering
  \includegraphics[trim={0cm 0cm 0cm 0cm},clip, width=1.0\textwidth]{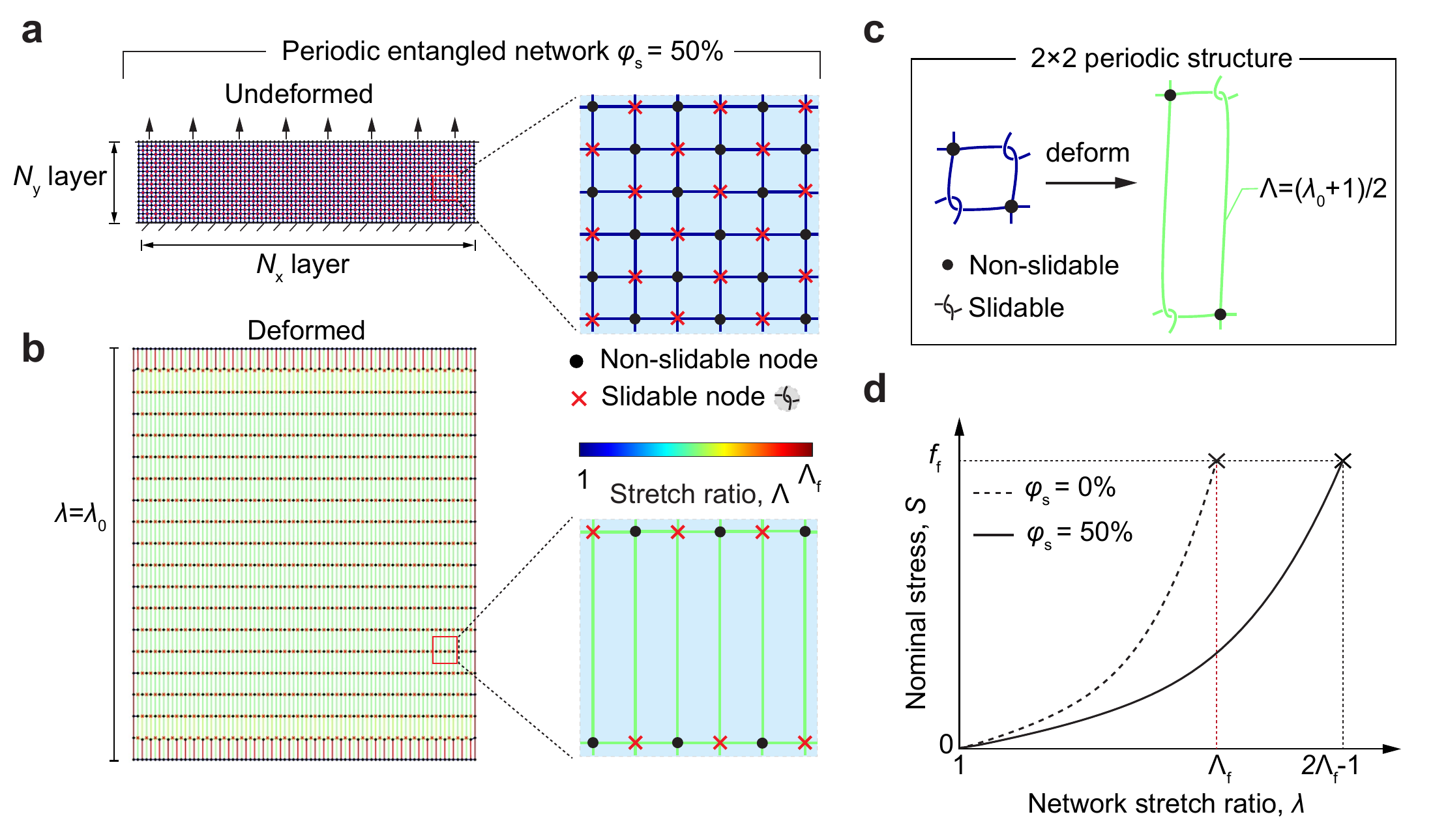}
  \caption{\textbf{Simulation of periodic entangled network with $\varphi_s = 50\%$.} 
\textbf{(a)} Network configuration in the undeformed state. 
\textbf{(b)} Network subjected to uniaxial tension with stretch ratio $\lambda_0$. Black nodes represent non-slidable nodes, and red crosses indicate slidable nodes with fixed orientation. 
\textbf{(c)} Undeformed and deformed states of the $2 \times 2$ periodic structure. Each chain in the network is stretched uniformly with an identical stretch ratio $\Lambda = (\lambda_0 + 1)/2$. 
\textbf{(d)} Nominal stress $S$ versus network stretch ratio $\lambda$. The nominal stress $S$ is defined as the total network force divided by the layer number $N_x$. $\Lambda_f$ is the rupture stretch ratio of an individual chain, and $f_f$ is the corresponding breakage force.
}
  \label{fig-supp:SI_Fig10_50_Period_Network}
\end{figure}

\vspace*{0pt}
\begin{figure}[H]
  \centering
  \includegraphics[trim={0cm 0cm 0cm 0cm},clip, width=1.0\textwidth]{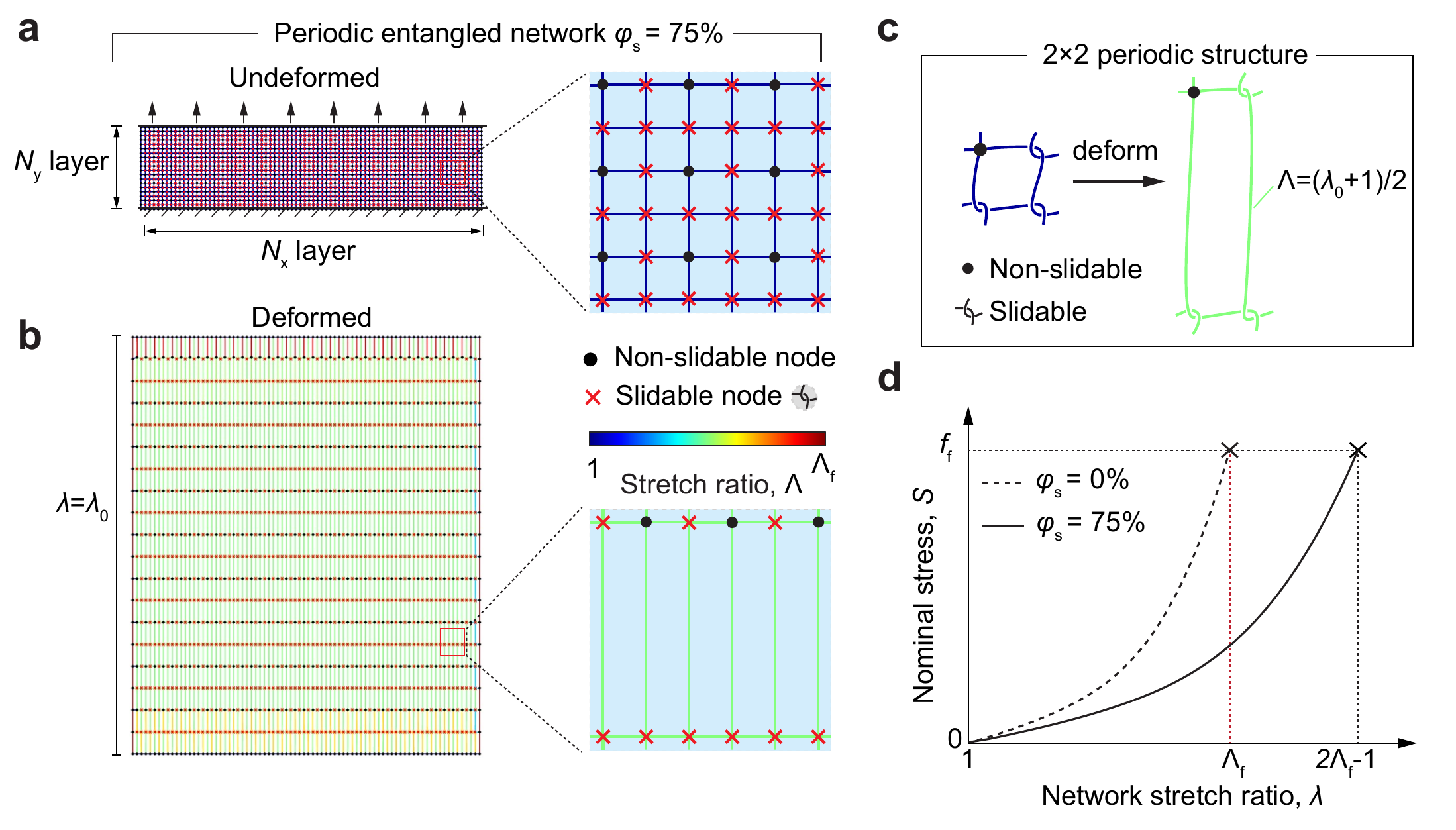}
  \caption{\textbf{Simulation of periodic entangled network with $\varphi_s = 75\%$.} 
\textbf{(a)} Network configuration in the undeformed state. 
\textbf{(b)} Network subjected to uniaxial tension with stretch ratio $\lambda_0$. Black nodes represent non-slidable nodes, and red crosses indicate slidable nodes with fixed orientation. 
\textbf{(c)} Undeformed and deformed states of the $2 \times 2$ periodic structure. Each chain in the network is stretched uniformly with an identical stretch ratio $\Lambda = (\lambda_0 + 1)/2$. 
\textbf{(d)} Nominal stress $S$ versus network stretch ratio $\lambda$. The nominal stress $S$ is defined as the total network force divided by the layer number $N_x$. $\Lambda_f$ is the rupture stretch ratio of an individual chain, and $f_f$ is the corresponding breakage force.
}
  \label{fig-supp:SI_Fig11_75_Period_Network}
\end{figure}

\vspace*{0pt}
\begin{figure}[H]
  \centering
  \includegraphics[trim={0cm 0cm 0cm 0cm},clip, width=1.0\textwidth]{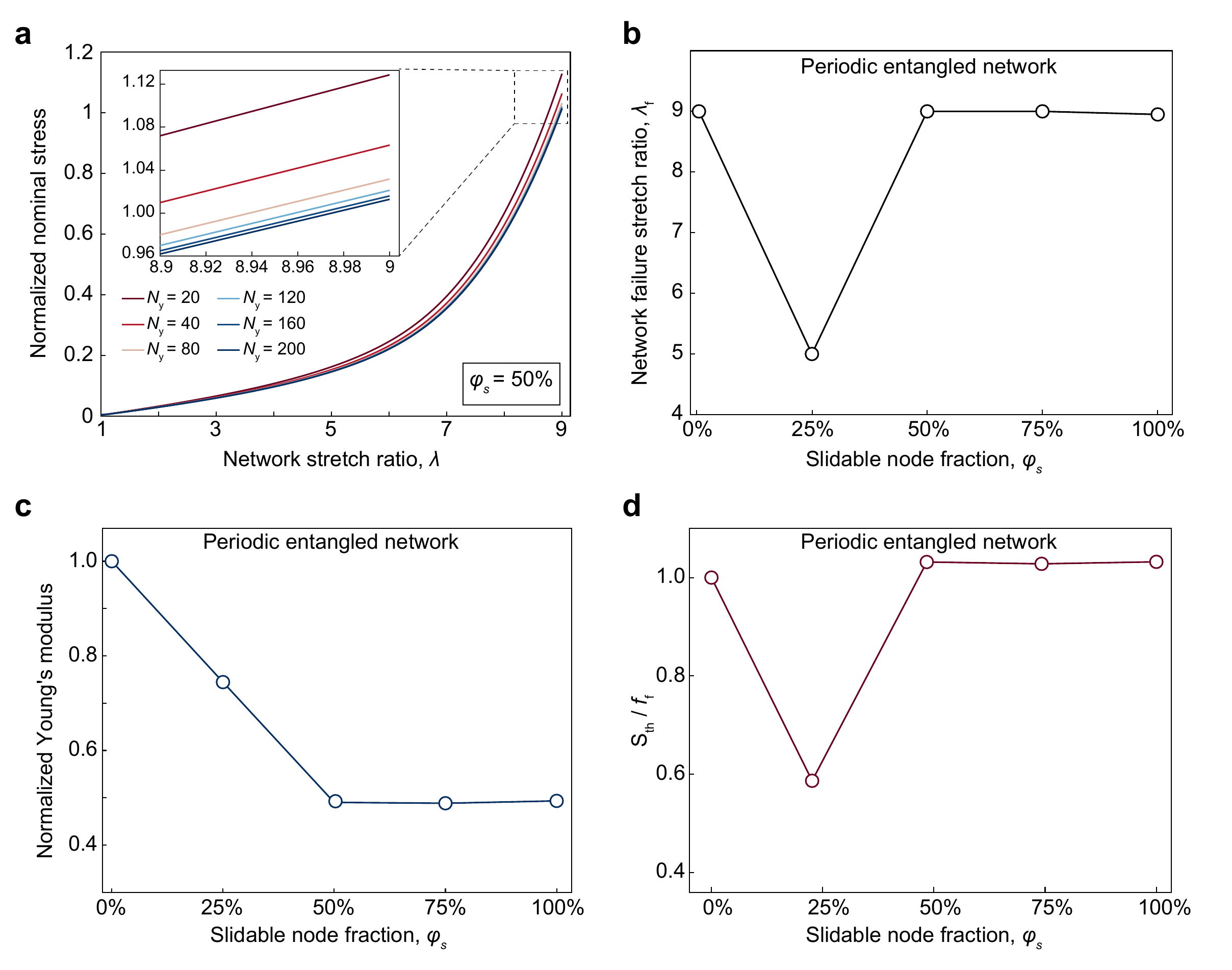}
  \caption{\textbf{Summarized simulation results for elasticity of periodic entangled networks.} 
\textbf{(a)} 
Normalized nominal stress of periodic entangled networks with $\varphi_s = 50\%$ at varying numbers of layers. The normalized nominal stress is defined as nominal stress $S$ divided by chain breakage force $f_f$. The network has a length-to-width ratio of 4:1, and $N_y$ is the vertical layer number.
\textbf{(b)} Network failure stretch ratio $\lambda_f$ of periodic entangled networks with various slidable node fraction $\varphi_s$. 
\textbf{(c)} Normalized Young’s modulus of periodic entangled networks with various slidable node fraction $\varphi_s$. The normalized Young’s modulus is defined as the network modulus divided by the modulus of spring network ($\varphi_s = 0\%$). 
\textbf{(d)} Normalized strength of periodic entangled networks with various slidable node fraction $\varphi_s$. The normalized strength is defined as the network failure strength $S_{th}$ divided by the failure strength of spring network ($\varphi_s = 0\%$).
}
  \label{fig-supp:SI_Fig12_SimulationOfPeriodNetwork}
\end{figure}

\vspace*{0pt}
\begin{figure}[H]
  \centering
  \includegraphics[trim={0cm 0cm 0cm 0cm},clip, width=1.0\textwidth]{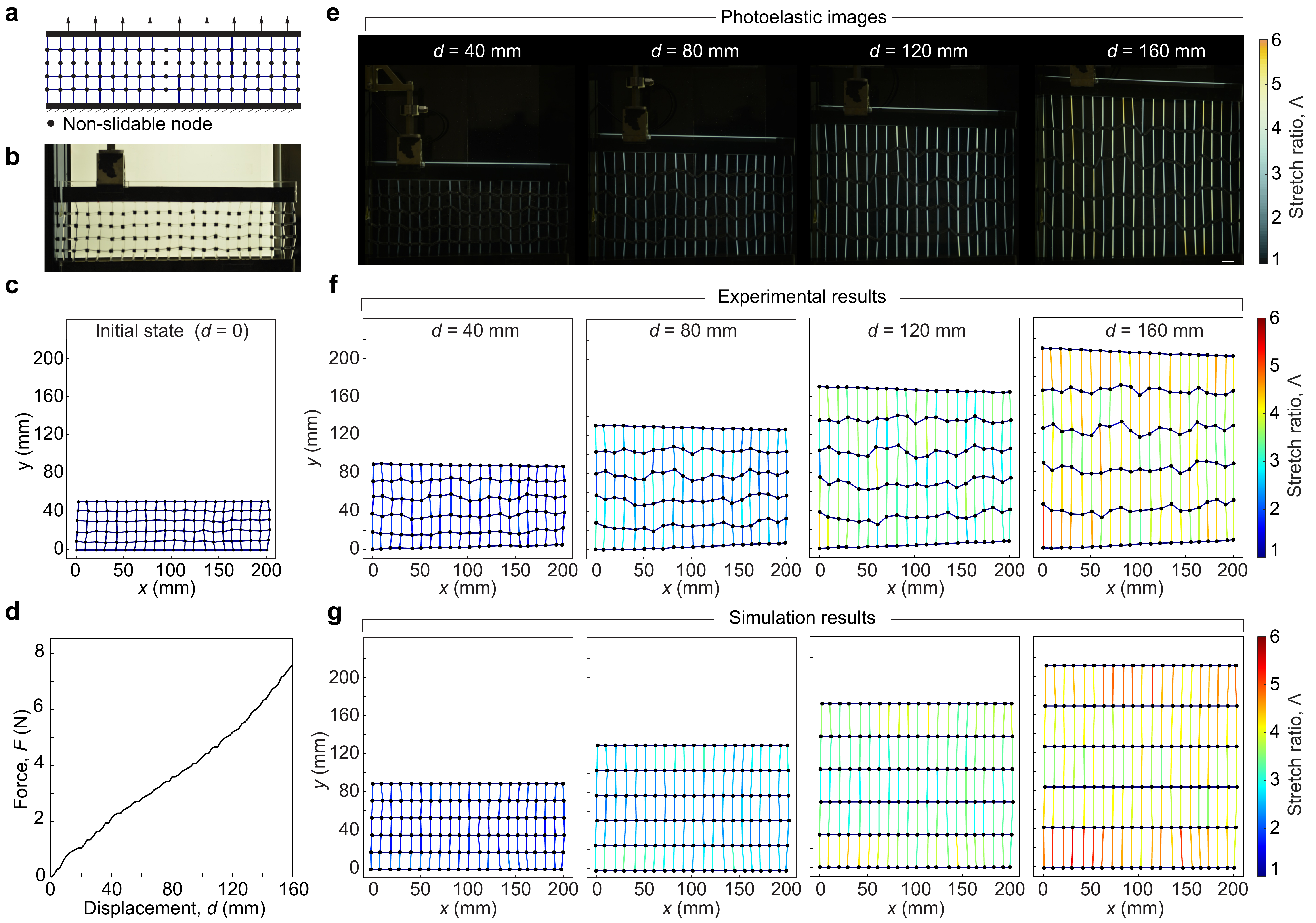}
  \caption{\textbf{Experimental and simulation results of a hydrogel fabric without slidable nodes ($\varphi_s = 0\%$) under various displacements.}
\textbf{(a)} Topology of the network.
\textbf{(b)} Image of the hydrogel fabric prepared based on the topology in (a). Scale bar: 10 mm. 
\textbf{(c)} Initial undeformed state of the hydrogel fabric network. 
\textbf{(d)} Force-displacement curve obtained from mechanical testing. 
\textbf{(e)} Photoelastic images of the hydrogel network under various displacements. Scale bar: 10 mm. 
\textbf{(f)} Stretch ratio distribution of hydrogel fibers extracted from experimental images in (e). Black nodes represent non-slidable nodes. 
\textbf{(g)} Stretch ratio distribution of hydrogel fibers extracted from simulation results of the corresponding network. Black nodes represent non-slidable nodes.
}
  \label{fig-supp:SI_Fig13_Elasticity_Experiment_Nonslidable}
\end{figure}

\vspace*{0pt}
\begin{figure}[H]
  \centering
  \includegraphics[trim={0cm 0cm 0cm 0cm},clip, width=1.0\textwidth]{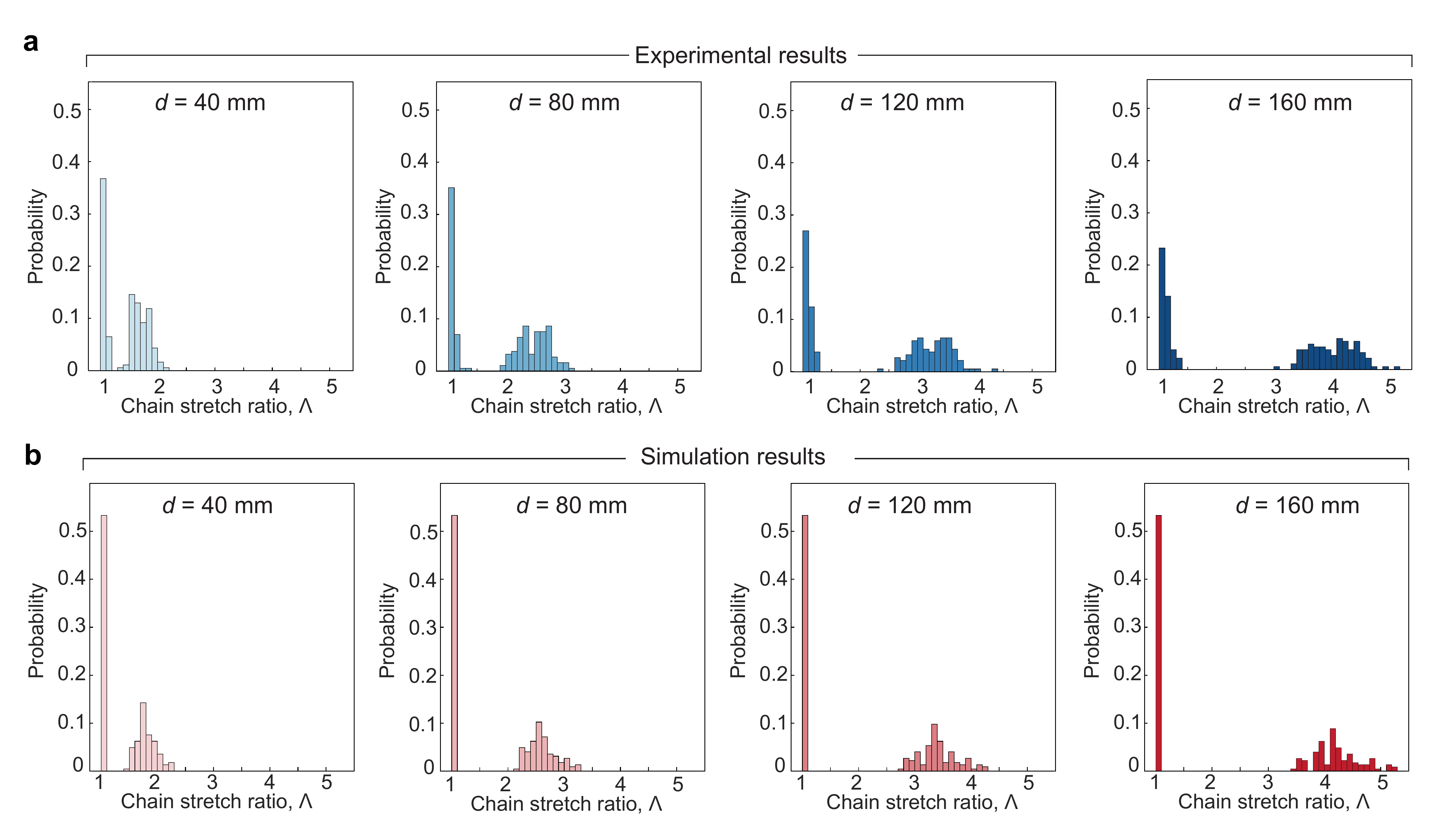}
  \caption{\textbf{Chain stretch ratios distribution of the spring network ($\varphi_s = 0\%$) } 
\textbf{(a)} Experimental results of chain stretch ratio distribution of spring network under various displacements.
\textbf{(b)} Simulation results of chain stretch ratio distribution of spring network under various displacements.
}
  \label{fig-supp:SI_Fig14_DistributionOfNonslidableNetwork}
\end{figure}

\vspace*{0pt}
\begin{figure}[H]
  \centering
  \includegraphics[trim={0cm 0cm 0cm 0cm},clip, width=1.0\textwidth]{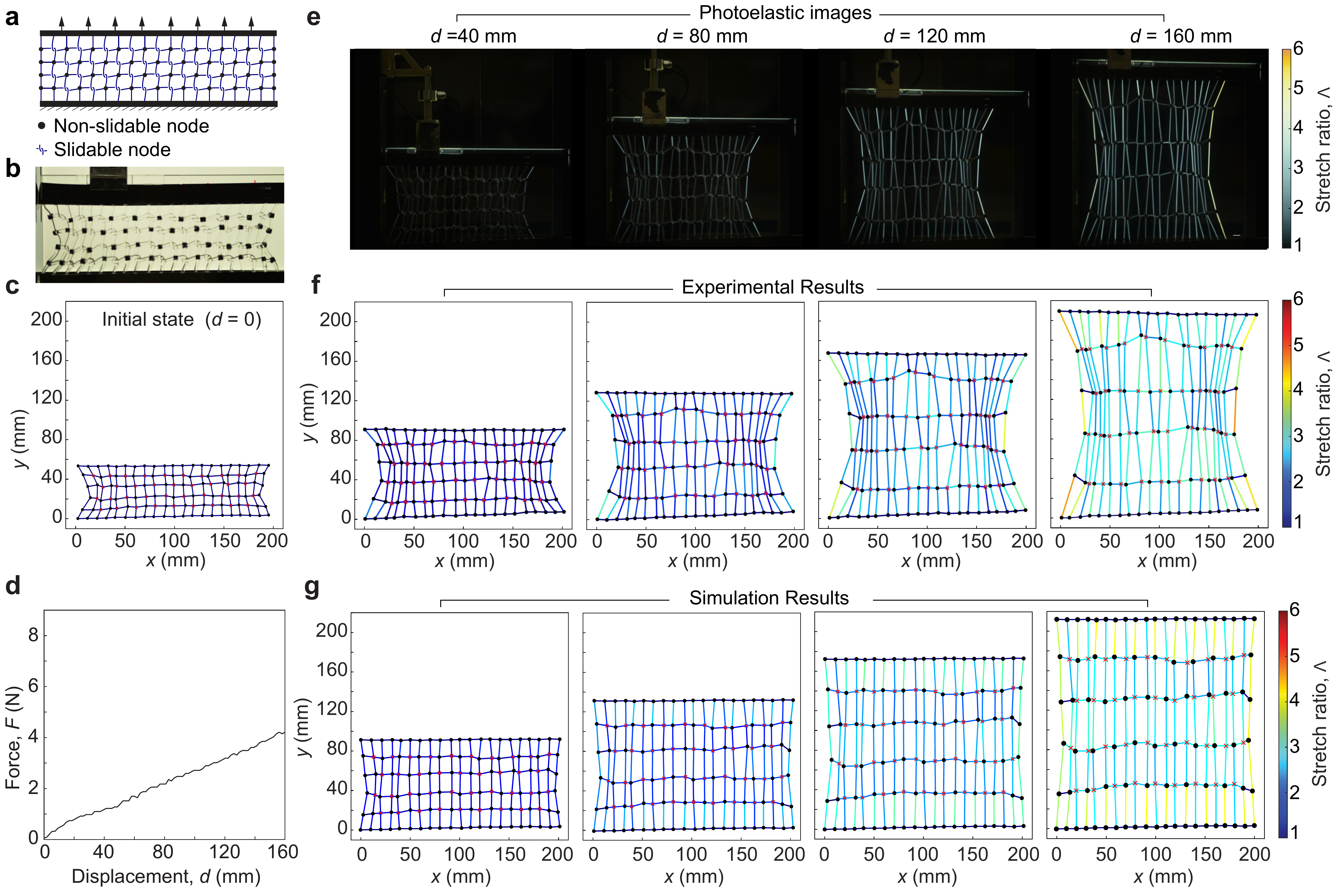}
  \caption{\textbf{Experimental and simulation results of an entangled hydrogel fabric with slidable node fraction ($\varphi_s = 50\%$) under various displacements.}
\textbf{(a)} Topology of the network.
\textbf{(b)} Image of the hydrogel fabric entangled network fabricated based on the topology in (a). Scale bar: 10 mm. 
\textbf{(c)} Initial undeformed state of the hydrogel fabric entangled network. 
\textbf{(d)} Force-displacement curve obtained from mechanical testing. 
\textbf{(e)} Photoelastic images of the hydrogel fiber network under increasing displacement. Scale bar: 10 mm. 
\textbf{(f)} Stretch ratio distribution of hydrogel fibers extracted from experimental images in (e). Black nodes denote non-slidable nodes, and red crosses denote slidable nodes. 
\textbf{(g)} Stretch ratio distribution of hydrogel fibers extracted from simulation results of the corresponding network. Black nodes denote non-slidable nodes, and red crosses denote slidable nodes.
}
  \label{fig-supp:SI_Fig15_Elasticity_Experiment_50_slidable}
\end{figure}

\vspace*{0pt}
\begin{figure}[H]
  \centering
  \includegraphics[trim={0cm 0cm 0cm 0cm},clip, width=1.0\textwidth]{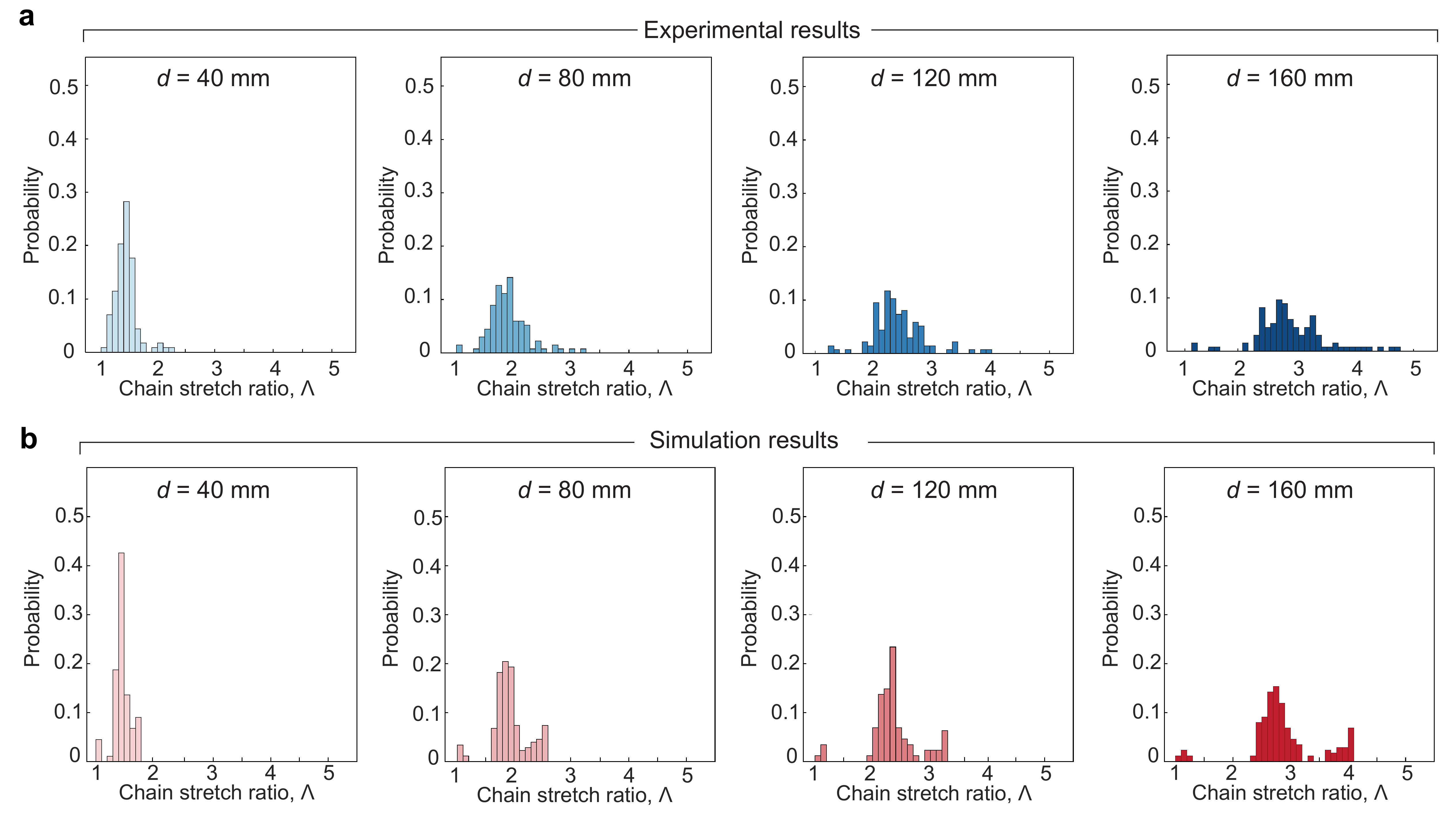}
  \caption{\textbf{Chain stretch ratios distribution of the periodic entangled network ($\varphi_s = 50\%$).}
  \textbf{(a)} Experimental results of chain stretch ratio distribution of the periodic entangled network under various displacements. 
  \textbf{(b)} Simulation results of chain stretch ratio distribution of the periodic entangled network under various displacements.
}
  \label{fig-supp:SI_Fig16_DistributionOfSlidableNetwork}
\end{figure}

\vspace*{0pt}
\begin{figure}[H]
  \centering
  \includegraphics[trim={0cm 0cm 0cm 0cm},clip, width=1.0\textwidth]{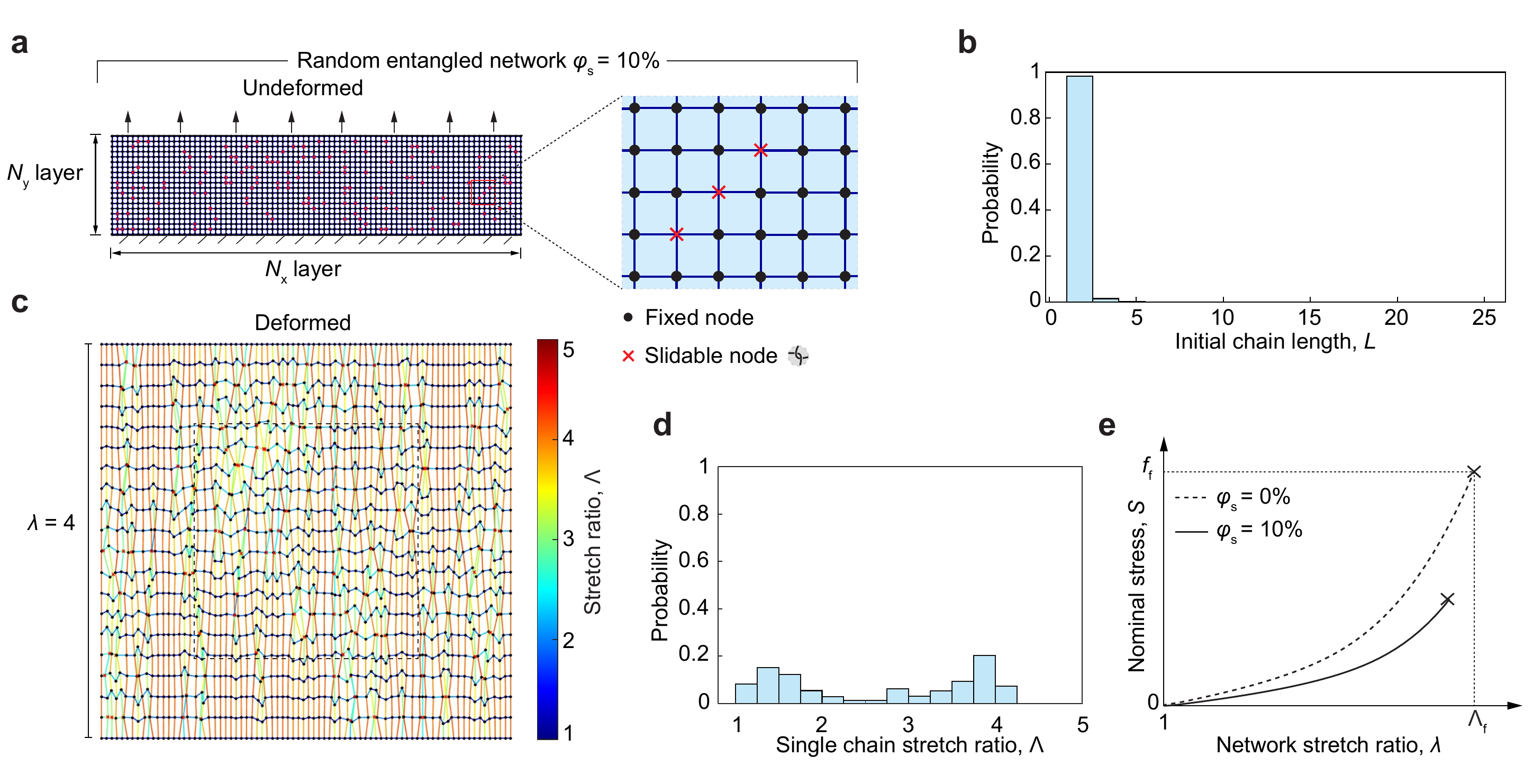}
  \caption{\textbf{Simulation of random entangled network with a slidable node fraction of $\varphi_s = 10\%$.}
\textbf{(a)} Network configuration in the undeformed state. 
\textbf{(b)} Initial chain length $L$ distribution. Most chains in the network are concentrated in the short-length range. 
\textbf{(c)} Network subjected to uniaxial tension at a stretch ratio of 4. 
\textbf{(d)} Distribution of chain stretch ratios at a network stretch ratio of 4. 
\textbf{(e)} Nominal stress $S$ versus network stretch ratio $\lambda$. The nominal stress $S$ is defined as the total network force divided by the horizontal layer number $N_x$. $\Lambda_f$ is the rupture stretch ratio of a single chain, and $f_f$ is the corresponding breakage force.
}
  \label{fig-supp:SI_Fig17_10_Random_Network}
\end{figure}

\vspace*{0pt}
\begin{figure}[H]
  \centering
  \includegraphics[trim={0cm 0cm 0cm 0cm},clip, width=1.0\textwidth]{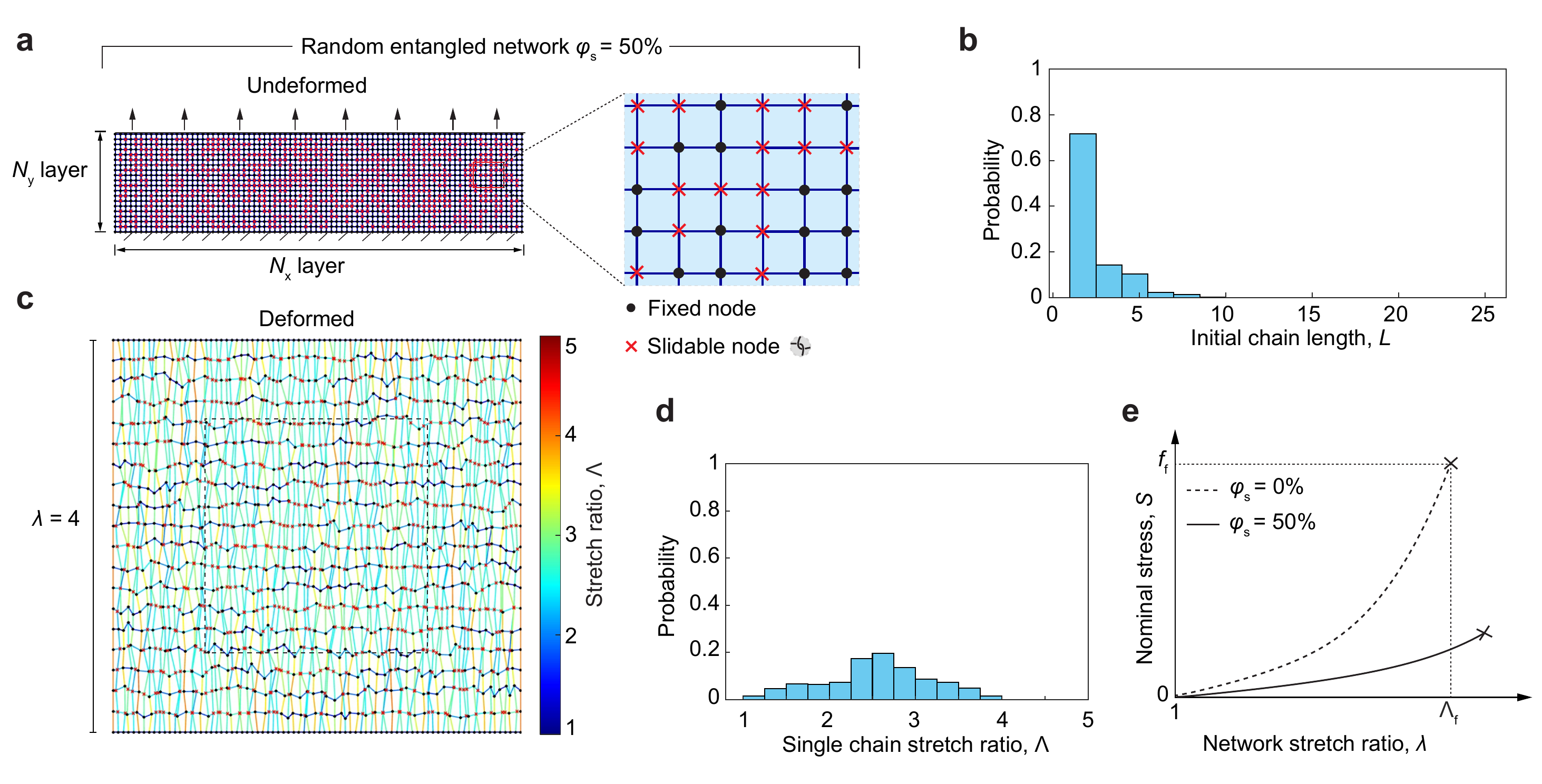}
  \caption{\textbf{Simulation of random entangled network with a slidable node fraction of $\varphi_s = 50\%$.}
\textbf{(a)} Network configuration in the undeformed state. 
\textbf{(b)} Initial chain length $L$ distribution. 
\textbf{(c)} Network subjected to uniaxial tension at a stretch ratio of 4. 
\textbf{(d)} Distribution of chain stretch ratios at a network stretch ratio of 4. 
\textbf{(e)} Nominal stress $S$ versus network stretch ratio $\lambda$. The nominal stress $S$ is defined as the total network force divided by the layer number $N_x$. $\Lambda_f$ is the rupture stretch ratio of a single chain, and $f_f$ is the corresponding breakage force.
}
  \label{fig-supp:SI_Fig18_50_Random_Network}
\end{figure}

\vspace*{0pt}
\begin{figure}[H]
  \centering
  \includegraphics[trim={0cm 0cm 0cm 0cm},clip, width=1.0\textwidth]{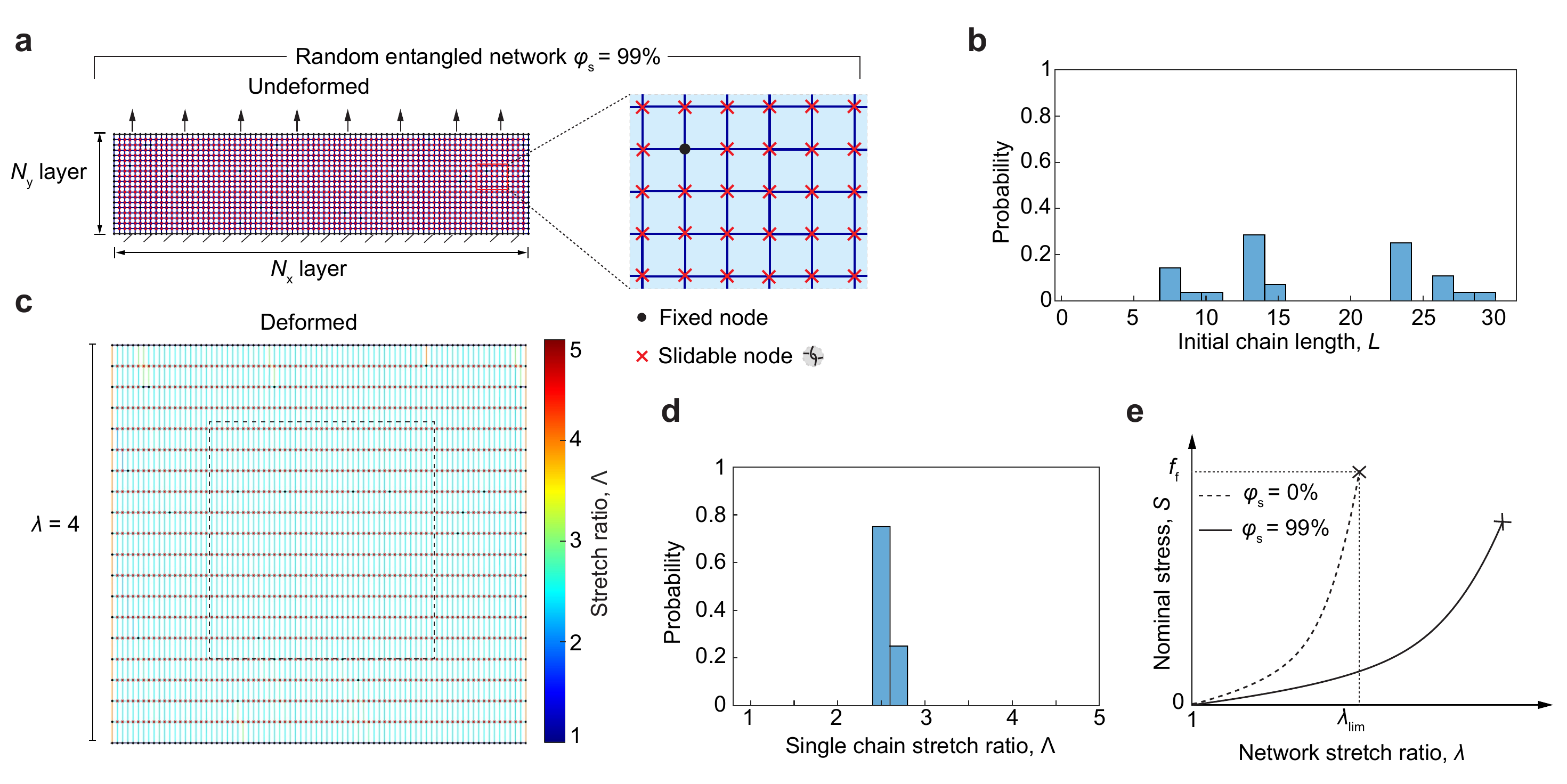}
\caption{\textbf{Simulation of random entangled network with a slidable node fraction of $\varphi_s = 99\%$.}
\textbf{(a)} Network configuration in the undeformed state. 
\textbf{(b)} Initial chain length $L$ distribution. Most chains in this network are concentrated in the long-length range. 
\textbf{(c)} Network subjected to uniaxial tension at a stretch ratio of 4. 
\textbf{(d)} Distribution of chain stretch ratios at a network stretch ratio of 4. Chains in this network undergo uniform deformation. 
\textbf{(e)} Nominal stress $S$ versus network stretch ratio $\lambda$. The nominal stress $S$ is defined as the total network force divided by the layer number $N_x$. $\Lambda_f$ is the rupture stretch ratio of a single chain, and $f_f$ is the corresponding breakage force.
}
  \label{fig-supp:SI_Fig19_99_Random_Network}
\end{figure}

\vspace*{0pt}
\begin{figure}[H]
  \centering
  \includegraphics[trim={0cm 0cm 0cm 0cm},clip, width=1.0\textwidth]{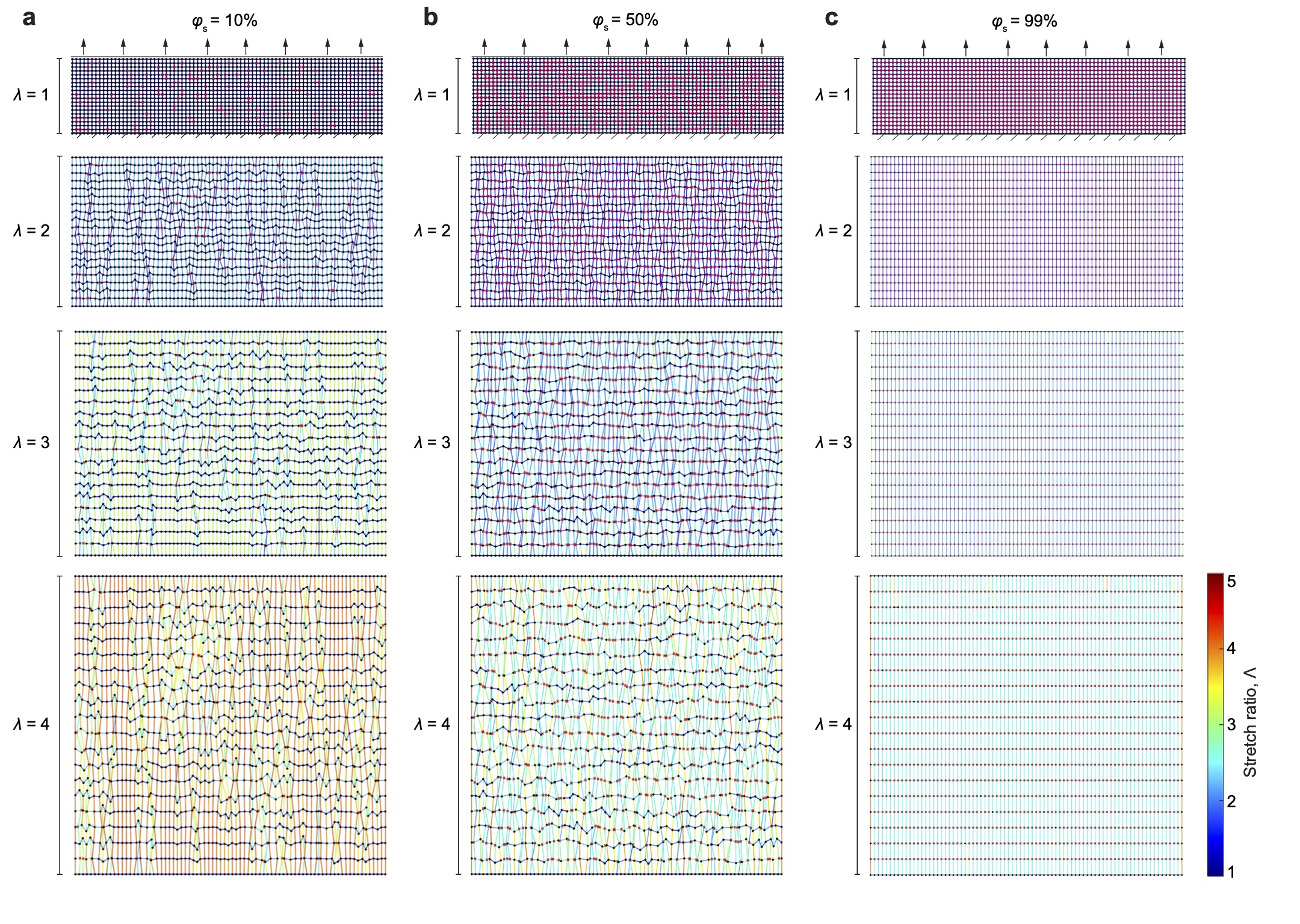}
  \caption{\textbf{Random entangled networks with different entangled node ratios subjected to various stretch ratio.} The network stretch ratios are $\lambda=2$, 3, 4.
\textbf{(a)} $\varphi_s = 10\%$. 
\textbf{(b)} $\varphi_s = 50\%$. 
\textbf{(c)} $\varphi_s = 99\%$.
}
  \label{fig-supp:SI_Fig20_10_50_99_networkstretching}
\end{figure}

\vspace*{0pt}
\begin{figure}[H]
  \centering
  \includegraphics[trim={0cm 0cm 0cm 0cm},clip, width=1.0\textwidth]{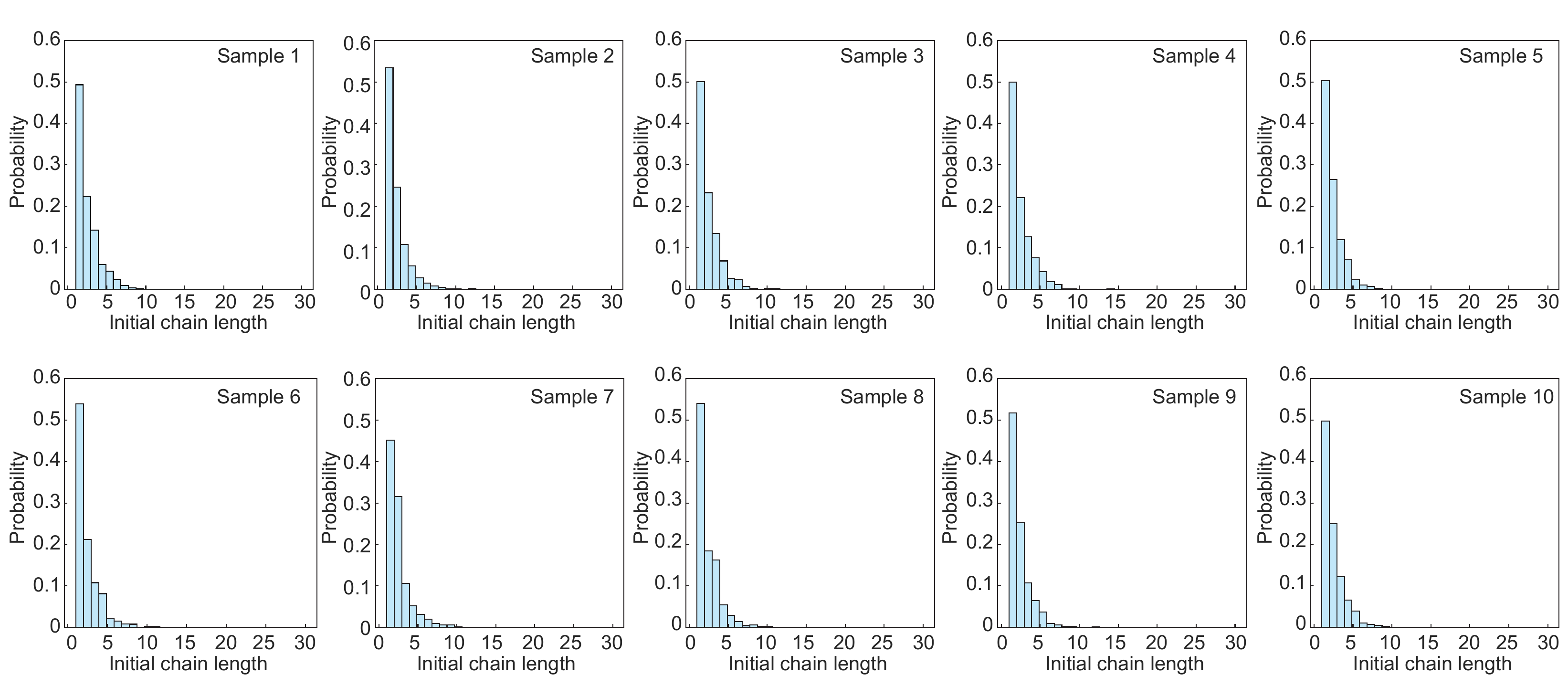}
  \caption{\textbf{Initial chain length distributions of random entangled network with a slidable node raction of $\varphi_s = 50\%$.} 10 independent random samples are listed to show the distribution of the initial chain length.
}
  \label{fig-supp:SI_Fig21_50_ChainlengthDistribution}
\end{figure}

\vspace*{0pt}
\begin{figure}[H]
  \centering
  \includegraphics[trim={0cm 0cm 0cm 0cm},clip, width=1.0\textwidth]{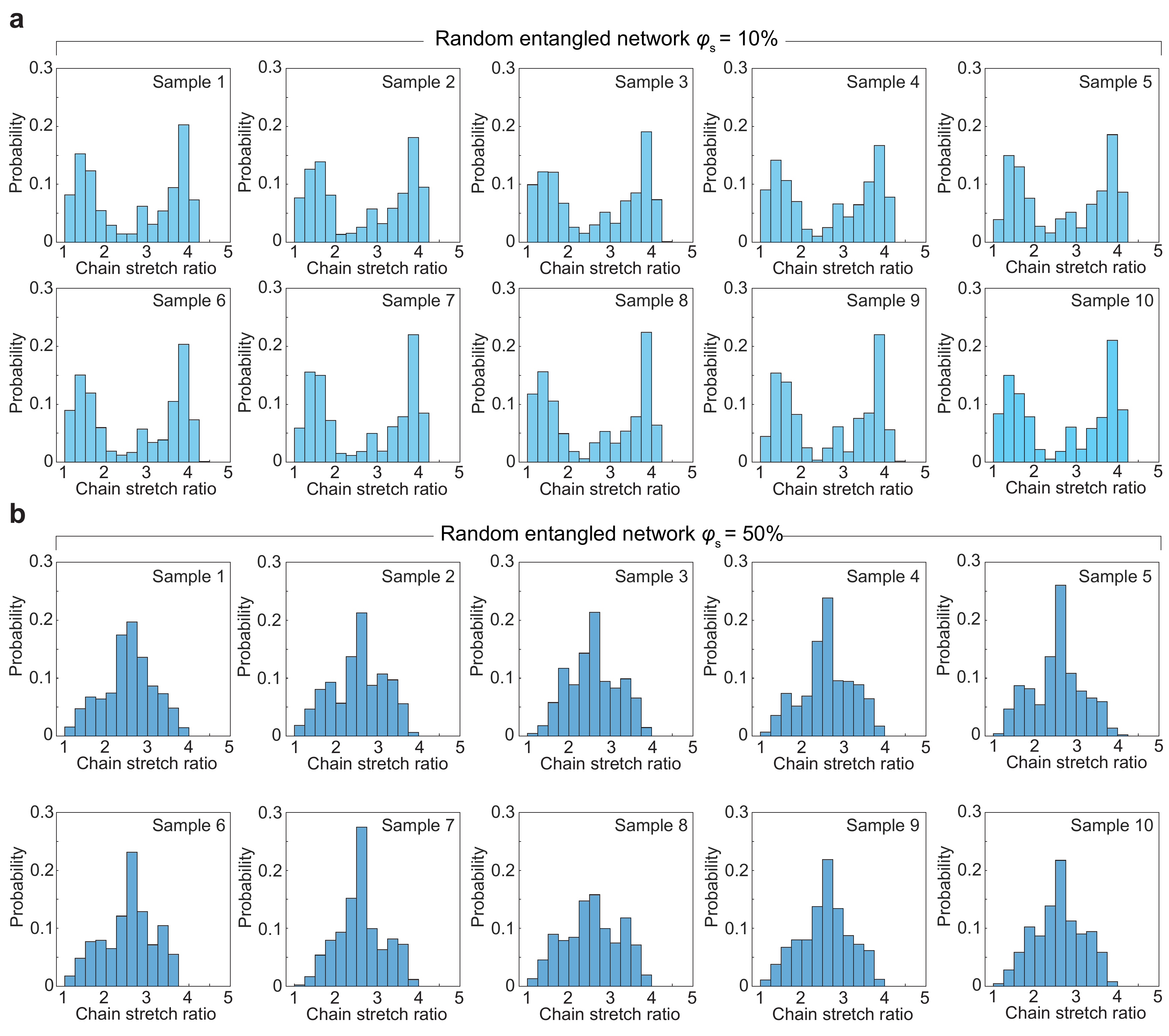}
  \caption{\textbf{Chain stretch ratio of random entangled networks at deformed state with network stretch ratio $\lambda =4$.} For each network with a different slidable node fraction, 10 independent random samples are shown to illustrate the distribution of chain stretch ratios.
\textbf{(a)} $\varphi_s = 10\%$. 
\textbf{(b)} $\varphi_s = 50\%$. 
}
  \label{fig-supp:SI_Fig22_10_50_stretchratio}
\end{figure}

\vspace*{0pt}
\begin{figure}[H]
  \centering
  \includegraphics[trim={0cm 0cm 0cm 0cm},clip, width=1.0\textwidth]{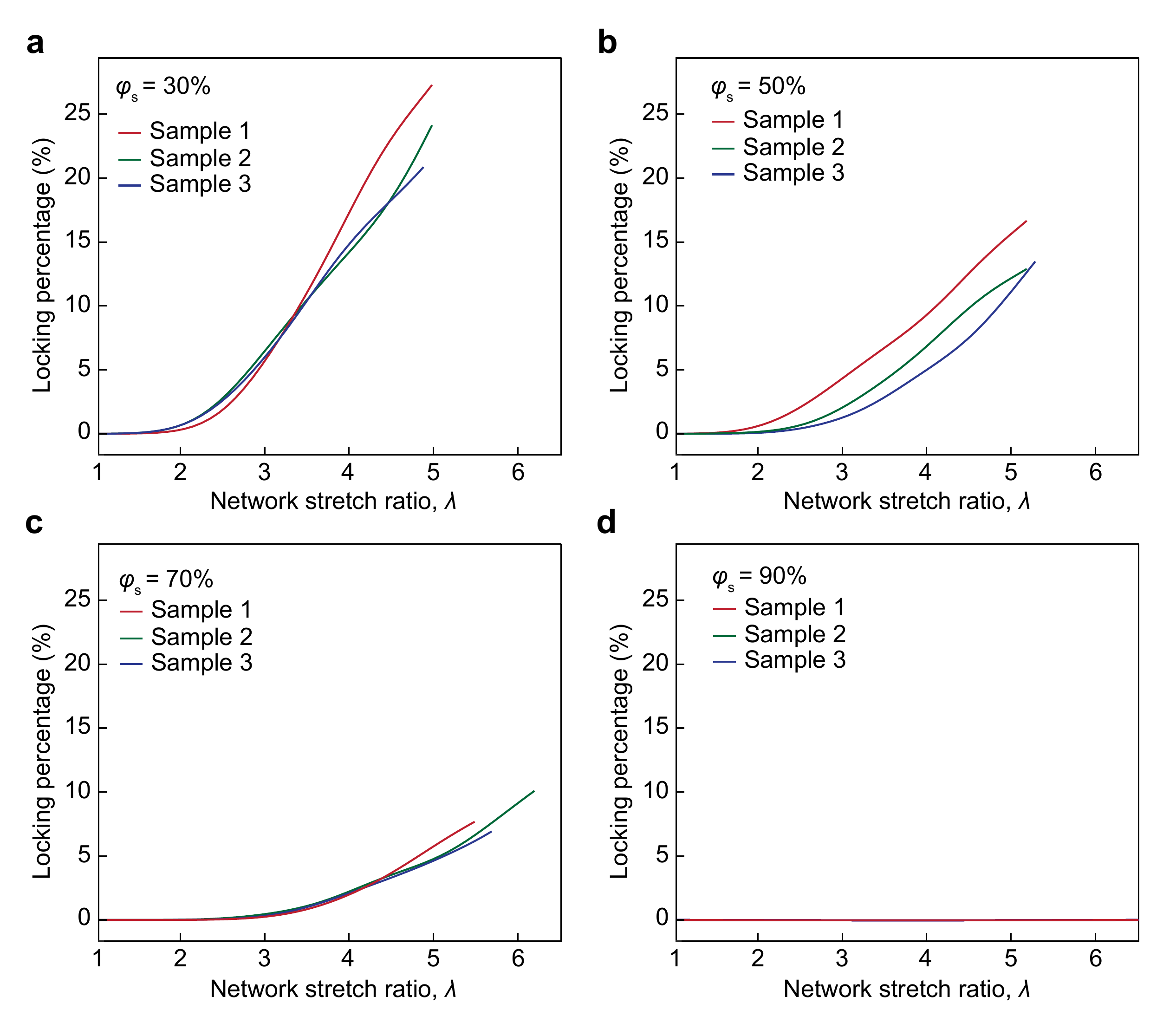}
  \caption{\textbf{Curves of locking percentage of the slidable nodes versus network stretch ratio.} Each figure shows 3 different samples. The locking percentage of the slidable nodes is computed as the number of locked slidable nodes divided by the number of all slidable nodes.
\textbf{(a)} Random entangled network with a slidable node fraction of $\varphi_s = 30\%$. 
\textbf{(b)} Random entangled network with a slidable node fraction of $\varphi_s = 50\%$. 
\textbf{(c)} Random entangled network with a slidable node fraction of $\varphi_s = 70\%$. 
\textbf{(d)} Random entangled network with a slidable node fraction of $\varphi_s = 90\%$. 
}
  \label{fig-supp:SI_Fig23_LockingEffectSample}
\end{figure}

\vspace*{0pt}
\begin{figure}[H]
  \centering
  \includegraphics[trim={0cm 0cm 0cm 0cm},clip, width=0.72\textwidth]{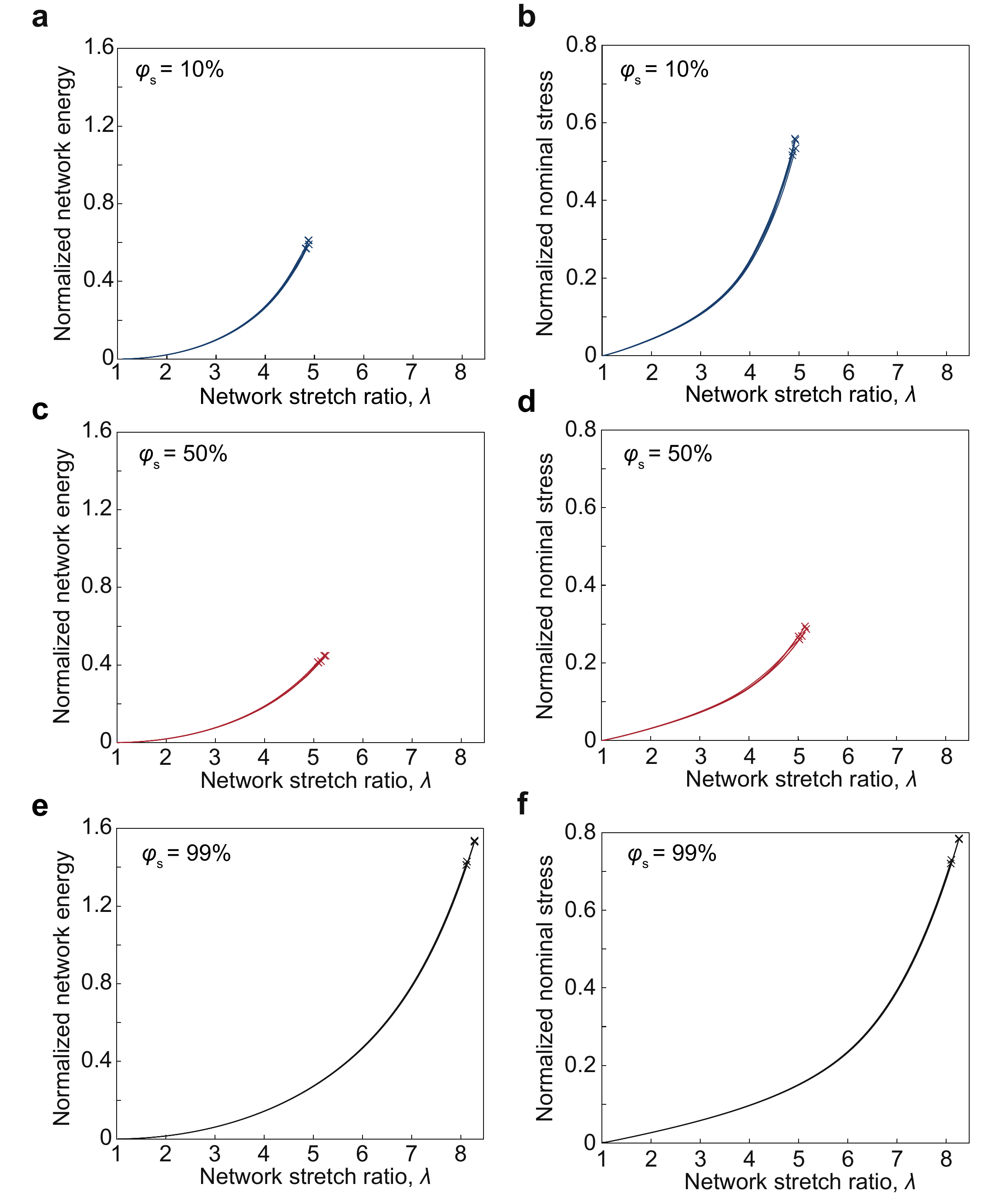}
  \caption{\textbf{Simulation results of random entangled networks with various slidable node fractions $\varphi_s$.}
\textbf{(a, c, e)} Energy-stretch curves of random networks with $\varphi_s = 10\%$, $50\%$, and $99\%$. The network energy $U$ is normalized by the rupture energy of the spring network ($\varphi_s=0\%$). 
\textbf{(b, d, f)} Energy-stretch curves of random networks with $\varphi_s = 10\%$, $50\%$, and $99\%$.
 The nominal stress of network $S$ is normalized by the breakage force of an individual chain $f_f$.
For each $\varphi_s$, we take 5 random samples. Despite the variability introduced by randomness, the energy-stretch curve and force-stretch curve of random networks are surprisingly consistent across different samples.
}
  \label{fig-supp:SI_Fig24_Simulation_ForceEnergyCurve}
\end{figure}

\vspace*{0pt}
\begin{figure}[H]
  \centering
  \includegraphics[trim={0cm 0cm 0cm 0cm},clip, width=1.0\textwidth]{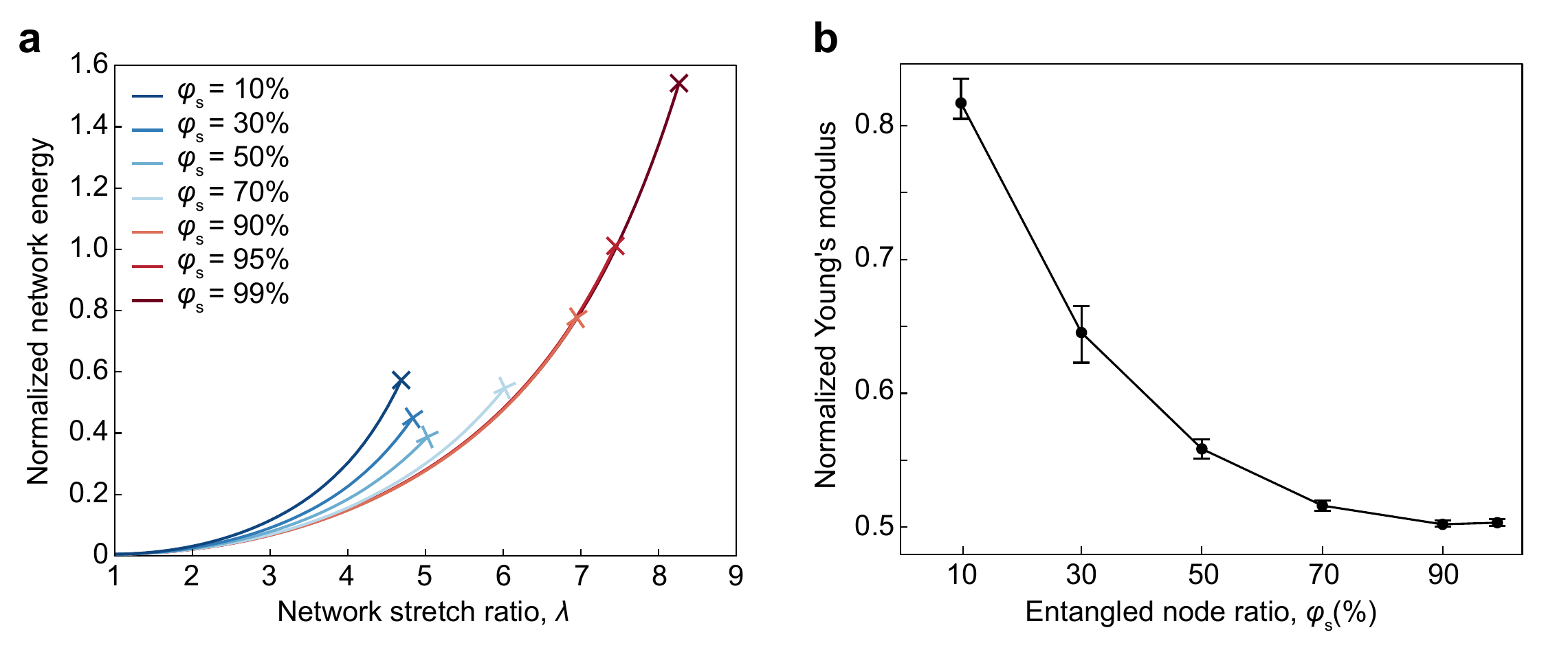}
  \caption{\textbf{Energy and modulus of random entangled networks with different slidable node fractions $\varphi_s$.} 
\textbf{(a)} Energy-stretch curves of random network with various slidable node fractions $\varphi_s$. The network energy $U$ is normalized by the rupture energy of an individual chain $U_{chain}$.
\textbf{(b)} Young's modulus of random entangled network decreases with the increase of slidable node fraction $\varphi_s$ in random entangled network. The Young's modulus is normalized by the modulus of spring network ($\varphi_s = 0\%$). }
  \label{fig-supp:SI_Fig25_Energy_ModulusOfRandomNetwork}
\end{figure}

\vspace*{0pt}
\begin{figure}[H]
  \centering
  \includegraphics[trim={0cm 0cm 0cm 0cm},clip, width=0.8\textwidth]{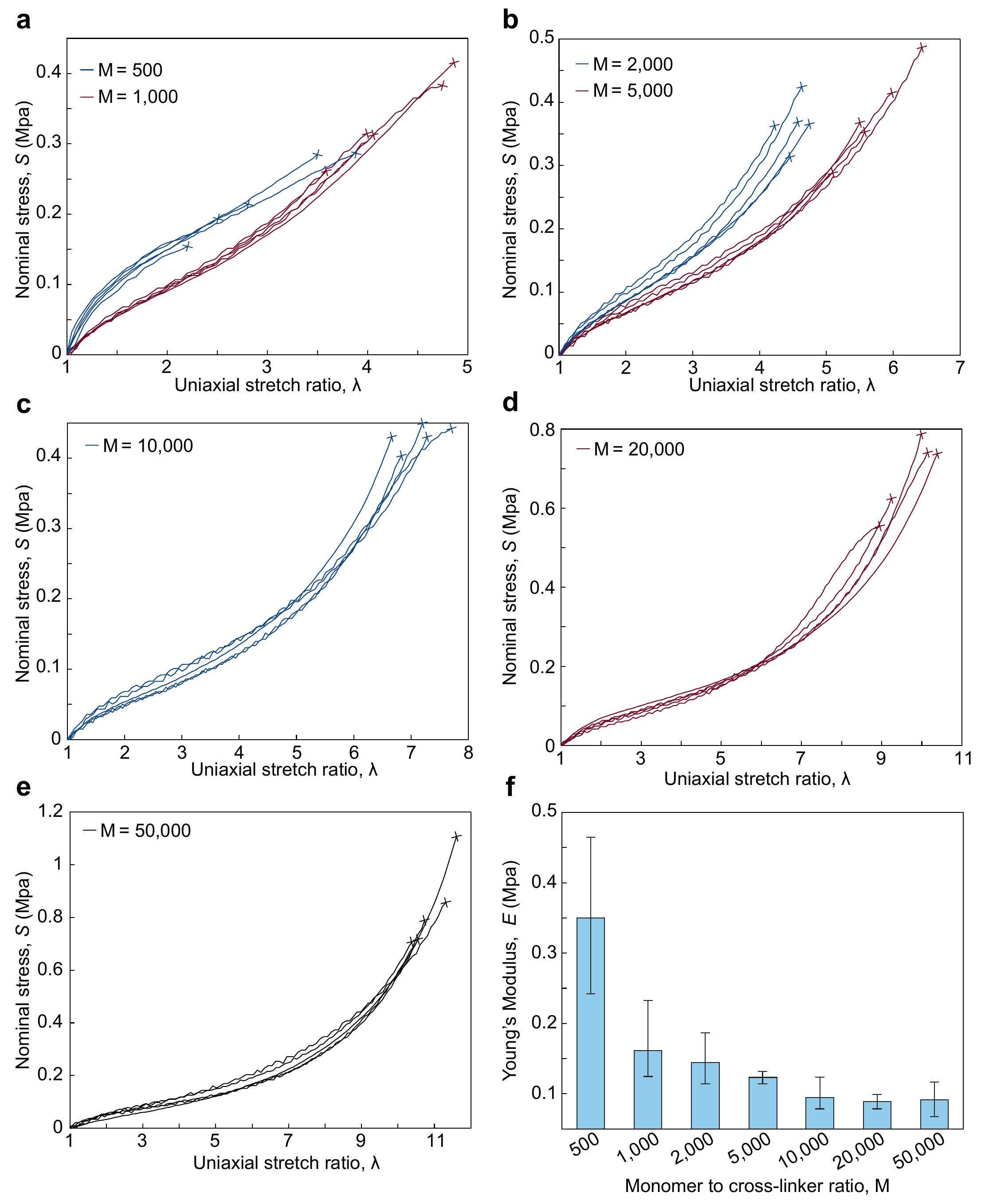}
  \caption{\textbf{Mechanical characterizations of entangled hydrogels with varying monomer-to-crosslinker molar ratios $M$.}
\textbf{(a)} Stress-stretch curves of entangled hydrogel with monomer-to-crosslinker molar ratios of 500 and 1,000 (5 samples each). 
\textbf{(b)} Ratios of 2,000 and 5,000 (5 samples each). 
\textbf{(c)} Ratio of 10,000 (5 samples). 
\textbf{(d)} Ratio of 20,000 (5 samples). 
\textbf{(e)} Ratio of 50,000 (5 samples). 
\textbf{(f)} Young’s modulus of entangled hydrogel with different monomer-to-crosslinker ratios M. The modulus $E$ decreases and then reaches a plateau as the monomer-to-crosslinker ratio M increases.
}
  \label{fig-supp:SI_Fig26_HydrogelExperimentData}
\end{figure}

\vspace*{0pt}
\begin{figure}[H]
  \centering
  \includegraphics[trim={0cm 0cm 0cm 0cm},clip, width=1.0\textwidth]{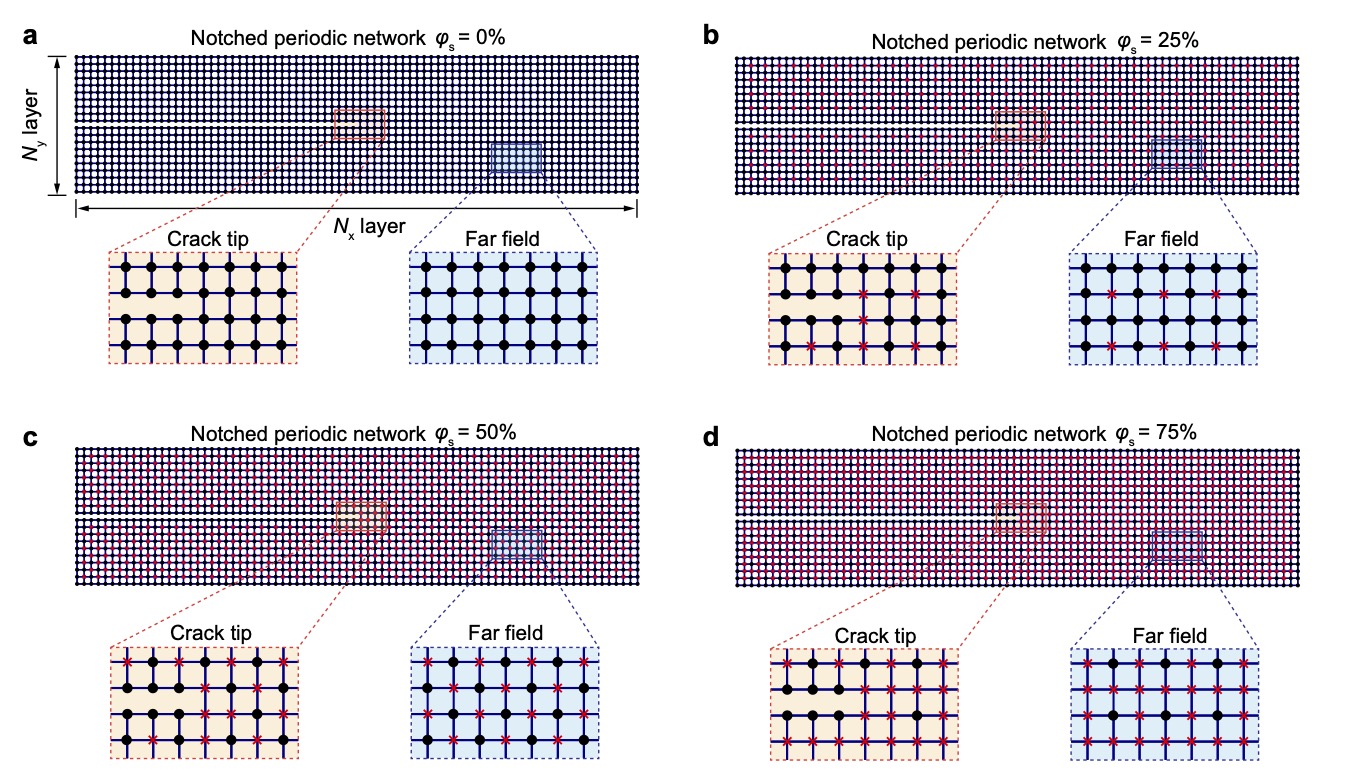}
  \caption{\textbf{Configurations of notched periodic entangled networks with different slidable node fractions $\varphi_s$.} 
\textbf{(a)} $\varphi_s = 0\%$. 
\textbf{(b)} $\varphi_s = 25\%$. 
\textbf{(c)} $\varphi_s = 50\%$. 
\textbf{(d)} $\varphi_s = 75\%$.
}
  \label{fig-supp:SI_Fig27_CrackNetworkConfiguration}
\end{figure}

\vspace*{0pt}
\begin{figure}[H]
  \centering
  \includegraphics[trim={0cm 0cm 0cm 0cm},clip, width=1.0\textwidth]{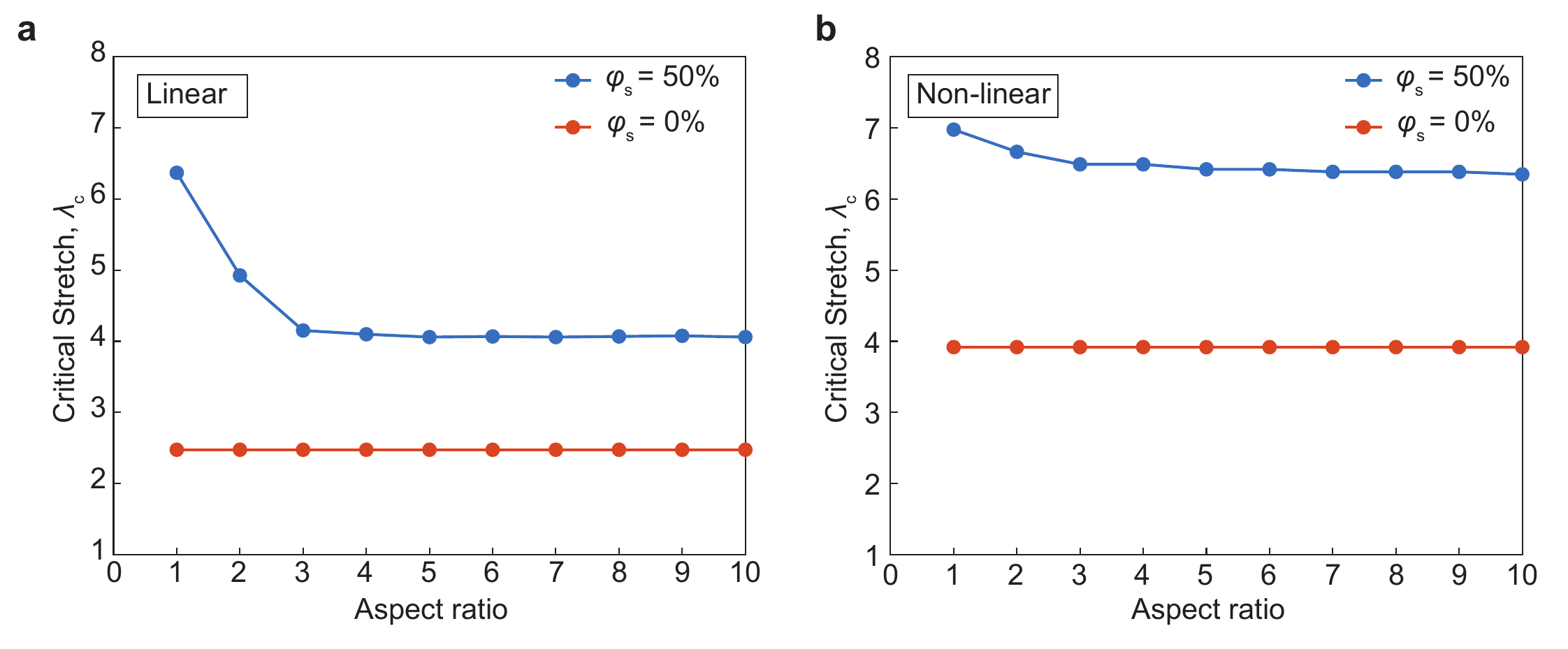}
  \caption{\textbf{Convergence behavior of the critical stretch $\lambda_c$ of notched periodic entangled network with different aspect ratios.} The aspect ratio is defined as $N_x/N_y$, where $N_x$ and $N_y$ denote the number of horizontal layers and the number of vertical layers, respectively. $N_y=20$ is fixed.
  \textbf{(a)} Periodic entangled networks with linear chains. For spring networks ($\varphi_s = 0\%$), $\lambda_c$ converges at an aspect ratio of 1. For entangled networks ($\varphi_s=50\%$), $\lambda_c$ converges at an aspect ratio of 3.
    \textbf{(b)} Periodic entangled networks with nonlinear chains. For spring networks ($\varphi_s = 0\%$), $\lambda_c$ converges at an aspect ratio of 1. For entangled networks ($\varphi_s=50\%$), $\lambda_c$ converges at an aspect ratio of 4. This indicates that an aspect ratio of $N_x/N_y = 4$ provides reliable results for fracture simulations.}
  \label{fig-supp:SI_Fig65_AspectRatio}
\end{figure}

\vspace*{0pt}
\begin{figure}[H]
  \centering
  \includegraphics[trim={0cm 0cm 0cm 0cm},clip, width=1.0\textwidth]{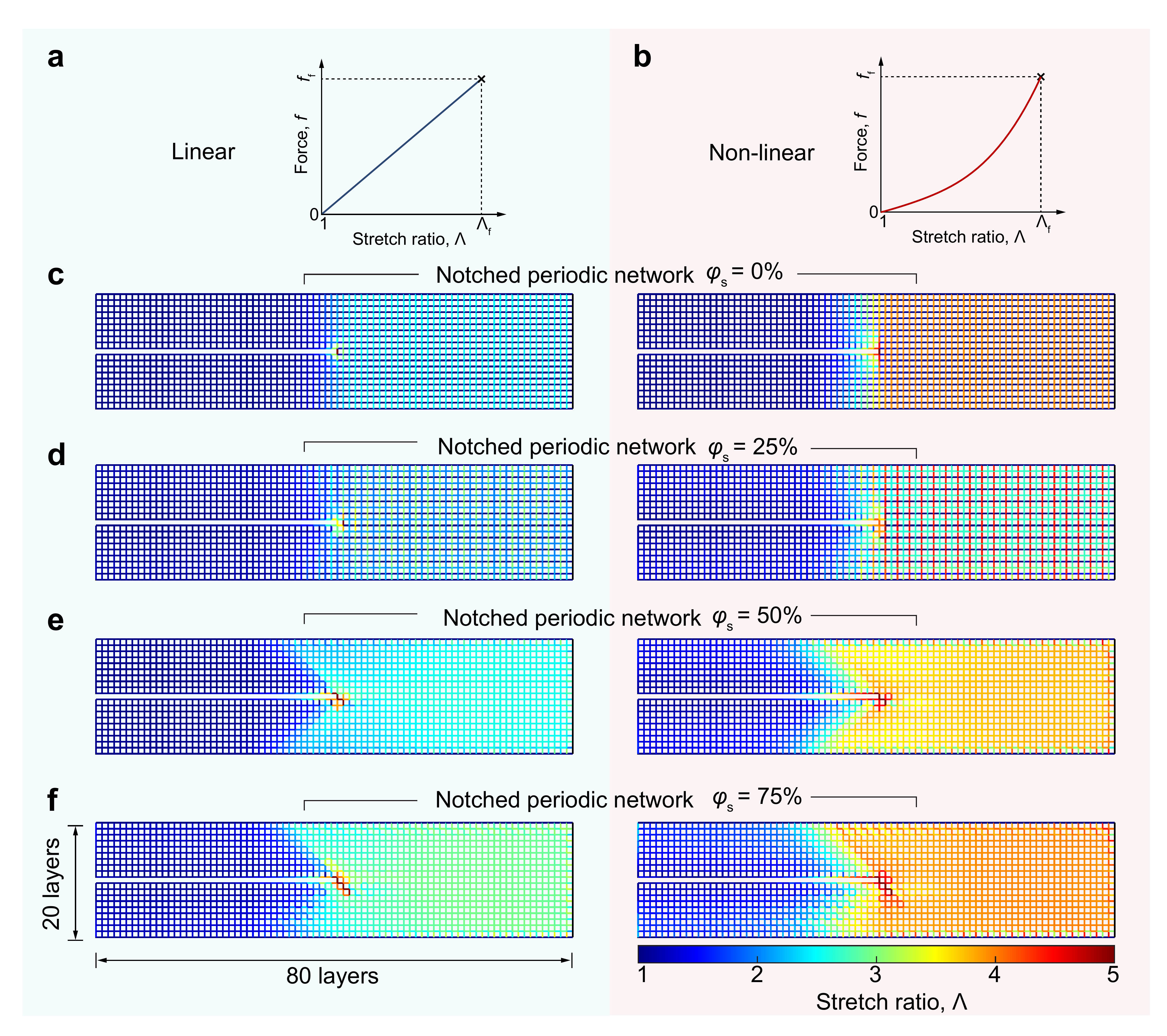}
  \caption{\textbf{Chain stretch ratio distribution of notched periodic entangled networks at crack initialization, shown in the undeformed state.}
\textbf{(a)} Force-stretch relationship of a linear chain. 
\textbf{(b)} Force-stretch relationship of a non-linear chain. 
\textbf{(c)} Periodic entangled network with a slidable node fraction $\varphi_s = 0\%$. 
\textbf{(d)} Periodic entangled network with a slidable node fraction $\varphi_s = 25\%$. 
\textbf{(e)} Periodic entangled network with a slidable node fraction $\varphi_s = 50\%$. 
\textbf{(f)} Periodic entangled network with a slidable node fraction $\varphi_s = 75\%$. The networks consist of 80 horizontal layers and 20 vertical layers.
}
  \label{fig-supp:SI_Fig28_Undeformed_20_80}
\end{figure}

\vspace*{0pt}
\begin{figure}[H]
  \centering
  \includegraphics[trim={0cm 0cm 0cm 0cm},clip, width=1.0\textwidth]{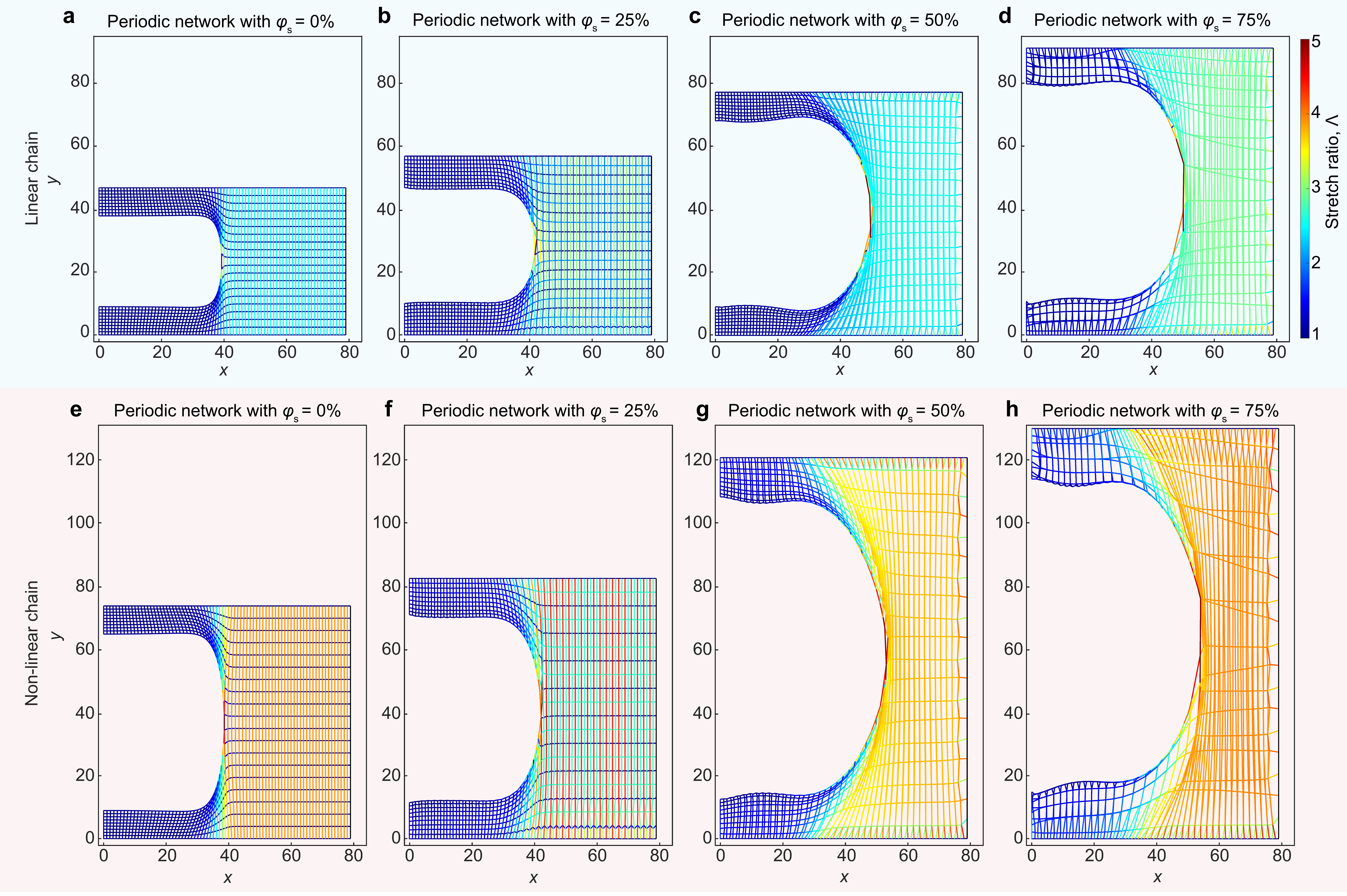}
  \caption{\textbf{Chain stretch ratio distribution of notched periodic entangled networks at crack initialization, shown in the deformed state.}
\textbf{(a-d)} The periodic entangled networks are composed of linear chains with slidable node fractions $\phi_s = 0\%, 25\%, 50\%, 75\%$, respectively. 
\textbf{(e-h)} The periodic entangled networks are composed of non-linear chains with slidable node fractions $\phi_s = 0\%, 25\%, 50\%, 75\%$, respectively. The networks consist of 80 horizontal layers and 20 vertical layers.
}
  \label{fig-supp:SI_Fig29_Deformed_20_80}
\end{figure}

\vspace*{0pt}
\begin{figure}[H]
  \centering
  \includegraphics[trim={0cm 0cm 0cm 0cm},clip, width=1.0\textwidth]{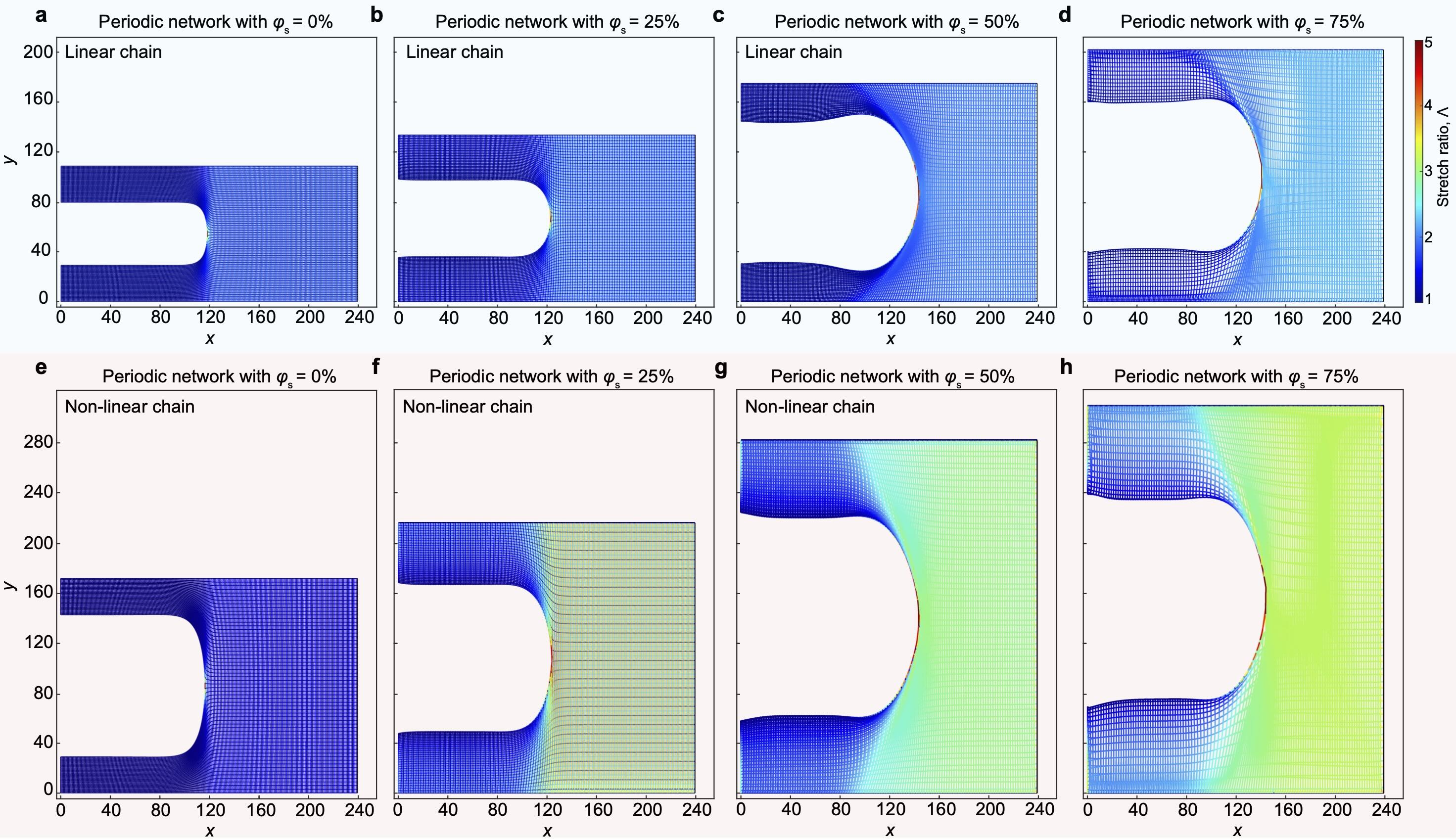}
  \caption{\textbf{Chain stretch ratio distribution of notched periodic entangled networks at crack initialization, shown in the deformed state.}
\textbf{(a–d)} The periodic entangled networks are composed of linear chains with slidable node fractions $\phi_s = 0\%, 25\%, 50\%, 75\%$, respectively. 
\textbf{(e–h)} The periodic entangled networks are composed of non-linear chains with slidable node fractions $\phi_s = 0\%, 25\%, 50\%, 75\%$, respectively.
The networks consist of 240 horizontal layers and 60 vertical layers.
}
  \label{fig-supp:SI_Fig30_Deformed_60_240}
\end{figure}

\vspace*{0pt}
\begin{figure}[H]
  \centering
  \includegraphics[trim={0cm 0cm 0cm 0cm},clip, width=1.0\textwidth]{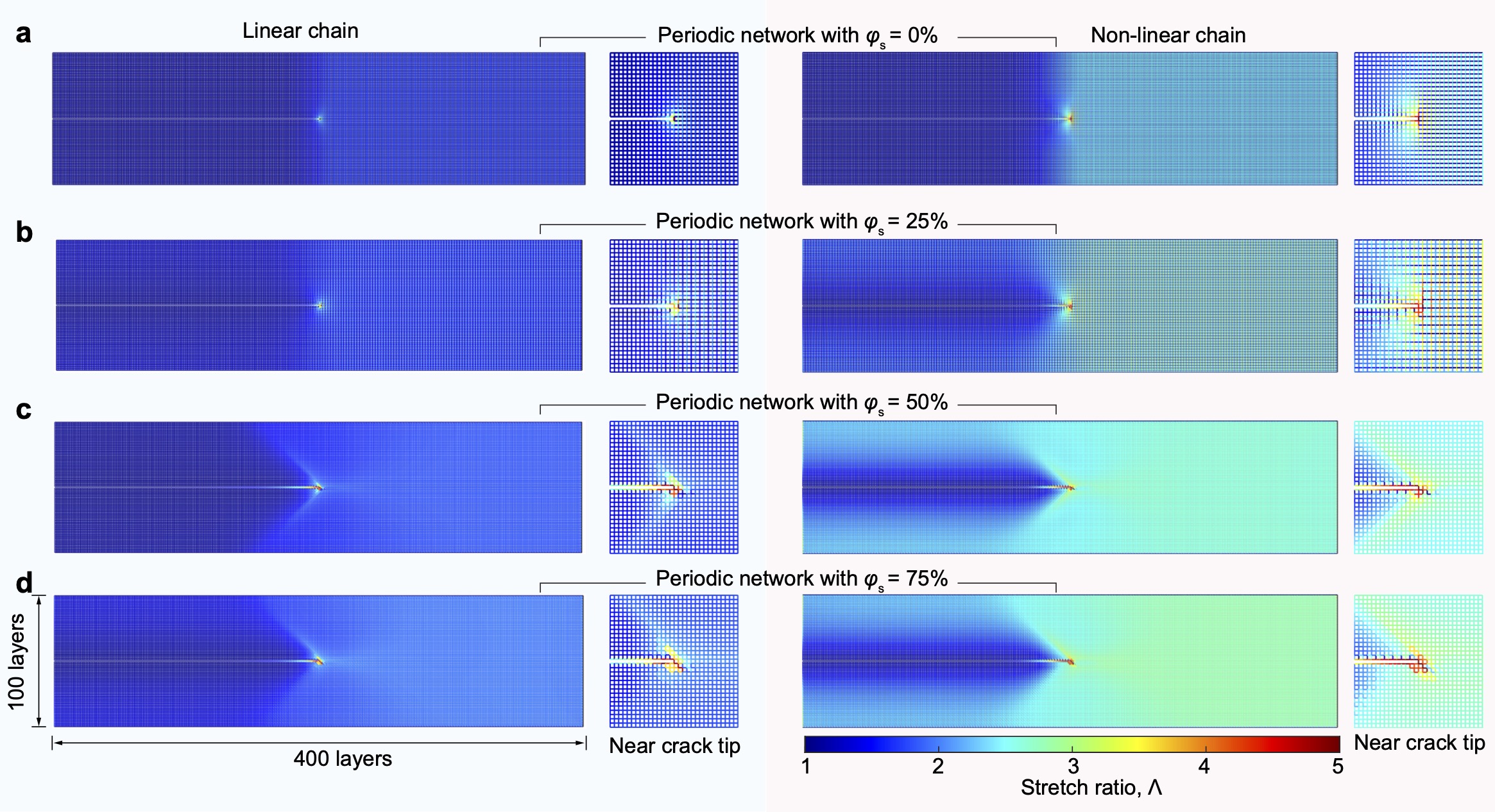}
  \caption{\textbf{Chain stretch ratio distribution of notched periodic entangled networks at crack initialization, shown in the undeformed state.} Left: linear chains; right: nonlinear chains.
\textbf{(a)} Periodic entangled network with a slidable node fraction $\varphi_s = 0\%$. 
\textbf{(b)} Periodic entangled network with a slidable node fraction $\varphi_s = 25\%$. 
\textbf{(c)} Periodic entangled network with a slidable node fraction $\varphi_s = 50\%$. 
\textbf{(d)} Periodic entangled network with a slidable node fraction $\varphi_s = 75\%$.
The networks consist of 400 horizontal layers and 100 vertical layers.
}
  \label{fig-supp:SI_Fig31_Undeformed_100_400}
\end{figure}

\vspace*{0pt}
\begin{figure}[H]
  \centering
  \includegraphics[trim={0cm 0cm 0cm 0cm},clip, width=1.0\textwidth]{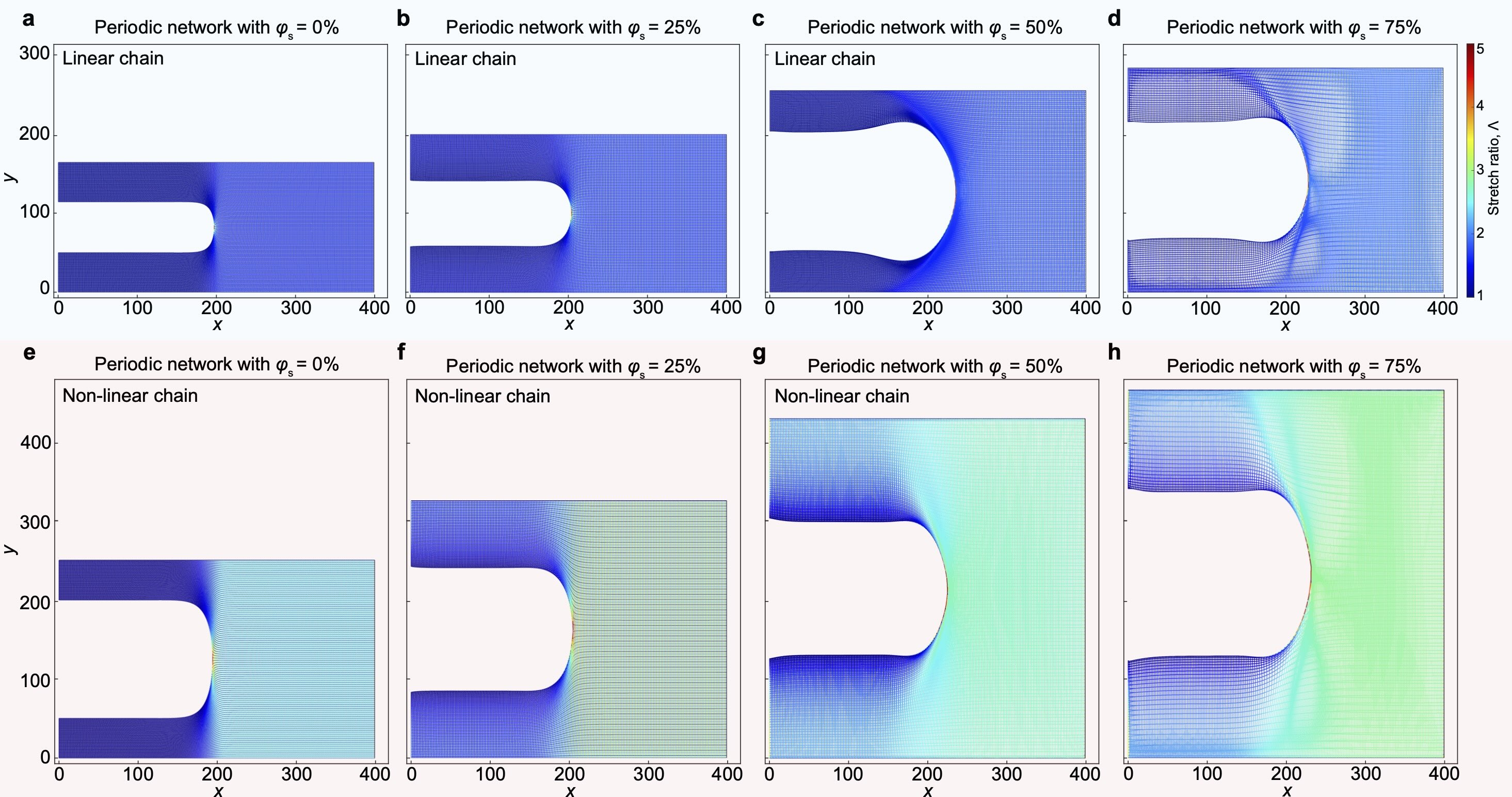}
  \caption{\textbf{Chain stretch ratio distribution of notched periodic entangled networks at crack initialization, shown in the deformed state.}
\textbf{(a-d)} The periodic entangled networks are composed of linear chains with slidable node fractions $\phi_s = 0\%, 25\%, 50\%, 75\%$, respectively. 
\textbf{(e-h)} The periodic entangled networks are composed of non-linear chains with slidable node fractions $\phi_s = 0\%, 25\%, 50\%, 75\%$, respectively.
The networks consist of 400 horizontal layers and 100 vertical layers.
}
  \label{fig-supp:SI_Fig32_Deformed_100_400}
\end{figure}

\vspace*{0pt}
\begin{figure}[H]
  \centering
  \includegraphics[trim={0cm 0cm 0cm 0cm},clip, width=0.6\textwidth]{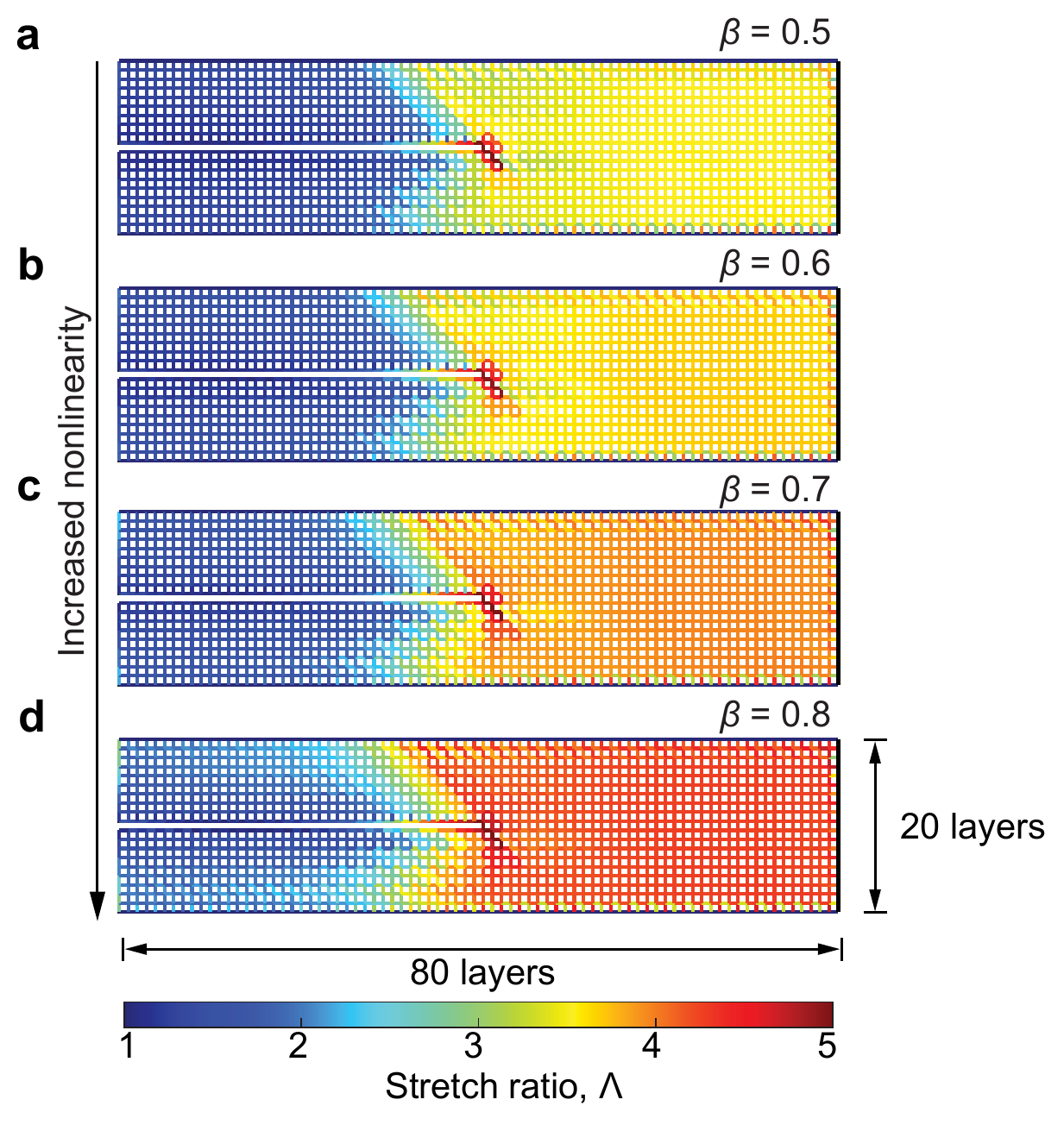}
  \caption{\textbf{Chain stretch ratio distribution of notched periodic entangled networks at crack initialization, with different nonlinearity for constitutive laws, shown in the undeformed state.} The networks consist of 80 horizontal layers and 20 vertical layers.
\textbf{(a-d)} The periodic entangled networks with slidable node fractions $\phi_s = 75\%$ are composed of nonlinear chains with $\beta = 0.5$, $0.6$, $0.7$, and $0.8$, respectively. Here $\beta$ denotes the extension ratio in the chain nonlinear constitutive law given in Eq.~\eqref{eq-supp:nonlinear-constitutive-law}. Larger values of $\beta$ correspond to stronger nonlinearity in the chain constitutive law (see Fig.~\ref{fig-supp:extrapolation}). As nonlinearity increases, more chains near the crack tip experience large stretch ratios, and the far-field stretch ratio also increases. This indicates the nonlinearity can mitigate stress concentration and enable extreme toughening.
}
  \label{fig-supp:SI_Fig66_Nonlinearity_20layers}
\end{figure}

\vspace*{0pt}
\begin{figure}[H]
  \centering
  \includegraphics[trim={0cm 0cm 0cm 0cm},clip, width=0.6\textwidth]{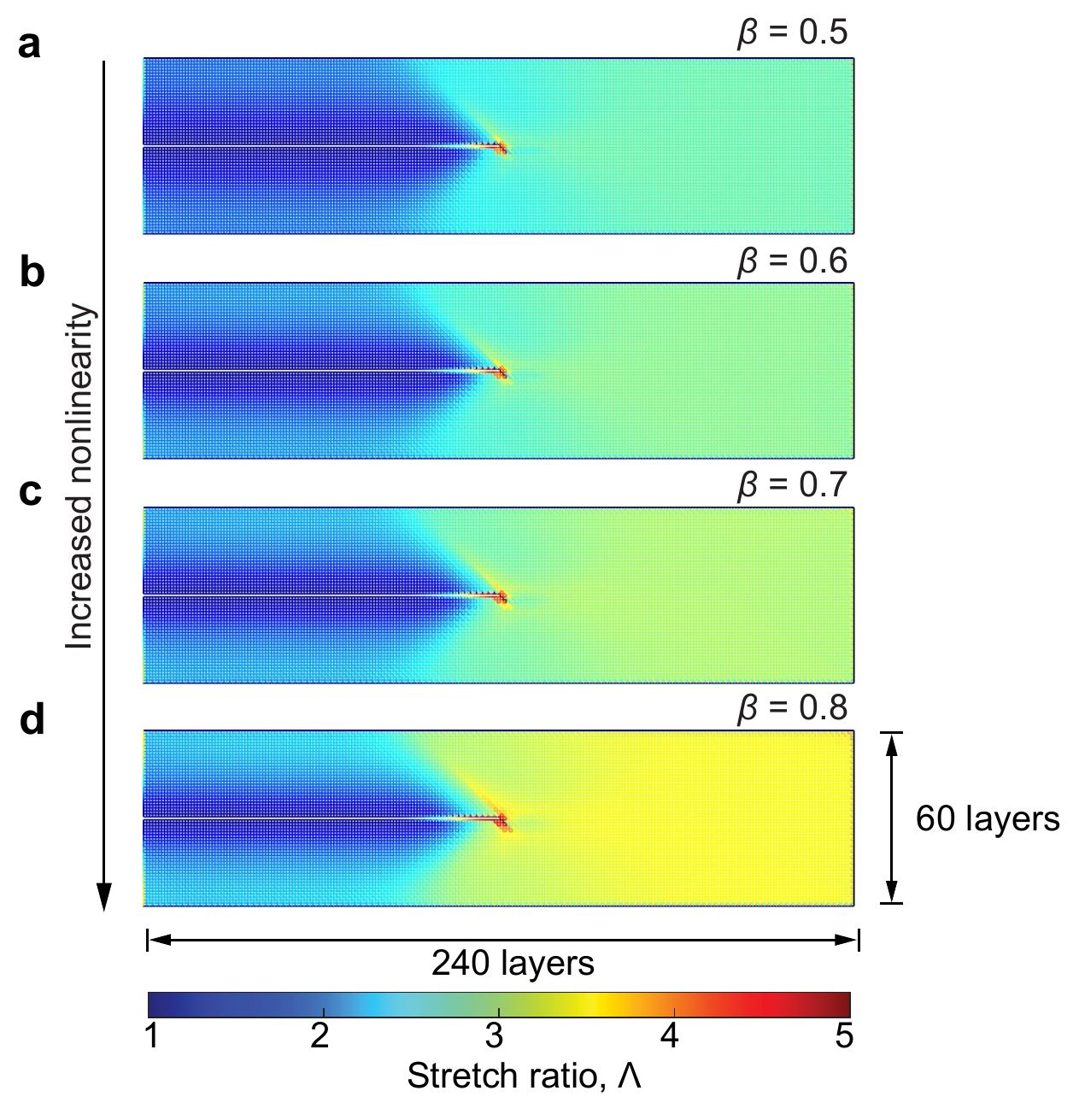}
  \caption{\textbf{Chain stretch ratio distribution of notched periodic networks at crack initialization, with different nonlinearity for constitutive laws, shown in the undeformed state.} The networks consist of 240 horizontal layers and 60 vertical layers.
\textbf{(a-d)} The periodic entangled networks with slidable node fractions $\phi_s = 75\%$ are composed of nonlinear chains with $\beta = 0.5$, $0.6$, $0.7$, and $0.8$, respectively. Here $\beta$ denotes the extension ratio in the chain nonlinear constitutive law given in Eq.~\eqref{eq-supp:nonlinear-constitutive-law}. Larger values of $\beta$ correspond to stronger nonlinearity in the chain constitutive law (see Fig.~\ref{fig-supp:extrapolation}). As nonlinearity increases, more chains near the crack tip experience large stretch ratios, and the far-field stretch ratio also increases. This indicates the nonlinearity can mitigate stress concentration and enable extreme toughening.
}
  \label{fig-supp:SI_Fig67_Nonlinearity_60layers}
\end{figure}

\vspace*{0pt}
\begin{figure}[H]
  \centering
  \includegraphics[trim={0cm 0cm 0cm 0cm},clip, width=1.0\textwidth]{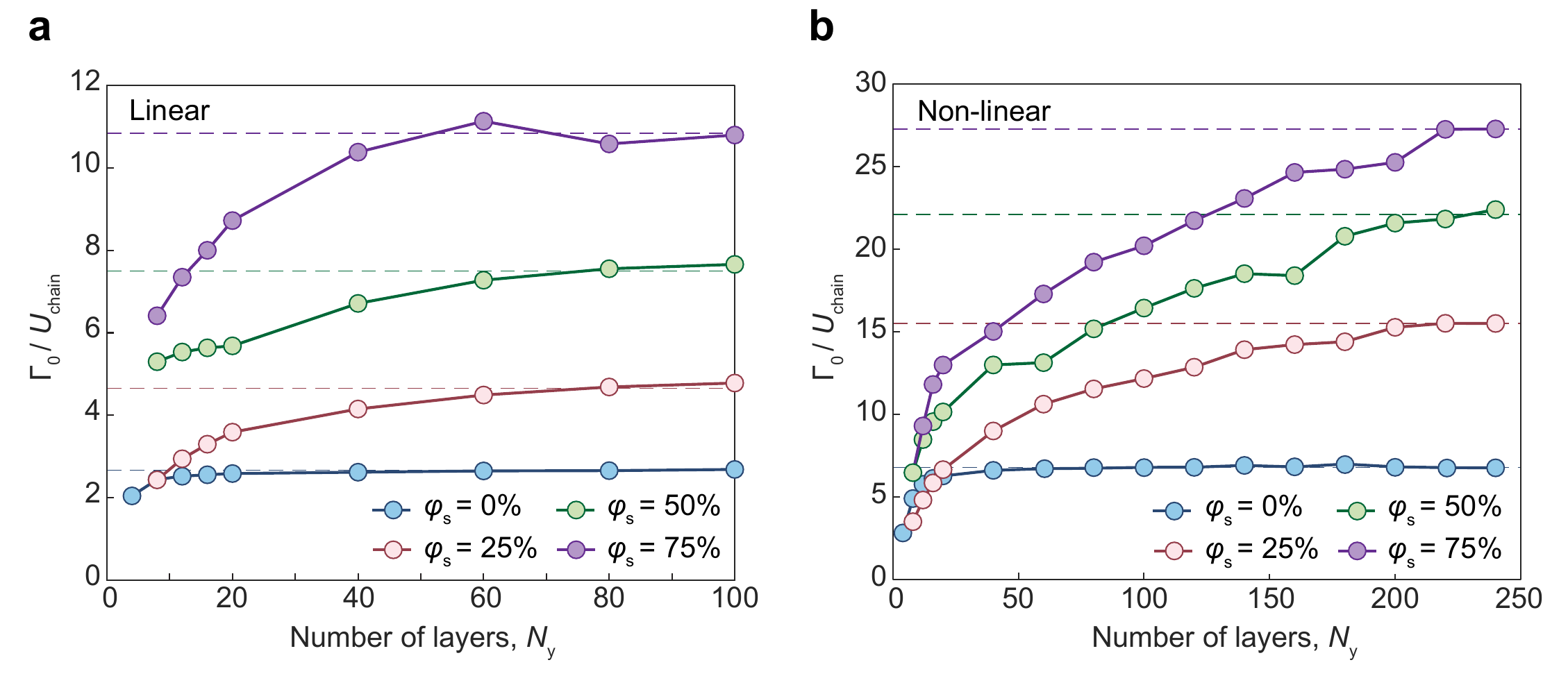}
  \caption{\textbf{Convergence behavior of the intrinsic fracture energy $\Gamma_0$ of notched periodic entangled network with different number of layers in height.} 
    \textbf{(a)} Periodic entangled networks with linear chains. For spring networks ($\varphi_s = 0\%$), $\Gamma_0$ converges in approximately 10 layers. For slidable networks ($\varphi_s=25\%$, $50\%$, and $75\%$), $\Gamma_0$ converges in approximately 100 layers.
    \textbf{(b)} Periodic entangled networks with nonlinear chains. For spring networks ($\varphi_s = 0\%$), $\Gamma_0$ converges in approximately 50 layers. For slidable networks ($\varphi_s=25\%$, $50\%$, and $75\%$), $\Gamma_0$ converges in approximately 240 layers. $U_{\textrm{chain}}$ is the rupture energy of an individual chain. $N_y$ is the number of vertical layers. The periodic entangled network has an aspect ratio (i.e., $N_x/N_y$) of 4:1.}
  \label{fig-supp:SI_Fig33_Converge}
\end{figure}

\vspace*{0pt}
\begin{figure}[H]
  \centering
  \includegraphics[trim={0cm 0cm 0cm 0cm},clip, width=1.0\textwidth]{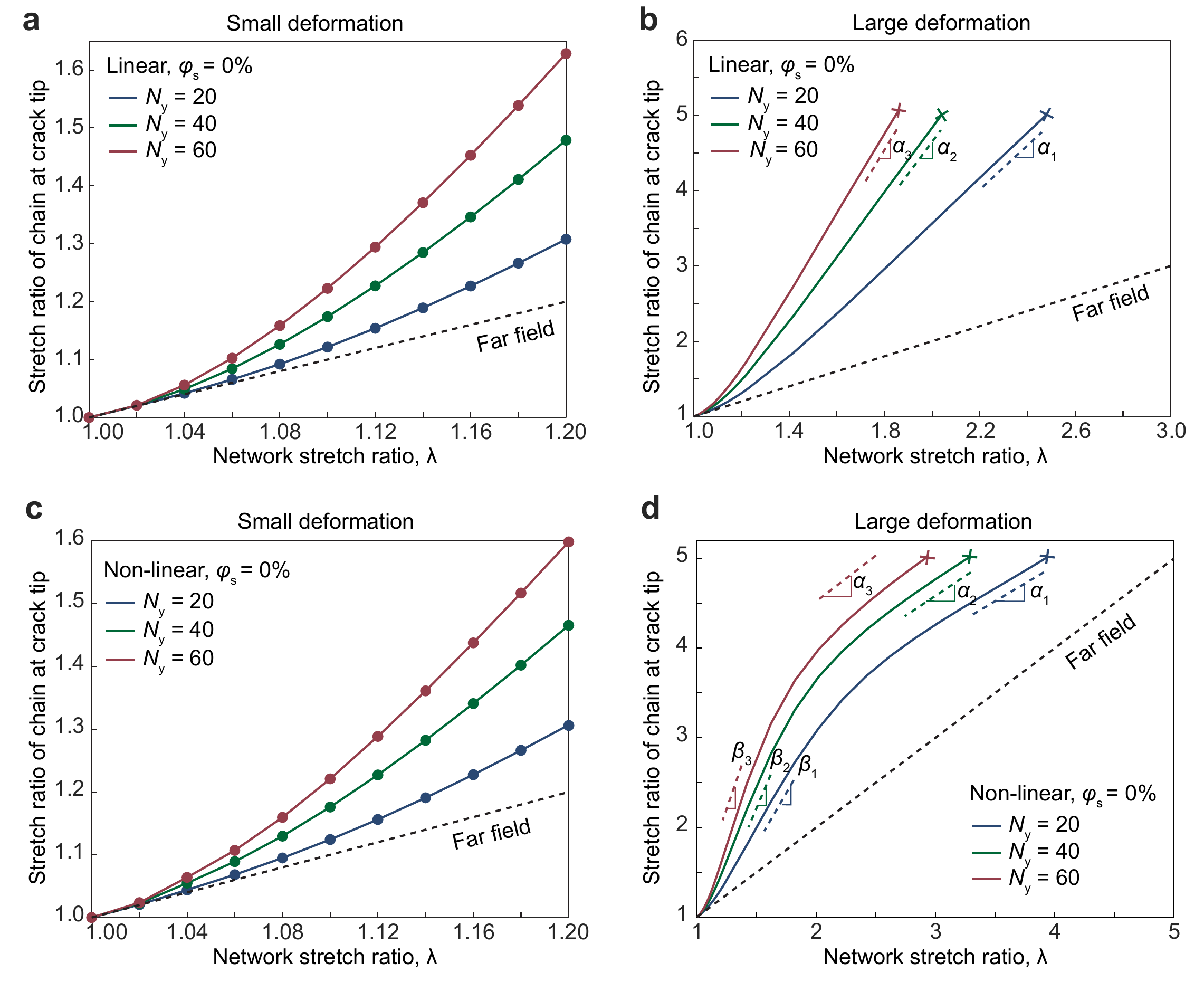}
  \caption{\textbf{Stretch ratio of crack-tip chain versus network stretch ratio in spring networks ($\varphi_s = 0\%$).} 
\textbf{(a)} Networks with linear chains and varying numbers of vertical layers under small deformation. The crack-tip chain stretch ratio is the same as that in the far field under very small deformation.
\textbf{(b)} Networks with linear chains and varying numbers of vertical layers under large deformation. The slope $\alpha$ increases with the layer number.
\textbf{(c)} Networks with non-linear chains and varying numbers of vertical layers under small deformation. The crack-tip chain stretch ratio is the same as that in the far field under very small deformation.
\textbf{(d)} Networks with non-linear chains and varying numbers of vertical layers under large deformation. The slopes $\alpha$ and $\beta$ both increase with the number of layers under large deformation. $N_y$ is the number of vertical layers. The spring network has an aspect ratio (i.e., $N_x/N_y$) of 4:1.
}
  \label{fig-supp:SI_Fig35_Nonslidable_linearlaw}
\end{figure}

\vspace*{0pt}
\begin{figure}[H]
  \centering
  \includegraphics[trim={0cm 0cm 0cm 0cm},clip, width=1.0\textwidth]{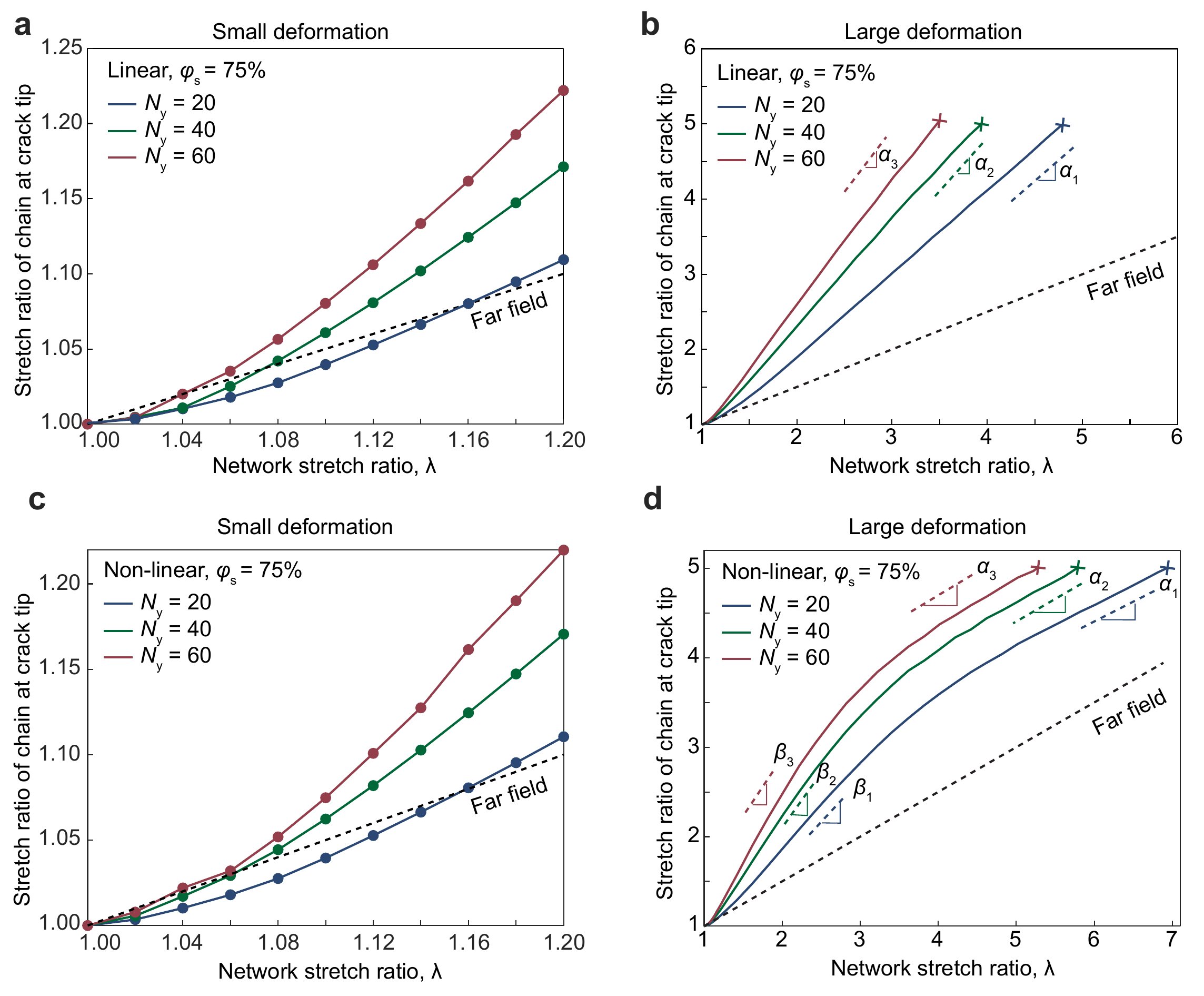}
  \caption{\textbf{Stretch ratio of crack-tip chain versus network stretch ratio in entangled periodic networks ($\varphi_s = 75\%$).} 
\textbf{(a)} Networks with linear chains and varying numbers of layers under small deformation. At very small deformation, the crack-tip chain stretch ratio is lower than that in the far field.
\textbf{(b)} Networks with linear chains and varying numbers of layers under large deformation. The slope $\alpha$ increases with the number of layers under large deformation.
\textbf{(c)} Networks with non-linear chains and varying numbers of layers under small deformation. At very small deformation, the crack-tip chain stretch ratio is lower than that in the far field. 
\textbf{(d)} Networks with non-linear chains and varying numbers of layers under large deformation. The slopes $\alpha$ and $\beta$ both increase with the number of layers under large deformation. $N_y$ is the number of vertical layers. The entangled network has an aspect ratio (i.e., $N_x/N_y$) of 4:1.
}
  \label{fig-supp:SI_Fig36_Slidable_linearlaw}
\end{figure}

\vspace*{0pt}
\begin{figure}[H]
  \centering
  \includegraphics[trim={0cm 0cm 0cm 0cm},clip, width=1.0\textwidth]{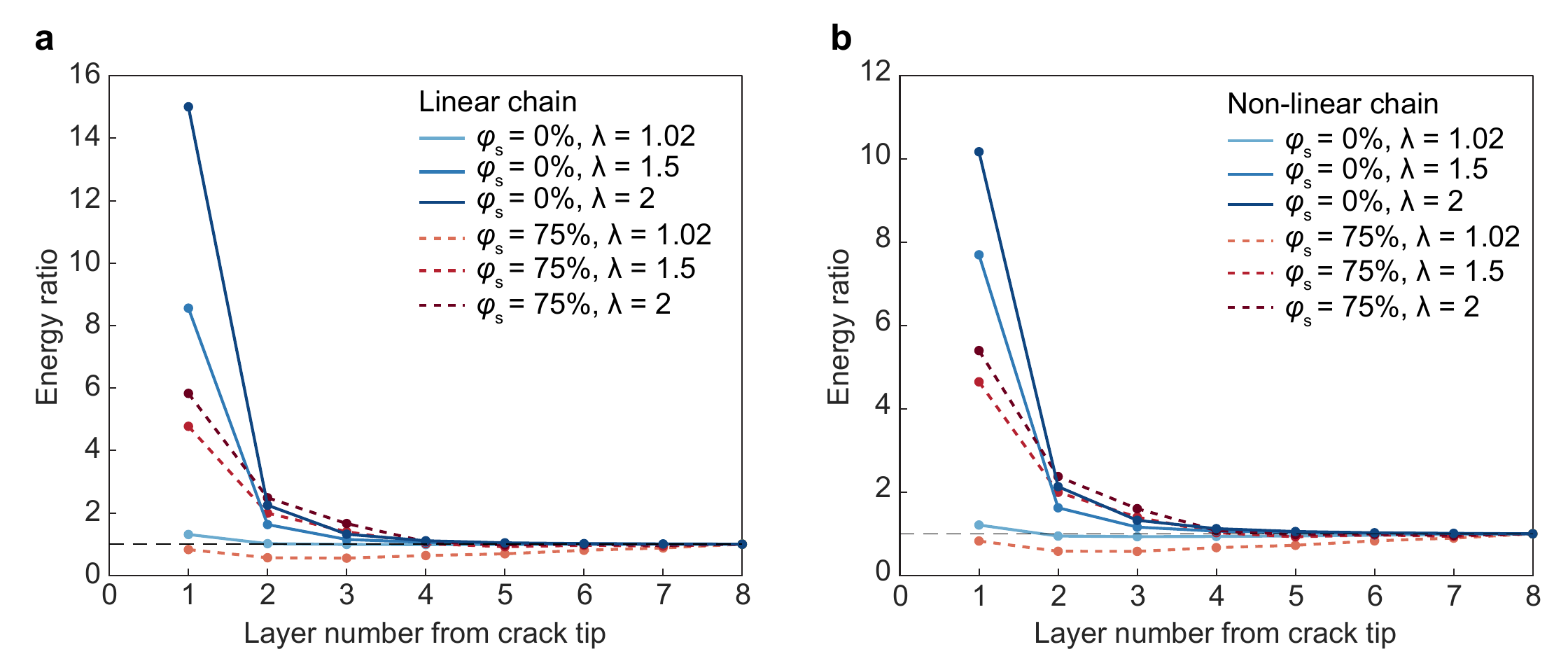}
  \caption{\textbf{Energy ratio between chains near the crack tip and chains in the far field of periodic entangled networks at various network stretch ratio $\lambda$.}
\textbf{(a)} Periodic entangled network with linear chains with slidable node fractions $\varphi_s = 0\%$ and $\varphi_s = 75\%$. 
\textbf{(b)} Periodic entangled network with non-linear chains with slidable node fractions $\varphi_s = 0\%$ and $\varphi_s = 75\%$. 
}
  \label{fig-supp:SI_Fig37_EnergySmallDeform}
\end{figure}

\vspace*{0pt}
\begin{figure}[H]
  \centering
  \includegraphics[trim={0cm 0cm 0cm 0cm},clip, width=1.0\textwidth]{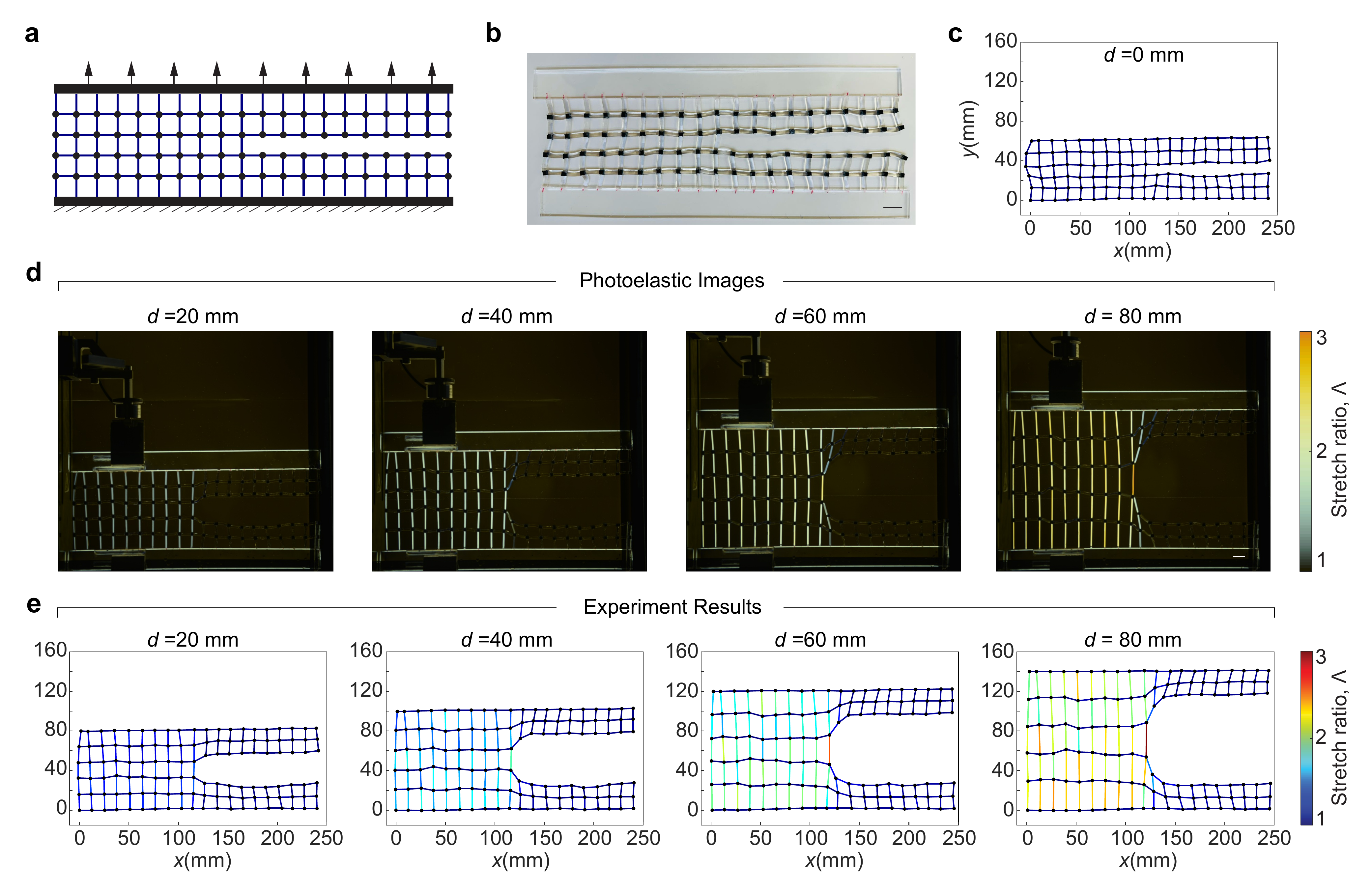}
  \caption{\textbf{Crack opening characterization of a notched hydrogel fabric network without slidable nodes ($\varphi_s = 0\%$).} The hydrogel fiber exhibits a linear constitutive law (i.e. force-stretch relationship).
\textbf{(a)} Topology of the notched network composed without slidable nodes. 
\textbf{(b)} Hydrogel fabric network fabricated based on the topology in (a). Scale bar: 10 mm. 
\textbf{(c)} Initial state of the hydrogel fabric network. 
\textbf{(d)} Photoelastic images of the hydrogel fabric network subjected to increasing displacement. Scale bar: 10 mm. 
\textbf{(e)} Stretch ratio distribution of hydrogel fibers. The result is extracted from the photoelastic images in (d). Black nodes denote non-slidable nodes.
}
  \label{fig-supp:SI_Fig38_Nonslidable_linear_Fiber}
\end{figure}

\vspace*{0pt}
\begin{figure}[H]
  \centering
  \includegraphics[trim={0cm 0cm 0cm 0cm},clip, width=1.0\textwidth]{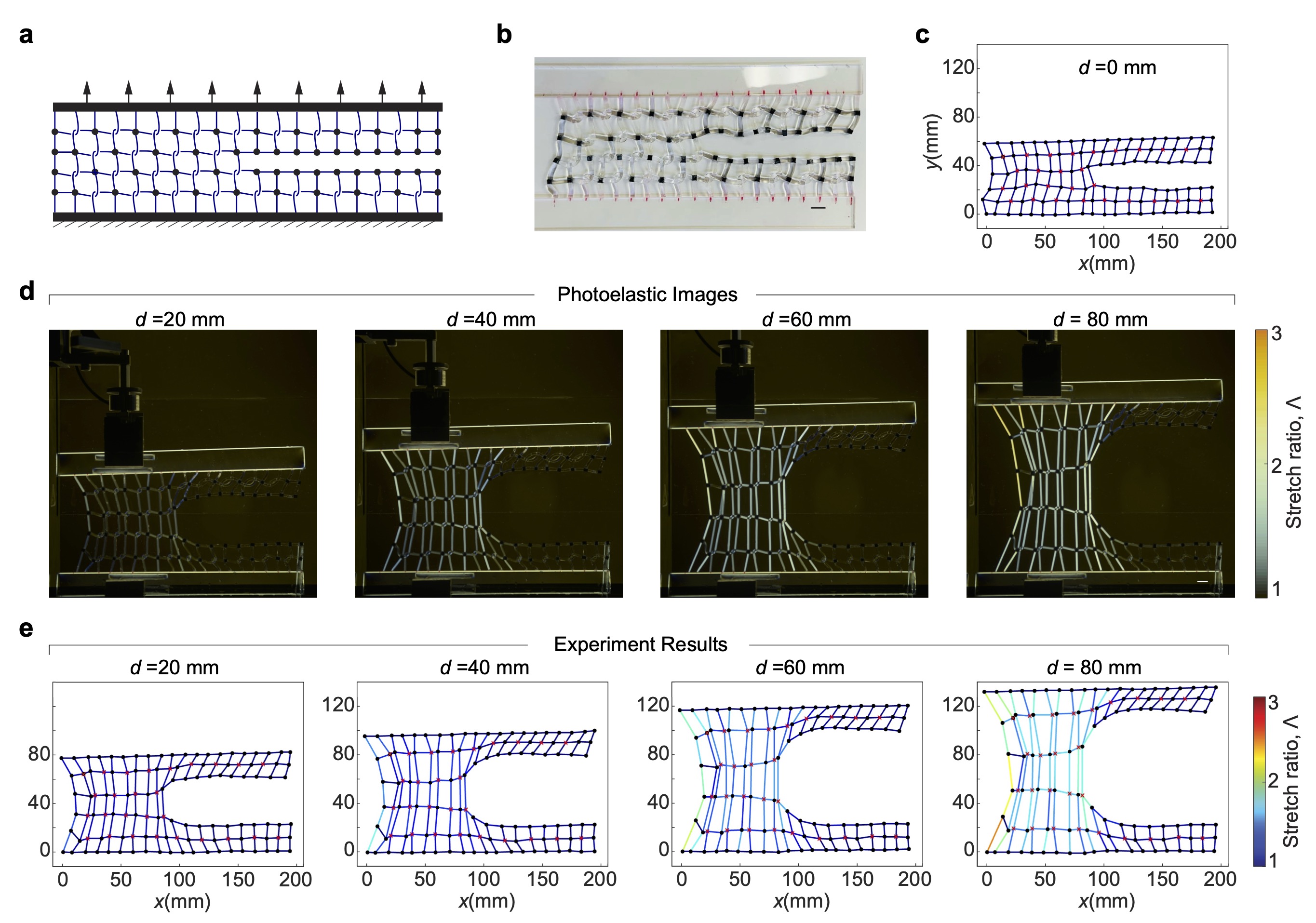}
\caption{\textbf{Crack opening characterization of a notched hydrogel fabric network with slidable nodes fraction $\varphi_s = 50\%$.} The hydrogel fiber exhibits a linear constitutive law (i.e. force-stretch relationship).
\textbf{(a)} Topology of the notched network with the slidable node fraction ($\varphi_s = 50\%$). 
\textbf{(b)} Hydrogel fabric network fabricated based on the topology in (a). Scale bar: 10 mm. 
\textbf{(c)} Initial state of the hydrogel fabric network. 
\textbf{(d)} Photoelastic images of the hydrogel fabric network subjected to increasing displacement. Scale bar: 10 mm. 
\textbf{(e)} Stretch ratio distribution of hydrogel fibers. The result is extracted from the photoelastic images in (d). Black nodes denote non-slidable nodes and red crosses denote slidable nodes.}
  \label{fig-supp:SI_Fig39_Slidable_linear_Fiber}
\end{figure}

\vspace*{0pt}
\begin{figure}[H]
  \centering
  \includegraphics[trim={0cm 0cm 0cm 0cm},clip, width=1.0\textwidth]{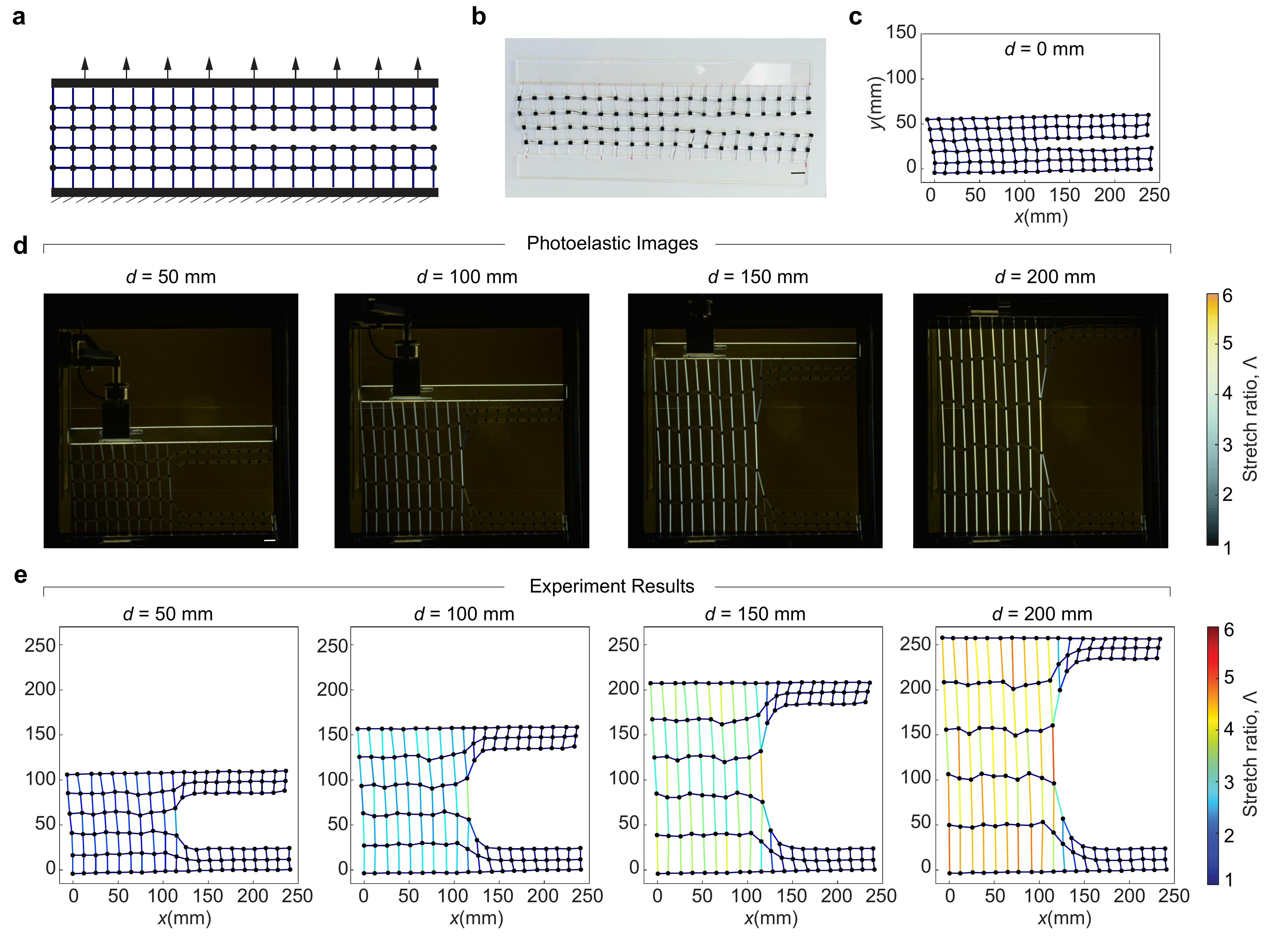}
  \caption{\textbf{Crack opening characterization of a notched hydrogel fabric network without slidable nodes ($\varphi_s = 0\%$).} The hydrogel fiber exhibits a non-linear linear constitutive law (i.e. force-stretch relationship).
\textbf{(a)} Topology of the notched network composed without slidable nodes. 
\textbf{(b)} Hydrogel fabric network fabricated based on the topology in (a). Scale bar: 10 mm. 
\textbf{(c)} Initial state of the hydrogel fabric network. 
\textbf{(d)} Photoelastic images of the hydrogel fabric network subjected to increasing displacement. Scale bar: 10 mm. 
\textbf{(e)} Stretch ratio distribution of hydrogel fibers. The result is extracted from the photoelastic images in (d). Black nodes denote non-slidable nodes.}
  \label{fig-supp:SI_Fig40_Nonslidable_nonlinear_Fiber}
\end{figure}

\vspace*{0pt}
\begin{figure}[H]
  \centering
  \includegraphics[trim={0cm 0cm 0cm 0cm},clip, width=1.0\textwidth]{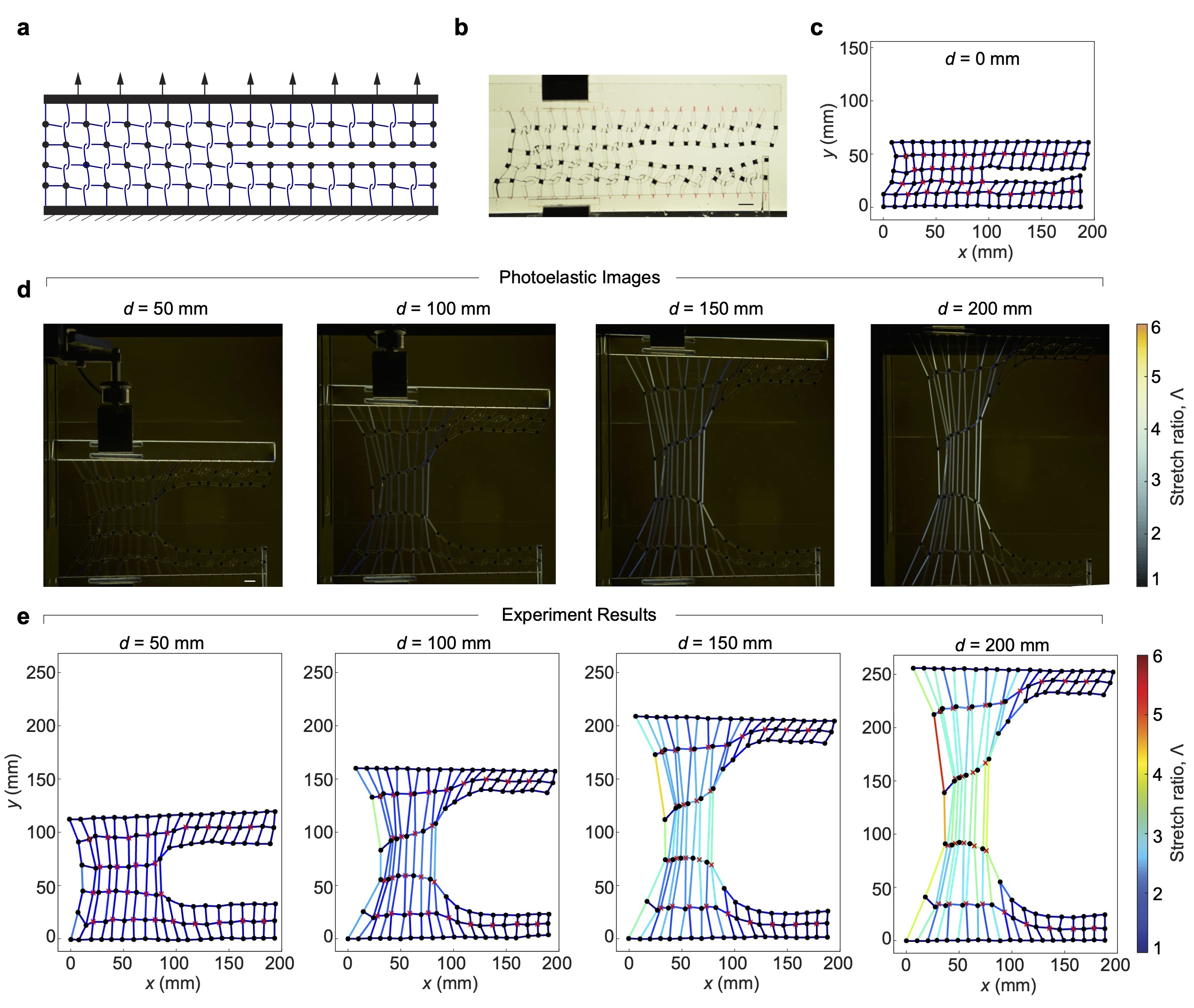}
  \caption{\textbf{Crack opening characterization of a notched hydrogel fabric network with slidable nodes fraction $\varphi_s = 50\%$.} The hydrogel fiber exhibits a non-linear constitutive law (i.e. force-stretch relationship).
\textbf{(a)} Topology of the notched network with the slidable node fraction ($\varphi_s = 50\%$). 
\textbf{(b)} Hydrogel fabric network fabricated based on the topology in (a). Scale bar: 10 mm. 
\textbf{(c)} Initial state of the hydrogel fabric network. 
\textbf{(d)} Photoelastic images of the hydrogel fabric network subjected to increasing displacement. Scale bar: 10 mm. 
\textbf{(e)} Stretch ratio distribution of hydrogel fibers. The result is extracted from the photoelastic images in (d). Black nodes denote non-slidable nodes and red crosses denote slidable nodes.}
  \label{fig-supp:SI_Fig41_Slidable_nonlinear_Fiber}
\end{figure}

\vspace*{0pt}
\begin{figure}[H]
  \centering
  \includegraphics[trim={0cm 0cm 0cm 0cm},clip, width=1.0\textwidth]{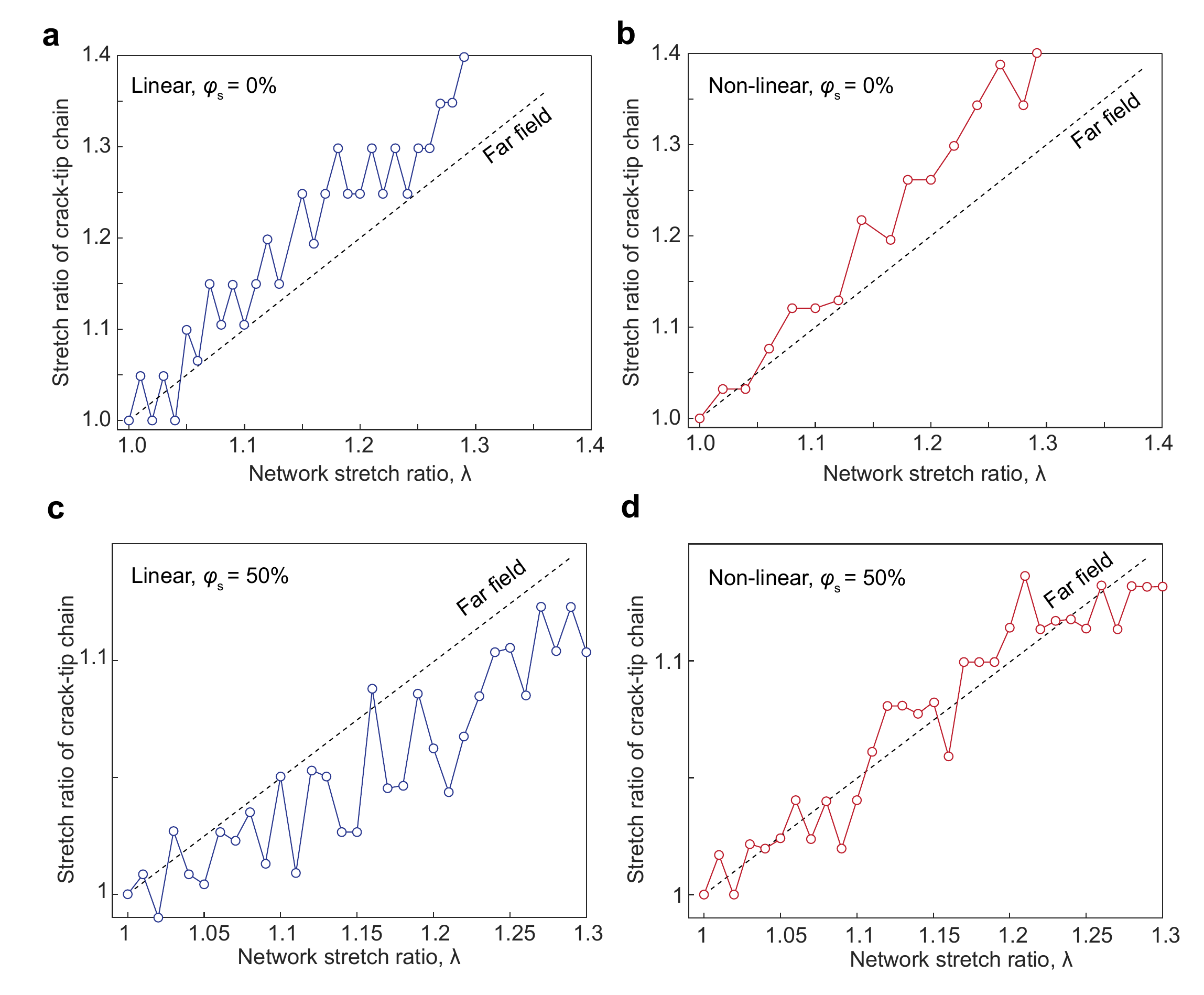}
  \caption{\textbf{Stretch ratio of crack-tip chain versus network stretch ratio under small deformation in hydrogel fabric experiments.} 
\textbf{(a)} Network consist of linear chains with slidable node fraction $\varphi_s = 0\%$. 
\textbf{(b)} Network consist of non-linear chains with slidable node fraction $\varphi_s = 0\%$. 
\textbf{(c)} Network consist of linear chains with slidable node fraction $\varphi_s = 50\%$. 
\textbf{(d)} Network consist of non-linear chains with slidable node fraction $\varphi_s = 50\%$. Dashed lines indicate far-field chain stretch ratio with a slope of 1 in (a–b) and 0.5 in (c–d). The stretch ratios of crack-tip chain are extracted from experimental photoelastic images. An initial stage of stress de-concentration is observed.}
  \label{fig-supp:SI_Fig42_SmallDeform_Experiment}
\end{figure}

\vspace*{0pt}
\begin{figure}[H]
  \centering
  \includegraphics[trim={0cm 0cm 0cm 0cm},clip, width=1.0\textwidth]{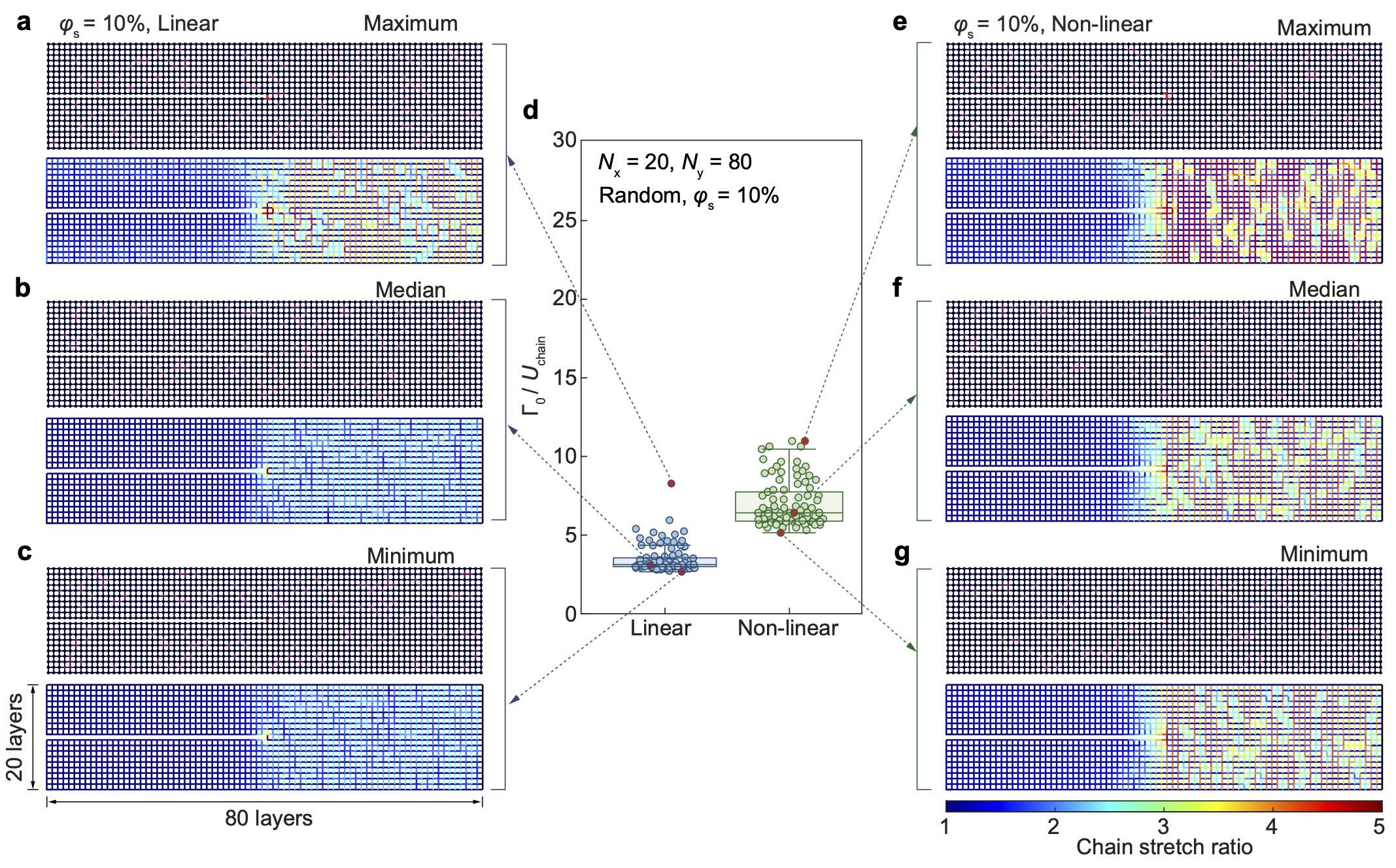}
  \caption{\textbf{Initial configuration and stretch ratio distribution of notched random entangled networks with $\varphi_s = 10\%$ at crack initialization, shown in undeformed state.}
\textbf{(a)} Sample with the largest intrinsic fracture energy, formed by linear chains (crack-tip chain length $l_{\textrm{tip}}$ = 2). 
\textbf{(b)} Sample with median intrinsic fracture energy ($l_{\textrm{tip}}$ = 1). 
\textbf{(c)} Sample with the lowest intrinsic fracture energy ($l_{\textrm{tip}}$ = 1). 
\textbf{(d)} Box plot of $\Gamma_0 / U_{\text{chain}}$ for notched random entangled networks. Networks are formed by linear and non-linear chains, with 100 samples in each case. Each box shows the interquartile range, with the line inside the box indicating the median. Whiskers extend to the most extreme data points within 1.5 times the interquartile range from the lower and upper quartiles.
\textbf{(e)} Sample with the largest intrinsic fracture energy, formed by non-linear chains ($l_{\textrm{tip}}$ = 3). 
\textbf{(f)} Sample with median intrinsic fracture energy ($l_{\textrm{tip}}$ = 1). 
\textbf{(g)} Sample with the lowest intrinsic fracture energy ($l_{\textrm{tip}}$ = 1). The networks consist of 80 horizontal layers and 20 vertical layers.}
  \label{fig-supp:SI_Fig43_Random_10_20_undeformed}
\end{figure}

\vspace*{0pt}
\begin{figure}[H]
  \centering
  \includegraphics[trim={0cm 0cm 0cm 0cm},clip, width=1.0\textwidth]{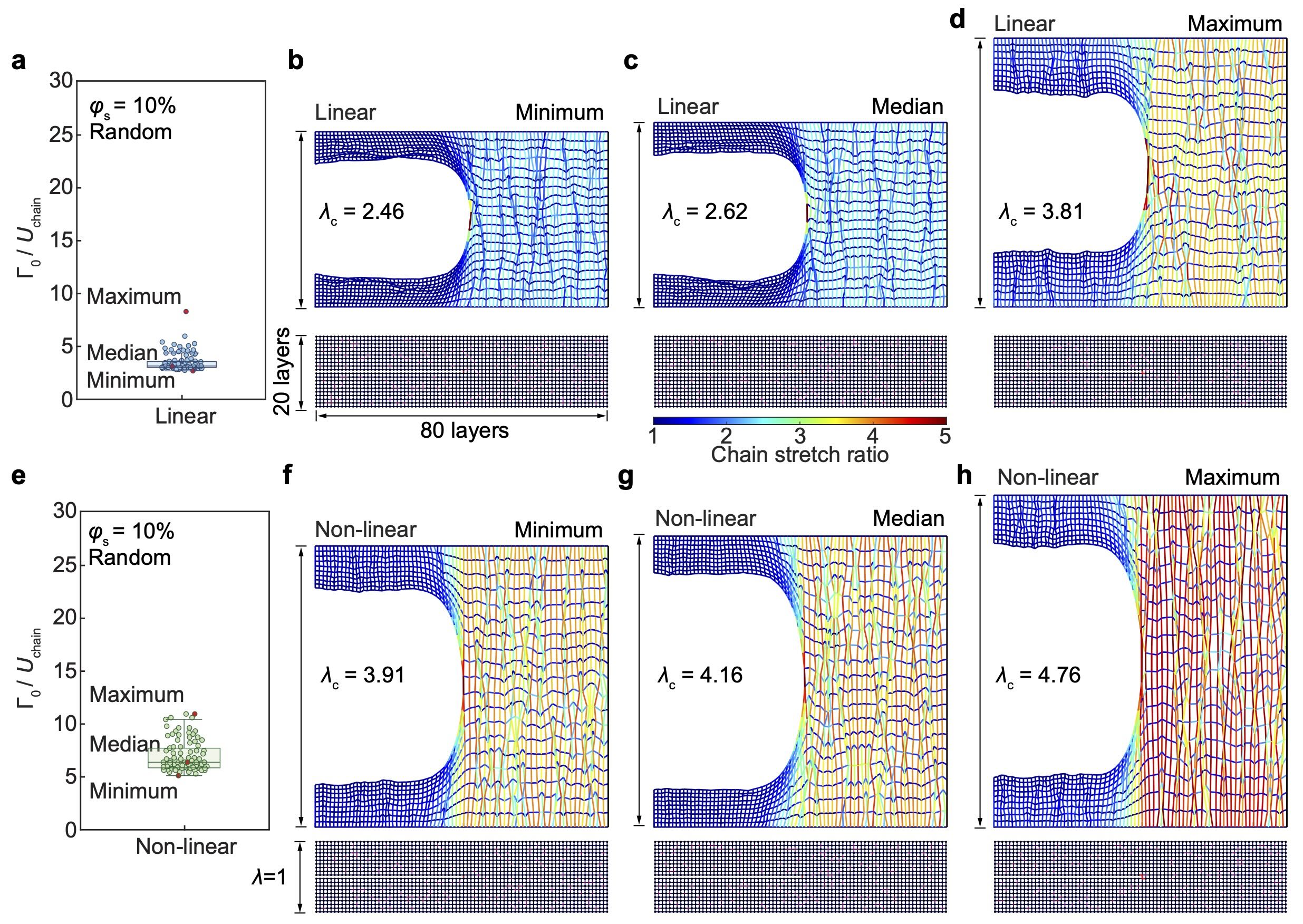}
  \caption{\textbf{Initial configuration and stretch ratio distribution of notched random entangled networks with $\varphi_s = 10\%$ at crack initialization, shown in deformed state.} 
\textbf{(a)} Box plot of $\Gamma_0 / U_{\text{chain}}$ for notched networks formed by linear chains (100 samples). 
\textbf{(b)} Sample with the largest intrinsic fracture energy.
\textbf{(c)} Sample with median intrinsic fracture energy. 
\textbf{(d)} Sample with the smallest intrinsic fracture energy. 
\textbf{(e)} Box plot of $\Gamma_0 / U_{\text{chain}}$ for notched networks formed by non-linear chains (100 samples). 
\textbf{(f)} Sample with the largest intrinsic fracture energy.
\textbf{(g)} Sample with median intrinsic fracture energy. 
\textbf{(h)} Sample with the smallest intrinsic fracture energy. The networks consist of 80 horizontal layers and 20 vertical layers.}
  \label{fig-supp:SI_Fig44_Random_10_20_deformed}
\end{figure}

\vspace*{0pt}
\begin{figure}[H]
  \centering
  \includegraphics[trim={0cm 0cm 0cm 0cm},clip, width=1.0\textwidth]{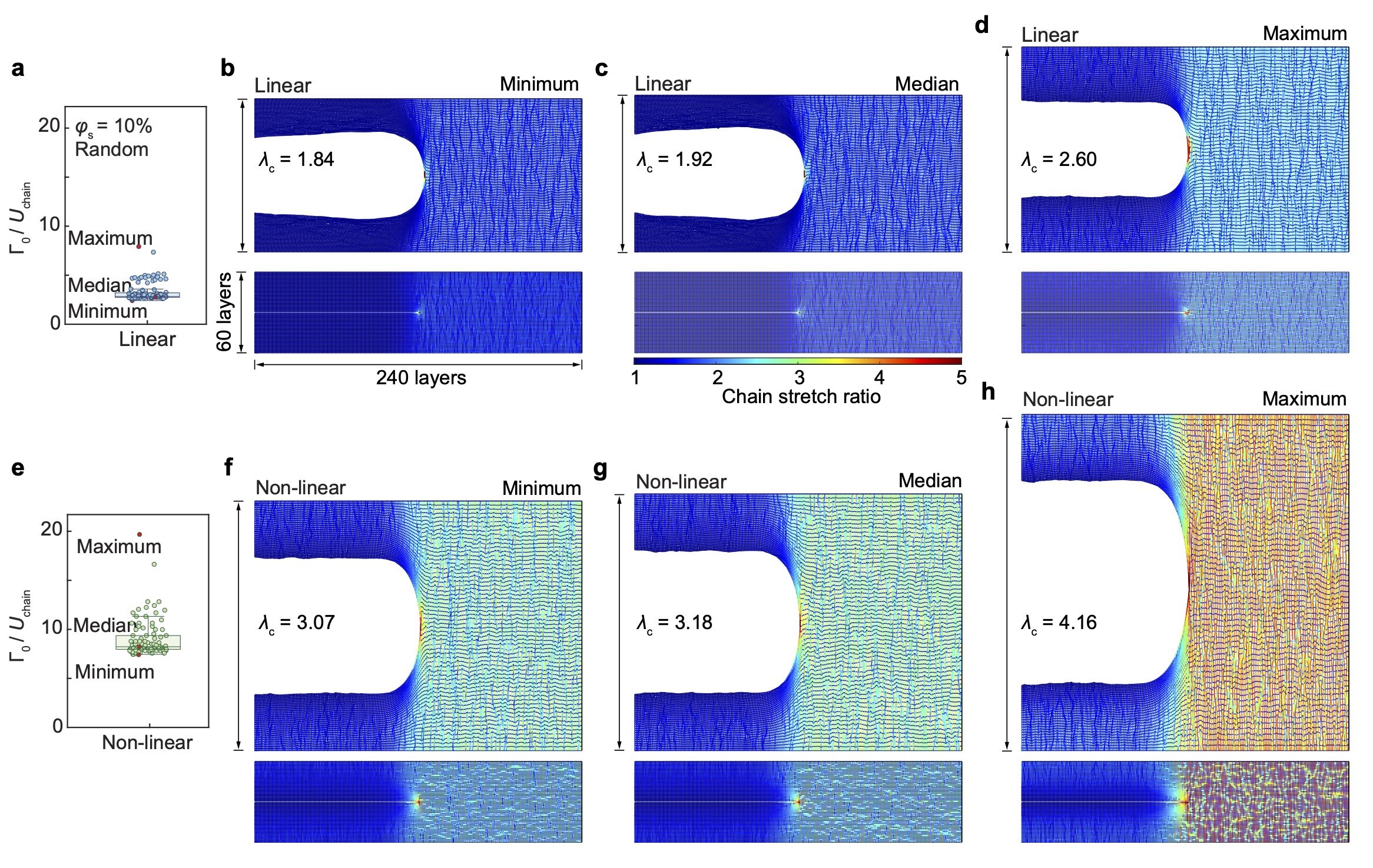}
  \caption{\textbf{Stretch ratio distribution of notched random entangled networks ($\varphi_s = 10\%$) at crack initialization, shown in deformed and undeformed states.}
\textbf{(a)} Box plot of $\Gamma_0 / U_{\text{chain}}$ for notched networks formed by linear chains (100 samples). 
\textbf{(b)} Stretch ratio distribution of the sample with the highest intrinsic fracture energy. 
\textbf{(c)} Sample with median intrinsic fracture energy. 
\textbf{(d)} Sample with the lowest intrinsic fracture energy. 
\textbf{(e)} Box plot of $\Gamma_0 / U_{\text{chain}}$ for notched networks formed by non-linear chains (100 samples). 
\textbf{(f)} Stretch ratio distribution of the sample with the highest intrinsic fracture energy in both undeformed and deformed states. 
\textbf{(g)} Sample with median intrinsic fracture energy. 
\textbf{(h)} Sample with the lowest intrinsic fracture energy. The networks consist of 240 horizontal layers and 60 vertical layers.}
  \label{fig-supp:SI_Fig45_Random_10_60_deformed}
\end{figure}

\vspace*{0pt}
\begin{figure}[H]
  \centering
  \includegraphics[trim={0cm 0cm 0cm 0cm},clip, width=1.0\textwidth]{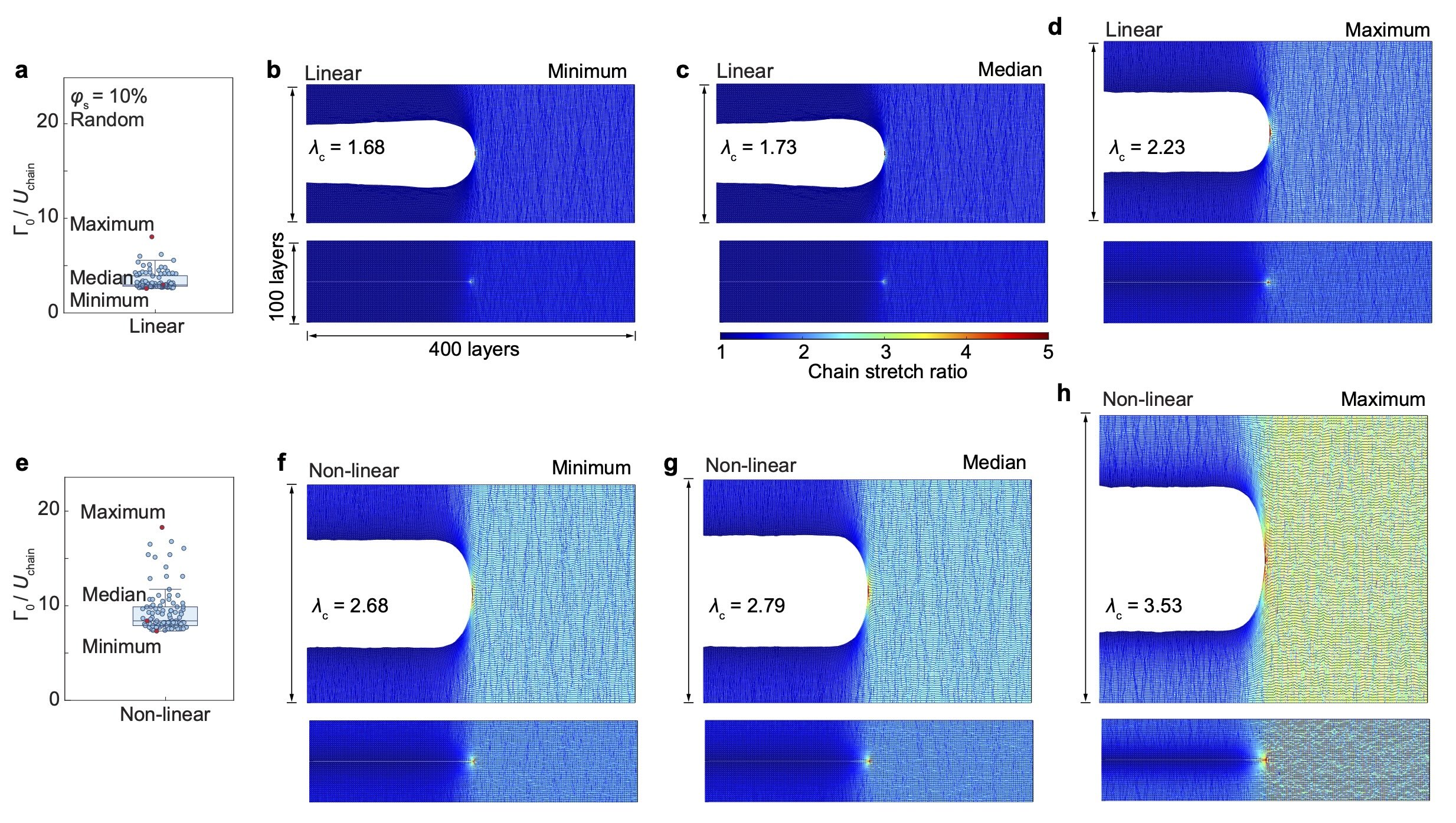}
  \caption{\textbf{Stretch ratio distribution of notched random entangled networks ($\varphi_s = 10\%$) at crack initialization, shown in deformed and undeformed states.}
\textbf{(a)} Box plot of $\Gamma_0 / U_{\text{chain}}$ for notched networks formed by linear chains (100 samples). 
\textbf{(b)} Stretch ratio distribution of the sample with the highest intrinsic fracture energy.
\textbf{(c)} Sample with median intrinsic fracture energy. 
\textbf{(d)} Sample with the lowest intrinsic fracture energy. 
\textbf{(e)} Box plot of $\Gamma_0 / U_{\text{chain}}$ for notched networks formed by non-linear chains (100 samples). 
\textbf{(f)} Stretch ratio distribution of the sample with the highest intrinsic fracture energy. 
\textbf{(g)} Sample with median intrinsic fracture energy. 
\textbf{(h)} Sample with the lowest intrinsic fracture energy. The networks consist of 400 horizontal layers and 100 vertical layers.}
  \label{fig-supp:SI_Fig46_Random_10_100_deformed}
\end{figure}

\vspace*{0pt}
\begin{figure}[H]
  \centering
  \includegraphics[trim={0cm 0cm 0cm 0cm},clip, width=1.0\textwidth]{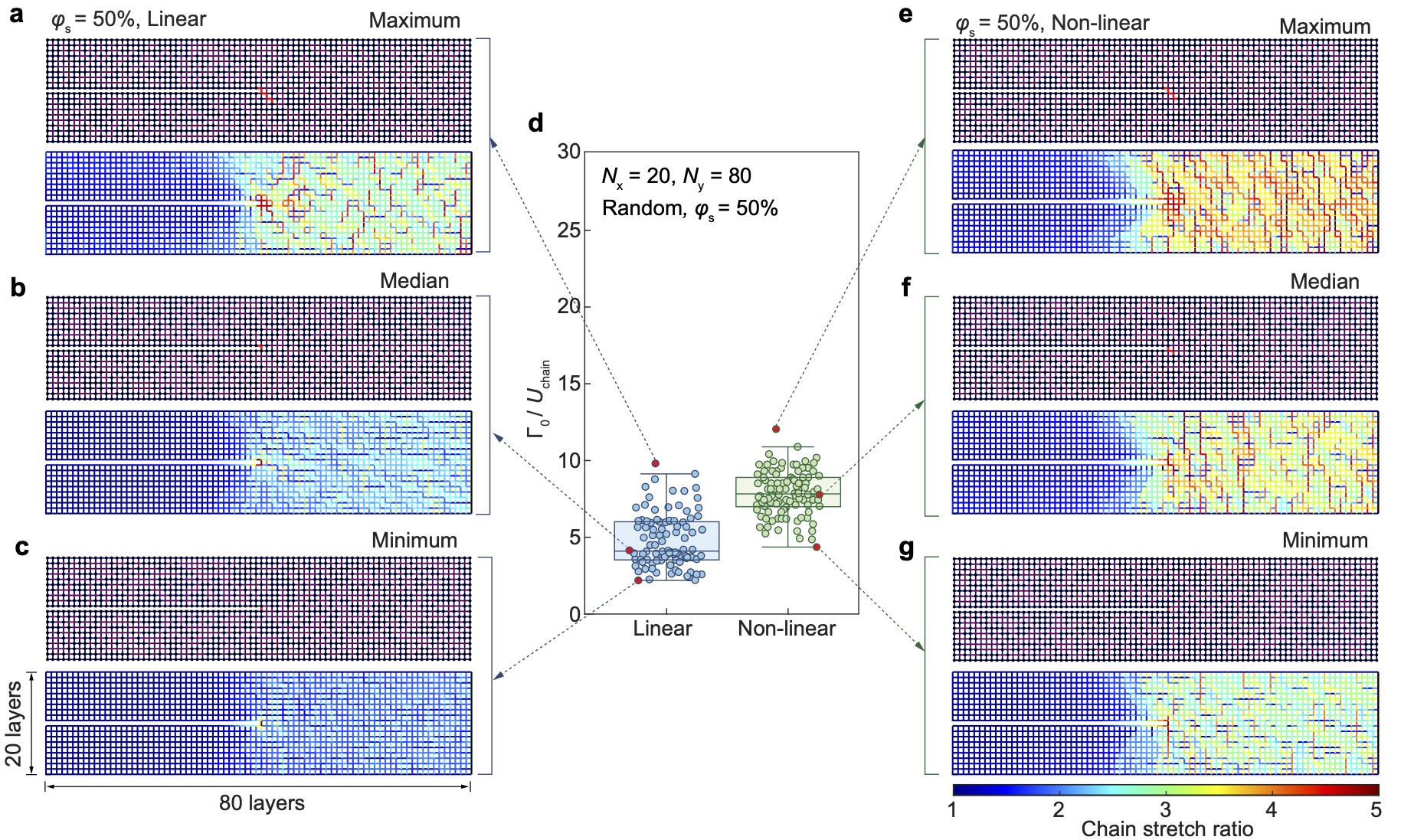}
  \caption{
  \textbf{Initial configuration and stretch ratio distribution of notched random entangled networks with $\varphi_s = 50\%$ at crack initialization, shown in undeformed state.} 
\textbf{(a)} Sample with the highest intrinsic fracture energy in a network formed by linear chains (crack-tip chain length $l_{\textrm{tip}}$ = 6). 
\textbf{(b)} Sample with median intrinsic fracture energy ($l_{\textrm{tip}}$ = 2). 
\textbf{(c)} Sample with the lowest intrinsic fracture energy ($l_{\textrm{tip}}$ = 1). 
\textbf{(d)} Box plot of $\Gamma_0 / U_{\text{chain}}$ for notched networks formed by linear and non-linear chains (100 samples each). Each box shows the interquartile range, with the line inside the box indicating the median. Whiskers extend to the most extreme data points within 1.5 times the interquartile range from the lower and upper quartiles.
\textbf{(e)} Sample with the highest intrinsic fracture energy in a network formed by non-linear chains ($l_{\textrm{tip}}$ = 5). 
\textbf{(f)} Sample with median intrinsic fracture energy ($l_{\textrm{tip}}$ = 3). 
\textbf{(g)} Sample with the lowest intrinsic fracture energy ($l_{\textrm{tip}}$ = 1). The networks consist of 80 horizontal layers and 20 vertical layers.}
  \label{fig-supp:SI_Fig47_Random_50_20_undeformed}
\end{figure}

\vspace*{0pt}
\begin{figure}[H]
  \centering
  \includegraphics[trim={0cm 0cm 0cm 0cm},clip, width=1.0\textwidth]{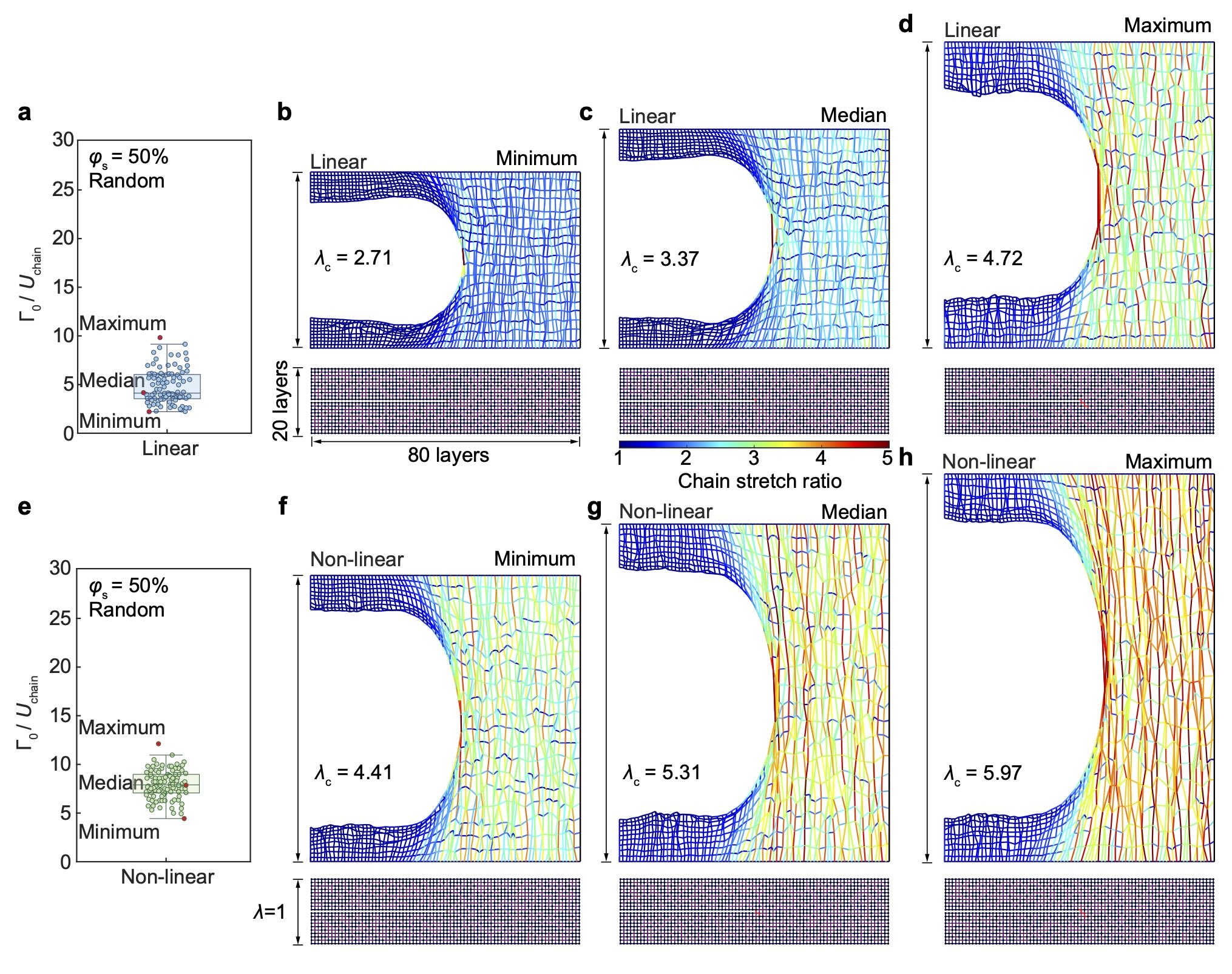}
  \caption{\textbf{Initial configuration and stretch ratio distribution of notched random entangled networks with $\varphi_s = 50\%$ at crack initialization, shown in deformed state.} 
\textbf{(a)} Box plot of $\Gamma_0 / U_{\text{chain}}$ for notched networks formed by linear chains (100 samples). 
\textbf{(b)} Sample with the highest intrinsic fracture energy. 
\textbf{(c)} Sample with median intrinsic fracture energy. 
\textbf{(d)} Sample with the lowest intrinsic fracture energy. 
\textbf{(e)} Box plot of $\Gamma_0 / U_{\text{chain}}$ for notched networks formed by non-linear chains (100 samples). 
\textbf{(f)} Sample with the largest intrinsic fracture energy in the deformed state. 
\textbf{(g)} Sample with median intrinsic fracture energy. 
\textbf{(h)} Sample with the smallest intrinsic fracture energy. The networks consist of 80 horizontal layers and 20 vertical layers.}
  \label{fig-supp:SI_Fig48_Random_50_20_deformed}
\end{figure}

\vspace*{0pt}
\begin{figure}[H]
  \centering
  \includegraphics[trim={0cm 0cm 0cm 0cm},clip, width=1.0\textwidth]{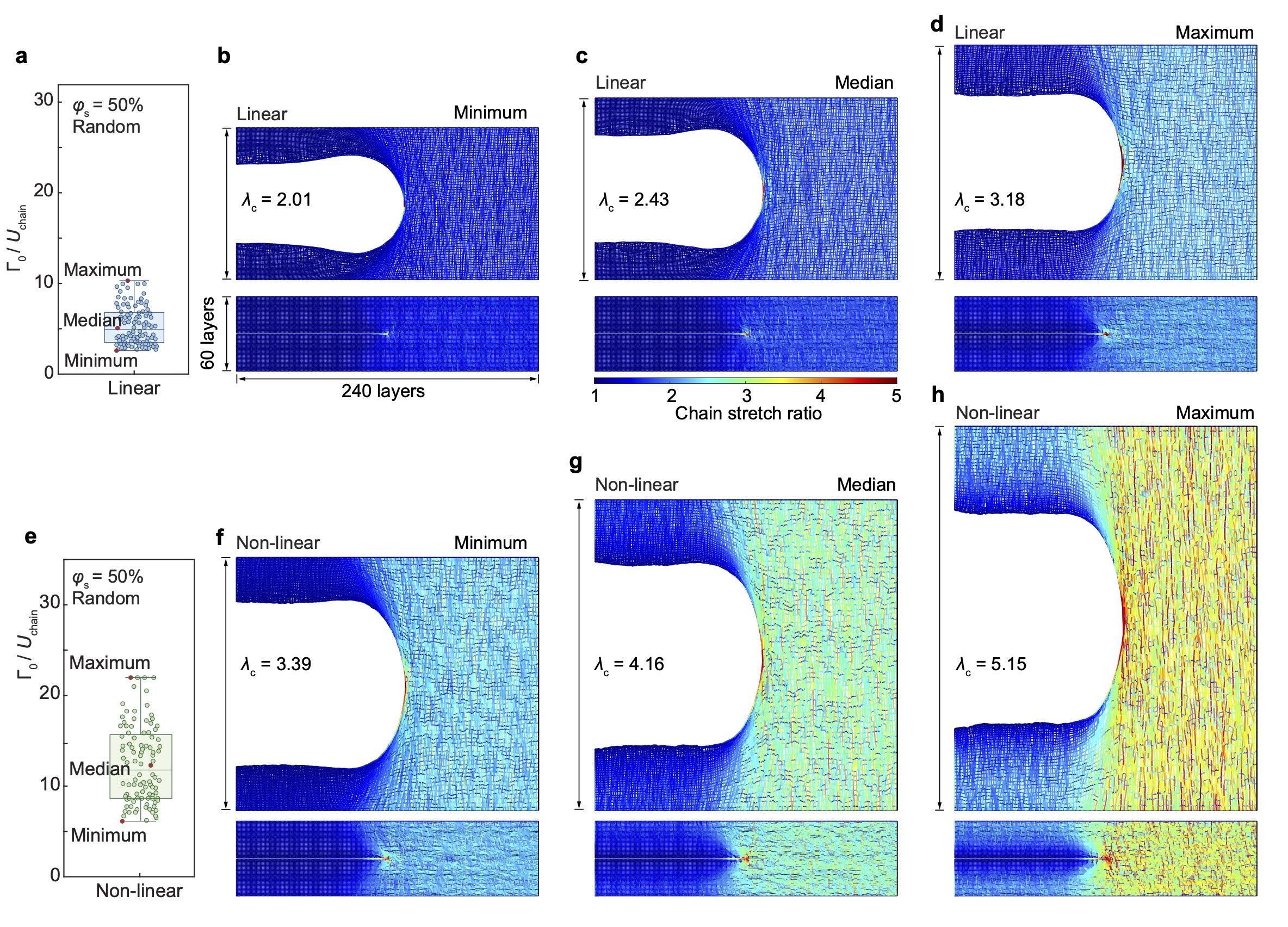}
  \caption{\textbf{Stretch ratio distribution of notched random entangled networks ($\varphi_s = 50\%$) at crack initialization, shown in deformed and undeformed states.} 
\textbf{(a)} Box plot of $\Gamma_0 / U_{\text{chain}}$ for notched networks formed by linear chains (100 samples). 
\textbf{(b)} Stretch ratio distribution of the sample with the highest intrinsic fracture energy. 
\textbf{(c)} Sample with median intrinsic fracture energy. 
\textbf{(d)} Sample with the lowest intrinsic fracture energy. 
\textbf{(e)} Box plot of $\Gamma_0 / U_{\text{chain}}$ for notched networks formed by non-linear chains (100 samples). 
\textbf{(f)} Stretch ratio distribution of the sample with the highest intrinsic fracture energy. 
\textbf{(g)} Sample with median intrinsic fracture energy. 
\textbf{(h)} Sample with the lowest intrinsic fracture energy. The networks consist of 240 horizontal layers and 60 vertical layers.}
  \label{fig-supp:SI_Fig49_Random_50_60_deformed}
\end{figure}

\vspace*{0pt}
\begin{figure}[H]
  \centering
  \includegraphics[trim={0cm 0cm 0cm 0cm},clip, width=1.0\textwidth]{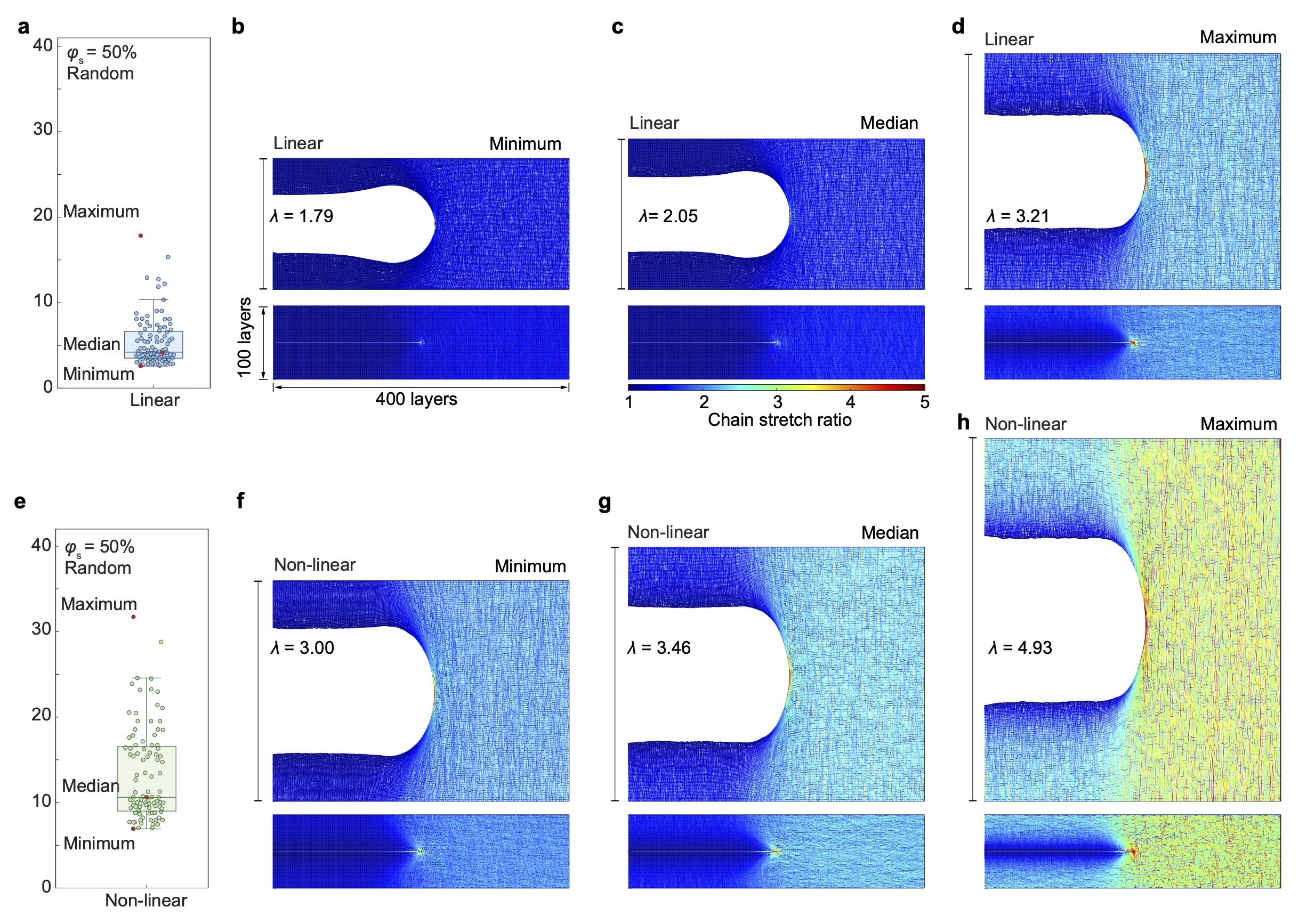}
  \caption{\textbf{Stretch ratio distribution of notched random entangled networks ($\varphi_s = 50\%$) at crack initialization, shown in deformed and undeformed states.} 
\textbf{(a)} Box plot of $\Gamma_0 / U_{\text{chain}}$ for notched networks formed by linear chains (100 samples). 
\textbf{(b)} Stretch ratio distribution of the sample with the highest intrinsic fracture energy. 
\textbf{(c)} Sample with median intrinsic fracture energy. 
\textbf{(d)} Sample with the lowest intrinsic fracture energy. 
\textbf{(e)} Box plot of $\Gamma_0 / U_{\text{chain}}$ for notched networks formed by non-linear chains (100 samples). 
\textbf{(f)} Stretch ratio distribution of the sample with the highest intrinsic fracture energy. 
\textbf{(g)} Sample with median intrinsic fracture energy. 
\textbf{(h)} Sample with the lowest intrinsic fracture energy. The networks consist of 400 horizontal layers and 100 vertical layers.}
  \label{fig-supp:SI_Fig50_Random_50_100_deformed}
\end{figure}

\vspace*{0pt}
\begin{figure}[H]
  \centering
  \includegraphics[trim={0cm 0cm 0cm 0cm},clip, width=1.0\textwidth]{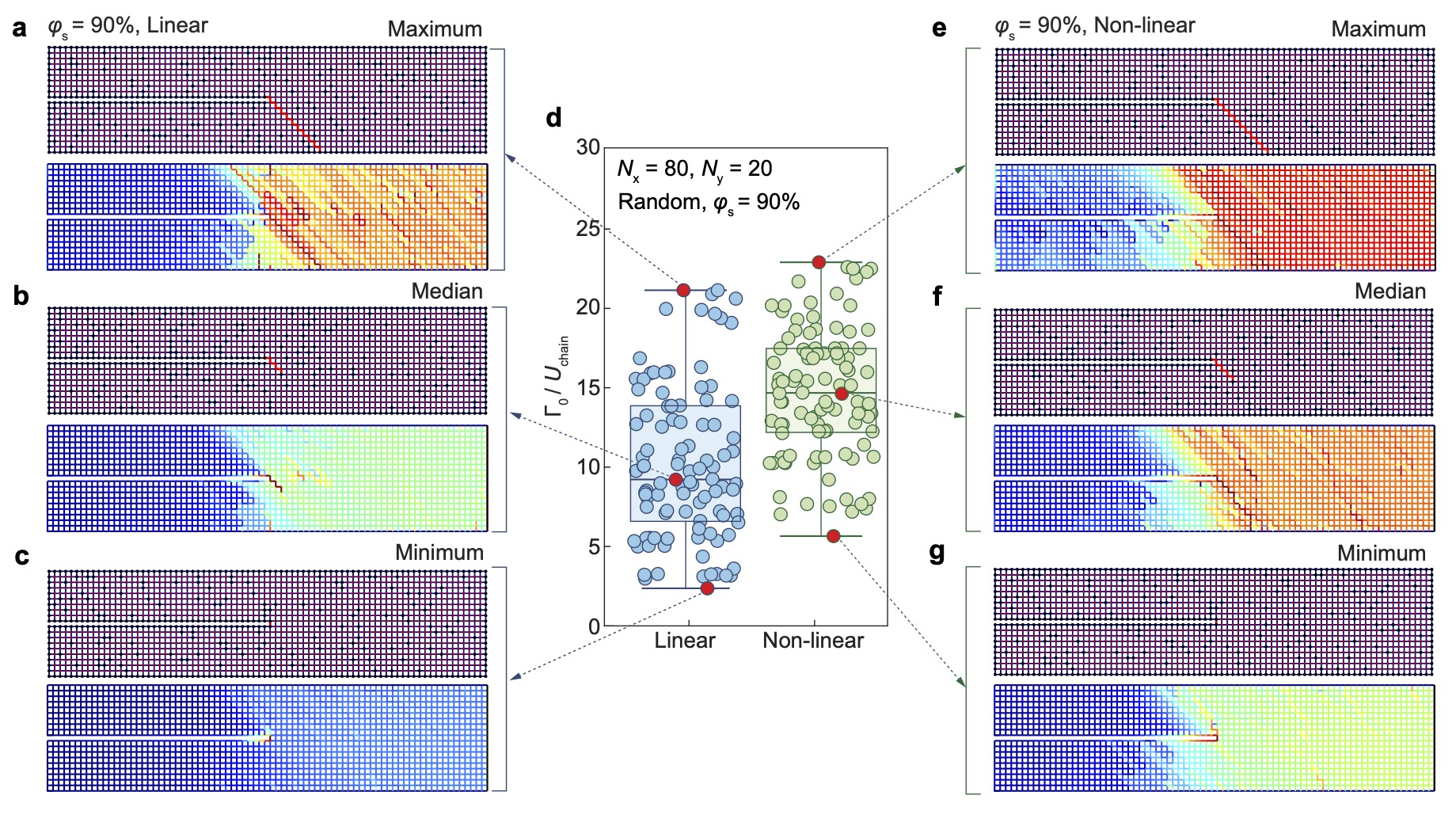}
  \caption{\textbf{Initial configuration and stretch ratio distribution of notched random entangled networks with $\varphi_s = 90\%$ at crack initialization, shown in undeformed state.}
\textbf{(a)} Sample with the highest intrinsic fracture energy in a network formed by linear chains (crack-tip chain length $l_{tip}$ = 20). 
\textbf{(b)} Sample with median intrinsic fracture energy ($l_{tip}$ = 6). 
\textbf{(c)} Sample with the lowest intrinsic fracture energy ($l_{tip}$ = 1). 
\textbf{(d)} Box plot of $\Gamma_0 / U_{\text{chain}}$ for notched networks formed by linear and non-linear chains (100 samples each). Each box shows the interquartile range, with the line inside the box indicating the median. Whiskers extend to the most extreme data points within 1.5 times the interquartile range from the lower and upper quartiles.
\textbf{(e)} Sample with the highest intrinsic fracture energy in a network formed by non-linear chains ($l_{tip}$ = 20). 
\textbf{(f)} Sample with median intrinsic fracture energy ($l_{tip}$ = 8). 
\textbf{(g)} Sample with the lowest intrinsic fracture energy ($l_{tip}$ = 1). The networks consist of 80 horizontal layers and 20 vertical layers.}
  \label{fig-supp:SI_Fig51_Random_90_20_undeformed}
\end{figure}

\vspace*{0pt}
\begin{figure}[H]
  \centering
  \includegraphics[trim={0cm 0cm 0cm 0cm},clip, width=0.9\textwidth]{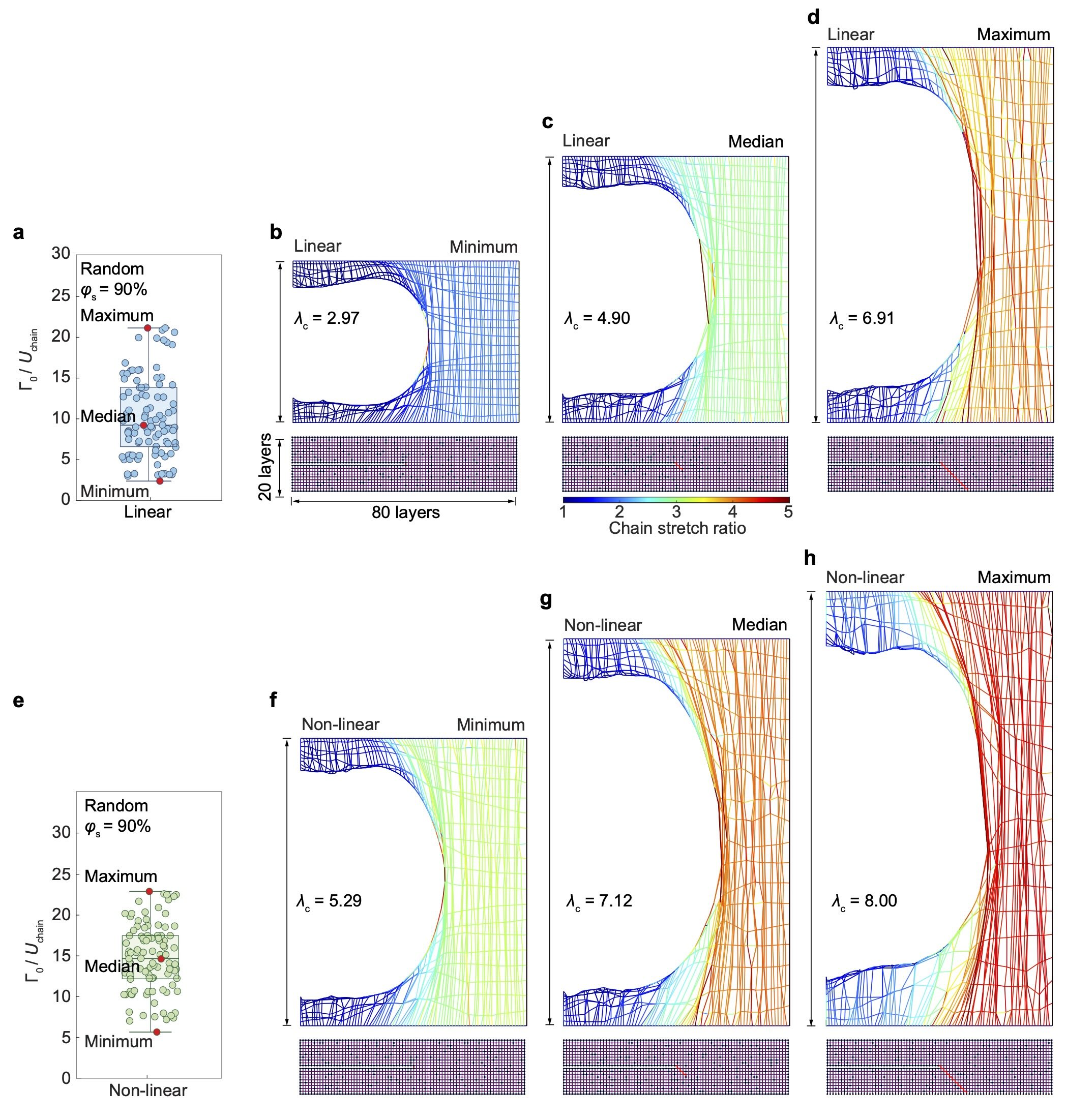}
  \caption{\textbf{Initial configuration and stretch ratio distribution of notched random entangled networks with $\varphi_s = 90\%$ at crack initialization, shown in deformed state.}
\textbf{(a)} Box plot of $\Gamma_0 / U_{\text{chain}}$ for notched networks with linear chains (100 samples). 
\textbf{(b)} Sample with the highest value. 
\textbf{(c)} Sample with median value. 
\textbf{(d)} Sample with the lowest value. 
\textbf{(e)} Box plot of $\Gamma_0 / U_{\text{chain}}$ for notched networks with non-linear chains (100 samples). 
\textbf{(f)} Sample with the highest value. 
\textbf{(g)} Sample with median value. 
\textbf{(h)} Sample with the lowest value. The networks consist of 80 horizontal layers and 20 vertical layers.}
  \label{fig-supp:SI_Fig52_Random_90_20_deformed}
\end{figure}

\vspace*{0pt}
\begin{figure}[H]
  \centering
  \includegraphics[trim={0cm 0cm 0cm 0cm},clip, width=0.96\textwidth]{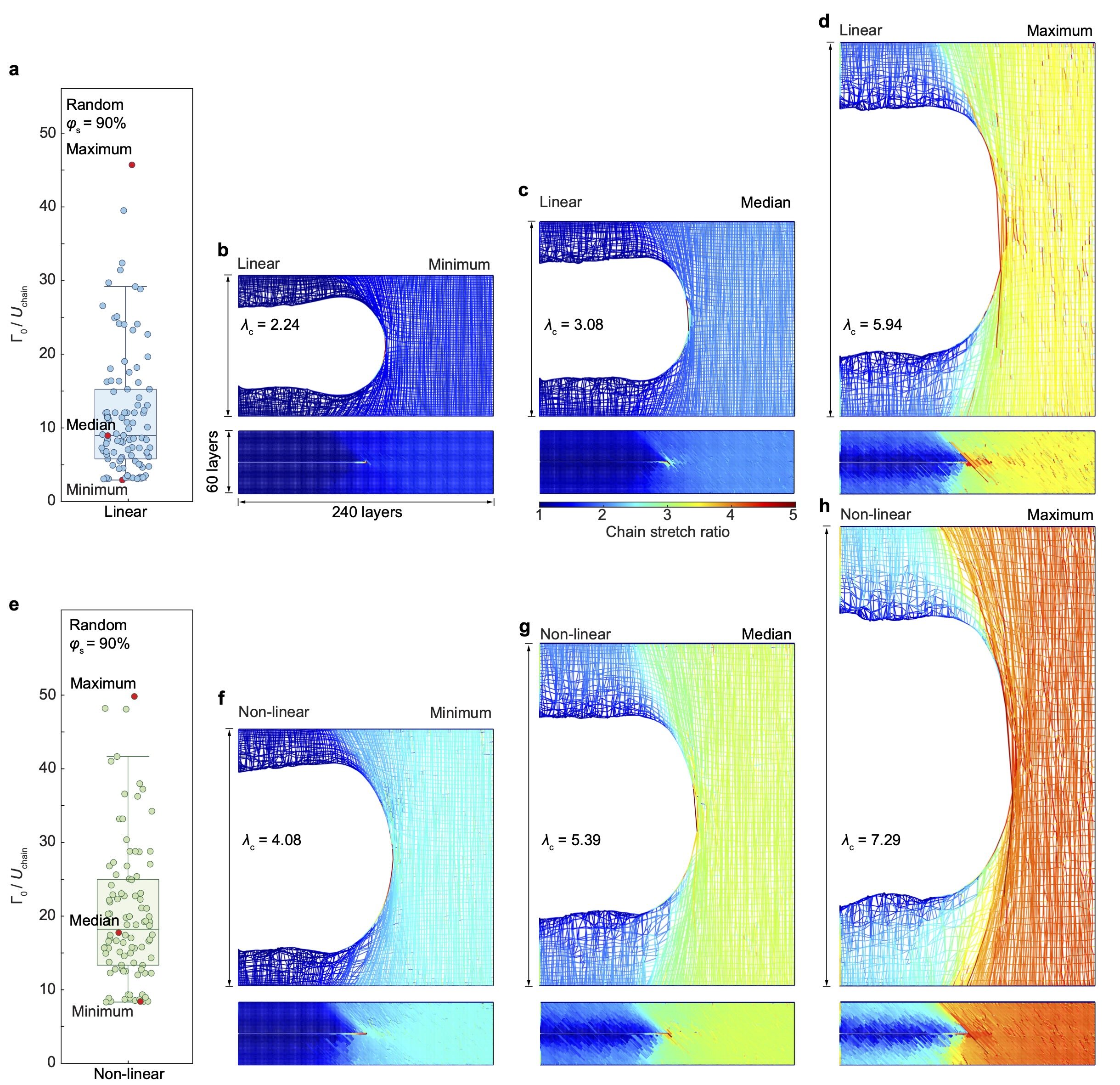}
  \caption{\textbf{Stretch ratio distribution of notched random entangled networks ($\varphi_s = 90\%$) at crack initialization, shown in deformed and undeformed states.}
\textbf{(a)} Box plot of $\Gamma_0 / U_{\text{chain}}$ for notched networks formed by linear chains (100 samples). 
\textbf{(b)} Sample with the highest intrinsic fracture energy. 
\textbf{(c)} Sample with median intrinsic fracture energy. 
\textbf{(d)} Sample with the lowest intrinsic fracture energy. 
\textbf{(e)} Box plot of $\Gamma_0 / U_{\text{chain}}$ for notched networks formed by non-linear chains (100 samples). 
\textbf{(f)} Sample with the highest intrinsic fracture energy. 
\textbf{(g)} Sample with median intrinsic fracture energy. 
\textbf{(h)} Sample with the lowest intrinsic fracture energy. The networks consist of 240 horizontal layers and 60 vertical layers.}
  \label{fig-supp:SI_Fig53_Random_90_60_deformed}
\end{figure}

\vspace*{0pt}
\begin{figure}[H]
  \centering
  \includegraphics[trim={0cm 0cm 0cm 0cm},clip, width=1.0\textwidth]{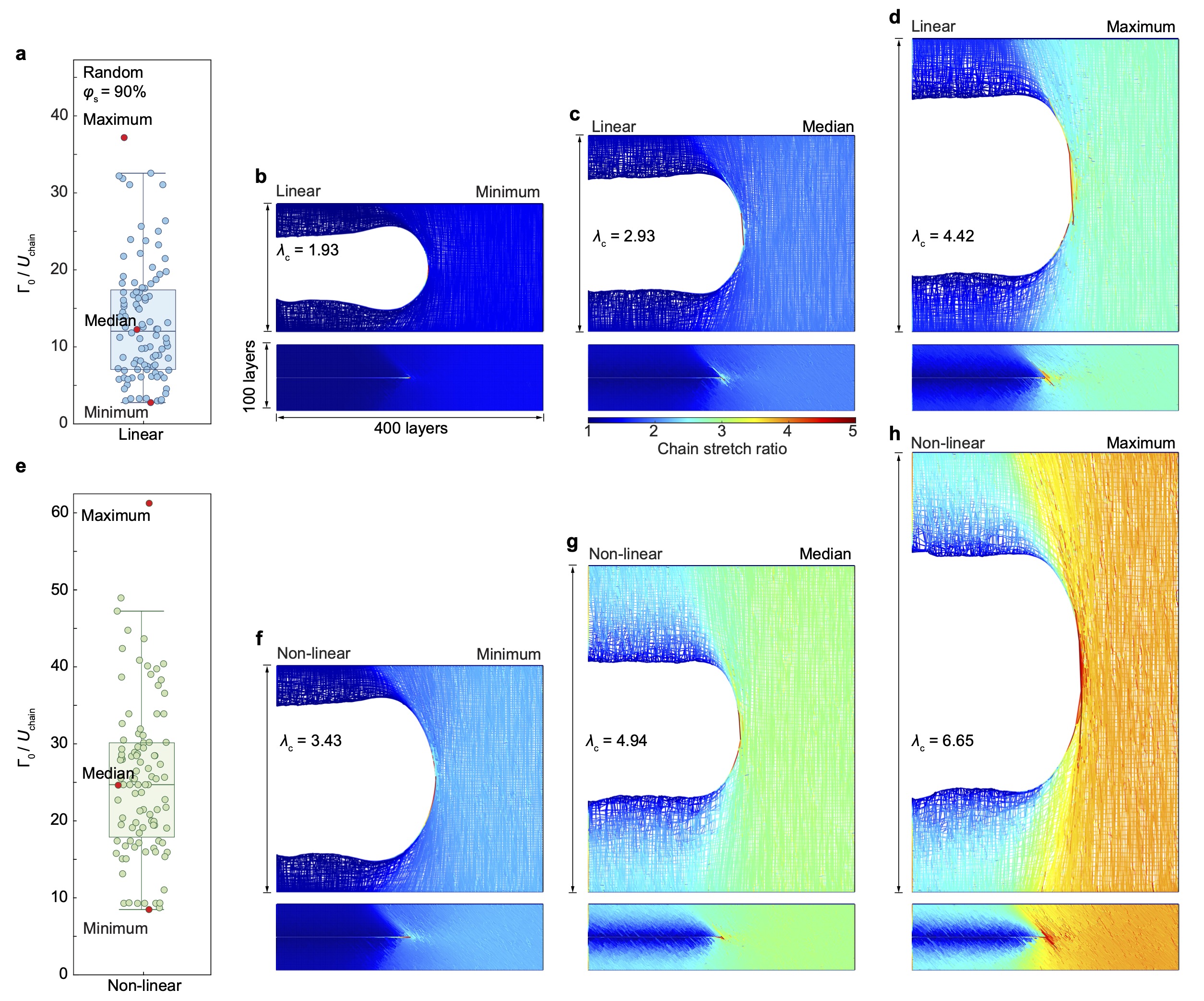}
  \caption{\textbf{Stretch ratio distribution of notched random entangled networks ($\varphi_s = 90\%$) at crack initialization, shown in deformed and undeformed states.} 
\textbf{(a)} Box plot of $\Gamma_0 / U_{\text{chain}}$ for notched networks formed by linear chains (100 samples). 
\textbf{(b)} Sample with the highest intrinsic fracture energy. 
\textbf{(c)} Sample with median intrinsic fracture energy. 
\textbf{(d)} Sample with the lowest intrinsic fracture energy. 
\textbf{(e)} Box plot of $\Gamma_0 / U_{\text{chain}}$ for notched networks formed by non-linear chains (100 samples). 
\textbf{(f)} Sample with the highest intrinsic fracture energy. 
\textbf{(g)} Sample with median intrinsic fracture energy. 
\textbf{(h)} Sample with the lowest intrinsic fracture energy. The networks consist of 400 horizontal layers and 100 vertical layers.}
  \label{fig-supp:SI_Fig54_Random_90_100_deformed}
\end{figure}

\vspace*{0pt}
\begin{figure}[H]
  \centering
  \includegraphics[trim={0cm 0cm 0cm 0cm},clip, width=1.0\textwidth]{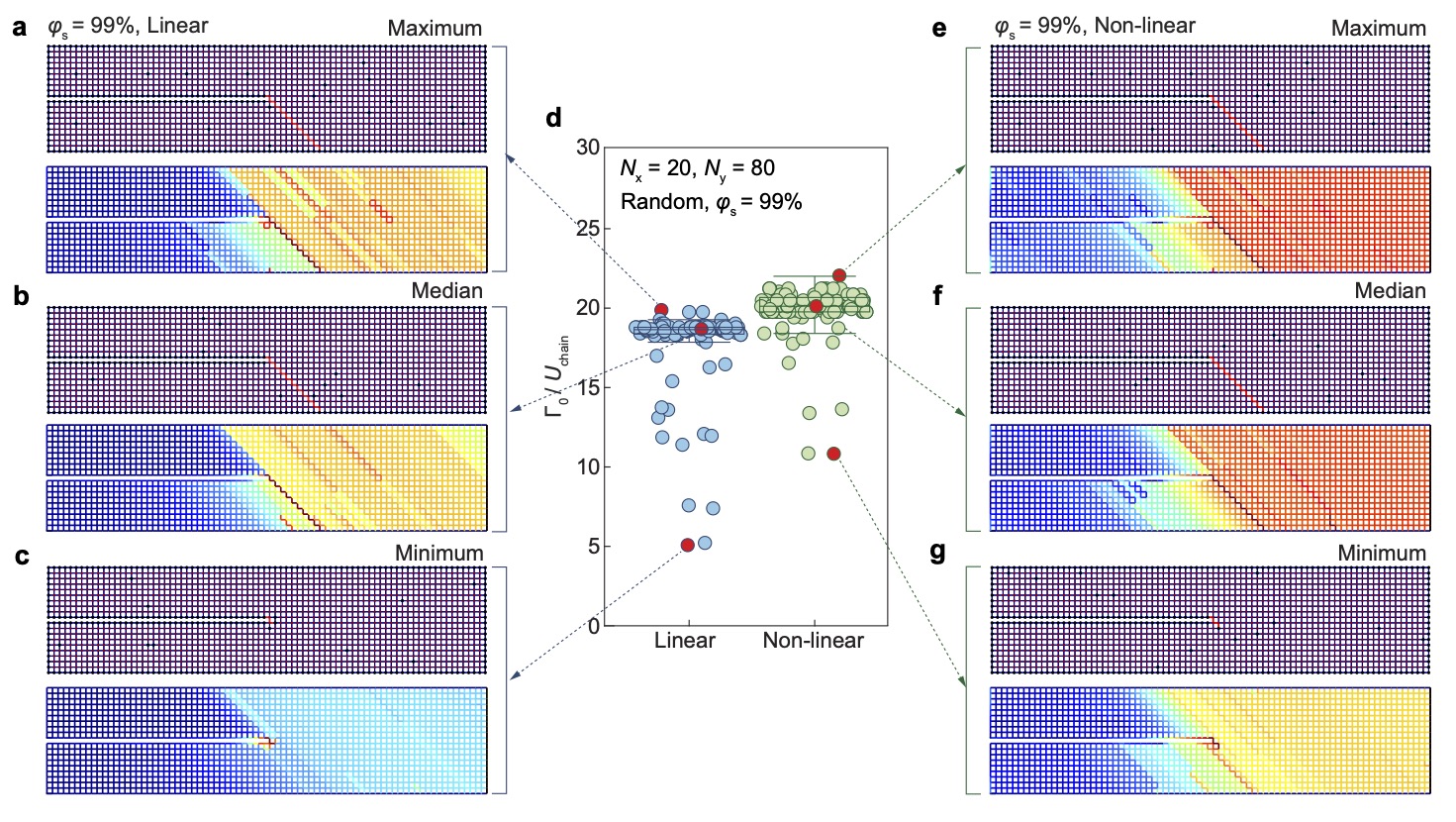}
  \caption{\textbf{Initial configuration and stretch ratio distribution of notched random entangled networks with $\varphi_s = 99\%$ at crack initialization, shown in undeformed state.}
\textbf{(a)} Sample with the highest intrinsic fracture energy in a network formed by linear chains (crack-tip chain length $l_{tip}$= 20). 
\textbf{(b)} Sample with median intrinsic fracture energy ($l_{tip}$ = 20). 
\textbf{(c)} Sample with the lowest intrinsic fracture energy ($l_{tip}$ = 3). 
\textbf{(d)} Box plot of $\Gamma_0 / U_{\text{chain}}$ for networks formed by linear and non-linear chains (100 samples each). Each box shows the interquartile range, with the line inside the box indicating the median. Whiskers extend to the most extreme data points within 1.5 times the interquartile range from the lower and upper quartiles.
\textbf{(e)} Sample with the highest intrinsic fracture energy in a network formed by non-linear chains ($l_{tip}$ = 20). 
\textbf{(f)} Sample with median intrinsic fracture energy ($l_{tip}$ = 20). 
\textbf{(g)} Sample with the lowest intrinsic fracture energy ($l_{tip}$ = 4). The network consists of 80 layers in length and 20 layers in width.}
  \label{fig-supp:SI_Fig55_Random_99_20_undeformed}
\end{figure}

\vspace*{0pt}
\begin{figure}[H]
  \centering
  \includegraphics[trim={0cm 0cm 0cm 0cm},clip, width=0.9\textwidth]{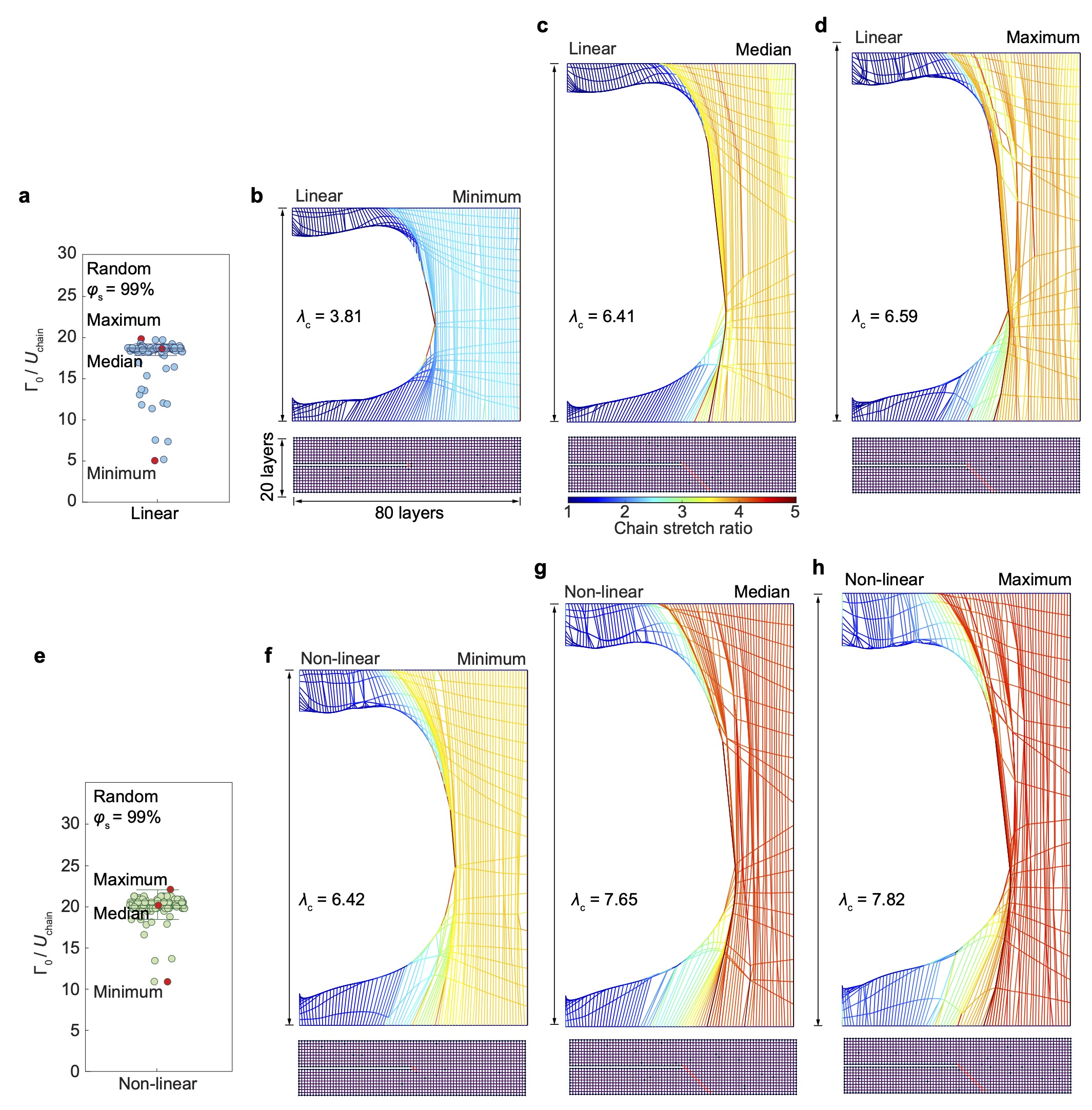}
  \caption{\textbf{Initial configuration and stretch ratio distribution of notched random entangled networks with $\varphi_s = 99\%$ at crack initialization, shown in deformed state.}
\textbf{(a)} Box plot of $\Gamma_0 / U_{\text{chain}}$ for notched networks formed by linear chains (100 samples). 
\textbf{(b)} Sample with the highest intrinsic fracture energy. 
\textbf{(c)} Sample with median intrinsic fracture energy. 
\textbf{(d)} Sample with the lowest intrinsic fracture energy. 
\textbf{(e)} Box plot of $\Gamma_0 / U_{\text{chain}}$ for notched networks formed by non-linear chains (100 samples). 
\textbf{(f)} Sample with the highest intrinsic fracture energy. 
\textbf{(g)} Sample with median intrinsic fracture energy. 
\textbf{(h)} Sample with the lowest intrinsic fracture energy. The networks consist of 80 horizontal layers and 20 vertical layers.}
  \label{fig-supp:SI_Fig56_Random_99_20_deformed}
\end{figure}

\vspace*{0pt}
\begin{figure}[H]
  \centering
  \includegraphics[trim={0cm 0cm 0cm 0cm},clip, width=0.88\textwidth]{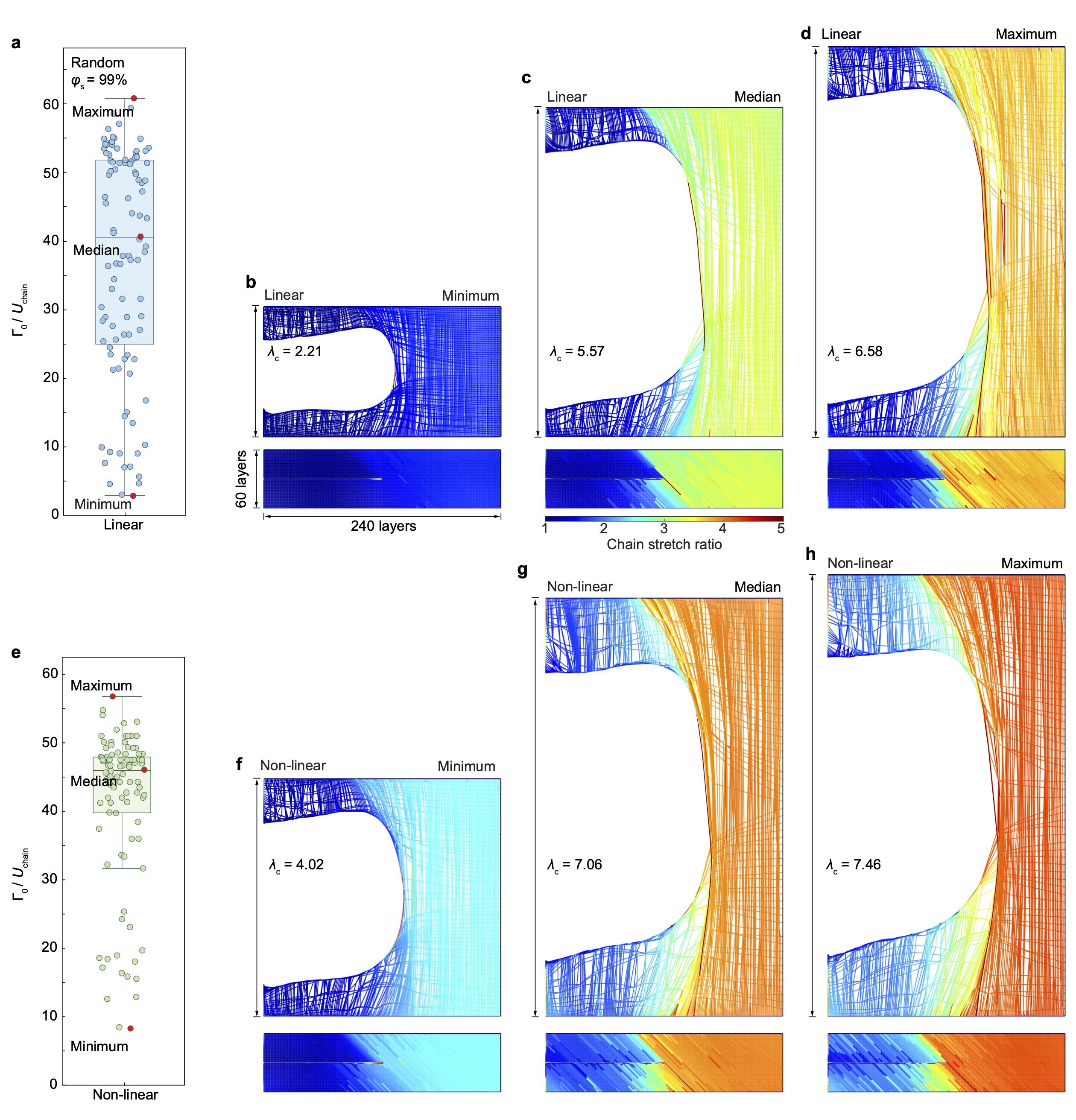}
  \caption{\textbf{Stretch ratio distribution of notched random entangled networks ($\varphi_s = 99\%$) at crack initialization, shown in deformed and undeformed states.} 
\textbf{(a)} Box plot of $\Gamma_0 / U_{\text{chain}}$ for notched networks formed by linear chains (100 samples). 
\textbf{(b)} Sample with the highest intrinsic fracture energy. 
\textbf{(c)} Sample with median intrinsic fracture energy. 
\textbf{(d)} Sample with the lowest intrinsic fracture energy. 
\textbf{(e)} Box plot of $\Gamma_0 / U_{\text{chain}}$ for notched networks formed by non-linear chains (100 samples). 
\textbf{(f)} Sample with the highest intrinsic fracture energy. 
\textbf{(g)} Sample with median intrinsic fracture energy. 
\textbf{(h)} Sample with the lowest intrinsic fracture energy. The networks consist of 240 horizontal layers and 60 vertical layers.}
  \label{fig-supp:SI_Fig57_Random_99_60_deformed}
\end{figure}

\vspace*{0pt}
\begin{figure}[H]
  \centering
  \includegraphics[trim={0cm 0cm 0cm 0cm},clip, width=0.88\textwidth]{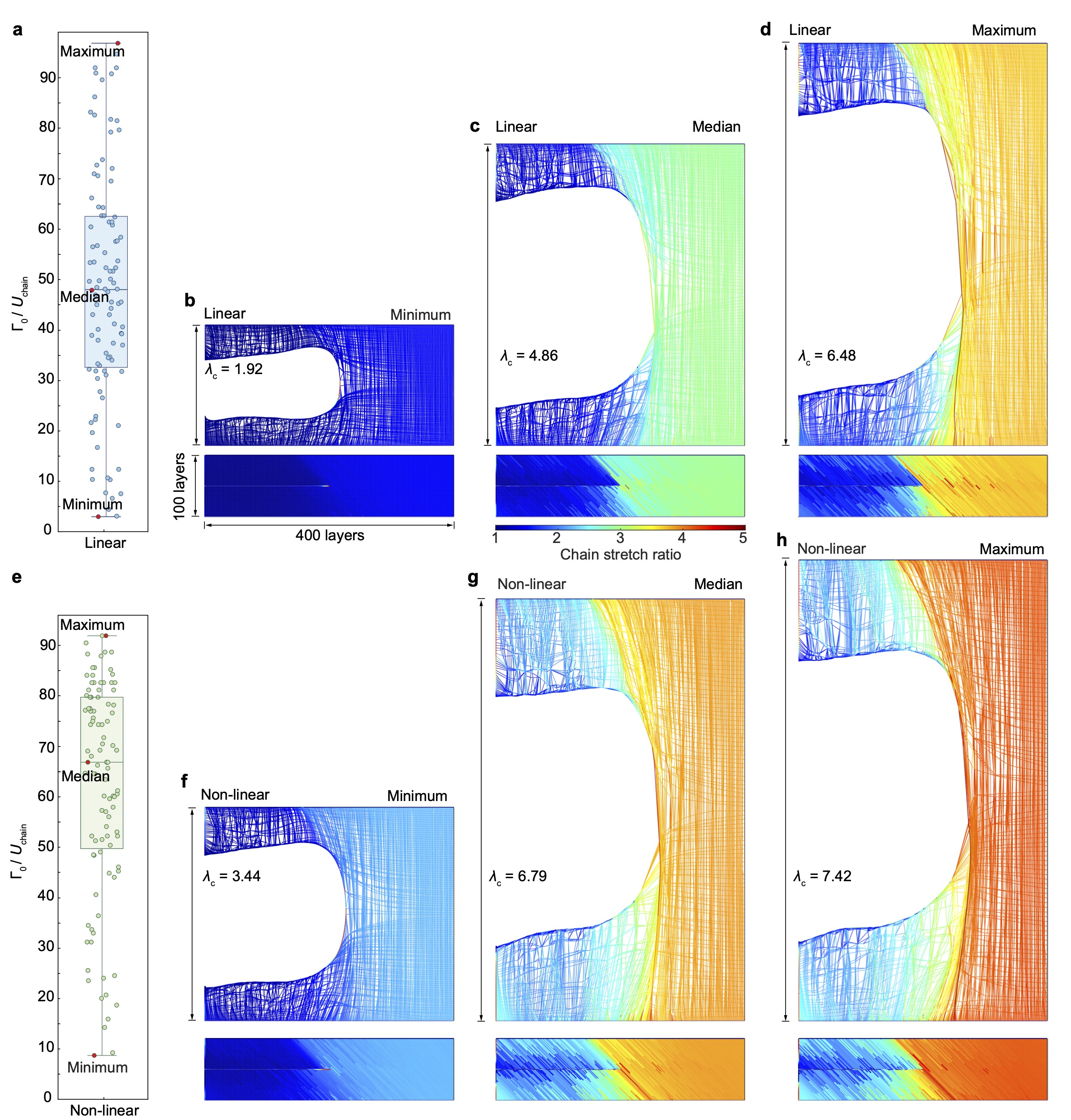}
  \caption{\textbf{Stretch ratio distribution of notched random entangled networks ($\varphi_s = 99\%$) at crack initialization, shown in deformed and undeformed states.}
\textbf{(a)} Box plot of $\Gamma_0 / U_{\text{chain}}$ for notched networks formed by linear chains (100 samples). 
\textbf{(b)} Sample with the largest intrinsic fracture energy. 
\textbf{(c)} Sample with median intrinsic fracture energy. 
\textbf{(d)} Sample with the smallest intrinsic fracture energy. 
\textbf{(e)} Box plot of $\Gamma_0 / U_{\text{chain}}$ for notched networks formed by non-linear chains (100 samples). 
\textbf{(f)} Sample with the largest intrinsic fracture energy. 
\textbf{(g)} Sample with median intrinsic fracture energy. 
\textbf{(h)} Sample with the smallest intrinsic fracture energy. The networks consist of 400 horizontal layers and 100 vertical layers.}
  \label{fig-supp:SI_Fig58_Random_99_100_deformed}
\end{figure}

\vspace*{0pt}
\begin{figure}[H]
  \centering
  \includegraphics[trim={0cm 0cm 0cm 0cm},clip, width=1.0\textwidth]{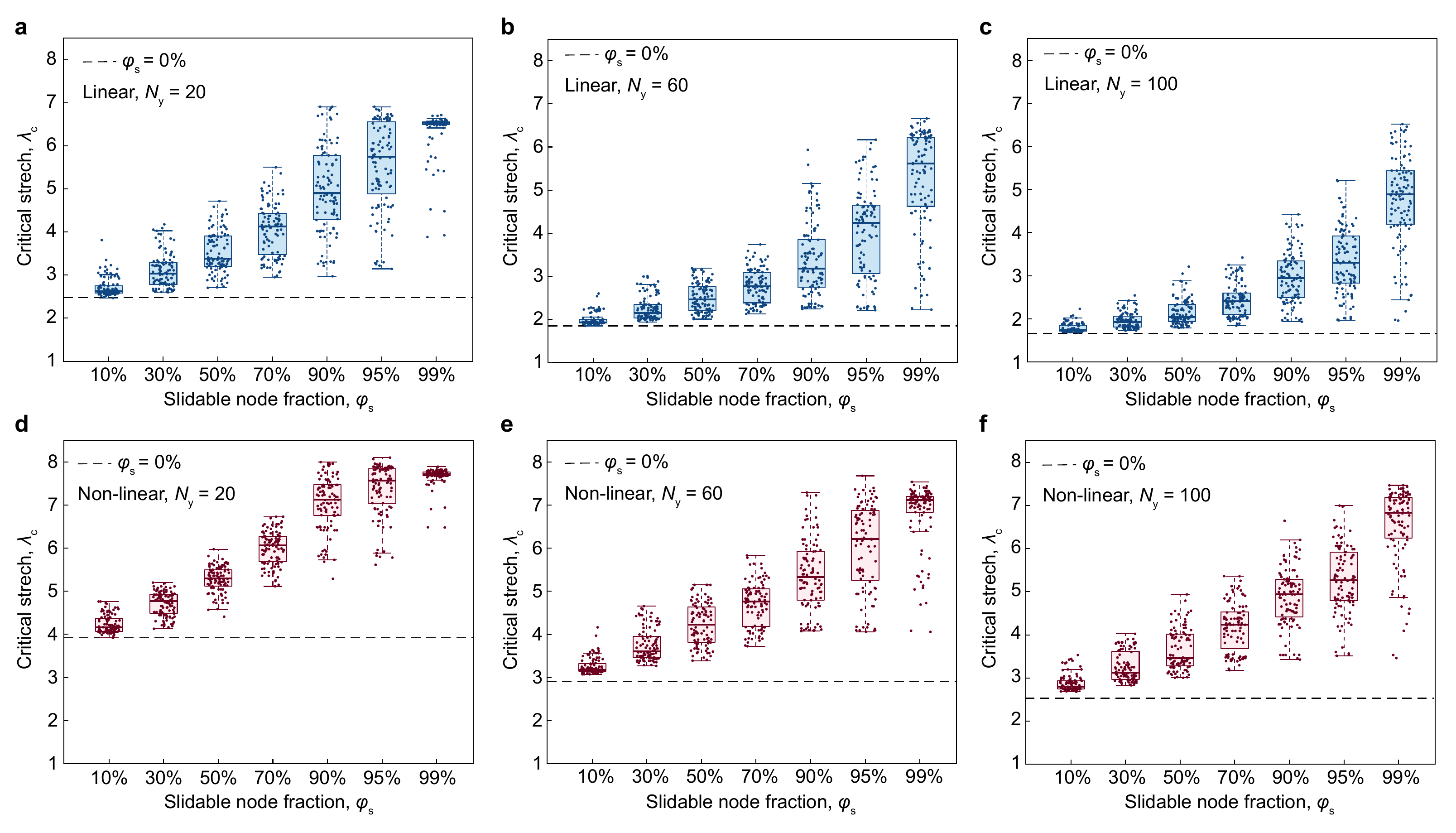}
  \caption{\textbf{Box plot of critical stretch ratio $\lambda_c$ 
  versus slidable node fraction $\varphi_s$ for notched random entangled networks.}
\textbf{(a-c)} Networks formed by linear chains with the vertical layer numbers $N_y = 20$, $60$, and $100$. 
\textbf{(d-f)} Networks formed by non-linear chains with the vertical layer numbers $N_y = 20$, $60$, and $100$. The networks have a fixed aspect ratio of $N_x/N_y = 4$ with $N_x$ being the horizontal layer number.
Each box shows the interquartile range, with the line inside the box indicating the median. Whiskers extend to the most extreme data points within 1.5 times the interquartile range from the lower and upper quartiles. Each case includes 100 samples. The dashed line in each figure corresponds to spring networks ($\varphi_s=0\%$). The critical stretch ratio $\lambda_c$ consistently increases with $\varphi_s$ across different vertical layer numbers.}
  \label{fig-supp:SI_Fig59_Critical_stretch_random_box_plot}
\end{figure}

\vspace*{0pt}
\begin{figure}[H]
  \centering
  \includegraphics[trim={0cm 0cm 0cm 0cm},clip, width=1.0\textwidth]{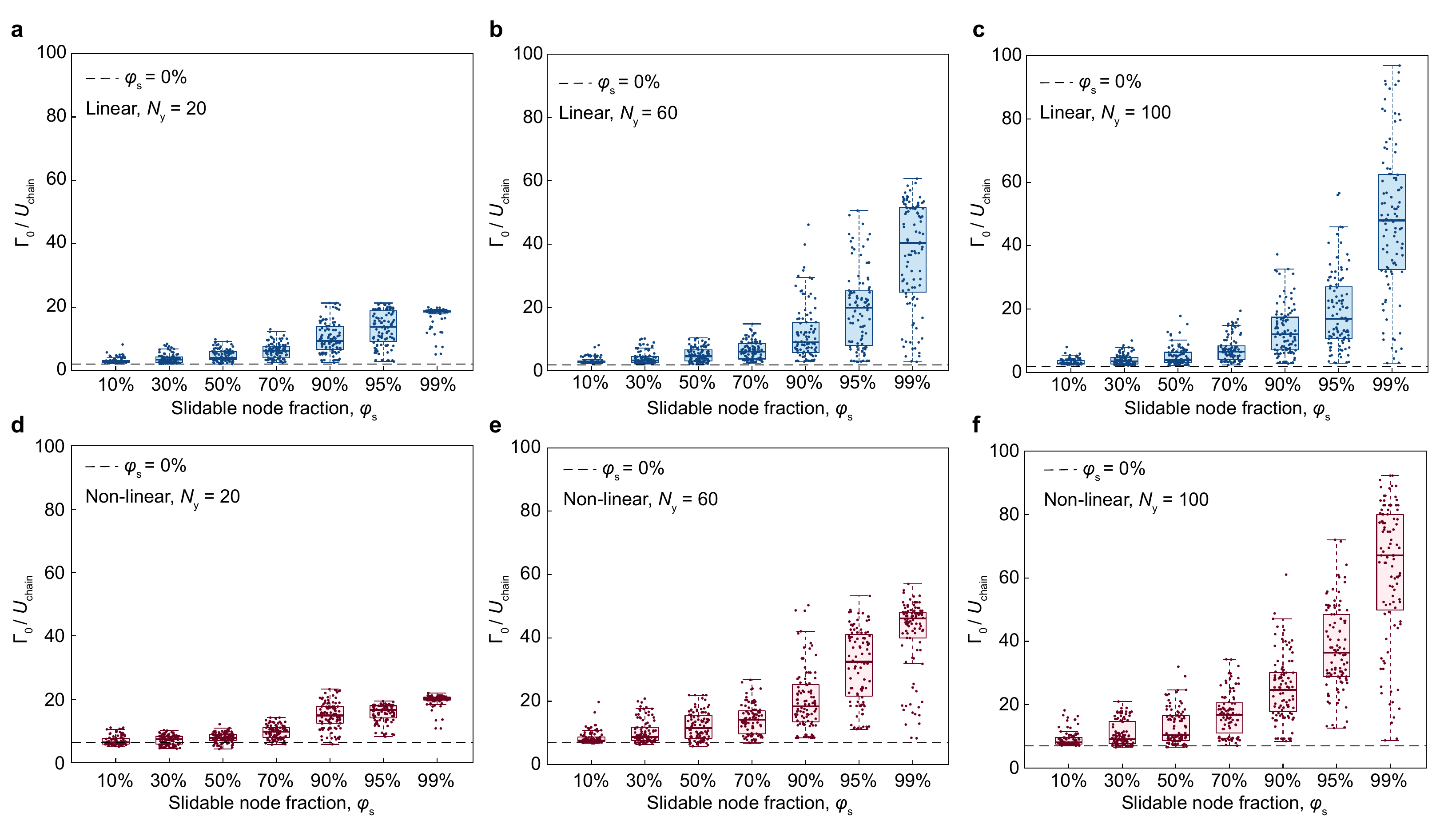}
  \caption{\textbf{Box plot of $\Gamma_0 / U_{\text{chain}}$ versus slidable node fraction $\varphi_s$ for notched random entangled networks.} 
\textbf{(a-c)} Networks formed by linear chains with the vertical layer numbers $N_y = 20$, $60$, and $100$. 
\textbf{(d-f)} Networks formed by non-linear chains with the vertical layer numbers $N_y = 20$, $60$, and $100$. The networks have a fixed aspect ratio of $N_x/N_y = 4$ with $N_x$ being the horizontal layer number.
Each box shows the interquartile range, with the line inside the box indicating the median. Whiskers extend to the most extreme data points within 1.5 times the interquartile range from the lower and upper quartiles. Each case includes 100 samples. The dashed line in each figure corresponds to spring networks ($\varphi_s=0\%$). The fracture energy consistently increases with $\varphi_s$ across different vertical layer numbers.}
  \label{fig-supp:SI_Fig60_Gamma0_random_box_plot}
\end{figure}

\vspace*{0pt}
\begin{figure}[H]
  \centering
  \includegraphics[trim={0cm 0cm 0cm 0cm},clip, width=1.0\textwidth]{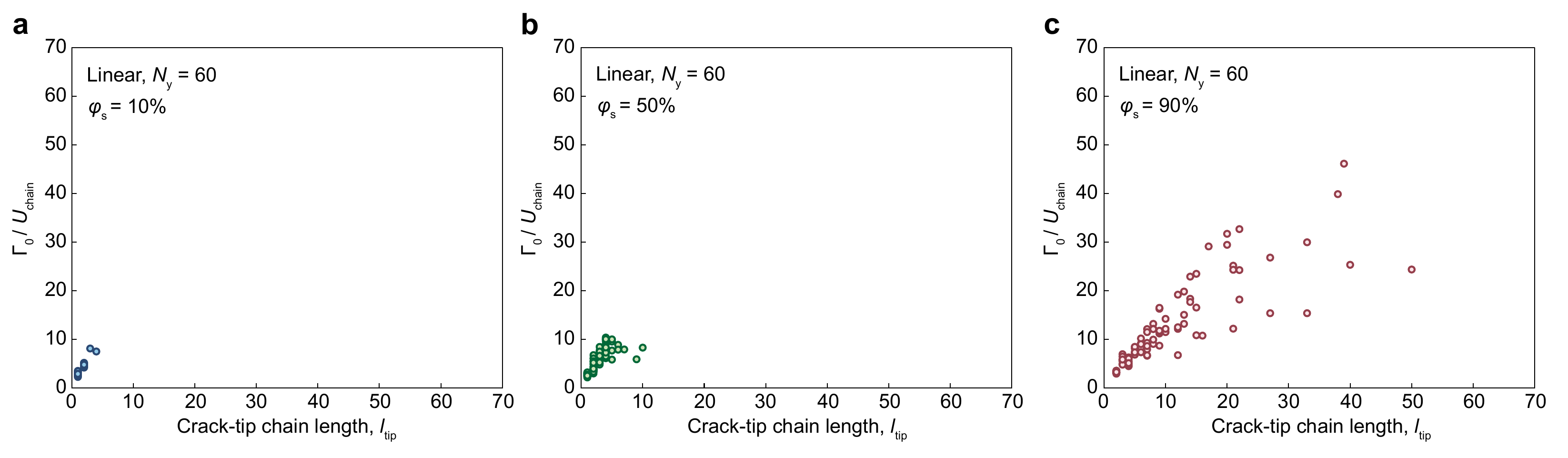}
  \caption{\textbf{$\Gamma_0 / U_{\text{chain}}$ versus crack-tip chain length $l_{\text{tip}}$ for notched random entangled networks.} 
\textbf{(a)} Network with slidable node fraction of $\varphi_s = 10\%$. 
\textbf{(b)} Network with slidable node fraction of $\varphi_s = 50\%$. 
\textbf{(c)} Network with slidable node fraction of $\varphi_s = 90\%$. The network is formed by linear chains. Each subfigure includes 100 samples. The networks consist of 240 horizontal layers and 60 vertical layers.}
  \label{fig-supp:SI_Fig61_ChainlengthVSGamma0_60_linear}
\end{figure}

\vspace*{0pt}
\begin{figure}[H]
  \centering
  \includegraphics[trim={0cm 0cm 0cm 0cm},clip, width=1.0\textwidth]{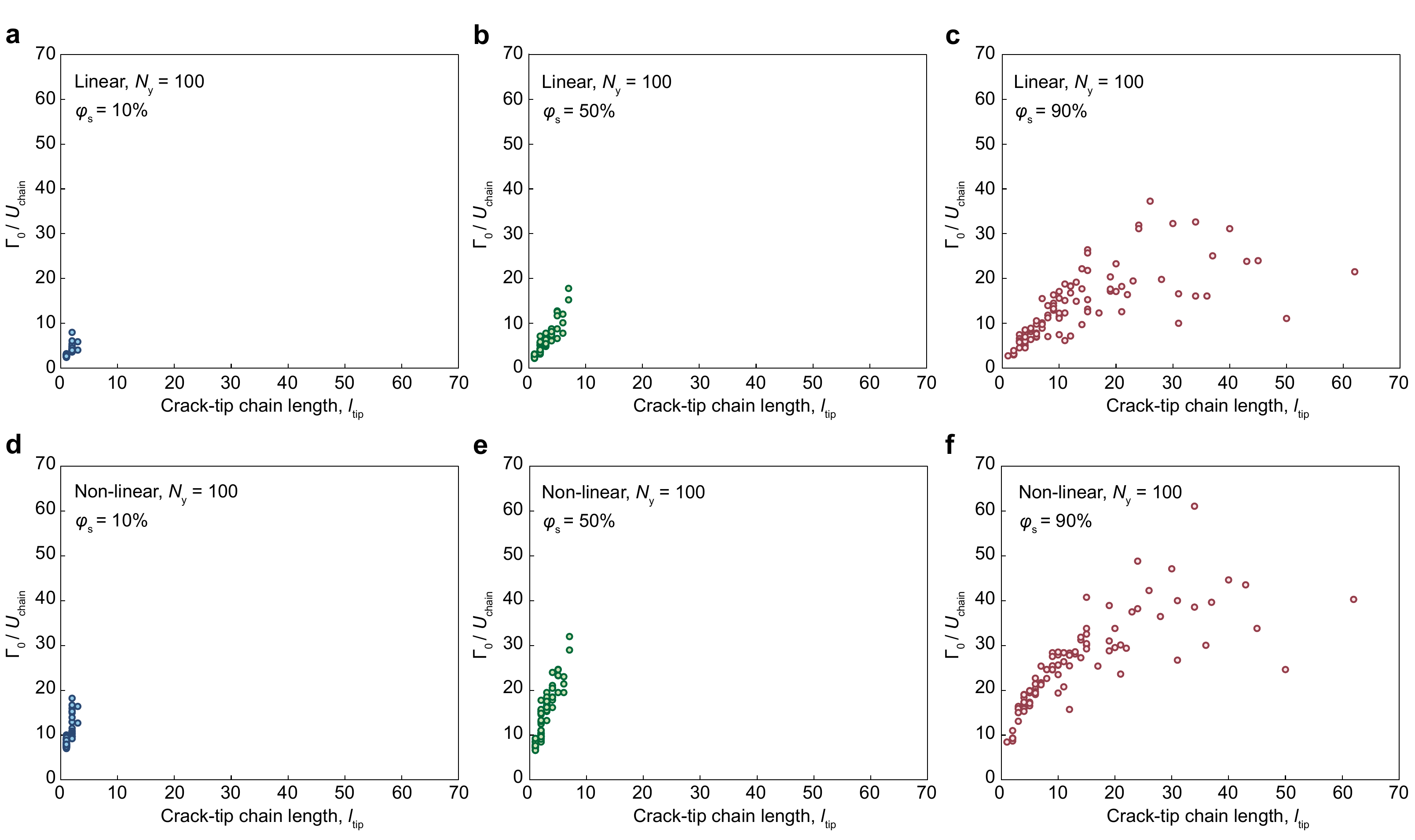}
  \caption{\textbf{$\Gamma_0 / U_{\textrm{chain}}$ versus crack-tip chain length $l_{\textrm{tip}}$ for notched random entangled networks.}
\textbf{(a-c)} Networks formed by linear chains with slidable node fraction of $\varphi_s = 10\%$, $50\%$, and $90\%$, respectively. 
\textbf{(d-f)} Networks formed by non-linear chains with slidable node fraction of $\varphi_s = 10\%$, $50\%$, and $90\%$, respectively.The network is formed by
either linear or non-linear chains. Each subfigure includes 100 samples. The networks consist of 400 horizontal layers and 100 vertical layers.}
  \label{fig-supp:SI_Fig62_ChainlengthVSGamma0_100}
\end{figure}

\vspace*{0pt}
\begin{figure}[H]
  \centering
  \includegraphics[trim={0cm 0cm 0cm 0cm},clip, width=1.0\textwidth]{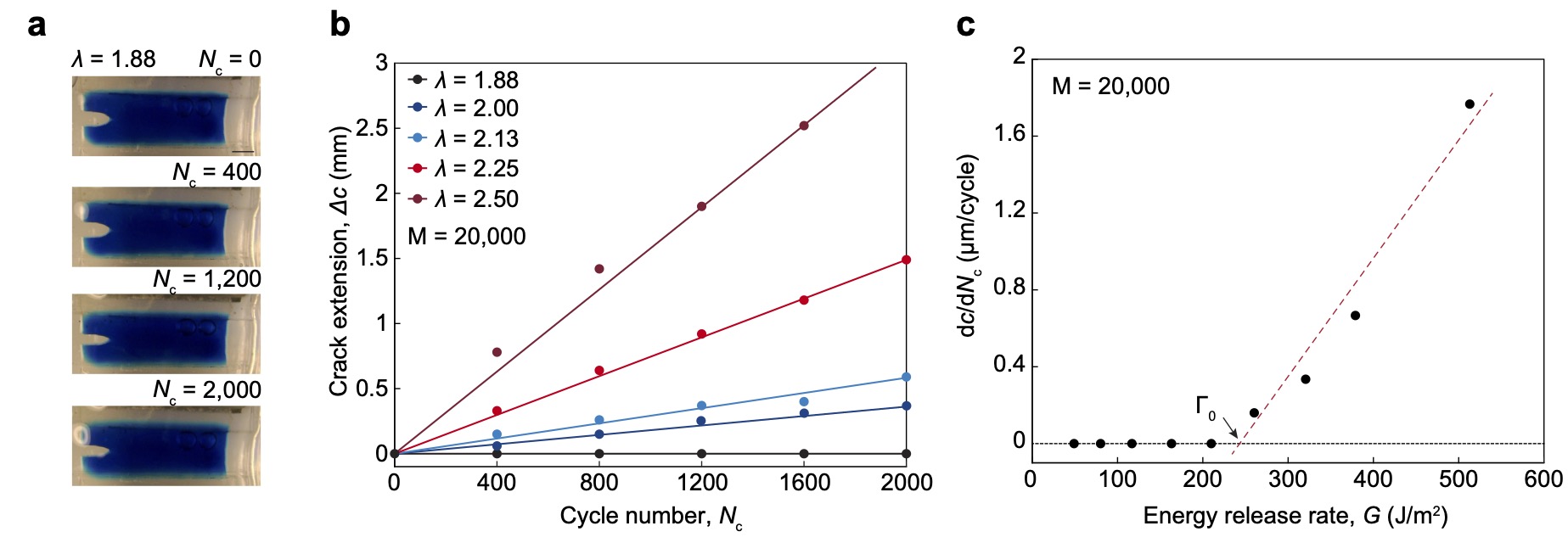}
  \caption{\textbf{Fatigue test of an entangled hydrogel with a monomer-to-crosslinker ratio of M = 20{,}000.} 
\textbf{(a)} Images of a notched hydrogel stretched to a ratio of 1.88 at various cycle numbers $N_c$, showing no visible crack growth. 
\textbf{(b)} Crack extension length versus cycle number at different stretch ratios. 
\textbf{(c)} Crack growth rate per cycle ($dc/dN_c$) versus energy release rate $G$. To prevent dehydration, the experiments were conducted in a silicone oil bath.}
  \label{fig-supp:SI_Fig63_fatigue_20K}
\end{figure}

\vspace*{0pt}
\begin{figure}[H]
  \centering
  \includegraphics[trim={0cm 0cm 0cm 0cm},clip, width=1.0\textwidth]{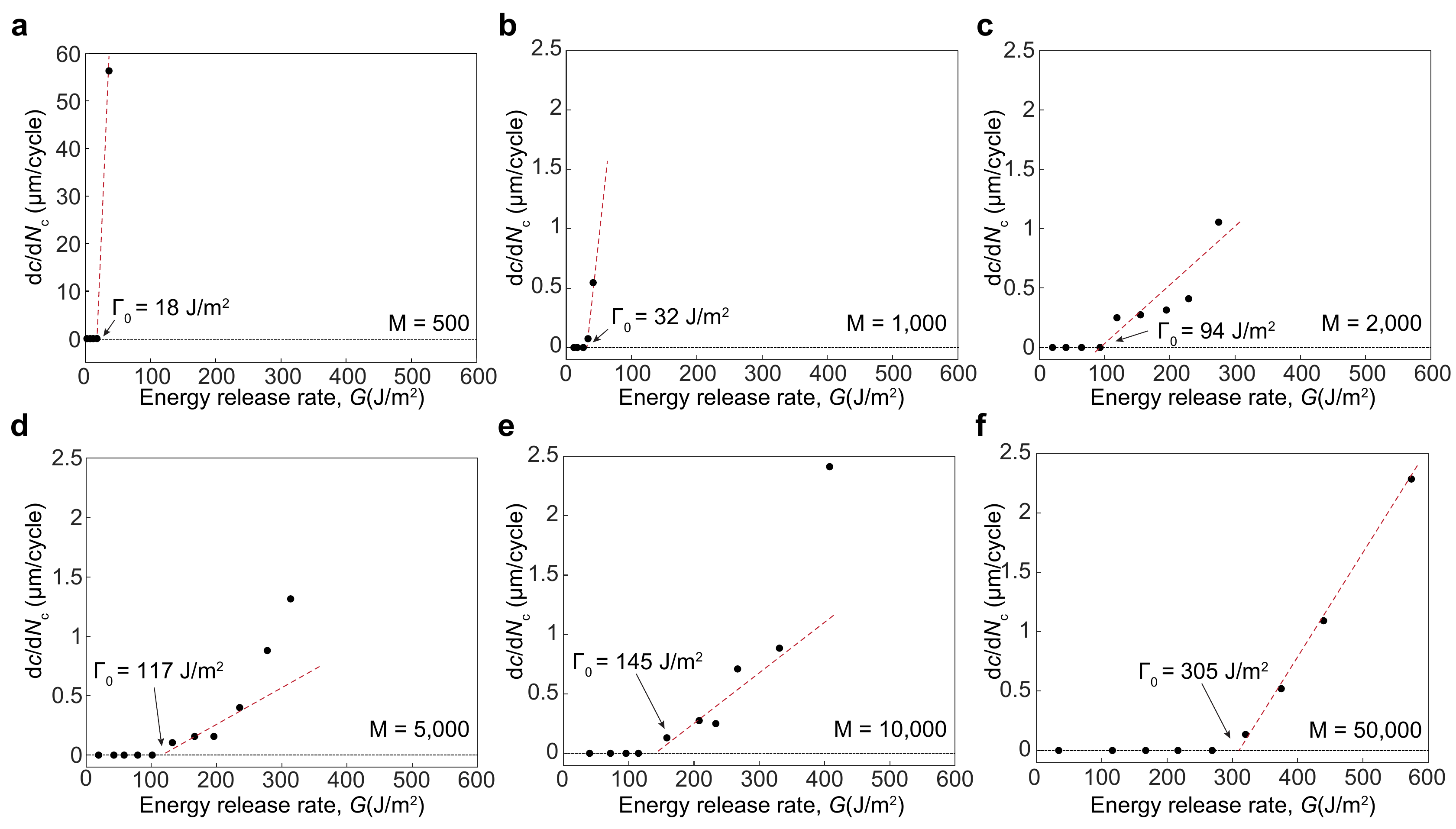}
  \caption{\textbf{Crack growth rate ($dc/dN_c$) versus energy release rate $G$ for entangled hydrogels with different monomer-to-crosslinker molar ratios M.} 
\textbf{(a)} M = 500. 
\textbf{(b)} M = 1{,}000. 
\textbf{(c)} M = 2{,}000. 
\textbf{(d)} M = 5{,}000. 
\textbf{(e)} M = 10{,}000. 
\textbf{(f)} M = 50{,}000.}
  \label{fig-supp:SI_Fig64_fatigue_all}
\end{figure}

\newpage
\section*{Movie Captions}

\noindent\textbf{Caption for Movie S1.} Photoelastic experiment and simulation results of the two-chain model with a non-slidable node (spring network). 

\bigskip

\noindent\textbf{Caption for Movie S2.} Photoelastic experiment and simulation results of the two-chain model with a slidable node (entangled network).

\bigskip

\noindent\textbf{Caption for Movie S3.} Photoelastic experiments of spring network and entangled network.

\bigskip

\noindent\textbf{Caption for Movie S4.} Photoelastic experiments of notched spring network and entangled network.

\end{document}